\documentclass[11pt]{article}
\usepackage[dvipsnames]{xcolor}   
\newif\ifcomments
\commentstrue

\usepackage[margin=1in]{geometry}
\usepackage[T1]{fontenc}
\usepackage{lmodern,microtype}
\usepackage{authblk}
\usepackage{amsmath,amssymb,amsthm,mathtools}
\usepackage[normalem]{ulem}
\usepackage{etoolbox}

\makeatletter
\patchcmd{\l@section}{\addvspace{1.0em \@plus\p@}}{\addvspace{0.3em \@plus\p@}}{}{%
  \PackageError{paper}{Could not adjust contents spacing}{Check the definition of l@section.}}
\makeatother

\usepackage{booktabs}
\usepackage{tikz,pgfplots}
\pgfplotsset{compat=1.18}
\usetikzlibrary{positioning,arrows.meta}
\usepackage[hidelinks]{hyperref}
\usepackage{placeins}

\usepackage[nameinlink,noabbrev]{cleveref}

\newtheorem{theorem}{Theorem}[section]
\newtheorem{corollary}[theorem]{Corollary}
\newtheorem{lemma}[theorem]{Lemma}
\newtheorem{proposition}[theorem]{Proposition}
\theoremstyle{definition}
\newtheorem{definition}[theorem]{Definition}

\theoremstyle{remark}
\newtheorem{remark}[theorem]{Remark}

\AddToHook{env/corollary/begin}{\crefalias{theorem}{corollary}}
\AddToHook{env/lemma/begin}{\crefalias{theorem}{lemma}}
\AddToHook{env/proposition/begin}{\crefalias{theorem}{proposition}}
\AddToHook{env/definition/begin}{\crefalias{theorem}{definition}}
\AddToHook{env/remark/begin}{\crefalias{theorem}{remark}}
\AddToHook{env/claim/begin}{\crefalias{theorem}{claim}}

\newcommand{\E}{\mathbb E}
\newcommand{\TV}{\operatorname{TV}}
\newcommand{\ket}[1]{\lvert #1\rangle}
\newcommand{\bra}[1]{\langle #1\rvert}
\newcommand{\C}{\mathbb C}
\newcommand{\Hmin}{H_{\min}}
\newcommand{\R}{\mathbb R}
\newcommand{\Z}{\mathbb Z}
\DeclareMathOperator{\Var}{Var}

\crefname{section}{Section}{Sections}
\crefname{theorem}{Theorem}{Theorems}
\crefname{corollary}{Corollary}{Corollaries}
\crefname{problem}{Problem}{Problems}
\crefname{figure}{Figure}{Figures}
\crefname{table}{Table}{Tables}
\crefname{lemma}{Lemma}{Lemmas}
\crefname{proposition}{Proposition}{Propositions}
\crefname{definition}{Definition}{Definitions}
\crefname{remark}{Remark}{Remarks}
\crefname{claim}{Claim}{Claims}
\crefname{equation}{}{}

\title{Tight Query Lower Bounds for Quantum Sampling, with an Application to Certified Randomness}
\author{Keshav Bhateja\thanks{kbhateja@vt.edu} }
\author{Mehdi Esmaili\thanks{mesmaili@vt.edu} }
\author{Atul Mantri\thanks{atulmantri@vt.edu}}
\affil{Department of Computer Science, Virginia Tech, Blacksburg, VA 24061, USA}
\date{}
\begin{document}
\maketitle
\begin{abstract}
Random circuit sampling experiments, the leading demonstrations of quantum computational advantage, are tested with the linear cross-entropy benchmark (XEB), which scores the returned outcomes by their probabilities under the ideal output distribution. We study what a high XEB score certifies about an untrusted quantum device. In the quantum query model, we show that the score of an honest quantum sampler, the ideal score, plays the role of Tsirelson's bound for this test, both as a limit for efficient devices and as a certificate of randomness. First, exceeding the ideal score by a constant requires $\Omega(N^{1/3})$ queries for the Haar and Fourier oracle ensembles with $N$ outcomes, and this bound is tight; this proves an earlier conjecture for Haar-random states and extends it to Fourier sampling. Second, for Fourier sampling and for Haar-random states accessed through the canonical state-preparation oracle, the $n$-bit output of a device that makes polynomially many queries and scores within $o(1)$ of the ideal score has $n-O(\log n)$ bits of smooth min-entropy, even against an adversary who is entangled with the device and later learns the oracle; this is nearly optimal. As a corollary, sampling the collision distribution, which is proportional to the square of the ideal distribution, has query complexity $\Theta(N^{1/3})$, and a coherent version of the same algorithm approximately prepares quantum Hadamard products of states with uniform amplitude magnitudes. Both lower bounds rest on progress measures that are zero for the honest sampler, bound the excess over the ideal score, and increase only slightly with each query. For Fourier sampling the measure is the norm of the coherent amplitudes for removing pairs in a purification of the random function, combining the polynomial method with the compressed oracle technique; for Haar-random states it is the square root of the average level in an expansion in Dirichlet orthogonal polynomials, controlled by an exact identity for the effect of one observation. We expect these measures to be useful for query lower bounds on other sampling problems.
\end{abstract}

\setcounter{tocdepth}{2}
\tableofcontents
\newpage

\section{Introduction}\label{sec:intro}

Sampling experiments~\cite{AruteEtAl,ZhongEtAl,WuEtAl,MadsenEtAl,MorvanEtAl} are the leading demonstrations of quantum computational advantage; surveys are given in~\cite{HarrowMontanaro,HangleiterEisert,HangleiterStatus}. In the random circuit sampling experiments~\cite{AruteEtAl,WuEtAl,MorvanEtAl}, a device receives a random circuit $C$ on $n$ qubits and is asked to return strings $z\in\{0,1\}^n$ distributed according to the output distribution $p(z)=|\bra zC\ket{0^n}|^2$ of the circuit. The photonic experiments~\cite{ZhongEtAl,MadsenEtAl} instead perform Gaussian boson sampling. Since it is infeasible to check directly whether a few samples follow a distribution over $N=2^n$ outcomes, the random circuit samples are instead scored by the \emph{linear cross-entropy benchmark} (XEB)~\cite[Supplementary Information, eq.~(27)]{AruteEtAl}, a linear version of cross-entropy benchmarking~\cite{BoixoEtAl}: a classical computer computes the ideal probabilities of the returned strings $z_1,\dots,z_k$ and reports the average $(N/k)\sum_ip(z_i)$. We use the unshifted convention $B=1+F_{\mathrm{XEB}}$, where $F_{\mathrm{XEB}}$ is the experimental XEB fidelity. Uniformly random strings have expected score $1$, because $\sum_zp(z)=1$, while an honest sampler has expected score $N\sum_zp(z)^2$, the \emph{ideal score}, which is about $2$ for random circuits. The experiment of~\cite{AruteEtAl} reported a score of about $1.002$, and there is complexity-theoretic evidence that efficient classical algorithms cannot score noticeably above $1$~\cite{AaronsonGunn}.

A high score, however, does not imply that the device sampled from $p$. A device that knew $p$ and always returned the most likely string would score far more than an honest sampler (for a Haar-random state, about $\ln N\approx42$ when $n=60$, compared with about $2$), even though its output would be completely predictable. This gap between scoring well and sampling correctly leads to the question that motivates this work:
\begin{center}
\emph{What does a high cross-entropy score certify about an untrusted quantum device?}
\end{center}
We study this question for the application that has motivated much of the theory, namely certified randomness.

Generating random bits with quantum mechanics is straightforward, but convincing a skeptical user who does not trust the hardware is much harder. In Bell experiments, two devices receive independent uniform public inputs, chosen independently of their shared state with Eve, and cannot communicate during the round. If they win the CHSH game~\cite{CHSH} with probability above the classical limit $3/4$, one party's output bit contains certified entropy~\cite{Colbeck,PironioEtAl}, even conditioned on the inputs and an adversary who is entangled with the devices~\cite{VaziraniVidick,MillerShi}. At Tsirelson's bound $\cos^2(\pi/8)\approx0.854$, which is the largest winning probability of any quantum strategy~\mbox{\cite{Tsirelson}}, either party’s output bit has one bit of min-entropy, even conditioned on the public inputs and Eve’s quantum register. The drawback of this approach is that the two devices must not communicate, a condition that a user who simply receives the bits cannot verify.

An alternative, proposed in~\cite{AaronsonHung}, is to certify randomness with a single device by reusing the sampling experiment. In this protocol a verifier sends challenge circuits, the device must answer quickly with samples (too quickly for a classical simulation to keep up), and the verifier later checks their XEB score. The adversary, whom we call Eve, may have built the device, may share entanglement with it, and may learn the circuits afterward. Even an ideal sample is not perfectly unpredictable to Eve: since she knows the circuit, she can guess its most likely output, which an honest sampler returns with probability about $(\ln N)/N$. Moreover, a device can fix its first $O(\log n)$ output bits by resampling until its output starts with that many zeros, which changes its score only slightly (\cref{prop:randomness-amplification} analyzes a variant that uses amplitude amplification). Consequently, the most that one can hope to certify from a single sample is $n-O(\log n)$ bits.

The guarantees proved so far remain well short of this limit. In the random oracle model, Aaronson and Hung~\cite[Theorem~7.15 and Corollary~7.16]{AaronsonHung} show that a device making polynomially many queries with expected linear XEB score $1+\gamma$, for a constant $\gamma>0$, has a single-round output state close to one with about $\gamma n$ bits of conditional von Neumann entropy, even against an entangled Eve. The same work asks whether, under a plausible hardness assumption, a single sample can be certified to carry the maximum of $n-O(\log n)$ bits. For a list of strings whose empirical Fourier score is high, $n/40$ bits of min-entropy for the whole list are certified in~\cite{BassirianEtAl} against an adversary without quantum side information.

Before one can ask how much randomness a near-ideal score certifies, one must first know whether an efficient device can exceed the ideal score at all. The reason is that the score is linear in the device's output distribution. If some efficient device scored $2+\varepsilon$, then a new device could run it with probability $1/(1+\varepsilon)$ and otherwise output a fixed string known to Eve. The fixed string has expected score $1$, because its expected ideal probability is $1/N$. This mixture has expected score exactly $2$, essentially the ideal score, yet Eve guesses its output with probability at least $\varepsilon/(1+\varepsilon)$, which equals $1/6$ when $\varepsilon=0.2$. Hence, a near-ideal score can certify nearly all of the randomness only if no efficient device exceeds the ideal score by a constant.

The question of how high an efficient quantum device can score was raised in~\cite{AaronsonTalk} and studied in~\cite{Kretschmer} under the name \emph{quantum supremacy Tsirelson inequality}. In the analogy with the CHSH game, the conjectured classical limit $1$ plays the role of $3/4$, and the question is whether the ideal score plays the role of $\cos^2(\pi/8)$. The first answers were obtained in~\cite{Kretschmer}, in the query model. For a Haar-random state, it was shown there that an expected score of $2+\varepsilon$ requires $\Omega(\varepsilon^{5/4}N^{1/4}/\log N)$ queries, so that efficient algorithms cannot exceed the ideal score by $1/\mathrm{poly}(n)$. The same work gave an algorithm that scores $2+\Omega(1)$ with $O(N^{1/3})$ queries and conjectured that $N^{1/3}$ queries are necessary. For Fourier sampling, where the honest sampler scores about $3$, it was shown that no algorithm making a single query does better. The corresponding question for many queries is Problem~12 in~\cite{Aaronson}, which asks how many queries are needed to produce a list of Fourier samples whose empirical score is about $4$.

\paragraph{The query model and the expected score.}
We work in the query model developed in~\cite{Kretschmer}. In this model the circuit is replaced by an oracle drawn once from a known ensemble. The algorithm may query the oracle in superposition, adaptively, and with controlled and inverse queries, and all other computation is free. Theorems in this model are unconditional. Honest sampling requires only one query, whereas exceeding its expected score may require many more. Our bounds quantify this separation. For an oracle with ideal distribution \(p\), the \emph{expected score} of an algorithm \(\mathcal A\) that outputs a string \(Z\) is
\begin{equation}\label{eq:intro-score}
 B(\mathcal A)=N\,\E\bigl[p(Z)\bigr],
\end{equation}
where the expectation is over the oracle and the randomness of the algorithm. We consider the following two ensembles.

\emph{Fourier sampling.} Here the oracle is a uniformly random function $f:\{0,1\}^n\to\{\pm1\}$, accessed through the value oracle $\ket{x,b}\mapsto\ket{x,b\oplus(1-f(x))/2}$. A single query suffices to sample from a distribution determined by $f$: apply Hadamard gates to $\ket{0^n}$, query the oracle with the target qubit in the state $\ket-$, which turns the bit flip into the sign $f(x)$ and so multiplies each $\ket x$ by $f(x)$, apply Hadamard gates again, and measure~\cite{AaronsonChen}. The outcome is $z$ with probability
\begin{equation}\label{eq:intro-fourier}
 p_f(z)=\widehat f(z)^2,\qquad \widehat f(z)=\frac1N\sum_{x\in\{0,1\}^n}(-1)^{z\cdot x}f(x),
\end{equation}
and these probabilities sum to $1$ by Parseval's identity. Each $\sqrt N\widehat f(z)$ is a normalized sum of $N$ independent random signs, and hence nearly a standard Gaussian, so the honest sampler scores its fourth moment, $N\,\E\sum_zp_f(z)^2=3-2/N$.

\emph{Haar-random states.} Here the oracle hides a uniformly random unit vector $\ket\phi=\sum_z\alpha_z\ket z$ in $\C^N$, and $p_\phi(z)=|\alpha_z|^2$. The \emph{canonical state-preparation oracle} $O_\phi$~\cite{AmbainisRosmanisUnruh} swaps $\ket\phi$ with a fixed marker state $\ket\bot$ that is orthogonal to all basis states $\ket z$, and acts as the identity on the orthogonal complement of $\ket\phi$ and $\ket\bot$. With \emph{Haar-random unitary access}, the algorithm may instead apply, to any state, a Haar-random unitary $U$ with $U\ket{0^n}=\ket\phi$, its inverse, and controlled versions of both, as it could if it held the circuit itself. This form of access is at least as strong, since $O_\phi$ can be built from controlled applications of $U$ and $U^\dagger$ (\cref{sec:prelim-notation}), so lower bounds against it are more meaningful. The vector $(p_\phi(z))_z$ is uniformly distributed on the probability simplex, so each $Np_\phi(z)$ is nearly an exponential random variable, and the honest sampler scores its second moment, $N\,\E\sum_zp_\phi(z)^2=2N/(N+1)$.

\subsection{Our results}\label{sec:intro-results}

Our results answer both questions in the query model, namely how high an efficient device can score and how much randomness a near-ideal score certifies, and the answers follow the CHSH analogy closely. For algorithms that make $o(N^{1/3})$ queries, the ideal score behaves like a Tsirelson bound: no such algorithm exceeds it by more than $o(1)$, while $O(N^{1/3})$ queries suffice to exceed it by a constant. Moreover, for Fourier sampling and for Haar-random states accessed through $O_\phi$, a device that makes polynomially many queries and comes within $o(1)$ of the ideal score has $n-O(\log n)$ bits of smooth min-entropy with smoothing $o(1)$, in analogy with the perfectly random bit at Tsirelson's bound. \Cref{fig:landscape,tab:results} summarize the first result.

\FloatBarrier
\begin{figure}[htbp]
\centering
\begin{tikzpicture}
\begin{axis}[
  width=0.8\textwidth, height=4.8cm,
  xmin=-0.035, xmax=0.5, ymin=0.75, ymax=3.75,
  xtick={0,0.25,0.3333,0.5},
  xticklabels={$1$,$N^{1/4}$,$N^{1/3}$,$N^{1/2}$},
  ytick={1,2,3},
  xlabel={number of queries $q$ (logarithmic scale)},
  ylabel={expected score},
  axis lines=left, clip=false,
  tick label style={font=\small}, label style={font=\small},
]
\fill[black!7] (axis cs:0.25,0.75) rectangle (axis cs:0.3333,3.75);
\node[font=\footnotesize,align=center,text=black!70] at (axis cs:0.2917,1.2) {open before\\[-1pt]this work};
\addplot[domain=0:0.3365, samples=300, very thick] {2 + 0.6*(2^(40*(3*x-1)/2) + 2^(40*(3*x-1)))};
\node[font=\footnotesize,anchor=south west] at (axis cs:0.03,2.08) {upper bound of \cref{thm:intro-xeb}};
\addplot[dashed, thick, black!55] coordinates {(0,2) (0.3333,2)};
\addplot[dashed, thick, black!55] coordinates {(0.3333,3) (0.5,3)};
\addplot[only marks, mark=*, mark size=2.2pt] coordinates {(0,2) (0.3333,3)};
\addplot[only marks, mark=o, mark size=2.2pt] coordinates {(-0.035,1)};
\node[font=\footnotesize,anchor=north west] at (axis cs:0.0,1.95) {honest sampler};
\node[font=\footnotesize,anchor=west] at (axis cs:0.345,3.18) {squaring algorithm};
\node[font=\footnotesize,anchor=west] at (axis cs:-0.025,1) {uniform output, no queries};
\end{axis}
\end{tikzpicture}
\caption[Expected XEB scores and query complexity]{Expected scores for a Haar-random state as a function of the number of queries, drawn schematically on a logarithmic scale. The solid curve illustrates the scaling form of the upper bound of \cref{thm:intro-xeb}, and the dashed segments mark the scores of the honest sampler, $2N/(N+1)\approx2$ with one query, and of the squaring algorithm of part~3 of \cref{thm:intro-xeb}, $3-o(1)$ with $O(N^{1/3})$ queries; these are lower bounds on the best score, which is not determined beyond $N^{1/3}$ queries. No algorithm exceeds $2+o(1)$ with $o(N^{1/3})$ queries; at $N^{1/4}$ queries, where the previous lower bound stopped, the upper bound is within $O(N^{-1/8})$ of the ideal score, and the shaded range between $N^{1/4}$ and $N^{1/3}$ was left open by the previous bounds. For Fourier sampling the bounds have the same form, with the scores $2$ and $3$ replaced by $3$ and $5$. The uniform output, which makes no query, is drawn just left of $q=1$, since a logarithmic axis has no zero. The axis extends to $N^{1/2}$ queries, the number of measured copies at which a repeated outcome typically first appears.}
\label{fig:landscape}
\end{figure}

\begin{table}[t]
\centering
\renewcommand{\arraystretch}{1.15}
\begin{tabular}{@{}lcccc@{}}
\toprule
& \multicolumn{2}{c}{Previous work~\cite{Kretschmer}} & \multicolumn{2}{c}{This work}\\
\cmidrule(lr){2-3}\cmidrule(l){4-5}
Task & lower bound & upper bound & lower bound & upper bound\\
\midrule
Haar-random state, score $2+\varepsilon$ & $\widetilde\Omega(N^{1/4})$ & $O(N^{1/3})$ & $\Omega(N^{1/3})$ & $O(N^{1/3})$\\
Fourier sampling, score $3+\varepsilon$ & $>1$ & --- & $\Omega(N^{1/3})$ & $O(N^{1/3})$\\
Collision distribution, error $\nu$ & --- & --- & $\Omega(N^{1/3})$ & $O(N^{1/3})$\\
\bottomrule
\end{tabular}
\caption{The number of queries needed to reach an expected score of $2+\varepsilon$ for a Haar-random state or $3+\varepsilon$ for Fourier sampling, where $\varepsilon>0$ is a fixed constant, and to sample the collision distribution within a fixed total variation error $\nu$ (\cref{cor:collision}), for distributions over $N=2^n$ outcomes. The upper bounds of this work come from the squaring algorithm and need $\varepsilon<1$ and $\varepsilon<2$, respectively, while the previous Haar upper bound holds for the constant $\varepsilon$ attained by the algorithm of~\cite{Kretschmer}. The lower bounds hold for all $0<\varepsilon\le1$ and $0<\varepsilon\le4$, respectively, and the admissible range of $\nu$ is stated in \cref{cor:collision}. The previous Fourier bound is that one query does not suffice~\cite{Kretschmer}. For Haar-random states, every bound holds both for the canonical oracle $O_\phi$ and for Haar-random unitary access, the lower bound in the latter case once $N$ is large enough in terms of $\varepsilon$ (\cref{thm:haar-unitary}).}
\label{tab:results}
\end{table}

\begin{theorem}[Query complexity of exceeding the ideal score; informal]\label{thm:intro-xeb}
Let $\mathcal A$ be a quantum algorithm that makes at most $q$ queries.
\begin{enumerate}
\item For Fourier sampling, $B(\mathcal A)\le3+O(q^{3/2}/\sqrt N+q^3/N)$. In particular, an expected score of $3+\varepsilon$, for $0<\varepsilon\le4$, requires $q\ge(3\varepsilon/16)^{2/3}N^{1/3}$.
\item For a Haar-random state, $B(\mathcal A)\le2N/(N+1)+O(q^{3/2}/\sqrt N+q^3/N)$ with the canonical oracle $O_\phi$, and under Haar-random unitary access the same bound holds with an additional term $O(q\log N/\sqrt N)$. Hence an expected score of $2+\varepsilon$ with $0<\varepsilon\le1$ requires $\Omega(\varepsilon^{2/3}N^{1/3})$ queries with the canonical oracle, and also with Haar-random unitary access once $N$ is large enough in terms of $\varepsilon$.
\item A variant of the $O(N^{1/3})$-query algorithm of~\cite{Kretschmer}, which we call the \emph{squaring algorithm}, makes $O(N^{1/3})$ queries and has expected score at least $5-o(1)$ for Fourier sampling and at least $3-o(1)$ for either form of Haar access.
\end{enumerate}
Consequently, for every fixed $b\in(3,5)$ for Fourier sampling, and every fixed $b\in(2,3)$ for either form of Haar access, the query complexity of achieving $B(\mathcal A)\ge b$ is $\Theta(N^{1/3})$, where the constant in the lower bound depends on $b$.
\end{theorem}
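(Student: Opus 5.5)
The plan is to prove parts~1 and~2 with a progress-measure argument and part~3 by analyzing the squaring algorithm directly; the $\Theta(N^{1/3})$ consequence then follows by combining them.

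\textbf{Setup.} Write the final state of $\mathcal A$ as $\ket{\psi_q}$, jointly with a purification of the oracle ensemble. The excess score $B(\mathcal A)-B_{\mathrm{ideal}}$ can be written as an inner product between $\ket{\psi_q}$ and an operator that is ``centered'', i.e.\ has zero expectation against any state that has learned nothing beyond one query's worth of structure. I will build a progress measure $\Delta_t\ge0$ with three properties:
\begin{itemize}
\item[(i)] $\Delta_t$ is zero for the honest one-query sampler.
\item[(ii)] $B(\mathcal A)\le B_{\mathrm{ideal}}+O(\Delta_q+\Delta_q^2)$.
\item[(iii)] $\Delta_{t+1}\le\Delta_t+O(\sqrt{t/N})$.
\end{itemize}
Summing (iii) gives $\Delta_q=O(q^{3/2}/\sqrt N)$, and substituting into (ii) gives exactly $O(q^{3/2}/\sqrt N+q^3/N)$. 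The explicit threshold $q\ge(3\varepsilon/16)^{2/3}N^{1/3}$ then comes from tracking constants in (ii) and (iii) and solving for $q$.

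\textbf{Fourier case.} Purify $f$ as a uniform superposition over truth tables, and pass to the compressed-oracle (Zhandry) picture, in which after $t$ queries the database holds at most $t$ entries. Expanding $N p_f(z)=N\widehat f(z)^2$ gives $\frac1N\sum_{x,y}(-1)^{z\cdot(x+y)}f(x)f(y)$. Its diagonal part contributes the constant baseline. The off-diagonal part, together with the fourth-moment structure, is governed by amplitude on database configurations that contain recorded pairs $\{x,y\}$ correlated with the output $z$. I will therefore define $\Delta_t$ as the norm of the coherent amplitude for ``removing a pair'' from the database, in the sense of a degree-two component of the polynomial-method expansion. One query changes this amplitude by at most the overlap of one new point with $t$ recorded points, which gives the $\sqrt{t/N}$ increment.

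\textbf{Haar case.} Purify $\ket\phi$. Since $(p_\phi(z))_z$ is Dirichlet$(1,\dots,1)$, I will expand the algorithm's state in Dirichlet orthogonal polynomials graded by level. I will define $\Delta_t$ as the square root of the average level beyond what a single observation produces. The key lemma is an exact identity for how one application of $O_\phi$ (or its controlled and inverse versions) shifts the level distribution; from it I will derive property~(iii). For unitary access, I will simulate the Haar unitary by $O_\phi$ together with an independent Haar unitary on the complement, and pay an error $O(\log N/\sqrt N)$ per query from concentration of the off-$\phi$ columns. This yields the additional term $O(q\log N/\sqrt N)$.

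\textbf{Part 3.} Run the squaring algorithm as follows. Prepare $m\approx N^{1/3}$ honest samples using $m$ queries. Then use quantum collision or element-distinctness style search over pairs to find a repeated outcome, which costs $O(N^{1/3})$ queries for $m=N^{1/3}$. Output the collided value. A collision occurs at $z$ with probability proportional to $p(z)^2$, so the score becomes $N\sum p^3/\sum p^2$. This equals $E[X^3]/E[X^2]$: for Gaussian squared $X$ that is $15/3=5$, and for exponential $X$ it is $6/2=3$. The $o(1)$ losses come from concentration and from the failure probability of the search.

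\textbf{Main obstacle.} I expect the main obstacle to be the per-query increment bound (iii), and above all the exact one-observation identity in the Haar case, where correlations among the Dirichlet coordinates prevent a clean compressed-oracle decomposition. As a first attempt, I would derive that identity via Laplace/Gamma representations of the Dirichlet polynomials.
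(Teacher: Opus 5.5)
Your progress-measure plan for parts~1 and~2 with the canonical oracle follows the paper's route, with two cautions. In the Fourier case the measure has to be exactly $\bigl(\sum_z\|b_z^2\Psi\|^2\bigr)^{1/2}$ with $b_z=N^{-1/2}\sum_x\chi_z(x)a_x$; a generic ``degree-two component'' can grow by a constant per query. In the Haar case the exact identity is one for Dirichlet orthogonal polynomials under a single observation, not for $O_\phi$ itself, and it has to be supplemented by a bound on how much weight a low-degree polynomial in $p$ can put on a set of outcomes.

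\textbf{Part~3 has a genuine gap.} Once your $m\approx N^{1/3}$ honest samples are measured they are classical, so searching them for a repeat costs no queries. Moreover, a repeat exists only with probability at most $\binom m2\sum_zp_z^2=O(N^{-1/3})$. Keeping the samples coherent and naively amplifying a collision among them instead costs about $\sqrt N$ queries, because every reflection re-prepares all $m$ copies.

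What works, with the structure of the Brassard--H{\o}yer--Tapp collision algorithm, is the following. Store the outcomes as a set $S$, prepare one fresh copy of $\ket{\psi_\omega}$, and amplify it onto $S$. Each round uses the free reflection $I-2\Pi_S$ and the two-query reflection $2\ket{\psi_\omega}\bra{\psi_\omega}-I$, and you need $\Theta(\sqrt{N/m})$ rounds, since $p(S)\approx m\sum_zp_z^2$. The coincidence that matters is between the amplified sample and a stored outcome, not within the list.

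Given $S$, a successful run outputs $p$ conditioned on $S$. Only the average over $S$ is close to the collision distribution $r_p$, because $\Pr[z\in S]\approx mp_z$ while $p(S)$ concentrates with relative fluctuation $O(m^{-1/2})$. The same concentration is what lets a fixed number of rounds, tuned to the typical value of $p(S)$, fail with probability only $O(1/m+m/N)$ (\cref{def:squaring}).

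Finally, a single reward can be as large as $N$, so a total-variation bound yields a score bound only through Cauchy--Schwarz with the fourth moment. Writing $M_j=N^{j-1}\sum_zp_z^j$, this gives $\sum_za(z)Np_z\ge M_3/M_2-\sqrt{(M_4/M_2)\TV(a,r_p)}$ (\cref{lem:squaring-tv-score}). Concentration and failure probability alone do not bound the loss.

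\textbf{A second, smaller gap is the unitary-access term.} The simulation is not obtained simply by combining $O_\phi$ with an independent Haar unitary on the complement. The paper imports it (\cref{thm:kretschmer-simulation}): $2q$ queries to $O_\phi$ reproduce $\mathcal A$, averaged over $U_\phi$, to diamond distance $(10q+4)/\sqrt N$, with no logarithm.

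The $\log N$ does not come from concentration of the other columns. It comes from a conversion step your plan omits. For a fixed $\phi$, the two scores differ by at most the total variation distance times the largest reward $N\max_zp_z$, and $\E_\phi N\max_zp_z=H_N\le\ln N+1$ (\cref{lem:heaviest}). Without this step a channel error of any size does not control the score. If you applied it to your claimed per-query error of $O(\log N/\sqrt N)$, you would get $O(q\log^2N/\sqrt N)$ rather than the stated term. With it, the canonical bound applied at $2q$ queries gives part~2 for unitary access.
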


Thus the honest sampler, which makes a single query, remains optimal up to $o(1)$ until an algorithm makes about $N^{1/3}$ queries. For example, at $N^{1/4}$ queries, a number for which the previous lower bound still allows a constant advantage, the score stays within $O(N^{-1/8})$ of the ideal score, and an expected Fourier score of $4$ requires more than $0.32\,N^{1/3}$ queries, which is more than $3\times10^5$ for $n=60$. For Haar-random states, the theorem confirms the conjecture of~\cite{Kretschmer}, as well as the expectation stated there that the $O(N^{1/3})$-query algorithm of that work can be made to score $3-o(1)$; we prove this for the variant described below. For Fourier sampling, it settles the expected-score version of Problem~12 mentioned in~\cite{Aaronson}. 
Outputting a uniformly random entry of a list uses no further queries, and the expected score of that entry is the expected empirical score of the list. If the empirical score is at least $3.9$ with probability $0.9$, then, since rewards are nonnegative, this expected score is at least $0.9\cdot3.9>3.5$, so producing such a list takes more than $0.2\,N^{1/3}$ queries. In the other direction, the empirical score of independent runs of the squaring algorithm exceeds any fixed score below $5$, or below $3$ for Haar-random states, with high probability (\cref{cor:empirical}).

The squaring algorithm explains where the exponent $1/3$ comes from. It draws $m$ ideal samples and stores the outcomes. The stored set has total probability of order $m/N$, and amplitude amplification~\cite{BrassardHoyerMoscaTapp} raises to nearly one the probability that one more ideal sample lies in it. This costs $m+O(\sqrt{N/m})$ queries, which is $O(N^{1/3})$ for $m=N^{1/3}$. Likely outcomes are stored more often, and the amplified sample, which is distributed according to $p$ conditioned on lying in the stored set, favors them again. As a result, the output $z$ has probability nearly proportional to $p(z)^2$, which explains the name. A sampler whose output distribution is proportional to $p(z)^2$ scores $N\sum_zp(z)^3/\sum_zp(z)^2$, a ratio of consecutive moments, which is $15/3=5$ for the nearly Gaussian Fourier coefficients and $6/2=3$ for the nearly exponential Haar probabilities. To reach these scores, our variant runs a fixed number of amplification steps, tuned to the typical probability of the stored set, as suggested in~\cite{Kretschmer}. The algorithm has the same structure as the quantum algorithm for finding collisions~\cite{BrassardHoyerTapp}, whose query complexity is also $N^{1/3}$~\cite{AaronsonShi}, but we do not know of a reduction between the two problems. Moreover, the gain in score does not come from a repeated outcome among the stored samples, which for $m=N^{1/3}$ are distinct with high probability, but from the coincidence of the amplified sample with one of them.

\paragraph{The lower-bound framework.}
For Haar-random states, we first explain why earlier methods stopped at $N^{1/4}$. The argument of~\cite{Kretschmer} first uses the simulation of~\cite{AmbainisRosmanisUnruh} to replace queries by resource states of the form $a\ket\phi+b\ket\bot$. Haar phase averaging then reduces the score analysis to ordinary copies of $\ket\phi$, with a copy count quadratic in $q$, up to accuracy-dependent and logarithmic factors. A constant advantage over the ideal score requires collisions among the measured copies, which occur at about $\sqrt N$ copies. This yields the earlier $N^{1/4}$ query lower bound, up to logarithmic factors. The reduction thus incurs a quadratic cost before any collision is counted. Avoiding this loss requires an argument about the queries themselves, whose effect on the state is coherent and leaves no classical record of what the algorithm has learned. Instead of simulating queries by copies, we therefore ask what the state of an algorithm must look like if it is to score above the honest sampler, and we track a \emph{progress measure} that is zero for the honest sampler and grows, after $t$ queries, by only $O(\sqrt{(t+1)/N})$ at the next query.

For Fourier sampling, we purify the random choice of function, as in the compressed oracle technique~\cite{Zhandry}. In this representation, the basis states of the purifying register are indexed by subsets of $\{0,1\}^n$, which we call recorded sets. Each query changes a recorded set by at most one point. The number of recorded points is not itself a useful measure, since an algorithm can record many points without scoring above the honest sampler; what matters is how the points are arranged. In the honest sampler, the part of the state with output $z$ records exactly one point in the Hadamard basis state $\ket{\chi_z}=H^{\otimes n}\ket z=N^{-1/2}\sum_x(-1)^{z\cdot x}\ket x$. To score more, an output branch must have a nonzero amplitude for removing a pair with the coefficients of its Hadamard basis state. Our measure is the norm of the coherent amplitudes for removing such pairs. Comparing pair removal before and after a query leaves one other recorded point to remove, with coefficient $O(1/\sqrt N)$ (\cref{sec:overview-fourier}).

For Haar-random states, the natural analogue would count repeated amplitudes such as $\alpha_z^2$ or $|\alpha_z|^2$. However, every term of the normalization $\sum_z|\alpha_z|^2=1$ repeats an amplitude, although the sum is a constant. We instead expand the state in orthogonal polynomials on the probability simplex, an expansion in which the normalization has degree zero. Each term has a nonnegative integer \emph{level} that counts repeated amplitudes after this expansion. The progress measure $\lambda$ is the square root of the average level, so it need not be an integer. To show that a query increases $\lambda$ only slightly, we prove an exact identity for the orthogonal polynomials of the Dirichlet distribution, which describes how their degree changes when one more outcome is observed (\cref{sec:overview-haar}).

\paragraph{Certified randomness at the ideal score.}
We now turn to the randomness that a near-ideal score certifies. Since the answer depends on what Eve may know and do, we first describe the experiment precisely. It models one sample from a round of the protocol of~\cite{AaronsonHung}, with the oracle in place of the challenge circuit.
\begin{enumerate}\setlength{\itemsep}{0pt}
\item The device and Eve share an arbitrary entangled state, of arbitrary finite dimension, independent of the freshly sampled oracle.
\item The oracle is drawn: a random function $f$, or a Haar-random state $\ket\phi$ accessed through $O_\phi$. From this point on, the device and Eve do not communicate.
\item The device makes $q$ queries, acting arbitrarily on its own registers and on its half of the shared state, and outputs a string $Z$.
\item Eve receives the complete classical description $R$ of the oracle (the truth table of $f$, or the vector $\phi$) and measures her register $E$, in a way that may depend on $R$, in order to guess $Z$.
\end{enumerate}
No assumption is made about how the device works. The restriction on communication cannot be removed, since a device that sends $Z$ to Eve certifies nothing, and step~4 gives Eve all the information that the device could have obtained from its queries. Without smoothing, the conditional min-entropy $\Hmin(Z\mid RE)$ equals $-\log_2$ of the best probability with which Eve can guess $Z$~\cite{KoenigRennerSchaffner}. Its smooth version $\Hmin^\eta(Z\mid RE)$ is at least $h$ if some normalized state within purified distance $\eta$ of the actual joint state of $R$, $Z$, and $E$ has conditional min-entropy at least $h$~\cite{TomamichelColbeckRenner}. The purified distance is a fidelity-based distance that lies between the trace distance and the square root of twice the trace distance.\footnote{For Haar-random states, $R$ is a continuous classical register; both quantities extend to this case in the standard way, by conditioning on each value of $R$, and the bound then holds for every finite discretization of $\phi$.} Smooth min-entropy is the quantity needed for randomness extraction, since a seeded extractor that is secure against quantum side information turns such a $Z$ into nearly $h$ bits that are close to uniform even given Eve's register~\cite{TomamichelEtAlHashing}.

\begin{theorem}[Certified randomness at the ideal score; informal]\label{thm:intro-randomness}
Let $\delta\ge0$. Consider a device in this experiment that makes $q\ge1$ queries, and suppose that its expected score is at least $3-2/N-\delta$ for Fourier sampling, or at least $2N/(N+1)-\delta$ for a Haar-random state accessed through $O_\phi$; we call $\delta$ the allowed \emph{deficit}. Then there is an $\eta=O(\sqrt\delta+q^{3/4}N^{-1/4})$ such that, if $\eta\le1/2$,
\begin{align*}
 \Hmin^\eta(Z\mid RE)&\ge n-4\log_2q-\log_2n-O(1)&&\text{for Fourier sampling,}\\
 \Hmin^\eta(Z\mid RE)&\ge n-2\log_2q-\log_2n-O(1)&&\text{for Haar-random states.}
\end{align*}
The constants do not depend on the dimension of the state shared with Eve. In particular, if $q\le n^{O(1)}$ and $\delta=o(1)$, the output has $n-O(\log n)$ bits of smooth min-entropy with smoothing $o(1)$, and the smoothing is negligible if $\delta$ is negligible.
\end{theorem}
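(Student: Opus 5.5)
The plan is to reduce the min-entropy bound to the progress measures behind \cref{thm:intro-xeb}, applied not to a single device but to the joint system of device and Eve, with Eve's later measurement treated as a guessing test. Fix Eve's strategy: for each oracle value $R=r$, a POVM $\{M^r_z\}_z$ on $E$. The guessing probability is $P_{\mathrm{guess}}=\E_r\sum_z\operatorname{tr}\bigl[(\ket z\bra z\otimes M^r_z)\rho^r_{ZE}\bigr]$. The key idea is a \emph{twisted score}. Since the device and Eve share a state that is independent of the oracle, we can define a combined algorithm that, after the device outputs $Z$, applies Eve's measurement with $r$ replaced by the oracle itself. This step is free in the query model because measurement costs no queries once the relevant parts of $r$ are known; to avoid needing all of $r$ we instead work in the purified oracle picture, where the oracle register is quantum and $M^r_z$ is a controlled operation on it. The goal is to show that the deficit, together with the progress measure, forces the output branch $z$ to be nearly the honest state on the oracle register. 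More precisely, after projecting away the components where the measure is nonzero, the purifying register must be close to $\ket{\chi_z}$ tensored with a uniform superposition of the other records, or in the Haar case close to level-one Dirichlet polynomials.

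\textbf{Step 1 (structure from near-ideal score).} From the proof of \cref{thm:intro-xeb} we extract a decomposition of the final joint state, $\ket\Psi=\ket{\Psi_{\mathrm{good}}}+\ket{\Psi_{\mathrm{bad}}}$. The bad part has norm equal to the progress measure, which is at most $O(q^{3/4}N^{-1/4})$ after summing the per-query increments $\sqrt{(t+1)/N}$. The good part is a state whose score is at most the ideal score. A score within $\delta$ of ideal then forces the good part to be within $O(\sqrt\delta)$, in norm, of a state of honest form. This is a rigidity statement, and I expect it to come from writing the ideal score as the squared norm of a projection and using a Cauchy--Schwarz saturation argument. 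The two errors together give the purified distance $\eta$.

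\textbf{Step 2 (guessing bound on the honest form).} We then bound Eve's guessing probability on the smoothed state. In the honest form, conditioned on $E$ and on the full oracle $r$, the output $z$ has weight at most $p_r(z)$ times a factor that accounts for Eve's entanglement. Here we use that Eve's operators act only on $E$, which is independent of the oracle, so $\sum_z M^r_z=I$ gives $P_{\mathrm{guess}}\le\E_r\max_z p_r(z)\cdot(\text{multiplicity})$. The multiplicity enters because the good part may still record up to $q$ points, or $q^2$ pairs in the Fourier case. The device can concentrate weight on outputs correlated with recorded points, which costs a factor $\mathrm{poly}(q)$: I expect $q^4$ for Fourier (pairs of recorded points, each contributing $q^2$) and $q^2$ for Haar. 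Meanwhile $\E\max_zp(z)=O(\log N/N)=O(n/N)$. Taking $-\log_2$ yields $n-4\log_2q-\log_2n-O(1)$ and $n-2\log_2q-\log_2n-O(1)$, respectively. Dimension independence of $E$ follows because only $\sum_zM_z=I$ is used.

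\textbf{Main obstacle.} The hard step is Step~2's control of correlations between the output and the recorded points that remain in the good part. An algorithm may legitimately know about $q$ points of $f$ and choose $z$ based on them, and Eve, learning $f$ afterwards, could reproduce that choice. I would first try to bound, for each fixed recorded set $S$, the maximal conditional probability of any $z$ by $\max_z p(z)$ times the number of characters consistent with small sub-records of $S$. I would then show that this count is polynomial in $q$ by the same one-point-change property used for the progress measure.
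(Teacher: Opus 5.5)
Your outer frame matches the paper's: a nearby state built from the slack in the score bound, then $p_{\mathrm{guess}}\le\mathrm{poly}(q)\,\E_\omega\max_zp_\omega(z)$ using only $\sum_zM_z=I$. But Step~2 is the heart of the theorem, and it is missing; the route you sketch does not supply it.

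Eve learns the oracle and then optimizes her measurement, so in $p_{\mathrm{guess}}=\E_\omega\sup_M\sum_z\operatorname{Tr}(M_z\tau^\omega_z)$ the supremum sits inside the average. You therefore need, for \emph{every fixed} $f$ or $\phi$, an operator inequality $\tau^\omega_z\preceq c\,p_\omega(z)\rho_E$ on the nearby branches. Everything the progress-measure machinery gives is an $L^2$ statement averaged over the oracle. Recorded sets index the Fourier basis $\ket S$ of the purifying register, whereas $R$ is its computational basis $\ket f$. For fixed $f$, every $S$ is present with amplitude $\chi_S(f)$. So ``the maximal conditional probability of $z$ for a fixed recorded set $S$'' is not a quantity to which Eve's guessing probability reduces. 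Counting characters consistent with sub-records gives neither a pointwise bound nor any reason why Eve's conditional state should be dominated by $\rho_E$.

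The twisted score does not help either. Eve's POVM depends on all of $r$, so folding it into the device gives an operation outside the $q$-query model, where the growth bounds do not apply. Your guessed exponents ($q^2$ per recorded pair) are also not how the losses arise.

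The missing idea is to choose the auxiliary vectors so that, for each fixed oracle, $G_z(\omega)=(K_z(\omega)\otimes I_E)\psi_0$ with $K_z(\omega)$ acting on the device alone, and then to bound $\|K_z(\omega)\|$ pointwise. \Cref{lem:local-operator} then gives $\operatorname{Tr}(M\tilde\rho^\omega_z)\le\|K_z(\omega)\|^2\operatorname{Tr}(M\rho_E)$ for every $M\succeq0$, uniformly over Eve's strategies and the dimension of $E$.

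\emph{Fourier sampling.} Take $G_z=(b_z+b_z^\dagger)b_z\Psi_z$. The operator $b_z$ is the directional derivative $\partial_{\chi_z}$ of the multilinear extension of $f\mapsto V_f$, which has norm at most $1$ on the cube. Markov's inequality at an endpoint, applied along the two directions that rescale $f$ on $\mathcal I_+$ and on $\mathcal I_-$, gives $\|(\partial_{\chi_z}V)(f)\|\le2q^2/\sqrt N$, hence $\|K_z(f)\|^2\le4q^4p_f(z)$ (\cref{lem:fr-derivative}). This, not pairs of recorded points, is the source of $q^4$.

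\emph{Haar states.} Take $G_z=(I-Q_z)\Psi_z$, the part of the branch that depends on the phase $\theta_z$. Then $K_z=\Pi_z(2\pi)^{-1}\int_0^{2\pi}(V_\phi-V_{\Theta_{z,t}\phi})\,dt$. A hybrid argument with $\|O_\phi-O_{\Theta_{z,t}\phi}\|\le2\sqrt2\,|\sin(t/2)|\sqrt{p_z}$ gives $\|K_z\|^2\le4q^2p_z$.

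Step~1 also needs correcting. The closeness comes from retaining the discarded term $-2(1-1/N)\sum_z\|(D_z-I)\Psi_z\|^2$, through the identity $I+b_z^{\dagger2}b_z^2-D_z=(D_z-I)^2+\tfrac2Nb_z^\dagger Db_z$. In the Haar case it comes from the fact that level-zero terms with $\gamma_z=0$ score at most $1$ on branch $z$. So the Haar target is level zero with an unpaired factor at $z$, not level one. The Fourier branch need not look like $\ket{\chi_z}$ times a uniform superposition of other records, since other recorded points may be correlated with the device.

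Finally, the progress measure is $O(q^{3/2}N^{-1/2})$, not $O(q^{3/4}N^{-1/4})$. The latter is its square root, which appears because the score excess is linear in $\lambda$, so $\sum_z\|\Psi_z-G_z\|^2=O(\delta+\lambda)$.
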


In a state within purified distance $\eta$ of the actual one, Eve guesses $Z$ with probability at most $2^{-n+O(\log n)}$, even though she knows the entire oracle and may hold an arbitrarily large quantum memory entangled with the device. More generally, a bound $\Hmin^\eta(Z\mid RE)\ge h$ implies that her guessing probability in the actual state is at most $2^{-h}+\eta$, since trace distance is at most purified distance. For $n=60$, $q=100$, and $\delta=0$, the explicit bounds of \cref{thm:fourier-randomness,thm:haar-randomness} give at least $24$ bits for Fourier sampling with $\eta\approx0.00112$ and at least $39$ bits for Haar-random states with $\eta\approx0.0201$, while an ideal Haar sample carries about $54.6$ bits without smoothing.

The proof shows that, up to a total error $\eta$, the part of the device's state with output $z$ is $K_z\psi_0$, where $\psi_0$ is the initial shared state and $K_z$ is an operator that acts on the device alone and has norm at most $\mathrm{poly}(q)\sqrt{p(z)}$ for every fixed oracle (\cref{sec:overview-fourier-randomness,sec:overview-haar-randomness}). No new progress measure is needed for this step: the structure comes from the terms that the proof of the score bound discards, and the remaining argument uses only the fact that a device with few queries changes slowly as the oracle changes. Since such operators do not act on Eve's register, in this nearby state she guesses $Z$ at most $\mathrm{poly}(q)$ times as often as she could guess an ideal sample. The theorem is not restricted to deficits that tend to zero: for a constant deficit below a threshold it certifies the same number of bits, with a constant smoothing parameter of order $\sqrt\delta$.

Three terms in \cref{thm:intro-randomness} cannot be removed. First, consider a device that outputs a fixed string known to Eve with probability $\mu$ and otherwise runs the honest sampler (\cref{prop:randomness-smoothing}). For Fourier sampling its deficit is $\delta=2(1-1/N)\mu$, but Eve guesses its output with probability at least $\mu$, and every state in which her guessing probability is small is at purified distance at least about $\sqrt\mu$ from the actual one. Hence the term $\sqrt\delta$ in $\eta$ has the right order.

Second, some bound on the number of queries is necessary, as the squaring algorithm of part~3 of \cref{thm:intro-xeb} shows. Running it with probability about $1/2$, and otherwise outputting a fixed string, uses $O(N^{1/3})$ queries and attains the ideal score, yet Eve guesses the output with probability about $1/2$. Correspondingly, the term $q^{3/4}N^{-1/4}$ in $\eta$ becomes a constant when $q$ reaches about $N^{1/3}$. In \cref{fig:randomness-chsh}(b), the first of these devices lies on the upper edge of the darker region, and the second is the cross on the ideal-score line; part~(a) of the figure shows the corresponding picture for the CHSH game, in which the PR box plays the role of the squaring algorithm.

\FloatBarrier
\begin{figure}[htbp]
\centering
\begin{tikzpicture}
\begin{axis}[name=chsh,
  scale only axis, width=5.5cm, height=3.7cm,
  xmin=0.735, xmax=1.035, ymin=0.47, ymax=1.1,
  xtick={0.75,0.853553,1}, xticklabels={$3/4$,$\cos^2(\pi/8)$,$1$},
  ytick={0.5,1}, yticklabels={$1/2$,$1$},
  xlabel={winning probability}, ylabel={Eve's guessing probability},
  title={(a) The CHSH game}, title style={font=\small, yshift=4pt},
  axis lines=left, clip=false,
  tick label style={font=\scriptsize}, label style={font=\footnotesize},
]
\fill[black!7] (axis cs:0.75,0.5) -- (axis cs:0.75,1) -- (axis cs:1,0.5) -- cycle;
\addplot[fill=black!20, draw=none, domain=0.75:0.853553, samples=120] {0.5+0.5*sqrt(max(0,2-4*(2*x-1)^2))} -- (axis cs:0.853553,0.5) -- (axis cs:0.75,0.5) -- cycle;
\draw[dashed, black!65] (axis cs:0.75,1) -- (axis cs:1,0.5);
\draw[densely dotted, thick] (axis cs:0.853553,0.5) -- (axis cs:0.853553,1.04);
\node[font=\scriptsize, anchor=west, inner sep=1pt] at (axis cs:0.858,1.0) {Tsirelson's bound};
\addplot[only marks, mark=*, mark size=1.8pt] coordinates {(0.75,1) (0.853553,0.5) (1,0.5)};
\addplot[only marks, mark=x, mark size=2.8pt, very thick] coordinates {(0.853553,0.792893)};
\node[font=\scriptsize, anchor=south west, inner sep=1pt] at (axis cs:0.752,1.0) {deterministic};
\node[font=\scriptsize, anchor=south west, inner sep=1pt] at (axis cs:0.857,0.505) {optimal strategy};
\node[font=\scriptsize, anchor=south, inner sep=1pt] at (axis cs:1.0,0.585) {PR box};
\node[fill=black!20, minimum width=6pt, minimum height=6pt, inner sep=0pt] (a1) at (axis cs:0.925,0.88) {};
\node[font=\scriptsize, anchor=west] at (a1.east) {quantum};
\node[fill=black!7, minimum width=6pt, minimum height=6pt, inner sep=0pt] (a2) at (axis cs:0.925,0.8) {};
\node[font=\scriptsize, anchor=west] at (a2.east) {no-signaling};
\end{axis}
\begin{axis}[at={(chsh.south east)}, anchor=south west, xshift=1.7cm,
  scale only axis, width=5.5cm, height=3.7cm,
  xmin=0.75, xmax=3.45, ymin=-0.03, ymax=1.13,
  xtick={1,2,3}, ytick={0,0.5,1}, yticklabels={$0$,$1/2$,$1$},
  xlabel={expected score}, ylabel={Eve's guessing probability},
  title={(b) The XEB test}, title style={font=\small, yshift=4pt},
  axis lines=left, clip=false,
  tick label style={font=\scriptsize}, label style={font=\footnotesize},
]
\fill[black!20] (axis cs:1,0) -- (axis cs:2,0) -- (axis cs:1,1) -- cycle;
\fill[black!7] (axis cs:2,0) -- (axis cs:3,0) -- (axis cs:1,1) -- cycle;
\draw[dashed, black!65] (axis cs:1,1) -- (axis cs:3,0);
\draw[densely dotted, thick] (axis cs:2,0) -- (axis cs:2,1.06) node[above, font=\scriptsize] {ideal score};
\addplot[only marks, mark=*, mark size=1.8pt] coordinates {(1,1) (1,0) (2,0) (3,0)};
\addplot[only marks, mark=x, mark size=2.8pt, very thick] coordinates {(2,0.5)};
\node[font=\scriptsize, anchor=north west, align=left, inner sep=1pt] at (axis cs:1.05,1.04) {known\\output};
\node[font=\scriptsize, anchor=south west, inner sep=1pt] at (axis cs:1.02,0.025) {uniform};
\node[font=\scriptsize, anchor=south west, align=left, inner sep=1pt] at (axis cs:2.03,0.02) {honest\\sampler};
\node[font=\scriptsize, anchor=south west, align=left, inner sep=1pt] at (axis cs:3.03,0.01) {squaring\\algorithm};
\node[fill=black!20, minimum width=6pt, minimum height=6pt, inner sep=0pt] (b1) at (axis cs:2.5,1.0) {};
\node[font=\scriptsize, anchor=west] at (b1.east) {$\mathrm{poly}(n)$ queries};
\node[fill=black!7, minimum width=6pt, minimum height=6pt, inner sep=0pt] (b2) at (axis cs:2.5,0.88) {};
\node[font=\scriptsize, anchor=west] at (b2.east) {$O(N^{1/3})$ queries};
\end{axis}
\end{tikzpicture}

\caption[Certified randomness in CHSH and XEB]{Two certification pictures, drawn schematically. (a)~In the CHSH game, quantum strategies win with probability at most $\cos^2(\pi/8)$~\cite{Tsirelson}, and for each winning probability Eve's probability of guessing one party's output bit, conditioned on the public inputs and her quantum register, is at most the curve shown, which reaches $1/2$ at Tsirelson's bound. This follows from the local-output bound of~\cite[eq.~(6)]{PironioEtAl} by conditioning on Eve's measurement outcome and using concavity. Mixing a deterministic strategy with the PR box~\cite{PopescuRohrlich}, a hypothetical device that respects no-signaling and wins with certainty, gives the dashed segment, which meets Tsirelson's bound at a guessing probability of about $0.79$. (b)~The analogous picture for the XEB test with a Haar-random state accessed through $O_\phi$; Eve's guessing probability for the honest sampler, about $(\ln N)/N$, is drawn as $0$. Mixtures of the uniform output, the honest sampler, and an output known to Eve make at most one query and fill the darker triangle; away from the ideal score, the exact region for polynomially many queries is not determined. With polynomially many queries no device scores more than $o(1)$ above the ideal score (\cref{thm:intro-xeb}), and within $o(1)$ of the ideal score the output has $n-O(\log n)$ bits of smooth min-entropy with smoothing $\eta=o(1)$ (\cref{thm:intro-randomness}). Her actual guessing probability is then at most $2^{-n+O(\log n)}+\eta=o(1)$, so the region narrows to $0$ at the ideal score as $n$ grows.
With $O(N^{1/3})$ queries the squaring algorithm plays the role of the PR box: mixing it with an output known to Eve attains the ideal score while Eve guesses with probability about $1/2$, which is the cross. For Fourier sampling the scores $2$ and $3$ become $3$ and $5$.}
\label{fig:randomness-chsh}
\end{figure}

Third, neither loss, $2\log_2q$ for Haar-random states or $4\log_2q$ for Fourier sampling, can be reduced below $2\log_2q-O(1)$. Consider the device that makes $q$ queries and uses amplitude amplification to raise the probability that one ideal sample lies in a fixed set of about $N/q^2$ strings (\cref{prop:randomness-amplification}). It loses only $O(q^{-2}+q^4N^{-1/2})$ in score, but its output lies in this known set with probability at least $3/4$, so its smooth min-entropy is at most $n-2\log_2q+O(1)$ for every smoothing parameter $\eta\le1/2$. Thus the coefficient $2$ of $\log_2q$ in the Haar bound cannot be lowered, the Haar bound is within $\log_2n+O(1)$ bits of this upper bound, and the correct coefficient of $\log_2q$ for Fourier sampling lies between $2$ and $4$.

At near-ideal scores, \cref{thm:intro-randomness} answers the oracle version of the question raised in~\cite{AaronsonHung}, unconditionally and with $n-O(\log n)$ bits, the most that can be certified. It goes beyond earlier results in three respects. First, it certifies smooth min-entropy for a single sample, which an extractor can use directly. Against an entangled Eve, the earlier single-round guarantee is about $\gamma n$ bits of von Neumann entropy in a nearby state~\cite[Theorem~7.15 and Corollary~7.16]{AaronsonHung}; the multi-round guarantee uses entropy accumulation~\cite{DupuisFawziRenner}. For lists of Fourier samples without quantum side information, the earlier guarantee is $n/40$ bits for the whole list~\cite{BassirianEtAl}. Second, in our setting Eve learns the oracle itself. In the earlier model, the device queries $C'=VCW$, where $V$ applies independent random phases and $W$ is Haar-random on the subspace orthogonal to $\ket{0^n}$ and fixes $\ket{0^n}$. These transformations preserve the output distribution of $C$. Eve learns $C$, while $V$ and $W$ remain secret; after this averaging, conditioning on $C$ is equivalent to conditioning on its output distribution~\cite[Theorem~7.1]{AaronsonHung}. We have no corresponding result for an Eve who learns a full Haar-random unitary. Third, it covers Fourier sampling against an entangled Eve, which was previously analyzed only for devices that behave as the honest sampler~\cite{AaronsonHung}. In the other direction, the earlier results apply to any constant advantage over the score $1$ of a uniformly random output and to full unitary access, and they include multi-round protocols with finite tests and, under computational assumptions, randomness expansion. The two sets of results are therefore complementary: the earlier ones certify a constant fraction of the randomness from any violation of the classical bound, and ours certify all but $O(\log n)$ bits of it within $o(1)$ of the ideal score.

\paragraph{Limitations of the randomness result.}
\Cref{thm:intro-randomness} concerns a single oracle experiment at nearly the ideal score, and several steps separate it from a deployed protocol. Current devices operate far from this regime: the experiment of~\cite{AruteEtAl} scored about $1.002$, against an ideal score of about $2$, whereas \cref{thm:intro-randomness} needs a deficit small enough that $\eta\le1/2$. For such devices, the guarantees of~\cite{AaronsonHung} are the relevant ones. All of our results are in the oracle model, which is the price of being unconditional. The theorem also assumes a bound on the expected score of the run whose output is certified, whereas a protocol observes the empirical score of finitely many samples, conditions on passing a test, and runs many rounds. In addition, the challenge comes from a truly random oracle rather than from a short seed, so we claim no randomness expansion. Finally, for Haar-random states we need the canonical oracle $O_\phi$ rather than unitary access (\cref{sec:overview-haar-randomness}).

\paragraph{The collision distribution, heavy outputs, and Hadamard products.}
The output distribution of the squaring algorithm is close to the \emph{collision distribution} $r_p(z)=p(z)^2/\sum_yp(y)^2$, that is, the distribution of the common value of two independent samples from $p$, conditioned on their being equal. The squaring algorithm samples it to within average total variation distance $O(N^{-1/3})$ (\cref{thm:output-law}), where the total variation distance of two distributions $a$ and $b$ is $\TV(a,b)=(1/2)\sum_z|a(z)-b(z)|$.

A small error in total variation causes only a small loss in score, although the reward $Np(z)$ of a single outcome can be as large as $N$. Indeed, the score that a sampler with output distribution $a$ loses relative to $r_p$ comes from the outcomes to which $a$ assigns less probability than $r_p$. This missing probability has total mass $\TV(a,r_p)$ and is at most $r_p(z)$ at each outcome $z$, so by the Cauchy--Schwarz inequality a sampler with $\E\,\TV(a,r_p)\le\nu$ loses only $O(\sqrt\nu)$ on average over the oracle. More precisely, it scores at least $5-\sqrt{35\nu}-o(1)$ for Fourier sampling and $3-\sqrt{12\nu}-o(1)$ for Haar-random states, where $5$ and $3$ are the scores of $r_p$, and \cref{thm:intro-xeb} then gives the following.
\begin{corollary}[Sampling the collision distribution]\label{cor:collision}
For every fixed $0<\nu<4/35$ for Fourier sampling, and every fixed $0<\nu<1/12$ for either form of Haar access, every algorithm with no supplied copies whose output distribution $a$ satisfies $\E\,\TV(a,r_p)\le\nu$ makes $\Omega(N^{1/3})$ queries, while the squaring algorithm makes $O(N^{1/3})$ queries and has $\E\,\TV(a,r_p)=O(N^{-1/3})$.
\end{corollary}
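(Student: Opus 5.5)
The plan is to reduce the corollary to the score bounds of \cref{thm:intro-xeb}, using the score-loss estimate sketched in the paragraph before the statement. The upper-bound half is immediate from part~3 of \cref{thm:intro-xeb} together with \cref{thm:output-law}, which gives $\E\,\TV(a,r_p)=O(N^{-1/3})$ for the squaring algorithm with $O(N^{1/3})$ queries. So the work is in the lower bound.

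\textbf{Step 1: the score-loss inequality.} Fix an oracle and let $a$ be the output distribution of the algorithm. The ideal-probability gain of $a$ over $r_p$ is
\[
N\sum_z\bigl(a(z)-r_p(z)\bigr)p(z)\ \ge\ -N\sum_z d(z)\,p(z),
\qquad d(z)=\max\{0,\,r_p(z)-a(z)\}.
\]
Here $\sum_z d(z)=\TV(a,r_p)$ and $0\le d(z)\le r_p(z)$. Writing $d=\sqrt d\cdot\sqrt d$ and applying Cauchy--Schwarz gives
\[
\sum_z d(z)\,p(z)\ \le\ \Bigl(\sum_z d(z)\Bigr)^{1/2}\Bigl(\sum_z r_p(z)\,p(z)^2\Bigr)^{1/2}.
\]
Taking expectation over the oracle and applying Cauchy--Schwarz again yields
\[
B(\mathcal A)\ \ge\ N\,\E\sum_z r_p(z)\,p(z)\;-\;\sqrt{\E\,\TV(a,r_p)}\,\sqrt{N^2\,\E\sum_z r_p(z)\,p(z)^2}.
\]

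\textbf{Step 2: the moments of $r_p$.} I need two limits.
\begin{itemize}
\item $N\,\E\sum_z p^3/\sum_z p^2\to 5$ (Fourier), respectively $\to 3$ (Haar).
\item $N^2\,\E\sum_z p^4/\sum_z p^2\to 105/3=35$ (Fourier), respectively $\to 24/2=12$ (Haar).
\end{itemize}
These match the constants $\sqrt{35\nu}$ and $\sqrt{12\nu}$ quoted before the statement. To prove them, I will show that $N\sum_z p(z)^2$ concentrates around $3$ (resp.\ $2$), with variance $O(1/N)$. This follows from the near-independence of Fourier coefficients (fourth and eighth moments of sums of random signs) and from Dirichlet moment formulas for Haar states.

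The denominator can then be replaced by its mean up to $o(1)$. The one technical point is that the ratio is not bounded a priori. I handle this by using $\sum_z p^2\ge 1/N$ deterministically, together with the bounds $p^3\le p^2\max p$ and $p^4\le p^2(\max p)^2$, so the ratio is bounded by $N\max p$ and $N^2(\max p)^2$. On the low-probability event where $\sum_z p^2$ is far from its mean, this bound is at most $N^2$. That is too crude as it stands, so I instead bound the contribution by Cauchy--Schwarz using high moments of $\max_z Np(z)$. These moments are $O(\mathrm{polylog}\,N)$ via Hoeffding for Fourier and exponential tails for Haar, and the failure event has probability $O(1/N)$ by Chebyshev.

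\textbf{Step 3: conclude.} Step~1 gives $B(\mathcal A)\ge 5-\sqrt{35\nu}-o(1)$ for Fourier and $B(\mathcal A)\ge 3-\sqrt{12\nu}-o(1)$ for Haar. For $\nu<4/35$ we have $\sqrt{35\nu}<2$, so $B\ge 3+\varepsilon$ with a fixed $\varepsilon=2-\sqrt{35\nu}-o(1)>0$, hence $0<\varepsilon\le 4$. Parts~1--2 of \cref{thm:intro-xeb} then force $\Omega(N^{1/3})$ queries. Likewise, $\nu<1/12$ gives $B\ge 2+\varepsilon$ with $\varepsilon\in(0,1]$ for the Haar case, under either access model, using large $N$ for unitary access.

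The hypothesis ``no supplied copies'' simply places us in the query model of \cref{thm:intro-xeb}.

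I expect the main obstacle to be Step~2's uniform control of the random denominator $\sum_z p^2$, since a bare expectation of a ratio is not the ratio of expectations; the high-moment tail bound on $\max_z p(z)$ is what I would try first.
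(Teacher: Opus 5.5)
Your proposal is correct and follows the paper's structure. Step~1 is exactly \cref{lem:squaring-tv-score}, Step~3 is the proof of \cref{cor:collision-formal} (through \cref{thm:fourier-lower}, \cref{cor:haar-birthday}, and \cref{thm:haar-unitary}), and the upper-bound half is \cref{thm:output-law}.

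The only divergence is Step~2, and there the paper's route is shorter. The obstacle you anticipate disappears because $M_2\ge1$ holds deterministically, which is the fact you already note. It gives, pointwise,
\[
\Bigl|\frac{M_j}{M_2}-\frac{M_j}{\kappa_2}\Bigr|\le\frac{M_j\,|M_2-\kappa_2|}{\kappa_2}.
\]
A single Cauchy--Schwarz with $\E M_j^2=O(1)$ and $\E(M_2-\kappa_2)^2=O(1/N)$ from \cref{lem:moments} then gives $|\E[M_j/M_2]-\E M_j/\kappa_2|=O(N^{-1/2})$, which is \cref{lem:squaring-limits}. No split into good and bad events and no tail bounds on $\max_zNp_z$ are needed.

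Your version also works. The bounds $M_3/M_2\le N\max_zp_z$ and $M_4/M_2\le(N\max_zp_z)^2$, together with the polylogarithmic moments from \cref{lem:fr-max-weight} and the Haar tail bound in the proof of \cref{cor:empirical}, make the bad-event contribution $O(\mathrm{polylog}(N)/\sqrt N)$. It trades second moments of $M_j$ for tail bounds on the maximum, and it is more work.

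In either route the main input is the variance bound on $M_2$. In the Fourier case this needs the mixed moments $\E[X_z^4X_w^4]$ for $z\ne w$, not only $\E X_z^8$: the off-diagonal covariances contribute at order $1/N$. The paper handles this through the identity $M_2=N^{-2}\sum_{x,y,u}f(x)f(y)f(u)f(x\oplus y\oplus u)$, which gives $\Var M_2\le 24/N$.
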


The collision distribution also gives a tight bound for heavy output generation~\cite{AaronsonChen}. For Haar-random states, honest sampling produces outcomes above the median with limiting probability $(1+\ln2)/2$, while the squaring algorithm attains $(2+2\ln2+(\ln2)^2)/4$. Every fixed target strictly between these values has query complexity $\Theta(N^{1/3})$ under either form of Haar access without supplied copies (\cref{sec:heavy-outputs,thm:heavy-outputs}). This holds for both the empirical median and the fixed median of the marginal probability law, with probability averaged over the oracle and the algorithm's randomness. Honest sampling already exceeds the original $2/3$ heavy-output threshold; the result concerns a fixed improvement over its heavy probability.

If the $m$ samples are kept coherent rather than measured, the same idea can be used to prepare the quantum Hadamard product of two unknown states $\sum_z\alpha_z\ket z$ and $\sum_z\beta_z\ket z$, that is, the normalized state with amplitudes proportional to $\alpha_z\beta_z$~\cite{HolmesEtAl,HuangHadamard}. We analyze it when all amplitudes have magnitude $N^{-1/2}$. In that case the algorithm keeps its $m$ stored samples of the first state in superposition, amplifies the component in which a fresh sample of the second state agrees with one of them, and then erases the agreeing stored register by a reversible computation instead of measuring it. The result is within trace distance $O(N^{-1/6})$ of the Hadamard product after $O(N^{1/3})$ queries (\cref{thm:hadamard}), whereas amplifying the agreement of one copy of each state uses about $\sqrt N$ queries. For Haar-random inputs we have only a lower bound: since measuring the Hadamard product of $\ket\phi$ with itself samples $r_p$, preparing it to within average trace distance below $1/12$ requires $\Omega(N^{1/3})$ queries (\cref{prop:self-product}).

\subsection{Related work}\label{sec:intro-related}

The complexity-theoretic case for quantum sampling advantage was made in~\cite{AaronsonArkhipov,BremnerJozsaShepherd,BremnerMontanaroShepherd,BoulandEtAl,Movassagh}, and foundations for heavy-output tasks related to XEB, including Fourier sampling as a black-box problem, were developed in~\cite{AaronsonChen}. Fourier sampling with a function of the form $(-1)^{g(x)}$, where $g$ is a polynomial of degree $3$ over $\mathbb F_2$, is the sampling task of a class of instantaneous quantum polynomial-time (IQP) circuits~\cite{BremnerJozsaShepherd,BremnerMontanaroShepherd}; our results concern a uniformly random function and do not address this structured case. Classical algorithms are known to reach nontrivial XEB scores on shallow or noisy circuits~\cite{BarakChouGao,GaoEtAl}, noisy random circuit sampling admits a polynomial-time classical algorithm in the asymptotic regime~\cite{AharonovEtAl}, and tensor-network methods have produced samples for the circuits of~\cite{AruteEtAl} with a comparable score~\cite{PanChenZhang}.

Besides protocols based on Bell tests and on sampling, certified randomness can also be obtained from cryptographic single-device protocols~\cite{BCMVV,MahadevVaziraniVidick}, which are efficiently verifiable but require the device to maintain a superposition while it interacts with the verifier. There are also protocols in the random oracle model based on the proof of quantumness of~\cite{YamakawaZhandry}, such as~\cite{KhuranaRobertsTal,ColadangeloEtAl}, whose min-entropy is conditioned on the oracle but not on quantum side information. Independent and concurrent work~\cite{JainMutrejaRoberts} certifies randomness at the optimal rate in the quantum random oracle model. On the experimental side, a sampling-based protocol was run on a $56$-qubit trapped-ion processor, certifying $71{,}313$ bits of smooth min-entropy against a restricted class of adversaries and under the additional assumptions of that experiment~\cite{LiuTrappedIon}, and a subsequent experiment demonstrated randomness amplification~\cite{LiuRandomness}.

On the technical side, our Fourier analysis combines the polynomial method~\cite{BealsEtAl} with the purified random function of the compressed oracle technique~\cite{Zhandry}. The generic bound on the size of the recorded set does not by itself control the score; we instead track coherent pair removal in the Fourier basis, as explained in \cref{sec:overview-fourier}. Path recording~\cite{MaHuang} plays a similar role for Haar-random unitaries, and a polynomial method for unitary oracles was developed in~\cite{SheYuen}.

\paragraph{Organization of the paper.}
\Cref{sec:overview} explains the main arguments, starting with the simplest case, and \cref{sec:prelim} fixes the model and the notation. \Cref{sec:fourier} proves the Fourier lower bound and the exact one-query identity, and \cref{sec:haar} proves the Haar lower bound, including the extension to Haar-random unitary access. \Cref{sec:squaring} analyzes the squaring algorithm for both ensembles; it proves part~3 of \cref{thm:intro-xeb} and \cref{cor:collision}, and it shows that the empirical score of independent runs rarely falls below any fixed value smaller than its limit
(\cref{cor:empirical}). \Cref{sec:randomness} proves \cref{thm:intro-randomness}, gives a robust form of the exact privacy statement for one phase query (\cref{prop:fr-one-phase}), and constructs the two devices that show which terms of the theorem are necessary (\cref{prop:randomness-smoothing,prop:randomness-amplification}). \Cref{sec:hadamard} treats Hadamard products. Appendix~\ref{app:deferred-proofs} collects the supporting Haar proofs, the moment calculations, and the proofs for supplied copies.

\paragraph*{Acknowledgments and AI disclosure.}
\addcontentsline{toc}{subsection}{Acknowledgments and AI disclosure}

This work was supported by the faculty startup grant of AM at Virginia Tech.

The authors conceived the high-level idea of the framework and developed it to obtain the result for Haar-random states, using Claude Opus 5 to discuss ideas, to create early drafts, and to write earlier versions of some sections. When prompted with these ideas, ChatGPT Astra 6.1 produced the main results for Fourier sampling and suggested certified randomness as an extension of the approach. All sections and results were subsequently revised with the help of both Claude Opus 5.5 and ChatGPT Sol 6.1, and the authors take full responsibility for the correctness of the results.

\section{Technical overview}\label{sec:overview}

In this section we explain the main arguments behind our results and defer complete proofs to later sections. All of our lower bounds follow the same plan. We define a \emph{progress measure} $\lambda$ that is zero for the honest sampler, show that the score can exceed the ideal score by at most $O(\lambda+\lambda^2)$, and show that after $t$ queries the next query increases $\lambda$ by only $O(\sqrt{(t+1)/N})$, so that $\lambda=O(q^{3/2}/\sqrt N)$ after $q$ queries. The randomness theorems are then obtained from the terms that the proof of the score bound discards. Both ensembles follow this plan and differ only in how each step is carried out (\cref{fig:structure}). We begin with a single query to a random Boolean function, for which both the score and the privacy of an optimal device are given by exact identities, and then treat many queries, first for Fourier sampling and then for Haar-random states. The upper bound, the squaring algorithm, was described in \cref{sec:intro-results}; its analysis in \cref{sec:squaring} uses only amplitude amplification and moment estimates for the two ensembles. The body is organized by result rather than by ensemble: \cref{sec:fourier,sec:haar} prove the two lower bounds, \cref{sec:squaring} analyzes the algorithm, and \cref{sec:randomness} proves both randomness theorems.

\subsection{Fourier sampling with one query}
\label{sec:overview-one-query}

Consider first an algorithm that makes a single \emph{phase query} $\ket x\mapsto f(x)\ket x$ to an $n$-qubit address register, without control. Its final state is then linear in $f$, of the form $\Psi(f)=\sum_xf(x)\,v_x$ for fixed vectors $v_x$, and the algorithm outputs $z$ when a final measurement $\{\Pi_z\}$ gives the outcome $z$. Since every $f$ is a linear combination $\sum_w\widehat f(w)\chi_w$ of the characters $\chi_w(x)=(-1)^{w\cdot x}$, such an algorithm is determined by its behavior on single characters. On the input $\chi_w$, the honest sampler outputs $w$ with certainty, which is the Bernstein--Vazirani algorithm~\cite{BernsteinVazirani}. For an arbitrary algorithm of this kind, let $\sigma$ denote its probability of identifying a uniformly random $w$ from one query to $\chi_w$. Its score on a random $f$ is then given exactly by
\begin{equation}
 B(\mathcal A)=1-\frac2N+2\sigma .
 \label{eq:ov-bv}
\end{equation}
For example, an algorithm that ignores the oracle has $\sigma=1/N$ and scores $1$, while the honest sampler has $\sigma=1$. The proof of \eqref{eq:ov-bv} is elementary. The score is $\sum_z\E_f\bigl[Np_f(z)\,\|\Pi_z\Psi(f)\|^2\bigr]$, and both factors are quadratic in $f$: the \emph{reward} and the probability of the output $z$ are
\begin{align*}
 Np_f(z)&=N\widehat f(z)^2=\frac1N\sum_{x,y}\chi_z(x)\chi_z(y)f(x)f(y),\\
 \|\Pi_z\Psi(f)\|^2&=\sum_{x',y'}f(x')f(y')\langle v_{x'},\Pi_zv_{y'}\rangle .
\end{align*}
Each summand is therefore an average of a product of four values of $f$, which vanishes unless the four points match in pairs. Matching $x=y$ and $x'=y'$ contributes $\sum_z\E_f\|\Pi_z\Psi(f)\|^2=1$; each of the two ways of matching across the factors, such as $x=x'$ and $y=y'$, contributes $N^{-1}\sum_z\|\Pi_z\Psi(\chi_z)\|^2=\sigma$; and the case in which all four points coincide, which the three matchings count three times instead of once, contributes $-2/N$.

Since $\sigma\le1$, an algorithm that makes one phase query scores at most $3-2/N$, with equality exactly when it identifies every character. This is the phase-query case of the one-query theorem of~\cite{Kretschmer}. The identity does not hold for a general value query; for example, for $n=1$ an algorithm that reads $f(1)$ by a value query with the target qubit in $\ket0$ identifies every character but scores only $1$. The identity also shows what an optimal device looks like. Suppose that the device shares an arbitrary entangled state $\ket{\psi_0}$ with Eve, as in \cref{thm:intro-randomness}. If $B(\mathcal A)=3-2/N$, then $\sigma=1$, and by linearity the part of the state with output $z$ is $\Pi_z\Psi(f)=\widehat f(z)\,\Psi(\chi_z)$. On the fixed input $\chi_z$ the device applies a fixed unitary to its own registers, which leaves the reduced state $\rho_E$ of Eve unchanged. Hence, if $\rho^f_z$ denotes Eve's subnormalized reduced state on the \emph{output-$z$ branch} $\Pi_z\Psi(f)$, whose trace is the probability of the output $z$, then
\begin{equation}
 \rho^f_z=p_f(z)\,\rho_E
 \qquad\text{for every $f$ and $z$.}
 \label{eq:ov-exact-privacy}
\end{equation}
Thus the output is an exact sample from $p_f$ and is independent of Eve's register, even given $f$. Our main theorems are robust versions of \eqref{eq:ov-bv} and \eqref{eq:ov-exact-privacy} for algorithms that make many queries; for a single phase query, a robust form of \eqref{eq:ov-exact-privacy} itself is given in \cref{prop:fr-one-phase}.

\begin{figure}[t]
\centering
\begin{tikzpicture}[
  box/.style={draw, rounded corners=2pt, align=center, font=\small, inner sep=4pt, text width=3.15cm, minimum height=1.2cm},
  res/.style={box, fill=black!6},
  note/.style={font=\scriptsize, align=center, text width=3.8cm, text=black!70},
  arr/.style={-{Latex[length=2mm]}, semithick}]
\node[box] (pm) {progress measure $\lambda$,\\ zero for the honest sampler};
\node[box, right=0.55cm of pm] (sc) {score at most the ideal\\ score $+\,O(\lambda+\lambda^2)$};
\node[box, right=0.55cm of sc] (gr) {each query adds\\ $O(\sqrt{(t+1)/N})$ to $\lambda$};
\node[res, right=0.55cm of gr] (t1) {\cref{thm:intro-xeb}:\\ $\Omega(N^{1/3})$ queries};
\draw[arr] (pm) -- (sc); \draw[arr] (sc) -- (gr); \draw[arr] (gr) -- (t1);
\node[note, text width=4.3cm, below=1pt of pm] {Fourier: pairs of recorded points\\ Haar: square root of average level};
\node[note, below=1pt of gr] {Fourier: a new pair needs\\ an old point\\ Haar: a Dirichlet identity};
\node[box, below=1.45cm of sc] (st) {near-ideal score:\\ branches close to\\ vectors $G_z$};
\node[box, right=0.55cm of st] (sl) {a $q$-query device\\ changes slowly\\ with the oracle};
\node[res, right=0.55cm of sl] (t2) {\cref{thm:intro-randomness}:\\ $n-O(\log n)$ bits};
\draw[arr] (sc) -- node[right, font=\scriptsize, text=black!70, align=left] {discarded\\ terms} (st);
\draw[arr] (st) -- (sl); \draw[arr] (sl) -- (t2);
\node[note, below=1pt of st] {Fourier: one point in $\ket{\chi_z}$\\ Haar: dependence on\\ the phase of $\alpha_z$};
\node[note, below=1pt of sl] {Fourier: Markov's inequality\\ for polynomials\\ Haar: a hybrid argument};
\end{tikzpicture}
\caption{The structure of the arguments of \cref{sec:overview}. Both lower bounds follow the top row, and both randomness theorems follow the bottom row, which starts from the terms that the proof of the score bound discards. The notes below the boxes indicate how each step is carried out for Fourier sampling and for Haar-random states.}
\label{fig:structure}
\end{figure}

\subsection{The Fourier lower bound}
\label{sec:overview-fourier}

With many queries there is no classical record of what the algorithm has learned, since it queries in superposition and a later query can undo an earlier one. We therefore purify the random function, as in the compressed oracle technique~\cite{Zhandry}. This makes the pairing argument of \cref{sec:overview-one-query} systematic: a product of values of $f$ has nonzero mean only if every point occurs in it an even number of times, and the purifying register records, in superposition, the set of points that occur an odd number of times. This purifying register, from which each query reads the value $f(x)$, starts in the uniform superposition over all functions $f$. Its Fourier basis consists of the vectors $\ket S=2^{-N/2}\sum_f\bigl(\prod_{x\in S}f(x)\bigr)\ket f$, indexed by the sets $S\subseteq\{0,1\}^n$, and multiplying by $f(x)$ maps $\ket S$ to $\ket{S\triangle\{x\}}$. A query to the address $x$ (on the part of the state where the target qubit is $\ket-$) therefore adds $x$ to the \emph{recorded set} $S$ if it is absent and removes it if it is present. Consequently, after $t$ queries the joint state has the form $\sum_{|S|\le t}\ket S\otimes\ket{\Psi_S}$, where the vectors $\ket{\Psi_S}$ are the Fourier coefficients of the state of the algorithm, viewed as a polynomial in $f$ as in the polynomial method~\cite{BealsEtAl}. We write $\Psi$ for the final joint state and $\Psi_z=\Pi_z\Psi$ for its output-$z$ branch.

\paragraph{The score in terms of recorded points.}
Let $a_x$ be the operator that removes the point $x$ from the recorded set, and maps to zero the states in which $x$ is not recorded, so that multiplication by $f(x)$ acts on the register as $a_x+a_x^\dagger$. The reward of the output $z$ is $N\widehat f(z)^2$, and the normalized Fourier coefficient $\sqrt N\widehat f(z)=N^{-1/2}\sum_x\chi_z(x)f(x)$ acts as $b_z+b_z^\dagger$, where the operator $b_z=N^{-1/2}\sum_x\chi_z(x)\,a_x$ removes one recorded point with the amplitudes of the Hadamard basis state $\ket{\chi_z}=H^{\otimes n}\ket z=N^{-1/2}\sum_x\chi_z(x)\ket x$. Since the squared norm of a joint state is the average over $f$ of the squared norm of the algorithm's state, the squared norm of $\sqrt N\widehat f(z)\Psi_z$ is $\E_f\,N\widehat f(z)^2\|\Psi_z(f)\|^2$, the contribution of the output $z$ to the score. Hence $B(\mathcal A)=\sum_z\|(b_z+b_z^\dagger)\Psi_z\|^2$, and expanding the square gives
\begin{equation}
 B(\mathcal A)=1-\frac2N\,\E|S|
 +2\sum_z\|b_z\Psi_z\|^2
 +2\operatorname{Re}\sum_z\langle\Psi_z,b_z^2\Psi_z\rangle ,
 \label{eq:ov-score-expansion}
\end{equation}
where $\E|S|$ is the expected number of recorded points in the final state; the small correction $-(2/N)\E|S|$ appears because a point cannot be recorded twice. For one phase query, exactly one point is recorded, so the interference term vanishes, the quantity $\sum_z\|b_z\Psi_z\|^2$ equals $\sigma$, and \eqref{eq:ov-score-expansion} reduces to \eqref{eq:ov-bv}. The honest sampler makes one query and then applies Hadamard gates, so each of its output-$z$ branches holds exactly one recorded point, in the state $\ket{\chi_z}$, which gives the score $1+2-2/N$.

\paragraph{Pairs of recorded points as the progress measure.}
The recorded set has size at most the number of queries, but a query can either add or remove a point, and the size alone does not control the score. We instead track the amplitudes with which $b_z^2$ removes two recorded points. The score bound below will show that exceeding the ideal score requires $b_z^2\Psi_z\ne0$ for some branch. The relevant exact inequality is
\begin{equation}
 \|b_zg\|^2\le\|g\|^2+\|b_z^2g\|^2
 \qquad\text{for every vector $g$,}
 \label{eq:ov-positivity}
\end{equation}
since the difference between the two sides equals $\|(b_z^\dagger b_z-I)g\|^2$ plus a nonnegative correction. In particular, $\|b_zg\|^2>\|g\|^2$ implies $b_z^2g\ne0$. Since $\sum_z\|b_z\Psi_z\|^2\le\E|S|$ by the orthogonality of the characters, the negative term $-(2/N)\E|S|$ in \eqref{eq:ov-score-expansion} is at most $-(2/N)\sum_z\|b_z\Psi_z\|^2$, which turns the factor $2$ in front of that sum into $2(1-1/N)$. Applying \eqref{eq:ov-positivity} to each branch and bounding the interference term by the Cauchy--Schwarz inequality, we obtain
\begin{equation}
 B(\mathcal A)\le3-\frac2N+2\lambda+2\Bigl(1-\frac1N\Bigr)\lambda^2,
 \qquad
 \lambda=\Bigl(\sum_z\|b_z^2\Psi\|^2\Bigr)^{1/2} .
 \label{eq:ov-payoff}
\end{equation}
Our progress measure $\lambda$ is the norm of the family $(b_z^2\Psi)_z$, whose members coherently remove pairs of recorded points with amplitudes determined by $\chi_z$. It is evaluated on the whole final state rather than branch by branch, because the projections $\Pi_z$ belong to the final measurement, whereas $\lambda$ has to be tracked after every query. Because $\lambda$ vanishes as long as at most one point is recorded, every algorithm that makes one query, of any kind, scores at most $3-2/N$; this recovers the one-query bound mentioned in \cref{sec:overview-one-query} in full generality.

\paragraph{Growth of the progress measure.}
It remains to show that pairs form slowly. Unitaries applied between queries do not act on the recording register, so only queries change $\lambda$. The difference between applying $b_z^2$ before and after a query that toggles $x$ removes one other recorded point, with coefficient $2\chi_z(x)/\sqrt N$ and a sign determined by whether $x$ is recorded. The factor $2$ counts the two orders of removal. By Parseval's identity, the squared norms after removing this other point sum to at most $t$ over all Hadamard basis states. The address components of a query are orthogonal, so the total squared norm of this difference is at most $4t/N$, summed over $z$ and the addresses. The query itself preserves the norm of the existing family $(b_z^2\Psi)_z$, so the triangle inequality gives
\begin{equation}
 \lambda_{t+1}\le\lambda_t+2\sqrt{t/N},
 \qquad\text{hence}\qquad
 \lambda\le\frac2{\sqrt N}\sum_{t=0}^{q-1}\sqrt t\le\frac43\cdot\frac{q^{3/2}}{\sqrt N},
 \label{eq:ov-growth}
\end{equation}
where $\lambda_t$ denotes the progress after $t$ queries. Together with \eqref{eq:ov-payoff}, this proves part~1 of \cref{thm:intro-xeb}: a score of $3+\varepsilon$ with $0<\varepsilon\le4$ forces $\lambda\ge\varepsilon/4$, and hence $q\ge(3\varepsilon/16)^{2/3}N^{1/3}$. The squaring algorithm shows that the order of the growth bound is tight up to a constant factor: its score of nearly $5$ forces $\lambda\ge(\sqrt5-1)/2-o(1)$ by \eqref{eq:ov-payoff}, while it uses $O(N^{1/3})$ queries.

\subsection{Near-optimal randomness from Fourier sampling}
\label{sec:overview-fourier-randomness}

We now explain why a near-ideal score makes the output unpredictable. In the warm-up, the ideal score forced each branch to be $\widehat f(z)$ times the state of the device on the fixed input $\chi_z$, and on a fixed input the device leaves Eve's state unchanged. With many queries this holds only approximately, and we use two tools to make the argument robust. The first is the term $\|(b_z^\dagger b_z-I)g\|^2$ in the gap between the two sides of \eqref{eq:ov-positivity}, which the proof of the score bound discarded. The second is Markov's inequality for polynomials~\cite{BorweinErdelyi}, which states that a polynomial of degree $d$ that is bounded by $1$ on $[-1,1]$ has derivative at most $d^2$ there. We need only the case of the endpoint $1$, and its proof extends without change to operator-valued polynomials (\cref{lem:fr-markov}). Let the device and Eve share a pure state $\ket{\psi_0}$ on registers $A$ and $E$, let $\rho_E$ be the reduced state of Eve, and let the device apply a unitary $V_f$ to $A$, which is the form of every device after the purification of \cref{lem:normal-form}. Suppose that $B(\mathcal A)\ge3-2/N-\delta$.

\paragraph{The structure of the branches at a near-ideal score.}
The quantity $\|(b_z^\dagger b_z-I)\Psi_z\|^2$ measures how far a branch is from satisfying $b_z^\dagger b_z\Psi_z=\Psi_z$, the equation that holds for a branch with exactly one recorded point in the state $\ket{\chi_z}$, as for the honest sampler. If we keep this quantity in the proof of \eqref{eq:ov-payoff}, a score deficit $\delta$ and a small progress measure force
\[
 \sum_z\|(b_z^\dagger b_z-I)\Psi_z\|^2\le\frac{\delta}{2-2/N}+O(\lambda+\lambda^2).
\]
Consequently, the branches are close to the vectors $G_z=(b_z+b_z^\dagger)\,b_z\Psi_z$, obtained from $\Psi_z$ by removing one recorded point with the amplitudes of $\ket{\chi_z}$ and multiplying by $\sqrt N\widehat f(z)$; indeed, $\Psi_z-G_z=(I-b_z^\dagger b_z)\Psi_z-b_z^2\Psi_z$, and both terms are small. In total,
\begin{equation}
 \Bigl(\sum_z\|\Psi_z-G_z\|^2\Bigr)^{1/2}\le\eta,
 \qquad
 \eta=\sqrt\delta+O\bigl(q^{3/4}N^{-1/4}\bigr)\quad\text{for }q\le N^{1/3}.
 \label{eq:ov-eta}
\end{equation}

\paragraph{The rate of change of a $q$-query device.}
For a fixed $f$, removing a recorded point with the amplitudes of $\ket{\chi_z}$ corresponds to differentiating the unique multilinear extension of the state in the normalized direction $\chi_z/\sqrt N$. All derivatives below are taken in the unique multilinear extension of $f\mapsto V_f$. Since $\ket{\psi_0}$ does not depend on $f$, the vector $G_z$ is therefore the initial state acted on by the operator
\[
 K_z=\sqrt N\widehat f(z)\,\Pi_z\,\partial_{\chi_z}V_f,\qquad \partial_{\chi_z}V_f=N^{-1/2}\sum_x\chi_z(x)\,\frac{\partial V_f}{\partial f(x)},
\]
on the device alone, where $\partial_{\chi_z}V_f$ denotes the derivative at $f$ in the direction $\chi_z/\sqrt N$. Because Eve knows $f$, we need a bound on $K_z$ for every $f$, not only on average.

The unitary $V_f$ is a polynomial of degree at most $q$ in the values $f(x)$, and its multilinear extension to the cube $[-1,1]^N$ takes values that are averages of unitaries, so it has norm at most $1$ on the cube. The normalized direction $\chi_z/\sqrt N$ has an entry of magnitude $N^{-1/2}$ at each address. Let $\mathcal I_+$ and $\mathcal I_-$ be the sets of addresses at which $\chi_z$ agrees or disagrees with $f$. Multiplying the values of $f$ on $\mathcal I_+$, or on $\mathcal I_-$, by a factor $t\in[-1,1]$ keeps the point in the cube, and $\partial_{\chi_z}V_f$ is $N^{-1/2}$ times the difference of the derivatives at $t=1$ of the two resulting polynomials of degree at most $q$. Markov's inequality for polynomials bounds each of these derivatives by $q^2$, so $\|\partial_{\chi_z}V_f\|\le2q^2/\sqrt N$, and hence $\|K_z\|^2\le4q^4p_f(z)$. Apart from the score bound and the growth estimate, this is the only step
that uses the assumption that the device makes few queries.

\paragraph{Bounding the knowledge of Eve.}
An operator that acts on the device alone cannot make any measurement outcome of Eve more likely than $\|K_z\|^2$ times its probability on her initial state. Hence, for any initial entanglement and any measurement that Eve performs,
\begin{equation}
 \Pr\nolimits_{G_z}[m]\le4q^4p_f(z)\,\Pr\nolimits_{\rho_E}[m]
 \qquad\text{for every outcome $m$ and all $f$ and $z$,}
 \label{eq:ov-domination}
\end{equation}
where $\Pr_{G_z}[m]$ denotes the unnormalized weight of outcome $m$ on the vector $G_z$, and $\Pr_{\rho_E}[m]$ is its probability on Eve's initial state; this is the robust form of \eqref{eq:ov-exact-privacy}.

The actual branches need not satisfy such a bound, because a device may output a fixed string with small probability, and so we pass to a nearby state. We normalize the vector $\sum_z\ket z\otimes G_z$, in which the first register holds $z$, measure the purifying register of $f$ in the basis $\{\ket f\}$, which gives Eve the function $f$, and discard the device. The resulting state of $f$, $z$, and Eve's register is within purified distance $\eta$ of the actual one. Indeed, the unit vector $\sum_z\ket z\otimes\Psi_z$ is within distance $\eta$ of $\sum_z\ket z\otimes G_z$, so the sine of its angle with the normalized vector is at most $\eta$ (\cref{prop:randomness-assembly}), and measuring or discarding registers does not increase the purified distance~\cite{TomamichelColbeckRenner}. In this state, if $\eta\le1/2$, so that the normalizing factor $\sum_z\|G_z\|^2$ is at least $1/4$, then \eqref{eq:ov-domination} implies that Eve guesses $Z$ with probability at most
\[
 O(q^4)\,\E_f\max_zp_f(z)=O(q^4n/N),
\]
which is $O(q^4)$ times her probability of guessing an ideal sample. Taking logarithms gives the Fourier case of \cref{thm:intro-randomness}.

\subsection{The lower bound for Haar-random states}\label{sec:overview-haar}

For a Haar-random state $\ket\phi=\sum_x\alpha_x\ket x$, we write $p_x=|\alpha_x|^2$. We first consider algorithms with no supplied copies of $\ket\phi$. After $t$ queries to $O_\phi$, the state is a polynomial of total degree at most $2t$ in the amplitudes and their conjugates. More generally, with $s$ supplied copies, its degree is at most $s+2t$ (\cref{lem:haar-degree}). The plan of \cref{sec:overview-fourier} carries over; the main difference lies in the choice of the progress measure.

\paragraph{The normalization obstacle.}
For Fourier sampling, progress was measured by pairs of recorded points. The natural analogue for Haar-random states is a repeated amplitude, as in $\alpha_z^2$ or $\alpha_z\overline{\alpha_z}=p_z$, which is the coherent counterpart of seeing the outcome $z$ twice. However, every term of the normalization $\sum_xp_x=1$ repeats an amplitude, although the sum itself is a constant, and the oracle produces such sums at every query. For example, two queries applied to $\ket\bot$ return $\ket\bot$, yet the second query, expanded in the amplitudes, gives $O_\phi\ket\phi=(\sum_xp_x)\ket\bot+(1-\sum_xp_x)\ket\phi$. We therefore measure the dependence on the probabilities $p_x$ by the degree in an expansion in orthogonal polynomials. Like the Fourier degree of a Boolean function, this degree depends only on the function and not on the way in which it is written, so the normalization has degree zero.

\paragraph{The level and the progress measure.}
Two properties of the Haar measure make this approach work. Write $\alpha_x=\sqrt{p_x}e^{\mathrm i\theta_x}$. The phases $\theta_x$ are independent and uniform, so two terms are orthogonal unless, at every $x$, the number of factors $\alpha_x$ minus the number of factors $\overline{\alpha_x}$ is the same in both terms. Moreover, independently of the phases, the vector $p$ is uniform on the probability simplex, that is, it has the Dirichlet distribution with all parameters equal to $1$. In a term with $k$ unpaired factors $\alpha_x$ (or $\overline{\alpha_x}$), these factors contribute $p_x^k$ to the squared magnitude of the term, which reweights the distribution of $p$ exactly as if $x$ had been observed $k$ times, and the result is again a Dirichlet distribution.

We expand the rest of the term in orthogonal polynomials for that distribution, and we define the \emph{level} of the term as its degree in this expansion plus, for each outcome, the number of unpaired factors at that outcome beyond the first. The first unpaired factor at an outcome is not counted, because every branch $\alpha_z\ket z$ of the honest sampler has one. Thus $\alpha_x\overline{\alpha_y}$ with $x\ne y$, and the normalization, have level zero, while $\alpha_x^2$ and $p_x-1/N$ have level one. Our progress measure is the square root of the average level,
\[
 \lambda=\Bigl(\sum_kk\,\|P_k\Psi\|^2\Bigr)^{1/2},
\]
where $P_k$ denotes the projection onto the part of the state of level $k$. For $s$ copies of $\ket\phi$, the quantity $\lambda^2$ is exactly the expected number of measured copies that repeat an earlier outcome, namely $s(s-1)/(N+s-1)$ (\cref{lem:haar-copies}). The argument therefore also applies to algorithms that receive $s$ copies in addition to making $q$ queries, and shows that exceeding the ideal score by a constant requires $s+q\sqrt{s+q+1}=\Omega(\sqrt N)$ (\cref{cor:haar-birthday}). With $s=0$, the query contribution is of order $q^{3/2}$, which explains the $N^{1/3}$ lower bound.

\paragraph{The score in terms of the level.}
A state of level zero scores at most the ideal score $2N/(N+1)$. Its terms contain each outcome at most once. Since the reward $Np_z$ does not depend on the phases, different terms do not interfere, and each term is correlated with $Np_z$ only through a single factor $\alpha_z$ or $\overline{\alpha_z}$. This factor shifts the distribution of $p_z$ as one observed sample would, which raises the mean of $Np_z$ to at most $2N/(N+1)$. A term of positive level can score more, but only by an amount proportional to its level, because a polynomial of low degree cannot concentrate its weight where $p_z$ is large. Finally, the interference between the part of level zero and the rest involves only the fluctuation of $Np_z$ around its mean, which has variance at most $2$, so by the Cauchy--Schwarz inequality it contributes $O(\lambda)$. Altogether,
\begin{equation}
 B(\mathcal A)\le\frac{2N}{N+1}+O\bigl(\lambda+\lambda^2\bigr).
 \label{eq:haar-extraction}
\end{equation}

\paragraph{The effect of one query on the level.}
Write $O_\phi=I-2\ket{\eta_\phi}\bra{\eta_\phi}$ with $\ket{\eta_\phi}=(\ket\bot-\ket\phi)/\sqrt2$. Returning to the no-copy case, consider a term of total degree $d$ in the amplitudes and their conjugates, so that $d\le2t$ after $t$ queries. Each new factor raises the level by at most one, and it can do so only when it falls on an outcome that already appears in the term; a factor at a fresh outcome does not raise the level. When the new factor does fall on an occupied outcome, either it repeats an unpaired factor of the same kind or it pairs with a conjugate factor, and the two cases need different arguments. In the first case, only an $O((d+1)/N)$ fraction of the weight moves up a level. Indeed, a term of degree $d$ has unpaired factors on at most $d$ outcomes, and a polynomial of degree at most $d$ in $p$ cannot concentrate on a few outcomes, so these outcomes carry only an $O(d/N)$ fraction of its weight (\cref{lem:haar-subset}).

If instead the new factor pairs up with a conjugate factor, it creates $p_x$, and in the step that applies $\bra\phi$ the resulting terms are summed over $x$, which can bring back the normalization. These terms must therefore be bounded together, and for this purpose we prove an identity that concerns only the Dirichlet distribution and its orthogonal polynomials.

Observing one more sample $x$ from $p$ turns a Dirichlet prior into a Dirichlet posterior, and this is how a new factor at $x$ reweights a term. Let $|a|\ge N$ be the sum of the parameters of the prior. We show that, on average over $x$ with probability $a_x/|a|$, an orthogonal polynomial of exact degree $r\ge1$ for the prior has a fraction $r/(|a|+2r-1)\le r/N$ of its prior squared norm in degree $r-1$ for the posterior, and none in lower degrees. The identity follows from integration by parts on the simplex. For background on orthogonal polynomials for Dirichlet distributions, see~\cite{DunklXu,GriffithsSpano}.

By duality, it shows that when $\bra\phi$ is applied to $\sum_x\alpha_xh_x\ket x$, where each $h_x$ has degree at most $r$, the part of the result $\sum_xp_xh_x$ of degree $r+1$ carries at most a fraction $(r+1)/N$ of the weight of the input. Since $\lambda$ is the norm of the state in which the part of level $k$ is multiplied by $\sqrt k$, and $\sqrt{k+1}-\sqrt k\le1$, each of the two steps changes $\lambda$ by at most the norm of the part of the state that it moves up a level plus the norm of the part that it moves down a level, and the second part is bounded by applying the same estimates to the adjoint step (\cref{lem:haar-down}). Together, the two cases show that a query, controlled or not, raises $\lambda$ by $O(\sqrt{(d+1)/N})=O(\sqrt{(t+1)/N})$ after $t$ queries. Hence $\lambda=O(q^{3/2}/\sqrt N)$ after $q$ queries, and \eqref{eq:haar-extraction} gives part~2 of \cref{thm:intro-xeb}.

\paragraph{Extension to Haar-random unitary access.}
For each fixed $\phi$, $q$ queries to a random unitary $U$ with $U\ket{0^n}=\ket\phi$ can be simulated by $2q$ queries to $O_\phi$ with error $O(q/\sqrt N)$~\cite[Theorem~19]{Kretschmer}. This changes the score by at most the error times the largest reward $N\max_zp_z$, whose average is $\ln N+O(1)$. The squaring algorithm only prepares $\ket\phi$ and reflects about it, so it runs unchanged with $U$.

\subsection{Near-optimal randomness for Haar-random states}
\label{sec:overview-haar-randomness}

The three steps of \cref{sec:overview-fourier-randomness} carry over to this setting, with the phase $\theta_z$ of $\alpha_z$ in place of the Fourier coefficient, and the main change is in the second step. First, the part of the output-$z$ branch that does not depend on $\theta_z$ consists of the terms with no unpaired factor at $z$. At level zero these terms score at most $1$, well below the ideal score, because without a factor at $z$ the reweighting does not favor $z$. The part of positive level has weight at most $\lambda^2$, so a near-ideal score forces the branches to equal their $\theta_z$-dependent parts $G_z$ up to a total error $\eta=O(\sqrt\delta+q^{3/4}N^{-1/4})$.

Second, unlike a Fourier coefficient, the phase $\theta_z$ can be changed without leaving the ensemble. Rotating it moves $\ket\phi$ by at most $2\sqrt{p_z}$, so, when the oracle is replaced one query at a time, the unitary of the device changes by $O(q\sqrt{p_z})$. Hence $G_z$, which is the branch minus the average of its copies with the phase $\theta_z$ rotated, is the initial state acted on by an operator on the device of norm $O(q\sqrt{p_z})$; this hybrid argument takes the place of Markov's inequality for polynomials.

Third, as in \eqref{eq:ov-domination}, the unnormalized weight of every outcome of a measurement of Eve on $G_z$ is at most $O(q^2)\,p_z$ times its probability on her initial state. Since $\E\max_zp_z=(\ln N+O(1))/N$, we obtain the Haar case of \cref{thm:intro-randomness}. The factor $q^2$, instead of the factor $q^4$ that Markov's inequality gives, is the reason why the bound loses $2\log_2q$ bits here but $4\log_2q$ bits for Fourier sampling.

The argument does not extend to an Eve who learns a full Haar-random unitary $U$. Our score bound for unitary access averages over the columns of $U$ other than $\phi$, which such an Eve knows, and a device can correlate its output with these columns. For instance, a device that applies $U$ to its half of a maximally entangled state shared with Eve produces a uniform output that Eve, knowing $U$, recovers exactly. That device scores only $1$, but a proof for this setting would have to rule out such correlations at the ideal score.

\section{Preliminaries}\label{sec:prelim}

\subsection{Notation and query algorithms}\label{sec:prelim-notation}

Throughout, $n\ge1$ and $N=2^n$. We identify $\{0,1\}^n$ with $[N]=\{1,\dots,N\}$ when convenient, and we write $z\cdot x$ for the inner product of $z,x\in\{0,1\}^n$ modulo $2$. Logarithms are to base $2$ unless written $\ln$. Inner products $\langle u,v\rangle$ are conjugate-linear in the first argument, $\|v\|$ is the norm of a vector, and $\|T\|$ is the operator norm of an operator $T$. For a positive semidefinite operator $M$, we write $\operatorname{Tr}M$ for its trace. All Hilbert spaces are finite-dimensional, except for the spaces of square-integrable functions on the simplex that
are defined in \cref{sec:prelim-dirichlet} and used in \cref{sec:haar}.

The oracle of an experiment is denoted $\omega$ when we do not need to specify the ensemble; it is a function $f$ in the Fourier ensemble and a state $\phi$ in the Haar ensemble. We write $p_\omega$ for the ideal distribution on $\{0,1\}^n$ that the oracle determines, and $O_\omega$ for the unitary through which it is accessed, namely $O_f$ of \eqref{eq:prelim-value-oracle} or $O_\phi$ of \eqref{eq:prelim-canonical-oracle}.

\paragraph{The Fourier ensemble.}
The oracle is a uniformly random function $f:\{0,1\}^n\to\{-1,1\}$, so that the $N$ values $f(x)$ are independent uniformly random signs. For $z\in\{0,1\}^n$, let $\chi_z(x)=(-1)^{z\cdot x}$ denote the characters, which satisfy $\chi_z(x)\chi_z(y)=\chi_z(x\oplus y)$ and the orthogonality relations
\begin{equation}\label{eq:prelim-walsh}
 \sum_{z}\chi_z(x)\chi_z(y)=N\,\mathbf 1[x=y],\qquad \sum_x\chi_z(x)\chi_w(x)=N\,\mathbf 1[z=w].
\end{equation}
The Fourier coefficients and the ideal distribution are
\begin{equation}\label{eq:prelim-fourier}
 \widehat f(z)=\frac1N\sum_x\chi_z(x)f(x),\qquad p_f(z)=\widehat f(z)^2,
\end{equation}
and $\sum_zp_f(z)=1$ for every $f$ by Parseval's identity, which follows from \eqref{eq:prelim-walsh}.

The algorithm accesses $f$ through the \emph{value oracle}
\begin{equation}\label{eq:prelim-value-oracle}
 O_f\ket{x,b}=\ket{x,\,b\oplus(1-f(x))/2},\qquad x\in\{0,1\}^n,\ b\in\{0,1\},
\end{equation}
which acts on an address register of $n$ qubits and a target qubit. In the basis $\ket+=(\ket0+\ket1)/\sqrt2$, $\ket-=(\ket0-\ket1)/\sqrt2$ of the target qubit, the oracle acts as
\begin{equation}\label{eq:prelim-phase-kickback}
 O_f\ket x\ket+=\ket x\ket+,\qquad O_f\ket x\ket-=f(x)\ket x\ket-,
\end{equation}
because flipping the target qubit leaves $\ket+$ unchanged and multiplies $\ket-$ by $-1$. We call a query in which the target qubit is in the state $\ket-$ a \emph{phase query}. The operator $O_f$ is a permutation of the computational basis with $O_f^2=I$, so it is unitary, Hermitian, and its own inverse.

\paragraph{The Haar ensemble.}
The oracle hides a uniformly random unit vector $\ket\phi=\sum_{x\in[N]}\alpha_x\ket x$ in $\C^N$, and the ideal distribution is $p_\phi(x)=|\alpha_x|^2$. The ensemble is a distribution on unit vectors rather than on rays: the global phase of $\ket\phi$ matters, because the oracle below swaps $\ket\phi$ with a fixed vector, so that $\ket\phi$ and $e^{\mathrm i\theta}\ket\phi$ give different oracles.

The query register is $\mathsf Q=\C\ket\bot\oplus\C^N$, where $\ket\bot$ is a fixed unit vector orthogonal to $\C^N$. The \emph{canonical state-preparation oracle} is
\begin{equation}\label{eq:prelim-canonical-oracle}
 O_\phi=I-2\ket{\eta_\phi}\bra{\eta_\phi},\qquad \ket{\eta_\phi}=\frac{\ket\bot-\ket\phi}{\sqrt2},
\end{equation}
which is the reflection about the hyperplane orthogonal to $\ket{\eta_\phi}$. Expanding the projection gives
\begin{equation}\label{eq:prelim-oracle-expansion}
 O_\phi=I-\ket\bot\bra\bot+\ket\bot\bra\phi+\ket\phi\bra\bot-\ket\phi\bra\phi ,
\end{equation}
so $O_\phi$ swaps $\ket\bot$ and $\ket\phi$ and acts as the identity on the vectors orthogonal to both $\ket\bot$ and $\ket\phi$. Like $O_f$, it is unitary, Hermitian, and its own inverse.

The expansion \eqref{eq:prelim-oracle-expansion} also shows how the hidden amplitudes enter: in the basis $\{\ket\bot,\ket1,\dots,\ket N\}$ of $\mathsf Q$, every matrix entry of $O_\phi$ is a polynomial of degree at most two in the amplitudes $\alpha_x$ and their conjugates $\overline{\alpha_x}$, each factor $\ket\phi$ contributing one $\alpha_x$ and each factor $\bra\phi$ one $\overline{\alpha_x}$.

With \emph{Haar-random unitary access}, the algorithm may instead apply a Haar-random unitary $U$ on $\C^N$, its inverse $U^\dagger$, and controlled versions of both, and the state scored is $\ket\phi=U\ket{0^n}$. Both forms of access are the models of~\cite{Kretschmer}, and the second is at least as strong as the first, since $O_\phi$ can be built from one controlled application of $U^\dagger$ and one of $U$ with the help of one ancilla qubit.

To build $O_\phi$ in this way, identify $\ket\bot$ with $\ket0\ket{0^n}$ and $\ket x$ with $\ket1\ket x$, and let $cU=\ket0\bra0\otimes I+\ket1\bra1\otimes U$ be $U$ controlled by the ancilla. Then $cU^\dagger\ket{\eta_\phi}=\ket-\ket{0^n}$ for the vector $\ket{\eta_\phi}$ of \eqref{eq:prelim-canonical-oracle}, so
\[
 I-2\ket{\eta_\phi}\bra{\eta_\phi}=cU\bigl(I-2\ket-\bra-\otimes\ket{0^n}\bra{0^n}\bigr)cU^\dagger
\]
on all of $\C^2\otimes\C^N$. This operator is $O_\phi$ on the embedded register. The middle factor swaps $\ket0\ket{0^n}$ with $\ket1\ket{0^n}$ and fixes every $\ket1\ket x$ with $x\ne0^n$, and it uses no query.

\paragraph{The expected score.}
For an oracle $\omega$ with ideal distribution $p_\omega$, the \emph{expected score} of an algorithm $\mathcal A$ that outputs $Z\in\{0,1\}^n$ is
\begin{equation}\label{eq:prelim-score}
 B(\mathcal A)=\E_\omega\sum_z\Pr[Z=z\mid\omega]\,Np_\omega(z),
\end{equation}
which is \eqref{eq:intro-score}. We call $Np_\omega(z)$ the \emph{reward} of the outcome $z$, so that the score is the expected reward of the output. Outputting a uniformly random string, or a fixed string, gives $B(\mathcal A)=1$, because $\sum_zp_\omega(z)=1$ and $\E_\omega p_\omega(z)=1/N$ for every $z$.

The \emph{honest sampler} is the one-query algorithm that prepares a state whose measurement in the computational basis has the distribution $p_\omega$ and measures it. In the Fourier ensemble it applies Hadamard gates, a phase query, and Hadamard gates to $\ket{0^n}$, as in \cref{sec:intro}, and in the Haar ensemble it applies $O_\phi$ to $\ket\bot$. It outputs $z$ with probability $p_\omega(z)$, and its expected score, the \emph{ideal score}, is
\begin{equation}\label{eq:prelim-ideal-score}
 \E_\omega N\sum_zp_\omega(z)^2=
 \begin{cases}
 3-2/N&\text{in the Fourier ensemble,}\\
 2N/(N+1)&\text{in the Haar ensemble;}
 \end{cases}
\end{equation}
these values are computed in \cref{lem:moments}, and from \cref{sec:haar} on we write $\beta_N=2N/(N+1)$ for the ideal Haar score.

No algorithm scores more than $\E_\omega N\max_zp_\omega(z)$, a value that is approached by algorithms that learn the oracle; it is $\ln N+O(1)$ in the Haar ensemble (\cref{lem:heaviest}) and at most $2(n+2)$ in the Fourier ensemble (\cref{lem:fr-max-weight}).

\subsubsection{Query algorithms and a reduction to controlled queries}\label{sec:prelim-model}

We now describe the most general experiment that our theorems concern. It includes an adversary, Eve, who may share an entangled state with the device; the theorems about the score alone are the special case in which Eve's register is trivial. Throughout, \emph{device} and \emph{algorithm} refer to the same party, the one that makes the queries; we denote it by $\mathcal A$, so that $B(\mathcal A)$ in \eqref{eq:prelim-score} is the expected score of the device in the experiment.

\begin{definition}[The experiment]\label{def:experiment}
Fix $q\ge0$ and one of the two ensembles.
\begin{enumerate}
\item The device holds a register $A$ and Eve holds a register $E$, both of finite dimension, and they start in a joint state $\rho_{AE}$ that does not depend on the oracle. The oracle $\omega$ is then drawn.
\item The device applies oracle-independent operations to $A$ and makes at most $q$ queries along every path of its computation, which consists of finitely many steps. An operation may adjoin ancillas in fixed states, apply unitaries, and perform measurements whose outcomes are appended to a classical record kept in $A$, and every operation and every query may depend on the record produced before it.
\item A query applies $O_\omega$ to a register of $A$ of the appropriate type, chosen according to the record, possibly controlled by a qubit of $A$. Since $O_\omega$ is its own inverse, inverse queries are the same operation.
\item The output $Z\in\{0,1\}^n$ is a function of the final record. No operation acts on $E$, and nothing is sent from $A$ to $E$. After $Z$ has been produced, Eve receives a complete classical description of $\omega$ in a register $R$: the truth table of $f$, or the vector $\phi$.
\end{enumerate}
\end{definition}

Classical randomness of the device is included in $\rho_{AE}$, and so is any randomness shared with Eve. Haar-random unitary access, in which $U$ and $U^\dagger$ are distinct operations, is not covered by this definition and is reduced to the canonical oracle in \cref{sec:haar-unitary}.

For each oracle $\omega$ and each $z$, the experiment determines the subnormalized state $\rho^\omega_z$ of Eve's register on the event $Z=z$, whose trace is $\Pr[Z=z\mid\omega]$. Since nothing acts on $E$, we have $\sum_z\rho^\omega_z=\rho_E$ for every $\omega$, where $\rho_E$ is the reduced state of $E$ in $\rho_{AE}$, although $Z$ may be correlated with $E$. The next lemma puts every such experiment into a form with a fixed sequence of controlled queries, which is the form used in the proofs.

\begin{lemma}[Reduction to a fixed sequence of controlled queries]\label{lem:normal-form}
Consider an experiment as in \cref{def:experiment}. There are
\begin{enumerate}
\item[(a)] a finite-dimensional register $A'$, which contains a query register $\mathsf Q$ of the type on which $O_\omega$ acts and a control qubit $\mathsf C$,
\item[(b)] a unit vector $\psi_0\in A'\otimes E$ whose reduced state on $E$ is $\rho_E$,
\item[(c)] oracle-independent unitaries $U_0,\dots,U_q$ on $A'$, and
\item[(d)] orthogonal projections $\{\Pi_z\}_{z\in\{0,1\}^n}$ on $A'$ with $\sum_z\Pi_z=I$,
\end{enumerate}
with the following property. Let
\begin{equation}\label{eq:prelim-controlled-query}
 O^{\mathrm c}_\omega=\ket0\bra0_{\mathsf C}\otimes I+\ket1\bra1_{\mathsf C}\otimes O_\omega
\end{equation}
act on $\mathsf C$ and $\mathsf Q$ and trivially on the rest of $A'$, and put
\begin{equation}\label{eq:prelim-V}
 V_\omega=U_q\,O^{\mathrm c}_\omega\,U_{q-1}\,O^{\mathrm c}_\omega\cdots U_1\,O^{\mathrm c}_\omega\,U_0 .
\end{equation}
Then, for every oracle $\omega$ and every $z$,
\begin{equation}\label{eq:prelim-normal-state}
 \rho^\omega_z=\operatorname{Tr}_{A'}\Bigl[(\Pi_zV_\omega\otimes I_E)\ket{\psi_0}\bra{\psi_0}(\Pi_zV_\omega\otimes I_E)^\dagger\Bigr],
 \qquad
 \Pr[Z=z\mid\omega]=\bigl\|(\Pi_zV_\omega\otimes I_E)\psi_0\bigr\|^2 .
\end{equation}
\end{lemma}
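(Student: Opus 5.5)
We will prove \cref{lem:normal-form} by the standard deferred-measurement and purification argument, in three steps.

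\emph{Step 1: purify the initial state.} First we purify $\rho_{AE}$. Let $\ket{\psi_{AEF}}$ be a purification with a reference register $F$, and absorb $F$ into the device's register. The reduced state on $E$ is still $\rho_E$. Since no operation ever acts on $F$, this changes no $\rho^\omega_z$. Ancillas adjoined in fixed states during the computation are all adjoined at the start, tensored onto $\psi_0$; this also leaves the reduced state on $E$ unchanged.

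\emph{Step 2: make everything coherent.} Next we replace each measurement by an isometry that copies its outcome into a fresh record register, using a CNOT-type copy in the measured basis. Then we replace each classically controlled operation by the corresponding quantum-controlled unitary. An operation that depends on the record becomes $\sum_r\ket r\bra r\otimes W_r$, and since the record has finitely many values this is an oracle-independent unitary. The computation has finitely many steps, so it becomes a finite sequence of oracle-independent unitaries interleaved with queries. The same construction applies to each query: a query chosen according to the record acts on one of several registers, possibly controlled by a qubit.

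\emph{Step 3: put all queries in a fixed position.} This is the one real point. Different branches may make queries at different steps, on different registers, with different control qubits, and some branches make fewer than $q$ queries. We fix a single query register $\mathsf Q$ and a single control qubit $\mathsf C$ in $A'$. Before the $j$-th query slot, an oracle-independent unitary does the following, controlled on the record:
\begin{itemize}
\item it swaps the register to be queried into $\mathsf Q$;
\item it loads $\mathsf C$ with the relevant control value. This value is computed reversibly from the original control qubit and from whether this branch makes a query at this step; if there is no query, $\mathsf C$ is set to $\ket0$.
\end{itemize}
After the slot, the same unitary is undone.

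To arrange the queries into $q$ slots, we order each path's queries by time. Each path makes at most $q$ queries, so we introduce a classical counter register and unroll the computation into $q$ rounds. In round $j$, the device advances each branch, coherently, to just before that branch's $j$-th query; branches with no further queries are idle. The counter and record registers are part of $A'$, so everything between slots is an oracle-independent unitary. A simpler alternative, which we would try first, is to pad every step of the computation with a controlled query whose control is $\ket0$ when no query occurs. This needs a number of slots equal to the number of steps rather than $q$. We therefore need the counter-based unrolling to get exactly $q$ slots. The main obstacle is showing that the unrolling is well defined as a unitary; we will handle this by making the step index part of the quantum state.

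\emph{Verification.} Finally, $\Pi_z$ is the projection onto records whose output function equals $z$. The unitaries give $V_\omega$ of the form \eqref{eq:prelim-V}. Deferring the measurements does not change the joint distribution of the record together with the state of $E$. Hence \eqref{eq:prelim-normal-state} holds, since $\Pi_zV_\omega\otimes I_E$ acting on $\psi_0$ reproduces the branch of the original experiment with output $z$, and tracing out $A'$ gives $\rho^\omega_z$.
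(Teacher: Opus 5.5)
Your proposal is essentially the paper's proof: a record register that includes the query count, grouping each path into $q$ query rounds, a record-controlled unitary that swaps the queried register into a fixed $\mathsf Q$ and sets $\mathsf C$ (to $\ket0$ when the path makes no query in that round) and is undone after $O^{\mathrm c}_\omega$, purification with all ancillas present from the start, and $\Pi_z$ as the projection onto records with output $z$. The only difference is the order: the paper keeps the record classical while forming the rounds, so each round's non-query operations compose into a single oracle-independent CPTP map that is Stinespring-dilated afterwards; these include, for a path that has made all its queries, the rest of its computation. That ordering makes the step you flag as the main obstacle automatic, and your fix of carrying the step index in the record also works, e.g.\ by computing into a fresh ancilla the bit ``this branch executes original step $t$ in round $j$'' and controlling both that step's operation and the pointer increment on it.
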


\begin{proof}
We modify the experiment in four steps, none of which changes the subnormalized state of $E$ on the event $Z=z$, for any $\omega$ and $z$.

\emph{Step 1: the record as a register and the query rounds.} Regard the classical record as a register $\mathsf K$ of $A$ with an orthonormal basis $\{\ket r\}$ indexed by the possible records, which are finitely many because the computation has finitely many steps and every measurement has finitely many outcomes, and let the record include the number of queries made so far along its path. At every stage, the state of $AE$ is block diagonal, of the form
\[
 \sum_r\ket r\bra r_{\mathsf K}\otimes\sigma_r .
\]

We group the computation into $q$ \emph{query rounds} and a final round. Along a path, the rounds are as follows.
\begin{enumerate}
\item[(i)] For $1\le k\le q$, the $k$-th round consists of everything the device does after its $(k-1)$-st query, up to and including its $k$-th query, and the final round consists of what remains after the $q$-th query.
\item[(ii)] If the path makes exactly $j<q$ queries, then the $(j+1)$-st round consists instead of everything after its $j$-th query up to the end of the computation, and the rounds after the $(j+1)$-st, including the final round, are empty.
\end{enumerate}

Within a round, the operations other than the query, together with the recording of their outcomes, compose into a single oracle-independent completely positive trace-preserving map on $A$, possibly on an enlarged $A$, because every operation is selected by the record and the record grows with every outcome. The query of the $k$-th round is the unitary
\[
 \sum_r\ket r\bra r_{\mathsf K}\otimes Q^r_\omega ,
\]
where $Q^r_\omega$ is the identity if the path $r$ makes fewer than $k$ queries, and otherwise $O_\omega$ acting on the registers selected by $r$, possibly controlled by the qubit selected by $r$. On block-diagonal states these maps act exactly as the record-dependent operations of the experiment, and they are oracle-independent except for the queries.

\emph{Step 2: a fixed query register.} Adjoin a register $\mathsf Q$ of the type on which $O_\omega$ acts and a qubit $\mathsf C$, both in fixed states. Let $S^r$ be the oracle-independent unitary that swaps the registers selected by $r$ into $\mathsf Q$ and acts on $\mathsf C$ as follows.
\begin{enumerate}
\item[(i)] If the query of the path $r$ is controlled, then $S^r$ swaps the control qubit selected by $r$ into $\mathsf C$.
\item[(ii)] Otherwise, since $\mathsf C$ starts in the fixed state $\ket0$, the unitary $S^r$ flips it to $\ket1$ if the path makes a query and leaves it in $\ket0$ if it does not.
\end{enumerate}
Then the operator $W=\sum_r\ket r\bra r_{\mathsf K}\otimes S^r$ is an oracle-independent unitary, and
\[
 W^\dagger\,O^{\mathrm c}_\omega\,W=\sum_r\ket r\bra r_{\mathsf K}\otimes Q^r_\omega
\]
on every state that is block diagonal in $\mathsf K$ and has $\mathsf Q$ and $\mathsf C$ in their fixed states, which includes every state that occurs in the computation. The identity holds because, on each block $r$, the factor $W$, which acts first, moves the relevant registers into position, the controlled query acts on them exactly as $Q^r_\omega$ does, and $W^\dagger$ moves them back.

Since $O^{\mathrm c}_\omega$ does not act on $\mathsf K$, both applications of $W$ are controlled by the same value of the record. Thus every query step is $O^{\mathrm c}_\omega$ conjugated by oracle-independent unitaries, and these unitaries are absorbed into the operations before and after the query.

\emph{Step 3: dilation and purification.} Replace each oracle-independent completely positive trace-preserving map by a Stinespring isometry whose environment register is kept by the device and never acted on again. This does not change the reduced state of the other registers at any later time, because all later operations act trivially on the environment. Purify $\rho_{AE}$ with a register that is kept by the device and never acted on. Include every register that is adjoined during the experiment in the initial state, each in its fixed state.

The result is a register $A'$ and a unit vector $\psi_0\in A'\otimes E$ whose reduced state on $E$ is $\rho_E$, and the device now applies, alternately, oracle-independent unitaries on $A'$ and the controlled query $O^{\mathrm c}_\omega$ on $\mathsf C\mathsf Q$, that is, it applies $V_\omega$ of \eqref{eq:prelim-V}. A path that makes fewer than $q$ queries has $\mathsf C$ in the state $\ket0$ at the remaining query rounds.

\emph{Step 4: the output.} Extend the output function $Z$ arbitrarily to the records that cannot occur at the end of the computation, and let $\Pi_z$ be the projection onto the records $r$ with $Z(r)=z$, tensored with the identity on the other registers of $A'$, so that $\sum_z\Pi_z=I$.
After Step~3, the reduced state of $\mathsf K$, the original registers of $A$, and $E$ at the end of the computation equals the final state of the original experiment, and it is block diagonal in $\mathsf K$. Therefore the partial trace in \eqref{eq:prelim-normal-state} is the sum of the $E$-parts of the blocks with $Z(r)=z$, which is $\rho^\omega_z$. Its trace is $\Pr[Z=z\mid\omega]$.
\end{proof}

From now on we write $A$ for the register $A'$ of \cref{lem:normal-form}. We refer to $\Psi(\omega)=(V_\omega\otimes I_E)\psi_0$ as the \emph{final state} of the device for the oracle $\omega$, and to $\Psi_z(\omega)=(\Pi_z\otimes I_E)\Psi(\omega)$ as its \emph{output-$z$ branch}. In this notation the expected score \eqref{eq:prelim-score} reads
\begin{equation}\label{eq:prelim-score-branches}
 B(\mathcal A)=\E_\omega\sum_zNp_\omega(z)\,\|\Psi_z(\omega)\|^2 .
\end{equation}
A query that is controlled by $\mathsf C$ in the state $\ket1$ is an ordinary query, so the lemma also covers algorithms that query without control, and in particular the honest samplers of both ensembles.

\subsubsection{Amplitude amplification}\label{sec:prelim-amplification}

The squaring algorithm of \cref{sec:squaring} and the state-preparation algorithm of \cref{sec:hadamard} use amplitude amplification in the following standard form~\cite{BrassardHoyerMoscaTapp}.

\begin{lemma}[Amplitude amplification]\label{lem:amplification}
Let $\psi$ be a unit vector in a Hilbert space, let $\Pi$ be an orthogonal projection, and let $P=\|\Pi\psi\|^2>0$.
\begin{enumerate}
\item[(i)] If $P=1$, then
\[
 \bigl((2\ket\psi\bra\psi-I)(I-2\Pi)\bigr)^\ell\psi=(-1)^\ell\psi
\]
for every integer $\ell\ge0$.
\item[(ii)] If $P<1$, put
\[
 \vartheta=\arcsin\sqrt P\in(0,\pi/2),\qquad u=\frac{\Pi\psi}{\sqrt P},\qquad u^\perp=\frac{(I-\Pi)\psi}{\sqrt{1-P}},
\]
so that $\psi=\sin\vartheta\,u+\cos\vartheta\,u^\perp$; then, for every integer $\ell\ge0$,
\[
 \bigl((2\ket\psi\bra\psi-I)(I-2\Pi)\bigr)^\ell\psi=\sin\bigl((2\ell+1)\vartheta\bigr)\,u+\cos\bigl((2\ell+1)\vartheta\bigr)\,u^\perp .
\]
\end{enumerate}
\end{lemma}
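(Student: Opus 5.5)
The plan is to reduce everything to the two-dimensional subspace spanned by $u$ and $u^\perp$ (or to the one-dimensional span of $\psi$ in case (i)) and compute the action of the two reflections there. Write $G=(2\ket\psi\bra\psi-I)(I-2\Pi)$.

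For (i), $P=1$ means $\Pi\psi=\psi$. Then $(I-2\Pi)\psi=-\psi$ and $(2\ket\psi\bra\psi-I)(-\psi)=-\psi$. So $G\psi=-\psi$, and induction on $\ell$ gives $G^\ell\psi=(-1)^\ell\psi$.

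For (ii), $0<P<1$, so both $u$ and $u^\perp$ are well-defined unit vectors. They are orthogonal, since $u$ lies in the range of $\Pi$ and $u^\perp$ in the range of $I-\Pi$. The decomposition $\psi=\sin\vartheta\,u+\cos\vartheta\,u^\perp$ follows from $\psi=\Pi\psi+(I-\Pi)\psi$ together with $\sin\vartheta=\sqrt P$ and $\cos\vartheta=\sqrt{1-P}$.

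On the plane spanned by $u$ and $u^\perp$:
\begin{itemize}
\item $I-2\Pi$ maps $u\mapsto-u$ and $u^\perp\mapsto u^\perp$.
\item $2\ket\psi\bra\psi-I$ is the reflection through the line spanned by $\psi$. A unit vector at angle $\beta$, namely $\sin\beta\,u+\cos\beta\,u^\perp$, is sent to the vector at angle $2\vartheta-\beta$. This is checked by expanding $2\langle\psi,v\rangle\psi-v$ with $\langle\psi,v\rangle=\cos(\beta-\vartheta)$ and using the double-angle formulas.
\end{itemize}

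Composing the two, start from angle $\beta$. The first reflection gives angle $-\beta$, and the second gives $2\vartheta+\beta$. So $G$ rotates the plane by $2\vartheta$, and the plane is invariant under $G$. Induction from $\beta=\vartheta$ gives angle $(2\ell+1)\vartheta$ after $\ell$ steps, which is the claimed formula.

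The only step needing care is the reflection computation. It is a routine trigonometric identity, so no real obstacle is expected.
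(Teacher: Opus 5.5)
Your proposal is correct and follows the paper's proof. Both restrict to the plane spanned by $u$ and $u^\perp$, show that the product of the two reflections acts there as rotation by $2\vartheta$, and conclude by induction. The paper writes down the rotation matrix in the basis $(u,u^\perp)$, while you track the angle parameter of $\sin\beta\,u+\cos\beta\,u^\perp$ directly, which is the same computation.
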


\begin{proof}
If $P=1$, then $\Pi\psi=\psi$, so $(I-2\Pi)\psi=-\psi$ and $(2\ket\psi\bra\psi-I)(-\psi)=-\psi$.

For (ii), let $0<P<1$. The vectors $u$ and $u^\perp$ are orthonormal, and the plane they span contains $\psi$ and is invariant under both reflections. Indeed, $I-2\Pi$ fixes $u^\perp$ and negates $u$, while $2\ket\psi\bra\psi-I$ fixes $\psi$ and negates the vector $\cos\vartheta\,u-\sin\vartheta\,u^\perp$ orthogonal to it. On this plane, in the basis $(u,u^\perp)$, the product of the two reflections has the matrix
\[
 \begin{pmatrix}\cos2\vartheta&\sin2\vartheta\\-\sin2\vartheta&\cos2\vartheta\end{pmatrix},
\]
which is the rotation by the angle $2\vartheta$ that takes $u^\perp$ toward $u$; here $xu+yu^\perp$ is written $(x,y)$, and the first column is the image of $u$. By the addition formulas the matrix maps $(\sin\alpha,\cos\alpha)$ to $(\sin(\alpha+2\vartheta),\cos(\alpha+2\vartheta))$ for every $\alpha$; in particular it maps $\psi=(\sin\vartheta,\cos\vartheta)$ to $(\sin3\vartheta,\cos3\vartheta)$, and induction on $\ell$ gives the claim.
\end{proof}

\subsection{Min-entropy, fidelity, and purified distance}\label{sec:prelim-entropy}

The oracle register $R$ is finite in the Fourier ensemble and continuous in the Haar ensemble. To treat both cases at once, we describe the joint state of $R$, $Z$, and $E$ by the family of Eve's conditional states.

\begin{definition}[Families of positive operators and entropies]\label{def:entropies}
A \emph{family of positive operators} is a map, measurable with respect to the distribution of the oracle, $\omega\mapsto(\tau^\omega_z)_{z\in\{0,1\}^n}$ from oracles to tuples of positive semidefinite operators on $E$. It is \emph{normalized} if $\E_\omega\sum_z\operatorname{Tr}\tau^\omega_z=1$. This normalization is joint over the oracle and output; the total trace for each fixed oracle need not be $1$.
For normalized families $\rho$ and $\tau$ define
\begin{align*}
 p_{\mathrm{guess}}(Z\mid RE)_\tau&=\E_\omega\sup_{M}\sum_z\operatorname{Tr}(M_z\tau^\omega_z),\\
 \Hmin(Z\mid RE)_\tau&=-\log p_{\mathrm{guess}}(Z\mid RE)_\tau,\\
 F(\rho,\tau)&=\E_\omega\sum_z\bigl\|\sqrt{\rho^\omega_z}\sqrt{\tau^\omega_z}\bigr\|_1,\\
 P(\rho,\tau)&=\sqrt{1-F(\rho,\tau)^2},
\end{align*}
where the supremum is over all measurements $M=(M_z)_z$ on $E$, that is, over all tuples of positive semidefinite operators with $\sum_zM_z=I_E$, and where $\|T\|_1=\operatorname{Tr}\sqrt{T^\dagger T}$ is the trace norm.

For $0\le\eta\le1$, the \emph{smooth min-entropy} is
\[
 \Hmin^\eta(Z\mid RE)_\rho=\sup\bigl\{\Hmin(Z\mid RE)_\tau:\ \tau\text{ a normalized family with }P(\rho,\tau)\le\eta\bigr\}.
\]
\end{definition}

Here $F(\rho,\tau)\le1$, because
\[
 \bigl\|\sqrt{\rho^\omega_z}\sqrt{\tau^\omega_z}\bigr\|_1\le\bigl(\operatorname{Tr}\rho^\omega_z\,\operatorname{Tr}\tau^\omega_z\bigr)^{1/2},
\]
and the Cauchy--Schwarz inequality over $\omega$ and $z$, together with the normalization of both families, bounds the sum by $1$.

The family that describes an experiment consists of the states $\rho^\omega_z$ of \cref{def:experiment}, and it is normalized since $\sum_z\operatorname{Tr}\rho^\omega_z=1$ for every $\omega$.
In this family, $p_{\mathrm{guess}}(Z\mid RE)_\rho$ is the largest probability with which Eve, who knows $\omega$ and holds $E$, guesses $Z$, since she may choose her measurement as a function of $\omega$. The integrands in \cref{def:entropies} are measurable functions of $\omega$, because the supremum may be taken over a countable dense set of measurements, and the trace norm is continuous.

\begin{remark}[Relation to the standard definitions]\label{rem:entropy-conventions}
When the set of oracles is finite, as in the Fourier ensemble, a normalized family is the same thing as a density operator
\[
 \tau_{RZE}=\sum_{\omega,z}\mu(\omega)\ket\omega\bra\omega_R\otimes\ket z\bra z_Z\otimes\tau^\omega_z
\]
that is classical on $R$ and $Z$, where $\mu(\omega)$ is the probability
of the oracle $\omega$. In this case, the quantities above are the standard ones: $\Hmin(Z\mid RE)_\tau$ is the conditional min-entropy, by its characterization as a guessing probability~\cite{KoenigRennerSchaffner}, and $F$ and $P$ are the fidelity and the purified distance of the two block-diagonal operators~\cite{TomamichelColbeckRenner}.

Our smoothing takes the supremum over states that are classical on $R$ and $Z$ and normalized, whereas the standard definition also allows subnormalized states that are not classical on $RZ$. The supremum here is therefore over a smaller set, and every lower bound proved below also holds for the standard quantity, which is the one used in randomness extraction against quantum side information~\cite{TomamichelEtAlHashing}.

In the Haar ensemble, where $R$ is continuous, we take \cref{def:entropies} as the definition. If Eve receives a function $g(\phi)$ with finitely many values instead of $\phi$, then the family is replaced by the averages of $\rho^\phi_z$ over the fibers of $g$.

The guessing probability can only decrease under this averaging, and the fidelity can only increase, because the map $(a,b)\mapsto\bigl\|\sqrt a\sqrt b\bigr\|_1$ is jointly concave on pairs of positive semidefinite operators. This concavity follows from the strong concavity of the fidelity~\cite[Theorem~9.7]{NielsenChuang} and the homogeneity $\bigl\|\sqrt{sa}\sqrt{tb}\bigr\|_1=\sqrt{st}\,\bigl\|\sqrt a\sqrt b\bigr\|_1$.
Hence every lower bound on $\Hmin^\eta(Z\mid RE)$ proved for the continuous register also holds, in the standard sense, when $\phi$ is disclosed to any finite precision.
\end{remark}

We use two standard facts. First, for unit vectors $u$ and $v$ in $A\otimes E$ whose reduced states on $E$ are $\rho$ and $\tau$,
\begin{equation}\label{eq:prelim-uhlmann}
 \bigl\|\sqrt\rho\sqrt\tau\bigr\|_1\ge|\langle u,v\rangle| ,
\end{equation}
which is the elementary half of Uhlmann's theorem~\cite[Section~9.2.2]{NielsenChuang}. By homogeneity, it also holds for vectors that are not normalized, with $\rho$ and $\tau$ the corresponding positive semidefinite operators of trace $\|u\|^2$ and $\|v\|^2$.

Second, the following lemma shows that an operator acting on the device alone cannot raise the probability of any outcome of a measurement on $E$ by more than a factor of its squared norm.

\begin{lemma}[Operators on the device alone]\label{lem:local-operator}
Let $K:A\to A_1$ be a linear map between finite-dimensional spaces, let $\psi\in A\otimes E$, and let
\[
 \rho=\operatorname{Tr}_A\ket\psi\bra\psi\quad\text{and}\quad\rho_K=\operatorname{Tr}_{A_1}\bigl[(K\otimes I_E)\ket\psi\bra\psi(K\otimes I_E)^\dagger\bigr].
\]
Then $\operatorname{Tr}(M\rho_K)\le\|K\|^2\operatorname{Tr}(M\rho)$ for every positive semidefinite operator $M$ on $E$.
\end{lemma}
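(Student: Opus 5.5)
We will reduce the statement to an operator inequality on $E$. Write the Schmidt-free expansion $\psi=\sum_i e_i\otimes\psi_i$, where $\{e_i\}$ is an orthonormal basis of $A$ and $\psi_i\in E$. Equivalently, regard $\psi$ as a linear map $X:\overline{A}\to E$, or simply as an operator from $A^*$ into $E$. Then $\rho=XX^\dagger$.

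Next we compute $(K\otimes I_E)\psi=\sum_i Ke_i\otimes\psi_i$. Its reduced state on $E$ is
\[
 \rho_K=\sum_{i,j}\langle Ke_j,Ke_i\rangle\,\ket{\psi_i}\bra{\psi_j}=XGX^\dagger ,
\]
where $G$ is the Gram matrix of the vectors $Ke_i$, up to transposition or conjugation conventions. In the same convention, $\rho=X I X^\dagger$ with the Gram matrix of the $e_i$, which is the identity.

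The matrix $G$ is $\overline{K^\dagger K}$, or its transpose, in the basis $\{e_i\}$. Since $0\le K^\dagger K\le\|K\|^2 I$, and complex conjugation or transposition preserves positivity and operator norm, we get $0\le G\le\|K\|^2I$. Conjugating by $X$ preserves the Loewner order, so
\[
 \rho_K=XGX^\dagger\le\|K\|^2XX^\dagger=\|K\|^2\rho .
\]
For any $M\ge0$, the map $T\mapsto\operatorname{Tr}(MT)$ is monotone in the Loewner order, because $\operatorname{Tr}(M(\|K\|^2\rho-\rho_K))=\operatorname{Tr}(M^{1/2}(\cdot)M^{1/2})\ge0$. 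This gives the claim.

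An alternative, basis-free route writes $\operatorname{Tr}(M\rho_K)=\langle\psi,(K^\dagger K\otimes M)\psi\rangle$. Since $K^\dagger K\le\|K\|^2I_A$ and $M\ge0$, the tensor product of the positive operator $\|K\|^2I-K^\dagger K$ with $M$ is positive, so
\[
 \langle\psi,(K^\dagger K\otimes M)\psi\rangle\le\|K\|^2\langle\psi,(I_A\otimes M)\psi\rangle=\|K\|^2\operatorname{Tr}(M\rho).
\]
I will present this second version, since it avoids transpose conventions. The only step needing care is the identity $\operatorname{Tr}(M\rho_K)=\langle\psi,(K^\dagger K\otimes M)\psi\rangle$, which follows from the definition of partial trace: $\operatorname{Tr}(M\operatorname{Tr}_{A_1}T)=\operatorname{Tr}((I\otimes M)T)$, combined with cyclicity. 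There is no real obstacle.
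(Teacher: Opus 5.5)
Your chosen basis-free argument is exactly the paper's proof: write $\operatorname{Tr}(M\rho_K)=\langle\psi,(K^\dagger K\otimes M)\psi\rangle$ and $\operatorname{Tr}(M\rho)=\langle\psi,(I\otimes M)\psi\rangle$, and use that $(\|K\|^2I-K^\dagger K)\otimes M$ is positive semidefinite. Your Gram-matrix route is also correct. It gives the operator inequality $\rho_K\preceq\|K\|^2\rho$ directly, whereas the paper only reads off this form later, in the proof of \cref{prop:randomness-assembly}.
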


\begin{proof}
We have
\[
 \operatorname{Tr}(M\rho_K)=\langle\psi,(K^\dagger K\otimes M)\psi\rangle\quad\text{and}\quad\operatorname{Tr}(M\rho)=\langle\psi,(I\otimes M)\psi\rangle .
\]
The operator $(\|K\|^2I-K^\dagger K)\otimes M$ is a tensor product of two positive semidefinite operators, hence positive semidefinite, so
\[
 \langle\psi,(K^\dagger K\otimes M)\psi\rangle\le\|K\|^2\langle\psi,(I\otimes M)\psi\rangle . \qedhere
\]
\end{proof}

\subsection{Dirichlet distributions}\label{sec:prelim-dirichlet}

For $a=(a_1,\dots,a_N)$ with all $a_i\ge1$, write $|a|=a_1+\dots+a_N$. The \emph{Dirichlet distribution} $\mathrm{Dir}(a)$ is the probability distribution on the simplex $\{p\in\R^N:\ p_i\ge0,\ \sum_ip_i=1\}$ with density proportional to $\prod_ip_i^{a_i-1}$ with respect to the uniform measure on the simplex, and $\mathrm{Dir}(1,\dots,1)$ is the uniform distribution.

We write $\E_a$ for the expectation under $\mathrm{Dir}(a)$, and $\langle g,h\rangle_a=\E_a[\bar gh]$ and $\|h\|_a$ for the inner product and norm of the space of square-integrable functions on the simplex under $\mathrm{Dir}(a)$. We also write $(x)_k=x(x+1)\cdots(x+k-1)$ for the rising factorial. The following facts are classical~\cite[Chapter~2]{NgTianTang}; we include the short proofs.

\begin{lemma}[Dirichlet facts]\label{lem:dirichlet}
Let $p\sim\mathrm{Dir}(a)$.
\begin{enumerate}
\item[(i)] For nonnegative integers $b_1,\dots,b_N$,
\[
 \E_a\prod_ip_i^{b_i}=\frac{\prod_i(a_i)_{b_i}}{(|a|)_{|b|}},\qquad\text{where }|b|=\sum_ib_i .
\]
In particular,
\[
 \E_ap_i=\frac{a_i}{|a|}\quad\text{and}\quad\operatorname{Var}_a(p_i)=\frac{a_i(|a|-a_i)}{|a|^2(|a|+1)} .
\]
\item[(ii)] As measures on the simplex,
\[
 p_i\,d\mathrm{Dir}(a)=\frac{a_i}{|a|}\,d\mathrm{Dir}(a+e_i),
\]
where $e_i$ is the $i$-th unit vector. Equivalently, $\E_a[p_ih]=(a_i/|a|)\E_{a+e_i}[h]$ for every integrable $h$.
\item[(iii)] For $S\subseteq[N]$, we write $a(S)=\sum_{i\in S}a_i$ and
$p(S)=\sum_{i\in S}p_i$. If $\varnothing\ne S\subsetneq[N]$, the mass $p(S)$ has the Beta distribution with parameters $(a(S),|a|-a(S))$, whose density on $[0,1]$ is proportional to $t^{a(S)-1}(1-t)^{|a|-a(S)-1}$, and the vector $(p_i/p(S))_{i\in S}$ is independent of $p(S)$ and has the distribution $\mathrm{Dir}((a_i)_{i\in S})$.
\end{enumerate}
\end{lemma}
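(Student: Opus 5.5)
The plan is to prove the three parts of \cref{lem:dirichlet} in order. Each follows from one computation with the normalizing constant of the Dirichlet density, and part (i) feeds the other two.

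For (i), write the density of $\mathrm{Dir}(a)$ as $\prod_ip_i^{a_i-1}/D(a)$ with respect to Lebesgue measure on the first $N-1$ coordinates, where
\[
 D(a)=\frac{\prod_i\Gamma(a_i)}{\Gamma(|a|)} .
\]
I will justify this normalizing constant by the standard Gamma-variable representation. Take independent $G_i\sim\mathrm{Gamma}(a_i)$ and change variables to the total $T=\sum_iG_i$ and the ratios $G_i/T$. The density then factorizes, so $T$ is independent of the ratios, the ratios are $\mathrm{Dir}(a)$, and comparing constants gives $D(a)$. With this, $\E_a\prod_ip_i^{b_i}=D(a+b)/D(a)$. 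Using $\Gamma(x+k)/\Gamma(x)=(x)_k$ in the numerator and in the denominator gives the displayed formula. The mean and variance follow by taking $b=e_i$ and $b=2e_i$:
\[
 \E_ap_i^2=\frac{a_i(a_i+1)}{|a|(|a|+1)},
 \qquad
 \operatorname{Var}_a(p_i)=\frac{a_i(a_i+1)}{|a|(|a|+1)}-\frac{a_i^2}{|a|^2},
\]
and the variance simplifies to $a_i(|a|-a_i)/(|a|^2(|a|+1))$.

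For (ii), multiply the density of $\mathrm{Dir}(a)$ by $p_i$. This gives $\prod_jp_j^{(a+e_i)_j-1}/D(a)$, which is $D(a+e_i)/D(a)$ times the density of $\mathrm{Dir}(a+e_i)$. By (i), or directly from the Gamma functions, $D(a+e_i)/D(a)=a_i/|a|$. Integrating any integrable $h$ against both sides gives the stated form.

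For (iii), I will use the Gamma representation again. Put $G_S=\sum_{i\in S}G_i$ and $G_{S^c}=\sum_{i\notin S}G_i$. These are independent, with $G_S\sim\mathrm{Gamma}(a(S))$ and $G_{S^c}\sim\mathrm{Gamma}(|a|-a(S))$. By the factorization used in (i), each of them is independent of its own internal ratios. Then:
\begin{enumerate}
\item $p(S)=G_S/(G_S+G_{S^c})$, which is $\mathrm{Beta}(a(S),|a|-a(S))$ by the two-variable case of the same factorization.
\item $(p_i/p(S))_{i\in S}=(G_i/G_S)_{i\in S}$, which is $\mathrm{Dir}((a_i)_{i\in S})$ and is independent of $G_S$.
\end{enumerate}
It remains to check that this vector is independent of $p(S)$. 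The vector $(G_i/G_S)_{i\in S}$ is independent of $G_S$ by the factorization, and independent of $G_{S^c}$ because it is built only from the coordinates in $S$. Since the $G_i$ are independent, it is jointly independent of the pair $(G_S,G_{S^c})$, and $p(S)$ is a function of that pair. This gives the claimed independence.

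I expect no real obstacle in any of these steps. The only step needing care is the change of variables behind the Gamma representation, which shows that the ratios are independent of the total and computes the Jacobian $T^{N-1}$. That is the standard computation, and I would cite \cite[Chapter~2]{NgTianTang} for it rather than redo it.
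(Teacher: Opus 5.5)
Your proof is correct. For (i) and (ii) you follow the paper's route. Both read the moment as a ratio of Dirichlet normalizing constants and obtain (ii) as the case $b=e_i$. The one difference is that you also justify the constant $\prod_i\Gamma(a_i)/\Gamma(|a|)$ through independent Gamma variables, which the paper simply writes into the density.

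For (iii) your route is genuinely different. The paper stays on the simplex and substitutes $t=p(S)$, $p_i=tu_i$ for $i\in S$, and $p_j=(1-t)v_j$ for $j\notin S$. With Jacobian proportional to $t^{|S|-1}(1-t)^{N-|S|-1}$, it reads off three facts at once from the factorization of the density over the product domain of $(t,u,v)$: the Beta law of $t$, the Dirichlet law of $u$, and their independence. You instead use the Gamma representation and derive everything from two facts: Gamma additivity, and the fact that the total of a Gamma family is independent of its ratios. This needs no second Jacobian, and the constant in (i) comes out of the same computation. The cost is working on an auxiliary probability space and tracking independence more carefully.

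On that tracking, tighten one sentence. That $U=(G_i/G_S)_{i\in S}$ is independent of $G_S$, and separately of $G_{S^{c}}$, would not by itself make $U$ independent of the pair $(G_S,G_{S^{c}})$. The correct reason is that the pair $(U,G_S)$ is a function of $(G_i)_{i\in S}$ alone, so it is independent of $G_{S^{c}}$. Together with the independence of $U$ and $G_S$, this makes $U$, $G_S$, and $G_{S^{c}}$ mutually independent, and $p(S)$ is a function of the last two.
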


\begin{proof}
With respect to Lebesgue measure in the coordinates $p_1,\dots,p_{N-1}$, the density of $\mathrm{Dir}(a)$ is
\[
 \frac{\Gamma(|a|)\prod_ip_i^{a_i-1}}{\prod_i\Gamma(a_i)} .
\]

For (i), multiplying this density by $\prod_ip_i^{b_i}$ gives a multiple of the density of $\mathrm{Dir}(a+b)$, so the expectation is the ratio of the two normalizing constants,
\[
 \frac{\Gamma(|a|)\prod_i\Gamma(a_i+b_i)}{\Gamma(|a|+|b|)\prod_i\Gamma(a_i)} ,
\]
which is the stated product of rising factorials. Part~(ii) is the case $b=e_i$ of~(i), read as an identity between densities.

For (iii), write
\[
 t=p(S),\qquad p_i=tu_i\quad\text{for }i\in S,\qquad p_j=(1-t)v_j\quad\text{for }j\notin S,
\]
where $u$ and $v$ range over the simplices of dimensions $|S|-1$ and $N-|S|-1$. In the coordinates consisting of $t$ and all but one coordinate of each of $u$ and $v$, the Jacobian of this change of variables is proportional to $t^{|S|-1}(1-t)^{N-|S|-1}$, and
\[
 \prod_ip_i^{a_i-1}=t^{a(S)-|S|}(1-t)^{|a|-a(S)-N+|S|}\prod_{i\in S}u_i^{a_i-1}\prod_{j\notin S}v_j^{a_j-1}.
\]
Hence the joint density of $(t,u,v)$ is proportional to $t^{a(S)-1}(1-t)^{|a|-a(S)-1}$ times a function of $u$ times a function of $v$. Since $(t,u,v)$ ranges over a product of three sets, $t$, $u$, and $v$ are independent, $t$ has the stated Beta distribution, and $u=(p_i/p(S))_{i\in S}$ has the distribution $\mathrm{Dir}((a_i)_{i\in S})$.
\end{proof}

Part~(ii) says that multiplying the density by $p_i$ is the same as observing the outcome $i$ once: the prior $\mathrm{Dir}(a)$ becomes the posterior $\mathrm{Dir}(a+e_i)$, and the normalizing factor $a_i/|a|$ is the prior probability of that observation. This interpretation is used throughout \cref{sec:haar}.

\paragraph{The Haar measure in these terms.}
Write $\alpha_x=\sqrt{p_x}\,e^{\mathrm i\theta_x}$ for the amplitudes of a Haar-random state. Under the Haar measure, the phases $\theta_1,\dots,\theta_N$ are independent and uniform on $[0,2\pi)$, they are independent of $p=(p_1,\dots,p_N)$, and $p$ is uniform on the simplex, that is, $p\sim\mathrm{Dir}(1,\dots,1)$. This is the standard description of the Haar measure on the sphere in terms of its moduli and phases; for example, it follows by writing a vector of independent standard complex Gaussians in polar coordinates and normalizing it.

\section{The lower bound for Fourier sampling}\label{sec:fourier}

In this section we prove part~1 of \cref{thm:intro-xeb}. We fix an experiment in the Fourier ensemble, which by \cref{lem:normal-form} we may assume makes a fixed sequence of $q$ controlled queries. The experiment has device register $A$, Eve's register $E$ (which may be trivial), initial state $\psi_0\in A\otimes E$, unitaries $U_0,\dots,U_q$, projections $\{\Pi_z\}$, and final state
\[
 \Psi(f)=(V_f\otimes I_E)\psi_0,\qquad V_f=U_qO^{\mathrm c}_fU_{q-1}\cdots O^{\mathrm c}_fU_0 .
\]
The query register $\mathsf Q$ of \cref{lem:normal-form} consists of an address register $\mathsf X$ of $n$ qubits and a target qubit $\mathsf T$, so the query acts on the control qubit $\mathsf C$, on $\mathsf X$, and on $\mathsf T$.

\subsection{The recording representation}\label{sec:fourier-recording}

Let $\mathcal H=A\otimes E$. The final state is a function $f\mapsto\Psi(f)$ from $\{-1,1\}^{\{0,1\}^n}$ to $\mathcal H$. We work in the space $\mathcal F$ of all such functions, equipped with the inner product $\langle\Phi,\Phi'\rangle=\E_f\langle\Phi(f),\Phi'(f)\rangle$, so that $\|\Phi\|^2=\E_f\|\Phi(f)\|^2$ and $\|\Psi\|=1$.

For a set $S\subseteq\{0,1\}^n$, let $\chi_S(f)=\prod_{y\in S}f(y)$, with $\chi_\varnothing=1$. These are the characters of the group $\{-1,1\}^{\{0,1\}^n}$, indexed by sets. They should not be confused with the characters $\chi_z$ of \eqref{eq:prelim-walsh}, which are indexed by strings. Since the values of $f$ are independent uniform signs, $\E_f\chi_S(f)\chi_{S'}(f)=\mathbf 1[S=S']$, and the $2^N$ functions $\chi_S$ form an orthonormal basis of the scalar functions on $\{-1,1\}^{\{0,1\}^n}$. Every $\Phi\in\mathcal F$ therefore has the expansion
\begin{equation}\label{eq:fourier-expansion}
 \Phi=\sum_S\chi_S\,\Phi_S,\qquad \Phi_S=\E_f\bigl[\chi_S(f)\Phi(f)\bigr]\in\mathcal H,\qquad \|\Phi\|^2=\sum_S\|\Phi_S\|^2 .
\end{equation}

We call $S$ the \emph{recorded set} of the term $\chi_S\Phi_S$ and its elements the \emph{recorded points}. This is the representation used in the compressed oracle technique~\cite{Zhandry}. If the random function is purified by a register in the uniform superposition $2^{-N/2}\sum_f\ket f$, then $\chi_S\Phi_S$ corresponds to the vector $\ket S\otimes\Phi_S$, where the vectors $\ket S=2^{-N/2}\sum_f\chi_S(f)\ket f$ form the Fourier basis of that register. We shall not need the register itself, only the expansion \eqref{eq:fourier-expansion}.

\paragraph{Operators on $\mathcal F$.}
An operator $T$ on $\mathcal H$ acts on $\mathcal F$ pointwise, by $(T\Phi)(f)=T\Phi(f)$. This action preserves each coefficient space: $(T\Phi)_S=T\Phi_S$. Multiplication by a scalar function $m(f)$ also acts on $\mathcal F$. In particular, $\mathcal X_x$ denotes multiplication by $f(x)$. Since $f(x)\chi_S(f)=\chi_{S\triangle\{x\}}(f)$, where $\triangle$ is the symmetric difference, we have
\begin{equation}\label{eq:fourier-toggle}
 \mathcal X_x(\chi_S\,h)=\chi_{S\triangle\{x\}}\,h\qquad(h\in\mathcal H).
\end{equation}
That is, multiplying by $f(x)$ records the point $x$ if it is absent and removes it if it is present.

We split this operation into its two halves. Let $a_x$ and $a_x^\dagger$ be the operators on $\mathcal F$ defined on the basis by
\begin{equation}\label{eq:fourier-ax}
 a_x(\chi_S\,h)=\mathbf 1[x\in S]\,\chi_{S\setminus\{x\}}\,h,\qquad
 a_x^\dagger(\chi_S\,h)=\mathbf 1[x\notin S]\,\chi_{S\cup\{x\}}\,h.
\end{equation}
Thus $a_x$ removes the point $x$ and $a_x^\dagger$ records it, and each gives zero when it cannot do so. The two operators are adjoint to each other, since the inner products $\langle\chi_{S'}h',a_x\chi_Sh\rangle$ and $\langle a_x^\dagger\chi_{S'}h',\chi_Sh\rangle$ both equal $\langle h',h\rangle$ when $S'=S\setminus\{x\}$ with $x\in S$, and both vanish otherwise.

In the language of qubits, if the set $S$ is encoded by $N$ membership qubits, one for each address, then $a_x$, $a_x^\dagger$, and $a_x^\dagger a_x$ are the lowering, raising, and number operators of the $x$-th qubit, and $\mathcal X_x$ is its Pauli $X$ operator. These qubits represent the dependence on $f$ and are not a register that the algorithm can act on.

Directly from \eqref{eq:fourier-toggle} and \eqref{eq:fourier-ax},
\begin{equation}\label{eq:fourier-ax-algebra}
 \mathcal X_x=a_x+a_x^\dagger,\qquad a_x^2=0,\qquad a_xa_x^\dagger+a_x^\dagger a_x=I,\qquad a_xa_x^\dagger-a_x^\dagger a_x=I-2n_x,
\end{equation}
where the operator $n_x=a_x^\dagger a_x$ is the orthogonal projection onto the terms in which $x$ is recorded. Operators attached to distinct points $x\ne y$ commute, because they act on the membership of different points. The operator $D=\sum_xn_x$, with $D(\chi_S\,h)=|S|\,\chi_S\,h$, counts the recorded points.

Finally, for $z\in\{0,1\}^n$ we define
\begin{equation}\label{eq:fourier-bz}
 b_z=\frac1{\sqrt N}\sum_x\chi_z(x)\,a_x .
\end{equation}
This operator removes one recorded point with the amplitudes of the Hadamard basis state $\ket{\chi_z}=H^{\otimes n}\ket z=N^{-1/2}\sum_x\chi_z(x)\ket x$: the amplitude with which it removes $x$ is the amplitude of $\ket x$ in $\ket{\chi_z}$.

These operators also have a description in terms of the function $\Phi$ itself. The operator $a_x$ is the discrete derivative
\[
 (a_x\Phi)(f)=\frac12\bigl(\Phi(f^{x,+})-\Phi(f^{x,-})\bigr),
\]
where $f^{x,\pm}$ agrees with $f$ outside $x$ and takes the value $\pm1$ at $x$. This holds because
\[
 \chi_S(f^{x,\pm})=
 \begin{cases}
  \pm\chi_{S\setminus\{x\}}(f) & \text{if }x\in S,\\
  \chi_S(f) & \text{otherwise.}
 \end{cases}
\]
This description is used in \cref{sec:randomness}.

\begin{lemma}[Identities for the Fourier operators]\label{lem:fourier-identities}
For all $z,x\in\{0,1\}^n$ and all $\Phi\in\mathcal F$, the following hold.
\begin{enumerate}
\item[(i)] The operators $b_w$ satisfy
\[
 \sum_wb_w^\dagger b_w=D .
\]
\item[(ii)] The operators $b_z$ and $b_z^\dagger$ satisfy
\[
 b_zb_z^\dagger-b_z^\dagger b_z=I-\frac2ND .
\]
\item[(iii)] The vector $\Phi$ satisfies
\[
 \sum_w\bigl\|(b_w^2\mathcal X_x-\mathcal X_xb_w^2)\Phi\bigr\|^2=\frac4N\langle\Phi,(D-n_x)\Phi\rangle .
\]
\end{enumerate}
\end{lemma}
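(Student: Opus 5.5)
All three identities follow from the canonical relations \eqref{eq:fourier-ax-algebra}, the commutation of operators attached to distinct points, and the orthogonality relations \eqref{eq:prelim-walsh}. I would expand each $b_w$ as $N^{-1/2}\sum_x\chi_w(x)a_x$ and reduce everything to sums over pairs of points.

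For (i), expanding gives $\sum_wb_w^\dagger b_w=N^{-1}\sum_{x,y}\bigl(\sum_w\chi_w(x)\chi_w(y)\bigr)a_x^\dagger a_y$. The first relation in \eqref{eq:prelim-walsh} collapses this to $\sum_xa_x^\dagger a_x=\sum_xn_x=D$.

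For (ii), write $b_zb_z^\dagger-b_z^\dagger b_z=N^{-1}\sum_{x,y}\chi_z(x)\chi_z(y)[a_x,a_y^\dagger]$. For $x\ne y$ the commutator vanishes, since operators at distinct points commute. For $x=y$ it equals $I-2n_x$ by \eqref{eq:fourier-ax-algebra}, and $\chi_z(x)^2=1$. Summing over $x$ gives $N^{-1}(NI-2D)=I-\frac2ND$.

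For (iii), which is the substantive part, the plan is as follows. First, $\mathcal X_x=a_x+a_x^\dagger$ commutes with $a_y$ for $y\ne x$. So only the $a_x$ factors in $b_w^2=N^{-1}\sum_{y,y'}\chi_w(y)\chi_w(y')a_ya_{y'}$ fail to commute. Since $a_x^2=0$, the terms with $y=y'=x$ vanish. Hence $b_w^2=N^{-1}\bigl(\sum_{y,y'\ne x}\chi_w(y)\chi_w(y')a_ya_{y'}+2\chi_w(x)a_xc_{w,x}\bigr)$ with $c_{w,x}=\sum_{y\ne x}\chi_w(y)a_y$. Because $a_x$ and $c_{w,x}$ commute, the factor $2$ counts the two orders. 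The first sum commutes with $\mathcal X_x$, so the commutator is $(2/N)\chi_w(x)[a_x,\mathcal X_x]c_{w,x}$.

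Next, $[a_x,a_x+a_x^\dagger]=a_xa_x^\dagger-a_x^\dagger a_x=I-2n_x$, which is a unitary (a sign according to whether $x$ is recorded). It therefore drops out of the norm, leaving $\sum_w\|(b_w^2\mathcal X_x-\mathcal X_xb_w^2)\Phi\|^2=\frac4{N^2}\sum_w\|c_{w,x}\Phi\|^2$. The sign and order conventions need care here, but since the operator is unitary they do not affect the norm. Finally, by the same computation as in (i), now with $y,y'\ne x$, we get $\sum_wc_{w,x}^\dagger c_{w,x}=N\sum_{y\ne x}n_y=N(D-n_x)$. The total is $\frac4N\langle\Phi,(D-n_x)\Phi\rangle$.

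The only delicate step is the isolation of the commutator in (iii): checking that no term with two copies of $a_x$ survives, and that $[a_x,\mathcal X_x]$ is norm-preserving. Everything else is bookkeeping with the orthogonality relations.
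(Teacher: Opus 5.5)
Your proposal is correct and follows essentially the same route as the paper: the same expansion collapsed by \eqref{eq:prelim-walsh} in (i), keeping only the diagonal commutators $a_xa_x^\dagger-a_x^\dagger a_x=I-2n_x$ in (ii), and in (iii) the same splitting $b_w^2=\frac2{\sqrt N}\chi_w(x)a_xc_w+c_w^2$ with the norm-preserving factor $I-2n_x$ and $\sum_wc_w^\dagger c_w=D-n_x$. The only difference is cosmetic: you leave the factor $N^{-1/2}$ out of $c_{w,x}$ and restore it at the end, so your $\frac4{N^2}\cdot N$ equals the paper's $\frac4N$.
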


\begin{proof}
\emph{Part (i).} By \eqref{eq:fourier-bz} and the first relation in \eqref{eq:prelim-walsh},
\begin{align*}
 \sum_wb_w^\dagger b_w
 &=\frac1N\sum_{x,y}\Bigl(\sum_w\chi_w(x)\chi_w(y)\Bigr)a_x^\dagger a_y\\
 &=\sum_xa_x^\dagger a_x=D .
\end{align*}

\emph{Part (ii).} Expanding the products gives
\[
 b_zb_z^\dagger-b_z^\dagger b_z=\frac1N\sum_{x,y}\chi_z(x)\chi_z(y)(a_xa_y^\dagger-a_y^\dagger a_x).
\]
For $x\ne y$ the operators $a_x$ and $a_y^\dagger$ commute, so only the terms with $x=y$ remain. By \eqref{eq:fourier-ax-algebra}, these terms give
\[
 \frac1N\sum_x(I-2n_x)=I-\frac2ND .
\]

\emph{Part (iii).} Fix $x$ and write
\[
 b_w=N^{-1/2}\chi_w(x)a_x+c_w,\qquad c_w=N^{-1/2}\sum_{y\ne x}\chi_w(y)a_y,
\]
where $c_w$ is an operator that commutes with $a_x$, $a_x^\dagger$, and $n_x$. Since $a_x^2=0$ and $\chi_w(x)^2=1$,
\[
 b_w^2=\frac2{\sqrt N}\chi_w(x)\,a_xc_w+c_w^2 .
\]
The term $c_w^2$ commutes with $\mathcal X_x=a_x+a_x^\dagger$. Moreover, by \eqref{eq:fourier-ax-algebra},
\begin{align*}
 a_x\mathcal X_x-\mathcal X_xa_x&=a_xa_x^\dagger-a_x^\dagger a_x=I-2n_x ,\\
\intertext{so that}
 b_w^2\mathcal X_x-\mathcal X_xb_w^2&=\frac2{\sqrt N}\chi_w(x)\,(I-2n_x)\,c_w .
\end{align*}
The operator $I-2n_x$ multiplies each basis term by $\pm1$, so it preserves norms. Therefore
\begin{align*}
 \sum_w\bigl\|(b_w^2\mathcal X_x-\mathcal X_xb_w^2)\Phi\bigr\|^2
 &=\frac4N\sum_w\|c_w\Phi\|^2\\
 &=\frac4N\Bigl\langle\Phi,\sum_wc_w^\dagger c_w\Phi\Bigr\rangle .
\end{align*}
As in (i), the orthogonality relation \eqref{eq:prelim-walsh} gives
\[
 \sum_wc_w^\dagger c_w=\sum_{y\ne x}a_y^\dagger a_y=D-n_x . \qedhere
\]
\end{proof}

\paragraph{The query in the recording representation.}
By \eqref{eq:prelim-phase-kickback}, the value oracle decomposes as
\[
 O_f=P_++\sum_xf(x)P_x,\qquad P_+=I_{\mathsf X}\otimes\ket+\bra+_{\mathsf T},\qquad P_x=\ket x\bra x_{\mathsf X}\otimes\ket-\bra-_{\mathsf T},
\]
where $P_+$ and the $P_x$ are mutually orthogonal projections on $\mathsf X\mathsf T$ that sum to the identity. Hence the controlled query \eqref{eq:prelim-controlled-query} acts on $\mathcal F$ as
\begin{equation}\label{eq:fourier-query}
 Q=P_\varnothing+\sum_x\mathcal X_xP^{\mathrm c}_x,\qquad
 P_\varnothing=\ket0\bra0_{\mathsf C}\otimes I+\ket1\bra1_{\mathsf C}\otimes P_+,\qquad
 P^{\mathrm c}_x=\ket1\bra1_{\mathsf C}\otimes P_x ,
\end{equation}
where $P_\varnothing$ and the $P^{\mathrm c}_x$ are mutually orthogonal projections on $A$ summing to the identity. These projections act pointwise on $\mathcal F$, and $\mathcal X_x$ commutes with all of them.

The final state is $\Psi=U_qQU_{q-1}Q\cdots U_1QU_0\,\psi_0$, where $\psi_0$ is regarded as a constant function and the $U_t$ act pointwise. We write $\Psi_t$ for the state after $t$ queries and the unitary $U_t$ that follows them, so that $\Psi_0=U_0\psi_0$ and $\Psi_q=\Psi$.

Three facts control the recorded sets that occur in $\Psi_t$.
\begin{enumerate}
\item The constant function $\psi_0$ has the single recorded set $S=\varnothing$.
\item A pointwise unitary does not change the sets $S$ that occur.
\item By \eqref{eq:fourier-toggle}, the query changes each set by at most one point.
\end{enumerate}
By induction on $t$,
\begin{equation}\label{eq:fourier-degree}
 (\Psi_t)_S=0\quad\text{whenever }|S|>t,\qquad\text{hence}\qquad\langle\Phi,D\Phi\rangle\le t\,\|\Phi\|^2
\end{equation}
for every $\Phi$ that is obtained from $\Psi_t$ by applying operators that act pointwise, such as $P^{\mathrm c}_x$ or $\Pi_z$. The first part of \eqref{eq:fourier-degree} is the statement that the final state is a polynomial of degree at most $q$ in the values of $f$, as in the polynomial method~\cite{BealsEtAl}.

\subsection{The score bound}\label{sec:fourier-score}

Let $\Psi_z=\Pi_z\Psi$ be the output-$z$ branch of the final state. Since $\sqrt N\widehat f(z)=N^{-1/2}\sum_x\chi_z(x)f(x)$, multiplication by $\sqrt N\widehat f(z)$ is the operator $N^{-1/2}\sum_x\chi_z(x)\mathcal X_x=b_z+b_z^\dagger$.
By \eqref{eq:prelim-score-branches} and this identity,
\begin{equation}\label{eq:fourier-score-operator}
 B(\mathcal A)=\sum_z\E_f\bigl[N\widehat f(z)^2\,\|\Psi_z(f)\|^2\bigr]=\sum_z\bigl\|(b_z+b_z^\dagger)\Psi_z\bigr\|^2 .
\end{equation}

We write $D_z=b_z^\dagger b_z$, so that, by \cref{lem:fourier-identities}(i), $D=\sum_wD_w$.
The next lemma bounds the score in terms of the \emph{progress measure}
\begin{equation}\label{eq:fourier-progress}
 \lambda(\Phi)=\Bigl(\sum_z\|b_z^2\Phi\|^2\Bigr)^{1/2}.
\end{equation}

The lemma retains one nonpositive term, which the proof of the lower bound discards and which \cref{sec:randomness} uses.

\begin{lemma}[Score bound]\label{lem:fourier-score-bound}
For the final state $\Psi$ of the experiment, with $\lambda=\lambda(\Psi)$ and $\Psi_z=\Pi_z\Psi$,
\begin{equation}\label{eq:fourier-score-bound}
 B(\mathcal A)\le3-\frac2N+2\lambda+2\Bigl(1-\frac1N\Bigr)\lambda^2-2\Bigl(1-\frac1N\Bigr)\sum_z\bigl\|(D_z-I)\Psi_z\bigr\|^2 .
\end{equation}
\end{lemma}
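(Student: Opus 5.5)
The plan is to expand the operator form \eqref{eq:fourier-score-operator} of the score, rewrite each term with the commutation relations of \cref{lem:fourier-identities}, and keep one exact remainder instead of discarding it. One fact is used throughout: $\Pi_z$ acts pointwise on $\mathcal F$, so it commutes with $b_z$, $b_z^\dagger$ and $D$.

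\emph{Step 1: expand the score.} Expanding the square gives
\[
 \|(b_z+b_z^\dagger)\Psi_z\|^2=\|b_z\Psi_z\|^2+\|b_z^\dagger\Psi_z\|^2+2\operatorname{Re}\langle\Psi_z,b_z^2\Psi_z\rangle .
\]
By \cref{lem:fourier-identities}(ii),
\[
 \|b_z^\dagger g\|^2=\|b_zg\|^2+\|g\|^2-\tfrac2N\langle g,Dg\rangle .
\]
Summing over $z$ and using $\sum_z\|\Psi_z\|^2=1$ and $\sum_z\langle\Psi_z,D\Psi_z\rangle=\langle\Psi,D\Psi\rangle$ (valid because $D$ commutes with the $\Pi_z$), I get
\[
 B(\mathcal A)=1-\tfrac2N\langle\Psi,D\Psi\rangle+2\sum_z\|b_z\Psi_z\|^2+2\operatorname{Re}\sum_z\langle\Psi_z,b_z^2\Psi_z\rangle .
\]

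\emph{Step 2: absorb the counting term.} Since $b_z$ commutes with $\Pi_z$, we have $\|b_z\Psi_z\|=\|\Pi_zb_z\Psi\|\le\|b_z\Psi\|$. \Cref{lem:fourier-identities}(i) then gives
\[
 \sum_z\|b_z\Psi_z\|^2\le\sum_z\langle\Psi,D_z\Psi\rangle=\langle\Psi,D\Psi\rangle .
\]
So the term $-\frac2N\langle\Psi,D\Psi\rangle$ is at most $-\frac2N\sum_z\|b_z\Psi_z\|^2$. This leaves the coefficient $2(1-1/N)$ in front of $\sum_z\|b_z\Psi_z\|^2$.

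\emph{Step 3: the key per-branch identity.} This step produces the retained negative term, and I expect it to be the main point. For any $g$, compute $D_z^2=b_z^\dagger(b_zb_z^\dagger)b_z$ and substitute \cref{lem:fourier-identities}(ii) for the middle factor:
\[
 D_z^2=(b_z^\dagger)^2b_z^2+D_z-\tfrac2N\,b_z^\dagger Db_z .
\]
Hence
\[
 \|(D_z-I)g\|^2=\langle g,D_z^2g\rangle-2\langle g,D_zg\rangle+\|g\|^2
 =\|b_z^2g\|^2-\|b_zg\|^2+\|g\|^2-\tfrac2N\langle b_zg,Db_zg\rangle .
\]
Because $D\ge0$, this yields the exact inequality
\[
 \|b_zg\|^2\le\|g\|^2+\|b_z^2g\|^2-\|(D_z-I)g\|^2 ,
\]
which is \eqref{eq:ov-positivity} with the discarded remainder kept. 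The only care needed is the order of the noncommuting factors.

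\emph{Step 4: assemble.} Apply Step 3 with $g=\Psi_z$ and multiply by $2(1-1/N)\ge0$. Use $\sum_z\|\Psi_z\|^2=1$ and
\[
 \sum_z\|b_z^2\Psi_z\|^2=\sum_z\|\Pi_zb_z^2\Psi\|^2\le\lambda^2 .
\]
For the interference term, apply Cauchy–Schwarz in $\mathcal F$ and then over $z$:
\[
 \Bigl|\sum_z\langle\Psi_z,b_z^2\Psi_z\rangle\Bigr|\le\sum_z\|\Psi_z\|\,\|b_z^2\Psi\|\le\Bigl(\sum_z\|\Psi_z\|^2\Bigr)^{1/2}\lambda=\lambda .
\]
Adding everything gives
\[
 B(\mathcal A)\le1+2\bigl(1-\tfrac1N\bigr)\bigl(1+\lambda^2-\textstyle\sum_z\|(D_z-I)\Psi_z\|^2\bigr)+2\lambda ,
\]
which is exactly \eqref{eq:fourier-score-bound}.
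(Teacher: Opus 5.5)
Your proposal is correct and follows the paper's proof essentially step for step. It uses the same expansion of the score via \cref{lem:fourier-identities}(ii), the same key identity $D_z^2=(b_z^\dagger)^2b_z^2+D_z-\frac2N b_z^\dagger Db_z$ that yields the retained term $-\|(D_z-I)\Psi_z\|^2$, and the same Cauchy--Schwarz bound on the interference term. The only differences are cosmetic: you absorb the $-\frac2N D$ term globally through $\|b_z\Psi_z\|\le\|b_z\Psi\|$ and $\sum_zD_z=D$ rather than branch by branch through $\langle g,D_zg\rangle\le\langle g,Dg\rangle$, and you substitute $\lambda$ directly instead of first proving the bound with the branchwise quantity $e$ and then using monotonicity.
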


\begin{proof}
\emph{Step 1: the branch quantity $e$.} Let
\[
 e=\Bigl(\sum_z\|b_z^2\Psi_z\|^2\Bigr)^{1/2}
\]
be the progress measure evaluated branch by branch. Since $\Pi_z$ acts on $A$ and $b_z$ acts on the dependence on $f$, the two commute, so
\begin{align*}
 e^2&=\sum_z\|\Pi_zb_z^2\Psi\|^2\\
 &\le\sum_z\|b_z^2\Psi\|^2=\lambda^2 ,
\end{align*}
where the inequality holds because each $\Pi_z$ is a projection. We prove \eqref{eq:fourier-score-bound} with $e$ in place of $\lambda$. Since the right side of that bound is increasing in $e\ge0$, and $e\le\lambda$, the lemma follows.

\emph{Step 2: the expansion of the square.} For the bound, fix $z$ and write $g=\Psi_z$. Expanding the square and using \cref{lem:fourier-identities}(ii),
\begin{equation}\label{eq:fourier-square-expansion}
 (b_z+b_z^\dagger)^2=b_z^2+b_z^{\dagger2}+b_zb_z^\dagger+b_z^\dagger b_z=b_z^2+b_z^{\dagger2}+I-\frac2ND+2D_z .
\end{equation}

\emph{Step 3: the first bound on the quadratic form.} We bound the quadratic form on $g$ of the operator $I-(2/N)D+2D_z$ in two steps, Steps~3 and~4. First, $\langle g,D_zg\rangle\le\langle g,Dg\rangle$, because $D$ is the sum of the operators $D_w$ and each $D_w=b_w^\dagger b_w$ has a nonnegative quadratic form. Hence
\begin{equation}\label{eq:fourier-step1}
 \Bigl\langle g,\Bigl(I-\frac2ND+2D_z\Bigr)g\Bigr\rangle\le\|g\|^2+2\Bigl(1-\frac1N\Bigr)\langle g,D_zg\rangle .
\end{equation}

\emph{Step 4: the key identity and the second bound.} Second, applying \cref{lem:fourier-identities}(ii) once more,
\begin{align*}
 D_z^2&=b_z^\dagger(b_zb_z^\dagger)b_z=b_z^\dagger\Bigl(D_z+I-\frac2ND\Bigr)b_z\\
 &=b_z^{\dagger2}b_z^2+D_z-\frac2Nb_z^\dagger Db_z,
\end{align*}
and rearranging gives the identity
\begin{equation}\label{eq:fourier-key-identity}
 I+b_z^{\dagger2}b_z^2-D_z=(D_z-I)^2+\frac2N\,b_z^\dagger Db_z .
\end{equation}
Taking the quadratic form on $g$ and using $\langle b_zg,Db_zg\rangle\ge0$, we obtain
\begin{equation}\label{eq:fourier-step2}
 \langle g,D_zg\rangle\le\|g\|^2+\|b_z^2g\|^2-\|(D_z-I)g\|^2 .
\end{equation}
Dropping its last term gives the inequality \eqref{eq:ov-positivity} of
the overview.

\emph{Step 5: summation over $z$.} Combining \eqref{eq:fourier-step1} and \eqref{eq:fourier-step2},
\begin{align*}
 \Bigl\langle g,\Bigl(I-\frac2ND+2D_z\Bigr)g\Bigr\rangle
 &\le\Bigl(3-\frac2N\Bigr)\|g\|^2+2\Bigl(1-\frac1N\Bigr)\|b_z^2g\|^2\\
 &\qquad-2\Bigl(1-\frac1N\Bigr)\|(D_z-I)g\|^2 .
\end{align*}
We sum over $z$, using that the squared norms $\|\Psi_z\|^2$ sum to $1$ and the definition of $e$. We find that the contribution of the terms $I-(2/N)D+2D_z$ in \eqref{eq:fourier-square-expansion} to \eqref{eq:fourier-score-operator} is at most
\[
 3-\frac2N+2\Bigl(1-\frac1N\Bigr)e^2-2\Bigl(1-\frac1N\Bigr)\sum_z\|(D_z-I)\Psi_z\|^2 .
\]

\emph{Step 6: the interference term.} The remaining terms contribute
\begin{align*}
 \sum_z\langle\Psi_z,(b_z^2+b_z^{\dagger2})\Psi_z\rangle
 &=2\operatorname{Re}\sum_z\langle\Psi_z,b_z^2\Psi_z\rangle\\
 &\le2\Bigl(\sum_z\|\Psi_z\|^2\Bigr)^{1/2}\Bigl(\sum_z\|b_z^2\Psi_z\|^2\Bigr)^{1/2}=2e .
\end{align*}
Here the first equality holds because $b_z^{\dagger2}$ is the adjoint of $b_z^2$, and the inequality is the Cauchy--Schwarz inequality. Adding this to the bound of Step 5 proves the bound with $e$ in place of $\lambda$.
\end{proof}

\subsection{Growth of the progress measure}\label{sec:fourier-growth}

\begin{lemma}[Growth]\label{lem:fourier-growth}
For $0\le t<q$,
\[
 \lambda(\Psi_{t+1})\le\lambda(\Psi_t)+2\sqrt{t/N} .
\]
Consequently,
\begin{equation}\label{eq:fourier-growth}
\begin{split}
 \lambda(\Psi)\le\bar\lambda_q
 &:=\frac2{\sqrt N}\sum_{t=0}^{q-1}\sqrt t\\
 &\le\frac43\cdot\frac{q^{3/2}}{\sqrt N} .
\end{split}
\end{equation}
\end{lemma}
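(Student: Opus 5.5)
The plan is to treat the progress measure $\lambda(\Phi)$ as the norm of the vector family $(b_z^2\Phi)_z$ in the direct sum $\bigoplus_z\mathcal F$, and to follow this family through one round $\Psi_{t+1}=U_{t+1}Q\Psi_t$. Since $U_{t+1}$ acts pointwise, it commutes with every $b_z$ (these act only on the dependence on $f$), and it is unitary. Hence $\lambda(U_{t+1}\Phi)=\lambda(\Phi)$, and it suffices to show $\lambda(Q\Psi_t)\le\lambda(\Psi_t)+2\sqrt{t/N}$. I would write $b_z^2Q=Qb_z^2+[b_z^2,Q]$. The operator $Q$ is unitary on $\mathcal F$, since it is the controlled oracle acting pointwise in $f$. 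Therefore the family $(Qb_z^2\Psi_t)_z$ has norm exactly $\lambda(\Psi_t)$, and the triangle inequality in the direct sum reduces the claim to
\[
 \sum_z\bigl\|[b_z^2,Q]\Psi_t\bigr\|^2\le\frac{4t}N .
\]

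To bound the commutator, I would use \eqref{eq:fourier-query}. The projection $P_\varnothing$ and the $P^{\mathrm c}_x$ are pointwise, so they commute with $b_z^2$, and $\mathcal X_x$ commutes with them. Hence $[b_z^2,Q]=\sum_x[b_z^2,\mathcal X_x]P^{\mathrm c}_x$. The summands have mutually orthogonal ranges, because $P^{\mathrm c}_x$ commutes with $[b_z^2,\mathcal X_x]$ (the latter acts only on the $f$-dependence). This gives
\[
 \sum_z\bigl\|[b_z^2,Q]\Psi_t\bigr\|^2=\sum_x\sum_z\bigl\|[b_z^2,\mathcal X_x]P^{\mathrm c}_x\Psi_t\bigr\|^2 .
\]
\Cref{lem:fourier-identities}(iii) evaluates the inner sum over $z$ exactly as $\frac4N\langle\Phi_x,(D-n_x)\Phi_x\rangle$ with $\Phi_x=P^{\mathrm c}_x\Psi_t$. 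Since $n_x\ge0$, this is at most $\frac4N\langle\Phi_x,D\Phi_x\rangle$. By \eqref{eq:fourier-degree}, applied to the pointwise image $\Phi_x$ of $\Psi_t$, it is at most $\frac{4t}N\|\Phi_x\|^2$. Summing over $x$ with $\sum_x\|P^{\mathrm c}_x\Psi_t\|^2\le\|\Psi_t\|^2=1$ gives the bound $4t/N$, and hence the one-step inequality.

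The consequence then follows by iteration. The starting state $\Psi_0=U_0\psi_0$ has only the recorded set $\varnothing$, so $b_z^2\Psi_0=0$ and $\lambda(\Psi_0)=0$. Summing the one-step bounds gives $\lambda(\Psi)\le\frac2{\sqrt N}\sum_{t=0}^{q-1}\sqrt t$. Comparing with the integral, $\sum_{t=0}^{q-1}\sqrt t\le\int_0^q\sqrt s\,ds=\frac23q^{3/2}$, because $\sqrt t\le\int_t^{t+1}\sqrt s\,ds$. This yields the factor $\frac43$.

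The main point needing care is the orthogonality across addresses. It is what lets the errors from different $x$ add in squares rather than in norms. It requires checking that $[b_z^2,\mathcal X_x]=\frac2{\sqrt N}\chi_z(x)(I-2n_x)c_z$, as in the proof of \cref{lem:fourier-identities}(iii), commutes with the pointwise projection $P^{\mathrm c}_x$. It also requires that the ranges of the different $P^{\mathrm c}_x$ remain orthogonal after applying operators that act only on the $f$-dependence. Both hold because the operators of the two kinds act on different tensor factors.
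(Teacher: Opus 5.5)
Your proposal is correct and matches the paper's proof step for step. Like the paper, you package $(b_z^2\Phi)_z$ as one vector, commute $b_z^2$ past the pointwise unitaries and the projections in $Q$, and use the orthogonality of the ranges of the $P^{\mathrm c}_x$ together with \cref{lem:fourier-identities}(iii) and \eqref{eq:fourier-degree} to bound the commutator by $4t/N$; you then finish, as the paper does, with the triangle inequality and the comparison $\sqrt t\le\int_t^{t+1}\sqrt s\,ds$.
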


\begin{proof}
\emph{Step 1: the vector $\mathcal L\Phi$.} It is convenient to regard the family $(b_z^2\Phi)_z$ as the single vector $\mathcal L\Phi=\sum_z\ket z\otimes b_z^2\Phi$ in $\C^N\otimes\mathcal F$, so that $\lambda(\Phi)=\|\mathcal L\Phi\|$. An operator $T$ on $\mathcal H$, acting pointwise, commutes with every $b_z$, so $\mathcal L(T\Phi)=(I\otimes T)\mathcal L\Phi$.
In particular, the unitaries $U_t$ do not change $\lambda$, and it suffices to bound the effect of one query $Q$ on $\Psi_t$.

\emph{Step 2: moving $\mathcal L$ past one query.} By \eqref{eq:fourier-query}, and because $b_z^2$ commutes with the pointwise projections $P_\varnothing$ and $P^{\mathrm c}_x$,
\[
 \mathcal LQ\Psi_t-(I\otimes Q)\mathcal L\Psi_t=\sum_z\ket z\otimes\sum_x(b_z^2\mathcal X_x-\mathcal X_xb_z^2)P^{\mathrm c}_x\Psi_t .
\]

Moreover, for fixed $z$, the vectors $(b_z^2\mathcal X_x-\mathcal X_xb_z^2)P^{\mathrm c}_x\Psi_t$ for distinct $x$ are orthogonal, because the operator $b_z^2\mathcal X_x-\mathcal X_xb_z^2$ acts only on the dependence on $f$ while $P^{\mathrm c}_x$ projects onto mutually orthogonal subspaces of $A$.

\emph{Step 3: the degree bound.} Therefore, by Step~2 and \cref{lem:fourier-identities}(iii),
\begin{align*}
 \bigl\|\mathcal LQ\Psi_t-(I\otimes Q)\mathcal L\Psi_t\bigr\|^2
 &=\sum_x\sum_z\bigl\|(b_z^2\mathcal X_x-\mathcal X_xb_z^2)P^{\mathrm c}_x\Psi_t\bigr\|^2\\
 &=\frac4N\sum_x\langle P^{\mathrm c}_x\Psi_t,(D-n_x)P^{\mathrm c}_x\Psi_t\rangle ,
\end{align*}
and, by \eqref{eq:fourier-degree} applied to $P^{\mathrm c}_x\Psi_t$ after dropping the nonpositive contribution of $-n_x$, the last expression is at most $(4t/N)\sum_x\|P^{\mathrm c}_x\Psi_t\|^2\le4t/N$. The last inequality holds because the $P^{\mathrm c}_x$ are mutually orthogonal projections and $\|\Psi_t\|=1$.

\emph{Step 4: the triangle inequality.} Since $Q$ is unitary, $\|(I\otimes Q)\mathcal L\Psi_t\|=\|\mathcal L\Psi_t\|$, and the triangle inequality, together with the bound of Step 3, gives
\[
 \lambda(Q\Psi_t)\le\lambda(\Psi_t)+2\sqrt{t/N} .
\]
This is the first claim, because $\lambda(\Psi_{t+1})=\lambda(U_{t+1}Q\Psi_t)=\lambda(Q\Psi_t)$, where the second equality holds by Step 1.

\emph{Step 5: the sum of the increments.} Initially, $\Psi_0=U_0\psi_0$ is a constant function, so every $a_x$ maps it to zero and $\lambda(\Psi_0)=0$. Summing the increments gives the middle expression in \eqref{eq:fourier-growth}. Finally, the inequality $\sqrt t\le\int_t^{t+1}\sqrt s\,ds$ gives
\[
 \sum_{t=0}^{q-1}\sqrt t\le\int_0^q\sqrt s\,ds=\frac23q^{3/2} . \qedhere
\]
\end{proof}

\subsection{The lower bound}\label{sec:fourier-theorem}

\begin{theorem}[Fourier lower bound]\label{thm:fourier-lower}
Every algorithm that makes at most $q$ queries in the Fourier ensemble satisfies
\begin{equation}\label{eq:fourier-lower-explicit}
 B(\mathcal A)\le3-\frac2N+\frac83\cdot\frac{q^{3/2}}{\sqrt N}+\frac{32}9\Bigl(1-\frac1N\Bigr)\frac{q^3}N .
\end{equation}
Consequently, if $B(\mathcal A)\ge3+\varepsilon$ for some $0<\varepsilon\le4$, then
\[
 q\ge(3\varepsilon/16)^{2/3}N^{1/3} .
\]
\end{theorem}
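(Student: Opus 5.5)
The plan is to combine \cref{lem:fourier-score-bound} with \cref{lem:fourier-growth}. We first reduce, by \cref{lem:normal-form}, to a fixed sequence of $q$ controlled queries; Eve's register is trivial here. In \eqref{eq:fourier-score-bound} we drop the last term, which is nonpositive, and obtain
\[
 B(\mathcal A)\le3-\frac2N+2\lambda+2\Bigl(1-\frac1N\Bigr)\lambda^2,\qquad\lambda=\lambda(\Psi).
\]
The right side is increasing in $\lambda\ge0$. We may therefore substitute the bound $\lambda\le\frac43q^{3/2}/\sqrt N$ from \eqref{eq:fourier-growth}. This gives $2\lambda\le\frac83q^{3/2}/\sqrt N$ and $2\lambda^2\le2\cdot\frac{16}9q^3/N=\frac{32}9q^3/N$, which is exactly \eqref{eq:fourier-lower-explicit}.

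For the consequence, suppose $B(\mathcal A)\ge3+\varepsilon$ and set $u=\frac43q^{3/2}/\sqrt N$. The explicit bound then gives
\[
 3+\varepsilon\le3-\frac2N+2u+2\Bigl(1-\frac1N\Bigr)u^2<3+2u+2u^2.
\]
Hence $u^2+u>\varepsilon/2$. We argue by contradiction: suppose $u<\varepsilon/4$. Since $\varepsilon\le4$, we have $u<1$, so $u^2\le u$ and $u+u^2\le2u<\varepsilon/2$, which is a contradiction. Therefore $u\ge\varepsilon/4$, that is, $q^{3/2}\ge\frac{3\varepsilon}{16}\sqrt N$, and so $q\ge(3\varepsilon/16)^{2/3}N^{1/3}$.

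There is no real obstacle. The only point needing care is the last step, where the hypothesis $\varepsilon\le4$ is used to absorb the quadratic term $u^2$ into the linear one. All of the substantive work is already contained in the two lemmas.
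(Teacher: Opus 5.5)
Your proposal is correct and follows the paper's proof: drop the nonpositive last term of \cref{lem:fourier-score-bound}, use monotonicity in $\lambda$ to substitute the growth bound of \cref{lem:fourier-growth}, and then use $\varepsilon\le4$ to get a linear lower bound of $\varepsilon/4$. The only cosmetic difference is that you argue by contradiction directly with the upper bound $u$. The paper instead solves $\varepsilon\le2\lambda+2\lambda^2$ to get $\lambda\ge\varepsilon/(\sqrt{1+2\varepsilon}+1)\ge\varepsilon/4$ and then compares with $\lambda\le\frac43q^{3/2}N^{-1/2}$.
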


\begin{proof}
\emph{Step 1: the explicit bound.} By \cref{lem:fourier-score-bound}, dropping its last term,
\[
 B(\mathcal A)\le3-\frac2N+2\lambda+2\Bigl(1-\frac1N\Bigr)\lambda^2,\qquad \lambda=\lambda(\Psi)\le\frac43q^{3/2}N^{-1/2},
\]
where the bound on $\lambda$ holds by \cref{lem:fourier-growth}. The right side of the first inequality is increasing in $\lambda\ge0$, so substituting the bound on $\lambda$ gives \eqref{eq:fourier-lower-explicit}.

\emph{Step 2: a lower bound on $\lambda$.} If $B(\mathcal A)\ge3+\varepsilon$, then, after dropping $-2/N$ and bounding $1-1/N$ by $1$, the first inequality of Step 1 gives $\varepsilon\le2\lambda+2\lambda^2$. Solving the quadratic inequality gives
\begin{align*}
 \lambda&\ge\frac{\sqrt{1+2\varepsilon}-1}2=\frac{\varepsilon}{\sqrt{1+2\varepsilon}+1}\\
 &\ge\frac\varepsilon4 ,
\end{align*}
where the last step uses $\varepsilon\le4$.

Hence, by the bound on $\lambda$ in Step 1,
\[
 \frac\varepsilon4\le\frac43q^{3/2}N^{-1/2},\qquad\text{that is,}\qquad q^{3/2}\ge\frac{3\varepsilon}{16}\sqrt N ,
\]
which is the claim.
\end{proof}

The restriction $\varepsilon\le4$ costs nothing, since a score of $3+\varepsilon$ with $\varepsilon>4$ is in particular a score of at least $7$, so that $q\ge(3/4)^{2/3}N^{1/3}$ by the case $\varepsilon=4$. The proof also recovers, in full generality, the one-query bound
of~\cite{Kretschmer}. Indeed, an algorithm that makes one query of any kind, controlled or not, has $\lambda(\Psi)=0$, because at most one point is ever recorded and $b_z^2$ removes two. Hence $B(\mathcal A)\le3-2/N$ by \cref{lem:fourier-score-bound}.

\begin{remark}[Limited independence]\label{rem:fourier-independence}
For a fixed algorithm, the score is the expectation over $f$ of $\sum_zNp_f(z)\,\|\Psi_z(f)\|^2$, which is a polynomial of degree at most $2q+2$ in the values $f(x)$, because $V_f$ has degree at most $q$ and $Np_f(z)$ has degree $2$. Since $f(x)^2=1$, such a polynomial is a linear combination of the characters $\chi_S$ with $|S|\le2q+2$, and the expectation of $\chi_S$ with $S\ne\varnothing$ is $0$ under every random function whose values are $(2q+2)$-wise independent and uniform. The score of $\mathcal A$, and hence \cref{thm:fourier-lower}, is therefore unchanged when the uniformly random function is replaced by such a function.
\end{remark}

\subsection{The exact identity for one phase query}\label{sec:fourier-one-query}

For algorithms that make a single phase query, the score is given by an exact formula, which we stated in \cref{sec:overview-one-query}. An algorithm makes a \emph{single phase query} if $q=1$ and the state $U_0\psi_0$ has the control qubit $\mathsf C$ in the state $\ket1$ and the target qubit $\mathsf T$ in the state $\ket-$. For such a state, $P_\varnothing U_0\psi_0=0$, so by \eqref{eq:fourier-query} the final state is
\begin{equation}\label{eq:fourier-one-query-state}
 \Psi(f)=\sum_xf(x)\,v_x,\qquad v_x=(U_1P^{\mathrm c}_xU_0\otimes I_E)\psi_0 ,
\end{equation}
which is linear in $f$. In particular, the final state when the oracle is the character $\chi_w$ is $\Psi(\chi_w)=\sum_x\chi_w(x)v_x$, and, for every $f$,
\[
 \Psi(f)=\sum_w\widehat f(w)\Psi(\chi_w)\qquad\text{because}\qquad f=\sum_w\widehat f(w)\chi_w .
\]

\begin{proposition}[One phase query]\label{prop:fourier-one-query}
Let $\mathcal A$ make a single phase query, and let
\[
 \sigma=\frac1N\sum_z\Pr[Z=z\mid f=\chi_z]
\]
be the probability with which it identifies a uniformly random character. Then the following hold.
\begin{enumerate}
\item[(i)] The score of $\mathcal A$ is
\begin{equation}\label{eq:fourier-one-query-identity}
 B(\mathcal A)=1-\frac2N+2\sigma .
\end{equation}
\item[(ii)] In particular, $B(\mathcal A)\le3-2/N$, and equality holds if and only if $\mathcal A$ outputs $w$ with certainty whenever the oracle is $\chi_w$.
\item[(iii)] In the case of equality, $\Psi_z(f)=\widehat f(z)\Psi(\chi_z)$ for all $f$ and $z$, and Eve's conditional states are $\rho^f_z=p_f(z)\,\rho_E$. That is, the output is an exact sample from $p_f$ and is independent of Eve's register, even given $f$.
\end{enumerate}
\end{proposition}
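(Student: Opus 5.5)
For part (i) I will compute the score directly from the linear form \eqref{eq:fourier-one-query-state}, using the recording representation. Since $\Psi(f)=\sum_xf(x)v_x$, every coefficient $\Psi_S$ with $|S|\ne1$ vanishes and $\Psi_{\{x\}}=v_x$. Hence $b_z^2\Psi=0$, so $\lambda(\Psi)=0$, and $\langle g,Dg\rangle=\|g\|^2$ for every pointwise image $g$ of $\Psi$. In \eqref{eq:fourier-score-operator}, expanded by \eqref{eq:fourier-square-expansion}, the interference terms therefore vanish and $D$ acts as the identity on each branch. This gives
\[
 B(\mathcal A)=\sum_z\bigl(\|\Psi_z\|^2-\tfrac2N\|\Psi_z\|^2+2\|b_z\Psi_z\|^2\bigr)=1-\tfrac2N+2\sum_z\|b_z\Psi_z\|^2 .
\]
Next, $b_z\Psi_z$ is the constant function $N^{-1/2}\sum_x\chi_z(x)\Pi_zv_x=N^{-1/2}\Pi_z\Psi(\chi_z)$. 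So $\|b_z\Psi_z\|^2=N^{-1}\Pr[Z=z\mid f=\chi_z]$, and summing over $z$ gives $\sigma$. As a check, this matches the pairing computation sketched in \cref{sec:overview-one-query}.

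Part (ii) follows at once from $\sigma\le1$, with equality if and only if $\Pr[Z=z\mid f=\chi_z]=1$ for every $z$.

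For part (iii), assume equality, so that $\Pi_z\Psi(\chi_z)=\Psi(\chi_z)$. For $w\ne z$ the branch $\Pi_z\Psi(\chi_w)$ must vanish, because $\Psi(\chi_w)$ is a unit vector with $\Pi_w\Psi(\chi_w)=\Psi(\chi_w)$ and $\Pi_z\Pi_w=0$. Linearity then gives
\[
 \Psi_z(f)=\sum_w\widehat f(w)\Pi_z\Psi(\chi_w)=\widehat f(z)\Psi(\chi_z).
\]
Since $\chi_z$ is a fixed function, $\Psi(\chi_z)=(V_{\chi_z}\otimes I_E)\psi_0$ with $V_{\chi_z}$ unitary on $A$. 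The reduced state of this vector on $E$ is therefore $\rho_E$. Applying \eqref{eq:prelim-normal-state} yields $\rho^f_z=\widehat f(z)^2\rho_E=p_f(z)\rho_E$.

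The only delicate point is the identification of $b_z\Psi_z$ with $N^{-1/2}\Pi_z\Psi(\chi_z)$. This requires the relation $(a_x\Psi)=\chi_\varnothing\,v_x$ from \eqref{eq:fourier-ax}, together with the fact that $\Pi_z$ commutes with $a_x$. I do not expect any real obstacle beyond this bookkeeping.
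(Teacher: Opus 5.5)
Your proposal is correct and follows the paper's proof essentially step by step. The single-recorded-point structure makes the $b_z^2$ terms vanish and lets $D$ act as the identity in \eqref{eq:fourier-square-expansion}. The identification $b_z\Psi_z=N^{-1/2}\Pi_z\Psi(\chi_z)$ then yields $\sigma$, and part (iii) follows from linearity in $f$ together with the unitarity of $V_{\chi_z}$ on $A$. Your justification of $\Pi_z\Psi(\chi_w)=0$ for $w\ne z$ via $\Pi_z\Pi_w=0$ is a detail the paper leaves implicit.
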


\begin{proof}
\emph{Step 1: every term of $\Psi$ has one recorded point.} By \eqref{eq:fourier-one-query-state}, $\Psi=\sum_x\chi_{\{x\}}v_x$, so every term of $\Psi$ has exactly one recorded point. Hence $D\Psi=\Psi$, $b_z^2\Psi=0$, and
\[
 b_z\Psi=\frac1{\sqrt N}\sum_x\chi_z(x)\,\chi_\varnothing\,v_x=\frac1{\sqrt N}\Psi(\chi_z),
\]
which is a constant function.

Consequently, applying $\Pi_z$, which commutes with $b_z$ and $D$, gives $D\Psi_z=\Psi_z$, $b_z^2\Psi_z=0$, and
\[
 \|b_z\Psi_z\|^2=\frac1N\|\Pi_z\Psi(\chi_z)\|^2=\frac1N\Pr[Z=z\mid f=\chi_z] .
\]

\emph{Step 2: the exact score.} We now evaluate \eqref{eq:fourier-score-operator} exactly, using the expansion \eqref{eq:fourier-square-expansion}. For each $z$, the terms of that expansion contribute as follows.
\begin{enumerate}
\item The terms $b_z^2$ and $b_z^{\dagger2}$ contribute $\langle\Psi_z,b_z^2\Psi_z\rangle+\langle b_z^2\Psi_z,\Psi_z\rangle=0$.
\item The term $I-(2/N)D$ contributes $(1-2/N)\|\Psi_z\|^2$.
\item The term $2D_z$ contributes $2\|b_z\Psi_z\|^2$.
\end{enumerate}
Summing over $z$ gives \eqref{eq:fourier-one-query-identity}.

\emph{Step 3: the case of equality.} Since $\sigma\le1$, the score is at most $3-2/N$, with equality if and only if $\Pr[Z=w\mid f=\chi_w]=1$ for every $w$. In that case, $\Pi_z\Psi(\chi_w)=0$ for $z\ne w$, so by linearity
\[
 \Psi_z(f)=\sum_w\widehat f(w)\Pi_z\Psi(\chi_w)=\widehat f(z)\Psi(\chi_z) .
\]

Finally, $\Psi(\chi_z)=(V_{\chi_z}\otimes I_E)\psi_0$ with $V_{\chi_z}$ unitary on $A$, so its reduced state on $E$ is $\rho_E$, and, by \eqref{eq:prelim-normal-state}, Eve's conditional states are
\[
 \rho^f_z=\widehat f(z)^2\rho_E=p_f(z)\rho_E . \qedhere
\]
\end{proof}

The identity \eqref{eq:fourier-one-query-identity} is specific to phase queries. For a query whose target qubit is in a general state, the state after the query also has terms with no recorded point, and the score can be smaller than $1-2/N+2\sigma$. The example in \cref{sec:overview-one-query} has $\sigma=1$ and score $1$.

\section{The lower bound for Haar-random states}\label{sec:haar}

In this section we prove the lower bound for Haar-random states. As in the Fourier case, we bound the score in terms of a progress measure and then bound its change under a query. The identity $\sum_x\alpha_x\overline{\alpha_x}=1$ shows that counting repeated factors depends on how the polynomial is written. We use unpaired amplitudes and degrees for the corresponding Dirichlet measures, and write $\lambda_t$ for the progress measure after $t$ queries. For an algorithm given $s$ copies and making $q$ queries to $O_\phi$, we prove
\begin{align*}
 B(\mathcal A)&\le\beta_N+2\sqrt2\,\lambda_q+72\lambda_q^2,\\
 \lambda_{t+1}&\le\lambda_t+c_0\sqrt{\frac{s+2t+1}{N}}\qquad(0\le t<q),
\end{align*}
where $c_0$ is absolute and $\beta_N$ is the ideal score defined below. With no copies, $\lambda_0=0$, so summing gives $\lambda_q=O(q^{3/2}/\sqrt N)$. Exceeding the ideal score by a constant therefore requires $\Omega(N^{1/3})$ queries. The score bound uses \cref{lem:haar-subset}, and the query bound uses both \cref{lem:haar-subset,lem:haar-coherent}. We give their statements and the resulting score and query arguments here; supporting calculations and longer block estimates are in Appendix~\ref{app:haar-deferred}. The extension to unitary access is in \cref{sec:haar-unitary}.

Throughout, $\ket\phi$ is Haar-random, and we write
\[
 \ket\phi=\sum_{x\in[N]}\alpha_x\ket x,\qquad p_x=|\alpha_x|^2,\qquad \alpha_x=\sqrt{p_x}e^{\mathrm i\theta_x},
\]
so that the phases are independent and uniform and $p$ is uniform on the simplex (\cref{sec:prelim-dirichlet}). We write $\beta_N=2N/(N+1)$ for the ideal score, $e_i$ for the $i$-th unit vector of $\Z^N$, and, as in \cref{sec:prelim-dirichlet}, $|a|$ for the total parameter of a Dirichlet distribution $\mathrm{Dir}(a)$. For a vector $\gamma\in\Z^N$ we write $|\gamma|_1=\sum_i|\gamma_i|$.

Fix an experiment, which by \cref{lem:normal-form} we may assume makes a fixed sequence of $q$ controlled queries. The device register $A$ contains the control qubit $\mathsf C$ and the query register $\mathsf Q=\C\ket\bot\oplus\C^N$. Eve's register $E$ may be trivial. The final state is $\Psi(\phi)=(V_\phi\otimes I_E)\psi_0$, where
\[
 V_\phi=U_qO^{\mathrm c}_\phi U_{q-1}\cdots O^{\mathrm c}_\phi U_0 .
\]
If the algorithm receives $s$ copies of $\ket\phi$, then the initial vector is $\ket\phi^{\otimes s}\otimes\psi_0$ instead, with the copies in $s$ further registers of $A$. The proof of \cref{lem:normal-form} applies to this initial state without change. As in \cref{sec:fourier}, we write $\mathcal H=A\otimes E$ for the space in which the final state takes its values.

\subsection[Polynomials, patterns, and the level]{Polynomials, patterns, and the level}\label{sec:haar-level}

\paragraph{Polynomials.}
For a finite-dimensional space $\mathcal V$, which will be $\mathcal H$ or a tensor factor of it, let $\mathcal P_{\le d}(\mathcal V)$ be the space of functions $F$ from the unit sphere of $\C^N$ to $\mathcal V$ that are polynomials of total degree at most $d$ in the $2N$ variables $\alpha_x,\overline{\alpha_x}$, with the inner product and norm
\[
 \langle F,G\rangle=\E_\phi\langle F(\phi),G(\phi)\rangle_\mathcal V,\qquad \|F\|^2=\E_\phi\|F(\phi)\|_\mathcal V^2 .
\]
Two polynomials are identified when they agree on the sphere. We regard the elements of $\mathcal P_{\le d}(\mathcal V)$ as vector-valued polynomials; they need not be normalized for each $\phi$. A final state is a unit vector in this space, because $\|\Psi(\phi)\|=1$ for every $\phi$.

\begin{lemma}[Degree growth]\label{lem:haar-degree}
After $t$ queries, the state of an algorithm that received $s$ copies lies in $\mathcal P_{\le s+2t}(\mathcal H)$.
\end{lemma}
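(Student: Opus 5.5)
The plan is an induction on $t$ over the normal form of \cref{lem:normal-form}, with the initial vector $\ket\phi^{\otimes s}\otimes\psi_0$. The key observation is that $\mathcal P_{\le d}(\mathcal H)$ is preserved by every oracle-independent operator acting pointwise. The amplitudes then enter only through the initial copies and through the matrix entries of $O_\phi$.

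\emph{Base case.} Write the initial state in a product basis of the registers. Each coefficient of $\ket\phi^{\otimes s}\otimes\psi_0$ is a product $\alpha_{x_1}\cdots\alpha_{x_s}$ times a constant. So the initial state lies in $\mathcal P_{\le s}(\mathcal H)$, and the unitary $U_0$, acting pointwise by a fixed matrix, keeps it there.

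\emph{Inductive step.} Suppose $\Psi_t\in\mathcal P_{\le s+2t}(\mathcal H)$. Apply the controlled query $O^{\mathrm c}_\phi=\ket0\bra0_{\mathsf C}\otimes I+\ket1\bra1_{\mathsf C}\otimes O_\phi$. By \eqref{eq:prelim-oracle-expansion}, in the basis $\{\ket\bot,\ket1,\dots,\ket N\}$ of $\mathsf Q$, every matrix entry of $O_\phi$ is a polynomial of degree at most two in the $\alpha_x$ and $\overline{\alpha_x}$. Indeed, the entries are constants, the terms $\alpha_x$ from $\ket\phi\bra\bot$, the terms $\overline{\alpha_y}$ from $\ket\bot\bra\phi$, and the terms $\alpha_x\overline{\alpha_y}$ from $\ket\phi\bra\phi$. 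So every matrix entry of $O^{\mathrm c}_\phi\otimes I$ in a product basis of $\mathcal H$ has degree at most two.

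Each coordinate of $O^{\mathrm c}_\phi\Psi_t(\phi)$ is then a finite sum of products of an entry of degree at most $2$ with a coordinate of degree at most $s+2t$. It therefore has degree at most $s+2t+2$. The next unitary $U_{t+1}$ again preserves degree, which completes the induction.

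No step is a real obstacle. The only point to check is well-definedness: polynomials are identified when they agree on the sphere, so we must make sure the degree bound does not depend on the chosen representative. This is harmless, because we exhibit an explicit representative of degree at most $s+2t$ at each stage, and the space $\mathcal P_{\le d}$ is defined by the existence of such a representative.
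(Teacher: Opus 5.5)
Your proposal is correct and follows the paper's proof: the initial vector has degree $s$, fixed unitaries act as constant matrices on the coordinates and do not raise the degree, and by \eqref{eq:prelim-oracle-expansion} every matrix entry of $O_\phi$, and hence of $O^{\mathrm c}_\phi$, has degree at most two. The paper states this directly rather than as a formal induction. Your remark that exhibiting an explicit representative settles well-definedness modulo agreement on the sphere is a harmless addition.
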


\begin{proof}
The initial vector
\[
 \ket\phi^{\otimes s}\otimes\psi_0=\sum_{x\in[N]^s}\alpha_{x_1}\cdots\alpha_{x_s}\ket{x_1,\dots,x_s}\otimes\psi_0
\]
has degree $s$. A fixed unitary maps each coordinate of the state to a fixed linear combination of coordinates and does not raise the degree. By \eqref{eq:prelim-oracle-expansion}, every matrix entry of $O_\phi$, and hence of $O^{\mathrm c}_\phi$, is a polynomial of degree at most two, so a query raises the degree by at most two.
\end{proof}

\paragraph{Patterns and sectors.}
We first group monomials by their unpaired amplitudes, then expand their remaining factors in $p$ in orthogonal polynomials. The unpaired factors determine the Dirichlet distribution used for the inner product. A \emph{pattern} is a vector $\gamma\in\Z^N$, and its \emph{phase monomial} is
\begin{align*}
 u_\gamma(\phi)&=\prod_{i:\gamma_i>0}\alpha_i^{\gamma_i}\prod_{i:\gamma_i<0}\overline{\alpha_i}^{\,-\gamma_i}\\
 &=\Bigl(\prod_ip_i^{|\gamma_i|/2}\Bigr)e^{\mathrm i\sum_i\gamma_i\theta_i} .
\end{align*}
A positive entry $\gamma_i$ stands for $\gamma_i$ unpaired factors $\alpha_i$, and a negative entry for $|\gamma_i|$ factors $\overline{\alpha_i}$.

Every monomial in the variables $\alpha,\overline\alpha$ factors uniquely as $u_\gamma\prod_ip_i^{m_i}$ with $m_i\ge0$.
Explicitly, if it contains $\alpha_i^{b_i}\overline{\alpha_i}^{c_i}$, then $\gamma_i=b_i-c_i$ and $m_i=\min(b_i,c_i)$. Its total degree is $|\gamma|_1+2\sum_im_i$.

The \emph{sector} of $\gamma$ is the set of functions $u_\gamma h(p)$, so that $h$ depends on $\phi$ only through the probabilities $p_1,\dots,p_N$.

To each pattern we attach the Dirichlet parameters
\[
 a(\gamma)=(1+|\gamma_1|,\dots,1+|\gamma_N|),\qquad |a(\gamma)|=N+|\gamma|_1,
\]
and the weight
\[
 M_\gamma=\E_\phi|u_\gamma|^2=\frac{\prod_i|\gamma_i|!}{(N)_{|\gamma|_1}} ,
\]
where the value of $M_\gamma$ follows from the identity $|u_\gamma|^2=\prod_ip_i^{|\gamma_i|}$ and \cref{lem:dirichlet}(i).

Within the sector of $\gamma$, we use the inner product of square-integrable functions under $\mathrm{Dir}(a(\gamma))$. We write $\mathcal R^a_r$ for the space of polynomials in $p$ of degree at most $r$ that are orthogonal, under $\mathrm{Dir}(a)$, to all polynomials in $p$ of degree at most $r-1$, with the convention $\mathcal R^a_{-1}=\{0\}$; its elements have \emph{exact degree $r$} for $\mathrm{Dir}(a)$. We write $\Pi^a_{\le r}$ and $\Pi^a_r$ for the orthogonal projections onto the polynomials of degree at most $r$ and onto $\mathcal R^a_r$ in that space. These notions depend on $a$ only through the inner product $\langle\cdot,\cdot\rangle_a$.

A polynomial in $p$ with values in $\mathcal V$ is treated coefficientwise with respect to a fixed orthonormal basis of $\mathcal V$, and all statements about polynomials in $p$ below extend to $\mathcal V$-valued ones in this way.

\begin{lemma}[Blocks]\label{lem:haar-blocks}
The following hold.
\begin{enumerate}
\item[(i)] For every $d\ge0$,
\begin{equation}\label{eq:haar-blocks}
 \mathcal P_{\le d}(\mathcal V)=\bigoplus_{|\gamma|_1+2r\le d}u_\gamma \mathcal R^{a(\gamma)}_r\otimes \mathcal V ,
\end{equation}
where the summands, which we call \emph{blocks}, are pairwise orthogonal.
\item[(ii)] For every pattern $\gamma$ and every function $h$ of $p$ with $\|h\|_{a(\gamma)}<\infty$, in particular for every polynomial in $p$,
\begin{equation}\label{eq:haar-norm-conversion}
 \|u_\gamma h\|^2=M_\gamma\,\|h\|_{a(\gamma)}^2 .
\end{equation}
\end{enumerate}
\end{lemma}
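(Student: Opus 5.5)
I would prove part (ii) first, since part (i) depends on it. Write $u_\gamma h$ with $h$ a function of $p$ alone. Then $|u_\gamma h|^2=\prod_ip_i^{|\gamma_i|}|h(p)|^2$ is independent of the phases. Its expectation under the Haar measure is therefore an expectation under $p\sim\mathrm{Dir}(1,\dots,1)$. Applying \cref{lem:dirichlet}(i),(ii) repeatedly (that is, multiplying the density by $\prod_ip_i^{|\gamma_i|}$) gives
\[
 \E_{\mathbf 1}\Bigl[\prod_ip_i^{|\gamma_i|}\,|h|^2\Bigr]=\E_{\mathbf 1}\Bigl[\prod_ip_i^{|\gamma_i|}\Bigr]\,\E_{a(\gamma)}|h|^2=M_\gamma\|h\|^2_{a(\gamma)} .
\]
This identity holds for nonnegative measurable $|h|^2$, so it covers every $h$ with finite norm. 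The $\mathcal V$-valued case follows coefficientwise.

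For part (i), I would first establish orthogonality across different patterns. For $\gamma\ne\gamma'$ we have
\[
 \langle u_\gamma h,u_{\gamma'}h'\rangle=\E\Bigl[\prod_ip_i^{(|\gamma_i|+|\gamma'_i|)/2}\,\overline{h}h'\,e^{\mathrm i\sum_i(\gamma'_i-\gamma_i)\theta_i}\Bigr].
\]
Because the phases are independent of $p$ and uniform, the phase average vanishes. Within a single pattern, (ii) turns the inner product into $M_\gamma\langle h,h'\rangle_{a(\gamma)}$. The spaces $\mathcal R^{a(\gamma)}_r$ for different $r$ are mutually orthogonal by definition, so all the blocks are pairwise orthogonal.

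Next I would check that each block lies in $\mathcal P_{\le d}$. An element of $u_\gamma\mathcal R^{a(\gamma)}_r$ is $u_\gamma$ times a polynomial in $p$ of degree at most $r$. Since $p_i=\alpha_i\overline{\alpha_i}$, this has degree at most $|\gamma|_1+2r\le d$ in $\alpha,\overline\alpha$.

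Then I would show the blocks span $\mathcal P_{\le d}$. By the factorization stated before the lemma, every monomial of degree at most $d$ equals $u_\gamma\prod_ip_i^{m_i}$ with $|\gamma|_1+2\sum_i m_i\le d$. The polynomials in $p$ of degree at most $R=\lfloor(d-|\gamma|_1)/2\rfloor$ decompose, by Gram--Schmidt in $L^2(\mathrm{Dir}(a(\gamma)))$, as $\bigoplus_{r\le R}\mathcal R^{a(\gamma)}_r$. Indeed $\Pi^a_{\le r}$ restricted to degree-$\le R$ polynomials has range inside polynomials of degree $\le r$, so the successive differences lie in $\mathcal R_r$. Hence each monomial lies in the sum of the blocks with $|\gamma|_1+2r\le d$. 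Tensoring with $\mathcal V$ is coefficientwise.

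The main subtlety, though minor, is the identification of polynomials agreeing on the sphere. The relation $\sum_ip_i=1$ means the polynomials in $p$ have nonunique representations, and "degree" must be read as the minimal degree on the simplex. Since the norm $\|\cdot\|_{a}$ is positive definite on polynomial functions on the simplex (the Dirichlet density is positive on its interior), the spaces $\mathcal R^a_r$ are well defined as spaces of functions, and orthogonal directness of the sum follows from the orthogonality already shown. I would remark that the sum is direct because nonzero orthogonal components cannot cancel, which also settles uniqueness of the decomposition.
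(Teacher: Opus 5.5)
Your proof is correct and follows essentially the same route as the paper. Phase averaging gives orthogonality across patterns, the Dirichlet reweighting $\prod_i p_i^{|\gamma_i|}\,d\mathrm{Dir}(1,\dots,1)=M_\gamma\,d\mathrm{Dir}(a(\gamma))$ gives the norm conversion, and decomposing $h$ into the spaces $\mathcal R^{a(\gamma)}_r$ handles each sector; your explicit telescoping over $\Pi^{a}_{\le r}$ and the remark on positive definiteness only spell out details that the paper leaves implicit in its degree count.
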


\begin{proof}
\emph{Step 1: decomposition into orthogonal sectors.} By the factorization of monomials, every element of $\mathcal P_{\le d}(\mathcal V)$ is a sum of terms $u_\gamma h$ with $h$ a polynomial in $p$ of degree at most $(d-|\gamma|_1)/2$. We treat scalar coefficients; the $\mathcal V$-valued case follows coefficientwise.

Terms with distinct patterns are orthogonal. Indeed, for $\gamma\ne\gamma'$ and functions $h,h'$ of $p$, conditioning on $p$ and averaging over the phases gives
\[
 \langle u_\gamma h,u_{\gamma'}h'\rangle=\E_p\Bigl[\overline{h(p)}h'(p)\prod_ip_i^{(|\gamma_i|+|\gamma'_i|)/2}\,\E_\theta e^{\mathrm i\sum_i(\gamma'_i-\gamma_i)\theta_i}\Bigr]=0 ,
\]
because the phases are independent and uniform and some $\gamma'_i-\gamma_i$ is nonzero.

\emph{Step 2: the norm conversion.} For \eqref{eq:haar-norm-conversion}, averaging over the phases gives
\[
 \|u_\gamma h\|^2=\E_{\mathrm{Dir}(1,\dots,1)}\Bigl[\prod_ip_i^{|\gamma_i|}|h|^2\Bigr].
\]
Applying \cref{lem:dirichlet}(ii) once for each of the $|\gamma|_1$ factors $p_i$ raises the parameter from $(1,\dots,1)$ to $a(\gamma)$, and each application contributes the factor $a_i/|a|$ for the parameters current at that moment. For the outcome $i$ these numerators are $1,2,\dots,|\gamma_i|$, and, in some order, the denominators are $N,N+1,\dots,N+|\gamma|_1-1$, so the product of all the factors is $M_\gamma$.

\emph{Step 3: orthogonality within a sector.} Finally, within a sector, decompose $h=\sum_rh_r$ with $h_r\in \mathcal R^{a(\gamma)}_r$. The same computation gives
\[
 \langle u_\gamma h_r,u_\gamma h_t\rangle=M_\gamma\langle h_r,h_t\rangle_{a(\gamma)}=0\qquad\text{for }r\ne t,
\]
and $h_r$ is represented by a polynomial of degree at most $r$, so the degree count of Step~1 gives the decomposition \eqref{eq:haar-blocks}.
\end{proof}

When a query changes one entry of the pattern by one unit, we shall use \eqref{eq:haar-norm-conversion} together with \cref{lem:dirichlet}(ii) in the following form: for every pattern $\gamma$ and outcome $i$ with $|\gamma_i\pm1|=|\gamma_i|+1$,
\begin{equation}\label{eq:haar-density-conversion}
 M_\gamma\,p_i\,d\mathrm{Dir}(a(\gamma))=M_{\gamma\pm e_i}\,d\mathrm{Dir}(a(\gamma\pm e_i)),
\end{equation}
and symmetrically with $p_i^{-1}$ when the magnitude of the entry decreases; both sides are $|u_{\gamma\pm e_i}|^2$ times the distribution of $p$ under the Haar measure.

\paragraph{The level.}
The level adds the degree under the Dirichlet measure to the number of unpaired factors beyond the first at each outcome. The first factor is not counted, because the honest sampler has one unpaired factor in each output branch. A block in $\mathcal P_{\le d}$ satisfies $|\gamma|_1+2r\le d$, while its level is $\operatorname{rep}(\gamma)+r$.

\begin{definition}[Level]\label{def:haar-level}
For a pattern $\gamma$ let
\[
 \operatorname{rep}(\gamma)=\sum_i\max(|\gamma_i|-1,0)
\]
count the occurrences beyond the first at each outcome. The \emph{level} of the block indexed by $(\gamma,r)$ in \eqref{eq:haar-blocks} is $\operatorname{rep}(\gamma)+r$. Let $P_k$ be the orthogonal projection onto the sum of the blocks of level $k$, and let $C=\sum_kkP_k$ be the \emph{level operator}. The progress measure of a polynomial $F$ is
\[
 \lambda(F)=\|C^{1/2}F\|=\Bigl(\sum_kk\,\|P_kF\|^2\Bigr)^{1/2}.
\]
\end{definition}

A block has level zero when its pattern is square-free ($|\gamma_i|\le1$ for all $i$) and its polynomial factor is constant. For example, $p_z=1/N+(p_z-1/N)$ splits into levels zero and one. In the sector of $\alpha_z$, the Dirichlet measure is $\mathrm{Dir}(a(e_z))$, under which $p_z$ has mean $2/(N+1)$. Hence
\[
 \alpha_zp_z=\frac2{N+1}\alpha_z+\alpha_z\left(p_z-\frac2{N+1}\right).
\]
The two summands again have levels zero and one. These are decompositions of vector-valued polynomials. The normalization $\sum_zp_z=1$ has level zero, while $\alpha_z^2$ has level one.

Three facts about these projections are used repeatedly.
\begin{enumerate}
\item Each $P_k$ acts on $\mathcal P_{\le d}(\mathcal V)$ as the sum of the projections onto the blocks of \eqref{eq:haar-blocks} of level $k$, so it maps $\mathcal P_{\le d}(\mathcal V)$ to itself.
\item A $\phi$-independent operator on $\mathcal V$ acts on the coefficients only, hence maps each block to itself and commutes with every $P_k$ and with $C$.
\item Adjoining or removing a register in a fixed state, such as the query register in the state $\ket\bot$, changes neither the pattern nor the polynomial factor of any block, so the map $J:G\mapsto G\otimes\ket\bot$ and its adjoint also commute with every $P_k$.
\end{enumerate}

\paragraph{Level zero and the ideal score.}
The score is the quadratic form, on $\mathcal P_{\le d}(\mathcal H)$, of the multiplication operator $X=\sum_zNp_z\Pi_z$, defined on all square-integrable $\mathcal H$-valued functions of $\phi$ by
\[
 (XF)(\phi)=\sum_zNp_z(\phi)\Pi_zF(\phi) .
\]
It raises the degree of a polynomial by two, and only its quadratic form is used. Indeed,
\begin{equation}\label{eq:haar-score-form}
 \langle F,XF\rangle=N\sum_z\E_\phi\bigl[p_z\|\Pi_zF(\phi)\|^2\bigr],
\end{equation}
which is the expected score \eqref{eq:prelim-score-branches} when $F$ is a final state. Since $X$ multiplies by a function of $p$ and by projections on $\mathcal H$, it preserves the pattern of every block and its branch, and it can change only the degree under the Dirichlet measure.

\begin{proposition}[Level zero and the ideal score]\label{prop:haar-level-zero}
If $F_0=P_0F_0$, then
\[
 \langle F_0,XF_0\rangle\le\beta_N\|F_0\|^2,
\]
with equality for the honest sampler. More precisely, for a square-free pattern $\gamma$, an output $z$, and a constant $v\in \mathcal H$,
\begin{align*}
 \langle u_\gamma\Pi_zv,X\,u_\gamma\Pi_zv\rangle&=\frac{N(1+|\gamma_z|)}{N+|\gamma|_1}\,\|u_\gamma\Pi_zv\|^2\\
 &\le
 \begin{cases}
 \|u_\gamma\Pi_zv\|^2 & \text{if }\gamma_z=0,\\
 \beta_N\|u_\gamma\Pi_zv\|^2 & \text{if }|\gamma_z|=1.
 \end{cases}
\end{align*}
\end{proposition}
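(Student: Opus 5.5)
The plan is to compute the quadratic form of $X$ on a single level-zero block exactly, and then assemble the blocks using orthogonality.

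\emph{Single block.} Fix a square-free pattern $\gamma$, an output $z$, and a constant $v\in\mathcal H$. Set $G=u_\gamma\Pi_zv$. Since $\Pi_{z'}\Pi_z=\mathbf 1[z'=z]\Pi_z$, only the term $Np_z\Pi_z$ of $X$ survives, and by \eqref{eq:haar-score-form}
\[
 \langle G,XG\rangle=N\,\|\Pi_zv\|^2\,\E_\phi\bigl[p_z|u_\gamma|^2\bigr].
\]
Averaging over the phases gives $|u_\gamma|^2=\prod_ip_i^{|\gamma_i|}$. Applying \cref{lem:dirichlet}(ii) repeatedly, as in \eqref{eq:haar-norm-conversion}, gives
\[
 \E_\phi\bigl[p_z|u_\gamma|^2\bigr]=M_\gamma\,\E_{a(\gamma)}p_z=M_\gamma\frac{1+|\gamma_z|}{N+|\gamma|_1}.
\]
Since $\|G\|^2=M_\gamma\|\Pi_zv\|^2$, this yields the stated identity. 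For the two inequalities: if $\gamma_z=0$, the factor $N/(N+|\gamma|_1)$ is at most $1$. If $|\gamma_z|=1$, then $|\gamma|_1\ge1$, so the factor $2N/(N+|\gamma|_1)$ is at most $2N/(N+1)=\beta_N$.

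\emph{General level-zero $F_0$.} By \cref{lem:haar-blocks}, write $F_0=\sum_\gamma u_\gamma v_\gamma$ with square-free $\gamma$ and constant $v_\gamma\in\mathcal H$. Decompose each $v_\gamma=\sum_z\Pi_zv_\gamma$. There are two cross-term cases to rule out.
\begin{itemize}
\item Terms with different outputs $z\ne z'$ do not interact, since $X$ commutes with the $\Pi$'s and $\Pi_z\Pi_{z'}=0$.
\item Terms with different patterns $\gamma\ne\gamma'$ do not interact either. The reason is that $X$ multiplies by a function of $p$ only, so the phase average $\E_\theta e^{\mathrm i\sum_i(\gamma'_i-\gamma_i)\theta_i}$ vanishes, exactly as in Step~1 of the proof of \cref{lem:haar-blocks}.
\end{itemize}
Hence the form decomposes as
\[
 \langle F_0,XF_0\rangle=\sum_{\gamma,z}\langle u_\gamma\Pi_zv_\gamma,X\,u_\gamma\Pi_zv_\gamma\rangle .
\]
Each term is at most $\beta_N\|u_\gamma\Pi_zv_\gamma\|^2$ by the single-block case, using $\beta_N\ge1$ to cover both subcases. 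The same orthogonality gives $\sum_{\gamma,z}\|u_\gamma\Pi_zv_\gamma\|^2=\|F_0\|^2$, which proves the bound.

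\emph{Equality for the honest sampler.} The honest sampler's final state is $O_\phi\ket\bot=\ket\phi=\sum_z\alpha_z\ket z$, with output projection $\Pi_z=\ket z\bra z$. Each branch is the single block $u_{e_z}\ket z$, which has level zero. The single-block identity with $|\gamma|_1=1$ and $|\gamma_z|=1$ gives exactly $\beta_N\|\cdot\|^2$, and summing over $z$ gives equality.

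The only step that needs care is the orthogonality of cross terms under $X$, since $X$ is not itself a block-preserving projection. It reduces to the phase-averaging argument above, because $X$ multiplies by a function of $p$ and commutes with the output projections, so I expect no real obstacle.
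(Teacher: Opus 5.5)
Your proposal is correct and follows essentially the paper's proof: you compute the form on one block through the norm conversion and the Dirichlet mean $\E_{a(\gamma)}p_z=(1+|\gamma_z|)/(N+|\gamma|_1)$, eliminate cross terms using $\Pi_z\Pi_{z'}=0$ and phase averaging, and check that the honest sampler's branches $\alpha_z\ket z$ attain $\beta_N$. The only cosmetic difference is that you group terms by $(\gamma,z)$ with an arbitrary constant $v_\gamma$, whereas the paper expands $v$ in an orthonormal basis adapted to $\bigoplus_z\Pi_z\mathcal H$; this changes nothing.
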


\begin{proof}
\emph{Step 1: reduction to single terms.} A polynomial of level zero is a sum of terms $u_\gamma\Pi_zv$ with $\gamma$ square-free and $v\in \mathcal H$ constant, and we may take the $v$ from a fixed orthonormal basis of $\mathcal H$ adapted to the decomposition $\mathcal H=\bigoplus_z\Pi_z\mathcal H$. Distinct triples $(z,\gamma,v)$ give orthogonal terms, and $X$ preserves all three data, so the quadratic form is a sum over the terms.

For one term, by \eqref{eq:haar-norm-conversion} and the mean $\E_{a(\gamma)}p_z=(1+|\gamma_z|)/|a(\gamma)|$ from \cref{lem:dirichlet}(i), we have
\begin{align*}
 \langle u_\gamma\Pi_zv,X\,u_\gamma\Pi_zv\rangle&=N\,M_\gamma\,\E_{a(\gamma)}[p_z]\,\|\Pi_zv\|^2\\
 &=\frac{N(1+|\gamma_z|)}{N+|\gamma|_1}\,\|u_\gamma\Pi_zv\|^2 .
\end{align*}

\emph{Step 2: the two bounds.} If $\gamma_z=0$, then the ratio is $N/(N+|\gamma|_1)\le1\le\beta_N$. If $|\gamma_z|=1$, then $|\gamma|_1\ge1$ and the ratio is at most $2N/(N+1)=\beta_N$.

Finally, the honest sampler prepares $\ket\phi=O_\phi\ket\bot$ with its single query and measures it, so its output-$z$ branch is $\alpha_z\ket z$, of pattern $e_z$ with constant polynomial factor, for which the ratio equals $\beta_N$.
\end{proof}

\begin{lemma}[The level of the copies]\label{lem:haar-copies}
For every $s\ge0$,
\[
 \lambda(\ket\phi^{\otimes s})^2=\frac{s(s-1)}{N+s-1} .
\]
\end{lemma}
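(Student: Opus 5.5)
The plan is to read off the block decomposition of $\ket\phi^{\otimes s}$ coordinate by coordinate, reduce $\lambda^2$ to the expected number of repeated outcomes in a P\'olya urn, and evaluate that expectation by a telescoping recursion. Write
\[
 \ket\phi^{\otimes s}=\sum_{x\in[N]^s}\alpha_{x_1}\cdots\alpha_{x_s}\ket{x_1,\dots,x_s}.
\]
For a fixed $x$, let $c(x)\in\Z_{\ge0}^N$ be its count vector, so that $c_i(x)=|\{j:x_j=i\}|$. Then the scalar coefficient $\alpha_{x_1}\cdots\alpha_{x_s}$ equals the phase monomial $u_{c(x)}$. Its pattern has only nonnegative entries, and its polynomial factor in $p$ is the constant $1$. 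A constant lies in $\mathcal R^{a}_0$ for every $a$, so each coordinate $u_{c(x)}\ket x$ lies in the single block $(\gamma,r)=(c(x),0)$ of \cref{lem:haar-blocks}. That block has level
\[
 \operatorname{rep}(c(x))=\sum_i\max(c_i(x)-1,0)=s-\#\{\text{distinct entries of }x\}.
\]
The vectors $\ket x$ are orthonormal, and a $\phi$-independent operator commutes with each $P_k$. Hence $P_k$ acts coordinatewise. Using \cref{lem:haar-blocks}(ii), we obtain
\[
 \lambda(\ket\phi^{\otimes s})^2=\sum_{x\in[N]^s}\operatorname{rep}(c(x))\,M_{c(x)},\qquad M_{c(x)}=\frac{\prod_ic_i(x)!}{(N)_s}.
\]

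Next I would interpret the weights $M_{c(x)}$ probabilistically. By \cref{lem:dirichlet}(i), $M_{c(x)}=\E_{(1,\dots,1)}\prod_jp_{x_j}$. This is the probability that $s$ i.i.d.\ samples from $p\sim\mathrm{Dir}(1,\dots,1)$ produce the sequence $x$, so $\sum_xM_{c(x)}=1$. Equivalently, it is the law of a P\'olya urn that starts with one ball of each of $N$ colours: at step $j$, colour $i$ is drawn with probability $(1+c_i)/(N+j-1)$, where $c_i$ is the number of earlier draws of $i$. Therefore
\[
 \lambda^2=\E[s-D_s],
\]
where $D_j$ is the number of distinct colours among the first $j$ draws. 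This matches the overview's description of $\lambda^2$ as the expected number of measured copies that repeat an earlier outcome.

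Finally I would compute $\E D_s$. At step $j$, the colours not yet seen each have weight $1$, so the draw is new with probability $(N-D_{j-1})/(N+j-1)$. This gives
\[
 \E[N-D_j]=\E[N-D_{j-1}]\cdot\frac{N+j-2}{N+j-1}.
\]
Telescoping from $D_0=0$ gives $\E[N-D_s]=N(N-1)/(N+s-1)$, so $\E D_s=Ns/(N+s-1)$. Then
\[
 \lambda^2=s-\frac{Ns}{N+s-1}=\frac{s(s-1)}{N+s-1}.
\]

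There is no serious obstacle here. The only point needing care is the first step, namely that every coordinate has level exactly $\operatorname{rep}(c(x))$. This requires that the constant polynomial factor contributes degree $0$ under $\mathrm{Dir}(a(c(x)))$, which holds because constants are always in $\mathcal R^{a}_0$. It also requires that distinct coordinates $\ket x$ do not mix across blocks, which holds because the blocks are tensored with $\mathcal V$ coefficientwise.
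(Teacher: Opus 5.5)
Your proposal is correct and follows the paper's proof. Both identify each coordinate $u_{c(x)}\ket x$ as a block $(c(x),0)$ of level $s-\operatorname{dist}(x)$, and both reduce $\lambda^2$ to $s-\E[\operatorname{dist}]$ for $s$ samples that are independent given a uniform $p$. The only difference is the final evaluation: the paper uses linearity over outcomes and the $\operatorname{Beta}(N-1,1)$ law of $1-p_i$ (\cref{lem:dirichlet}(iii)) to get $\Pr[i\text{ appears}]=s/(N+s-1)$, whereas you telescope the P\'olya-urn recursion for $\E[N-D_j]$; both give $\E D_s=Ns/(N+s-1)$.
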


The proof is in Appendix~\ref{app:haar-copies}.

The identity follows by expanding in the basis of the copy registers. Conditional on $p$, the squared amplitudes give $s$ independent samples, so the squared progress measure of the copies is the expected number of samples that repeat an earlier outcome.

\paragraph{The main theorem.}
Combining the score and growth estimates with \cref{lem:haar-copies} gives us the following theorem.

\begin{theorem}[Haar lower bound for the canonical oracle]\label{thm:haar-lower}
Let an algorithm receive $s$ copies of $\ket\phi$ and make at most $q$ queries to $O_\phi$. Then
\[
 \lambda(\Psi)\le\Lambda_{s,q}\qquad\text{and}\qquad B(\mathcal A)\le\beta_N+2\sqrt2\,\Lambda_{s,q}+72\,\Lambda_{s,q}^2,
\]
where
\[
 \Lambda_{s,q}=\sqrt{\frac{s(s-1)}{N+s-1}}+\frac{c_0}{\sqrt N}\sum_{t=0}^{q-1}\sqrt{s+2t+1} .
\]
\end{theorem}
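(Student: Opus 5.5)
The theorem combines three ingredients. The first is a score bound $B(\mathcal A)\le\beta_N+2\sqrt2\,\lambda+72\lambda^2$ in terms of the progress measure $\lambda=\lambda(\Psi)$ of \cref{def:haar-level}. The second is a growth bound $\lambda_{t+1}\le\lambda_t+c_0\sqrt{(s+2t+1)/N}$ for one controlled query. The third is the initial value $\lambda_0=\lambda(\ket\phi^{\otimes s}\otimes\psi_0)$, which equals $\sqrt{s(s-1)/(N+s-1)}$ by \cref{lem:haar-copies}, since tensoring with the fixed $\psi_0$ and applying $U_0$ commute with every $P_k$. Fixed unitaries commute with $C$, so only queries change $\lambda$. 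Summing the increments gives $\lambda(\Psi)\le\Lambda_{s,q}$. The score bound is increasing in $\lambda$, so substituting yields the claimed bound on $B(\mathcal A)$.

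\textbf{Score bound.} Write $\Psi=F_0+F_+$ with $F_0=P_0\Psi$, so that $\|F_+\|^2\le\sum_{k\ge1}k\|P_k\Psi\|^2=\lambda^2$. Expand
\[
 \langle\Psi,X\Psi\rangle=\langle F_0,XF_0\rangle+2\operatorname{Re}\langle F_0,XF_+\rangle+\langle F_+,XF_+\rangle .
\]
The first term is at most $\beta_N\|F_0\|^2\le\beta_N$ by \cref{prop:haar-level-zero}. For the cross term, $X$ preserves patterns, and on a square-free sector with constant polynomial factor I replace $X$ by $X-\E[X\mid\text{sector}]$. The mean part maps level zero into level zero, which is orthogonal to $F_+$. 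What remains is the fluctuation $N(p_z-\E_{a(\gamma)}p_z)\Pi_z$, whose variance under $\mathrm{Dir}(a(\gamma))$ is at most $2$ by \cref{lem:dirichlet}(i). Cauchy--Schwarz then gives a bound of $\sqrt2\,\|F_+\|\le\sqrt2\,\lambda$, which is where the term $2\sqrt2\,\lambda$ comes from. For $\langle F_+,XF_+\rangle$ I need $\langle G,XG\rangle\le c\,\langle G,(C+1)G\rangle$ for every block component $G$ of positive level. This holds because $Np_z$ integrated against $|h|^2$ for $h$ of degree $r$ under $\mathrm{Dir}(a(\gamma))$ is $O(1+r+|\gamma_z|)$: write $p_z\,d\mathrm{Dir}(a)$ as a Dirichlet distribution with one more observation at $z$ via \cref{lem:dirichlet}(ii), then compare the norms of degree-$r$ polynomials under the two Dirichlet measures. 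I expect \cref{lem:haar-subset}, which says low-degree polynomials cannot concentrate on few outcomes, to supply exactly this. Since $C\ge1$ on $F_+$, this term is $O(\lambda^2)$, and the constants must be tracked to reach $72$.

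\textbf{Growth under one query, the main obstacle.} By \eqref{eq:prelim-oracle-expansion}, the query $O_\phi$ applies $\ket\phi\bra\bot$, $\ket\bot\bra\phi$, and $\ket\phi\bra\phi$. Each of these multiplies by $\alpha_x$ (creating $\ket x$) or applies $\bra\phi$, which multiplies by $\overline{\alpha_x}$ and sums over $x$. Since $\sqrt{k+1}-\sqrt k\le1$, I will bound $\|[C^{1/2},O^{\mathrm c}_\phi]F\|$ by the norm of the part moved up a level plus the part moved down. The input lies in $\mathcal P_{\le s+2t}$ by \cref{lem:haar-degree}. For creation, a new $\alpha_x$ raises $\operatorname{rep}$ only when $\gamma_x\ne0$, which happens for at most $d$ outcomes. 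By \cref{lem:haar-subset} those outcomes carry an $O((d+1)/N)$ fraction of the weight, which I will pass through \eqref{eq:haar-density-conversion}. For annihilation via $\bra\phi$, the factor $\overline{\alpha_x}$ either cancels an unpaired $\alpha_x$, which lowers or keeps the level, or produces $p_x h_x$ summed over $x$. Here I would invoke \cref{lem:haar-coherent}, the Dirichlet posterior identity: the degree-$(r+1)$ part of $\sum_x p_xh_x$ carries at most a fraction $(r+1)/N$ of the input weight. The downward moves are handled by applying the same estimates to the adjoint step, as in \cref{lem:haar-down}. Together these give increments of $O(\sqrt{(s+2t+1)/N})$ with an absolute constant $c_0$. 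The delicate point is the bookkeeping in the paired case, where the summation over $x$ interacts with the change of Dirichlet parameters. I would lean entirely on \cref{lem:haar-coherent} for that and not attempt to redo it.
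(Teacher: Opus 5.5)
Your proposal is the paper's proof: it takes $\lambda(\Psi_0)$ from \cref{lem:haar-copies}, sums the one-query bound of \cref{thm:haar-one-query} at degree $s+2t$, and substitutes into the monotone score bound of \cref{thm:haar-extraction}. That score bound is proved by exactly your three-term split, and your growth sketch matches \cref{lem:haar-up,lem:haar-down,lem:haar-commutator}. The two points your sketch glosses over are as follows. First, $X$ couples adjacent degrees $r$ and $r\pm1$ within a sector, so the per-block estimate from \cref{lem:haar-subset} needs a tridiagonal Cauchy--Schwarz step; this is the extra factor $3$ in $72=8\cdot3\cdot3$ (\cref{lem:haar-global}). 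Second, for $\bra\phi$ the repeated-conjugate case $\gamma_x\le-1$ also raises the level, and the paper handles its coherent sum with the pointwise bound $|\sum_{i\in S}h_i|^2\le p(S)\sum_{i\in S}|h_i|^2/p_i$ before applying \cref{lem:haar-subset} (Step~3 in the proof of \cref{lem:haar-up}).
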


\begin{proof}
\emph{Step 1: the initial level.} Let $\Psi_t$ be the state after $t$ queries and the fixed unitary that follows them, as in \cref{sec:fourier-recording}, so that $\Psi_0=U_0(\ket\phi^{\otimes s}\otimes\psi_0)$ and $\Psi_q=\Psi$. Each $\Psi_t$ is a unit vector in $\mathcal P_{\le s+2t}(\mathcal H)$ by \cref{lem:haar-degree}. The fixed unitaries commute with $C$ and preserve the norm, and adjoining the $\phi$-independent vector $\psi_0$ does not change the level, so, by \cref{lem:haar-copies},
\[
 \lambda(\Psi_0)^2=\frac{s(s-1)}{N+s-1} .
\]

\emph{Step 2: the effect of the queries.} By the growth estimate at degree $s+2t$,
\[
 \lambda(\Psi_{t+1})\le\lambda(\Psi_t)+c_0\sqrt{\frac{s+2t+1}N} .
\]
Summing over $t$ gives $\lambda(\Psi)\le\Lambda_{s,q}$, which is the first claim.

Finally, the score estimate applies to $\Psi$, and its right side is increasing in $\lambda(\Psi)$.
\end{proof}

\begin{corollary}[Query and copy bounds]\label{cor:haar-birthday}
Let $0<\varepsilon\le1$, and let an algorithm as in \cref{thm:haar-lower} have $B(\mathcal A)\ge2+\varepsilon$. Then
\[
 s+q\sqrt{s+q+1}\ge c_1\varepsilon\sqrt N
\]
for an absolute constant $c_1>0$. In particular, the following hold.
\begin{enumerate}
\item[(i)] With no copies, $q\ge c_2\varepsilon^{2/3}N^{1/3}$ for an absolute constant $c_2>0$.
\item[(ii)] With no queries, $s\ge c_1\varepsilon\sqrt N$.
\end{enumerate}
\end{corollary}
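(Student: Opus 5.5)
The plan is to combine the score bound of \cref{thm:haar-lower} with an elementary estimate of $\Lambda_{s,q}$. If $B(\mathcal A)\ge2+\varepsilon$, then, since $\beta_N=2N/(N+1)<2$, \cref{thm:haar-lower} gives
\[
 \varepsilon\le2\sqrt2\,\Lambda_{s,q}+72\,\Lambda_{s,q}^2 .
\]
I would split into two cases. If $\Lambda_{s,q}\ge1$, then $\Lambda_{s,q}\ge\varepsilon$ holds trivially because $\varepsilon\le1$. Otherwise $\Lambda_{s,q}^2\le\Lambda_{s,q}$, so $\varepsilon\le(2\sqrt2+72)\Lambda_{s,q}$. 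In both cases $\Lambda_{s,q}\ge\varepsilon/75$.

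Next I would bound $\Lambda_{s,q}$ from above by the stated combination. For the first term, $s(s-1)/(N+s-1)\le s^2/N$ (check: $s(s-1)N\le s^2(N+s-1)$ is immediate). Hence this term is at most $s/\sqrt N$. For the second term, each summand satisfies $\sqrt{s+2t+1}\le\sqrt{2(s+q+1)}$ for $t<q$, so the sum is at most $\sqrt2\,q\sqrt{s+q+1}$. Therefore
\[
 \Lambda_{s,q}\le\frac{s+\sqrt2\,c_0\,q\sqrt{s+q+1}}{\sqrt N}\le\frac{\max(1,\sqrt2c_0)\bigl(s+q\sqrt{s+q+1}\bigr)}{\sqrt N}.
\]
Combining this with $\Lambda_{s,q}\ge\varepsilon/75$ gives the main inequality, with $c_1=1/(75\max(1,\sqrt2c_0))$.

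For part~(i), set $s=0$. Then $q\sqrt{q+1}\le\sqrt2\,q^{3/2}$ for $q\ge1$, and $q\ge1$ is necessary: with $q=0$ and $s=0$ the algorithm scores $1$, which contradicts $B\ge2+\varepsilon$. So $\sqrt2\,q^{3/2}\ge c_1\varepsilon\sqrt N$, which gives $q\ge(c_1\varepsilon/\sqrt2)^{2/3}N^{1/3}$. For part~(ii), set $q=0$; the inequality reads $s\ge c_1\varepsilon\sqrt N$ directly.

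There is no real obstacle here, since all the content is in \cref{thm:haar-lower}. The only care needed is to turn the quadratic score bound into a linear lower bound on $\Lambda_{s,q}$ uniformly in $\varepsilon\le1$, and to keep the constants absolute.
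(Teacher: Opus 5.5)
Your proof is correct and follows the paper's argument essentially step for step. It uses the same case split to get $\Lambda_{s,q}\ge\varepsilon/75$, the same bounds $s(s-1)/(N+s-1)\le s^2/N$ and $s+2t+1\le2(s+q+1)$, and the same specializations to $s=0$ and $q=0$. Your constant $\max(1,\sqrt2c_0)$ in place of the paper's $1+\sqrt2c_0$ is a harmless difference. The extra remark that $q\ge1$ in part~(i) is not needed, since $q\sqrt{q+1}\le\sqrt2\,q^{3/2}$ also holds at $q=0$.
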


\begin{proof}
\emph{Step 1: a lower and an upper bound on $\Lambda$.} Since $\beta_N\le2$, the theorem gives
\[
 \varepsilon\le2\sqrt2\Lambda+72\Lambda^2,\qquad\text{where }\Lambda=\Lambda_{s,q} .
\]
If $\Lambda\le1$, then this gives $\varepsilon\le75\Lambda$. If $\Lambda>1$, then $\Lambda>\varepsilon$. In both cases $\Lambda\ge\varepsilon/75$.

On the other hand, each summand is at most $\sqrt{s+2q+1}$, and the empty sum is zero when $q=0$. Bounding $s(s-1)/(N+s-1)$ by $s^2/N$ and using $s+2q+1\le2(s+q+1)$, we obtain
\begin{align*}
 \Lambda&\le\frac s{\sqrt N}+\frac{c_0\,q\sqrt{s+2q+1}}{\sqrt N}\\
 &\le\frac{(1+\sqrt2c_0)\bigl(s+q\sqrt{s+q+1}\bigr)}{\sqrt N} .
\end{align*}

\emph{Step 2: the three claims.} The two bounds give the first claim with \mbox{$c_1=1/(75(1+\sqrt2c_0))$}. With $s=0$, the first claim reads $q\sqrt{q+1}\ge c_1\varepsilon\sqrt N$, and the bound $q\sqrt{q+1}\le\sqrt2\,q^{3/2}$ gives item~(i). With $q=0$, the first claim is item~(ii).
\end{proof}

Both extremes of the corollary have the right exponent of $N$: $O(\sqrt N)$ measured copies suffice to exceed the ideal score by a constant, since a repeated outcome identifies an outcome of large probability~\cite{Kretschmer}, and $O(N^{1/3})$ queries suffice by \cref{thm:squaring-scores} in \cref{sec:squaring}. The exponent of $\varepsilon$ is not known to be sharp.

\subsection{Dirichlet energy and concentration}\label{sec:haar-energy}

To bound the score, we first bound the mass of a set of outcomes weighted by a polynomial in $p$. Throughout this subsection, $a_i\ge1$. We prove the following lemma using integration by parts.

\begin{lemma}[The mass of a set of outcomes]\label{lem:haar-subset}
Let $S\subseteq[N]$, and write $p(S)=\sum_{i\in S}p_i$ and $a(S)=\sum_{i\in S}a_i$. For every polynomial $h$ in $p$ of degree at most $r$,
\[
 \E_a\bigl[p(S)|h|^2\bigr]\le8\,\frac{a(S)+r}{|a|}\,\|h\|_a^2 .
\]
\end{lemma}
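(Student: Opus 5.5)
The plan is to obtain the bound from the Wright--Fisher (Jacobi) diffusion whose stationary law is $\mathrm{Dir}(a)$, via integration by parts against the degree-one function $t=p(S)$. Let
\[
 \mathcal L=\sum_{i,j}p_i(\delta_{ij}-p_j)\partial_i\partial_j+\sum_i(a_i-|a|p_i)\partial_i
\]
be its generator, taken in the coordinates $p_1,\dots,p_{N-1}$ or equivalently as a symmetric operator on the simplex. Its carré du champ is
\[
 \Gamma(g,h)=\sum_{i,j}p_i(\delta_{ij}-p_j)\,\partial_i\bar g\,\partial_jh .
\]

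\emph{Step 1: the integration-by-parts identity.} I would verify the identity $\E_a[\bar g\,\mathcal Lh]=-\E_a[\Gamma(g,h)]$ for polynomials $g$ and $h$. The boundary terms vanish because the density $\prod_ip_i^{a_i-1}$ times the diffusion coefficient $p_i$ vanishes on the faces $p_i=0$; this uses $a_i\ge1$. From this I would derive two facts.

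First, $\mathcal L$ maps polynomials of degree at most $r$ into themselves. Its leading part on degree-$r$ terms is $-r(r+|a|-1)$ times the identity, so on $\mathcal R^a_k$ it acts as multiplication by $-k(k+|a|-1)$. Consequently, for $\deg h\le r$,
\[
 \E_a\Gamma(h,h)\le r(r+|a|-1)\|h\|_a^2 .
\]

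Second, for $t=p(S)$ a direct computation gives $\mathcal Lt=a(S)-|a|t$ and $\Gamma(t,t)=t(1-t)$.

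\emph{Step 2: the main estimate.} Apply the identity with $g=|h|^2$ and the function $t$:
\[
 \E_a\bigl[(|a|t-a(S))|h|^2\bigr]=-\E_a\bigl[|h|^2\mathcal Lt\bigr]=\E_a\Gamma(t,|h|^2)=2\operatorname{Re}\E_a\bigl[\bar h\,\Gamma(t,h)\bigr].
\]
The last equality holds because $\Gamma$ is a derivation in each argument. Since $\Gamma$ is pointwise positive semidefinite, the Cauchy--Schwarz inequality gives $|\Gamma(t,h)|^2\le\Gamma(t,t)\Gamma(h,h)\le t\,\Gamma(h,h)$. Put $X=\E_a[t|h|^2]$ and $H=\|h\|_a^2$. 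Another Cauchy--Schwarz step then yields
\[
 |a|X\le a(S)H+2\sqrt{X}\sqrt{r(r+|a|)H}.
\]

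\emph{Step 3: solving the quadratic inequality.} Solving for $\sqrt X$ gives
\[
 X\le\frac{2a(S)}{|a|}H+\frac{8r(r+|a|)}{|a|^2}H .
\]
If $r\le|a|$, the second term is at most $16rH/|a|$. A little more care in the quadratic (using $(u+v)^2\le2u^2+2v^2$ with the right split) should bring the constant down to the stated $8(a(S)+r)/|a|$.

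If $r>|a|/8$, the claim is trivial, because $p(S)\le1$ gives $X\le H\le8(a(S)+r)H/|a|$. The intermediate range $|a|/8<r\le|a|$ is therefore covered as well, so no gap remains between the two cases.

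\emph{Main obstacle.} I expect the delicate step to be Step 1. It requires justifying integration by parts on the simplex with the correct boundary behaviour, and checking that $\mathcal L$ is diagonal on the spaces $\mathcal R^a_k$ with eigenvalue exactly $-k(k+|a|-1)$. That in turn needs $\mathcal L$ to be symmetric under $\mathrm{Dir}(a)$ and to preserve degrees. For vector-valued $h$, everything is applied coefficientwise and the results are summed, as the paper's convention allows.
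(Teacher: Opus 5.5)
Your argument is the paper's proof. It uses the same generator $L_a$ and form $\Gamma$, and the same identity $|a|\,\E_a[p(S)|h|^2]-a(S)\|h\|_a^2=\E_a\Gamma(|h|^2,p(S))$ obtained from $L_ap(S)=a(S)-|a|p(S)$. It also uses the same product rule, the pointwise Cauchy--Schwarz bound with $\Gamma(p(S),p(S))=p(S)(1-p(S))\le p(S)$, and the eigenvalue bound $\E_a\Gamma(h,h)\le r(|a|+r-1)\|h\|_a^2$ coming from the action of $L_a$ on the spaces $\mathcal R^a_k$. The paper justifies the integration by parts by checking it on monomials with the Dirichlet moment formula rather than by a boundary-flux argument. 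It also reduces to real $h$ rather than carrying $2\operatorname{Re}$. Both of your variants are valid.

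The one thing to fix is the constant in Step~3. Your root estimate $X\le 2a(S)H/|a|+8r(r+|a|)H/|a|^2$ gives only $(2a(S)+16r)H/|a|$ for $r\le|a|$. The remark about $r>|a|/8$ does not rescue it: for $r\le|a|/8$ the same display gives $(2a(S)+9r)H/|a|$, which exceeds $8(a(S)+r)H/|a|$ whenever $r>6a(S)$.

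The ``little more care'' is the weighted AM--GM step $2\sqrt X\sqrt{r(r+|a|)H}\le\frac{|a|}2X+\frac{2r(r+|a|)}{|a|}H$. It turns your inequality into $X\le 2a(S)H/|a|+4r(r+|a|)H/|a|^2\le(2a(S)+8r)H/|a|$ for $r\le|a|$. For $r>|a|$ the claim is trivial, since $8r/|a|>1\ge X/H$. This is exactly how the paper concludes.
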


\paragraph{Tangent derivatives.}
The partial derivatives $\partial_i=\partial/\partial p_i$ of a polynomial can depend on how it is represented away from the simplex. We instead differentiate along $e_i-p$, which points from $p$ toward vertex $i$, and define
\begin{equation}\label{eq:haar-tangent}
 T_ig=\partial_ig-\sum_{j=1}^Np_j\,\partial_jg .
\end{equation}
For $0\le t\le1$, the point $(1-t)p+te_i$ lies in the simplex. Since $T_ig(p)$ is the derivative at $t=0$ of the polynomial $g((1-t)p+te_i)$, it depends only on the restriction of $g$ to the simplex. The operators satisfy $\sum_ip_iT_i=0$.

\begin{lemma}[Integration by parts]\label{lem:haar-ibp}
For all polynomials $g,v$ and all $i\in[N]$,
\begin{equation}\label{eq:haar-ibp}
 \E_a\bigl[p_i\,(T_ig)\,v\bigr]=\E_a\bigl[g\bigl((|a|p_i-a_i)v-p_i\,T_iv\bigr)\bigr] .
\end{equation}
\end{lemma}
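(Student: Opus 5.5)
We prove \eqref{eq:haar-ibp} in coordinates. It is an identity between polynomials integrated against $\mathrm{Dir}(a)$, so we may check it for each representative. The plan is to write the left side as an integral over the simplex of the density $w(p)=c\prod_jp_j^{a_j-1}$, recognize $p_iT_i$ as a vector field tangent to the simplex, and integrate by parts. The boundary terms vanish because the field is tangent to every face that it touches, or is killed by a factor of the density.

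Concretely, $p_iT_ig=\sum_jp_i(\delta_{ij}-p_j)\partial_jg$, so $p_iT_i$ is the derivative along the field $W^{(i)}(p)=p_i(e_i-p)$. First check that $W^{(i)}$ is tangent to the simplex: its coordinates sum to $p_i(1-\sum_jp_j)=0$ on the simplex. It is also tangent to each face $\{p_j=0\}$: for $j\ne i$ the $j$-th coordinate is $-p_ip_j$, which vanishes there, and on $\{p_i=0\}$ the whole field vanishes.

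Hence the divergence theorem on the simplex, with respect to its surface measure, gives
\[
 \int (W^{(i)}\cdot\nabla g)\,v\,w=-\int g\,\operatorname{div}_\Delta\bigl(W^{(i)}vw\bigr),
\]
with no boundary terms, since the outward normal component of $W^{(i)}$ vanishes on every facet. To avoid computing the intrinsic divergence, parametrize by $p_1,\dots,p_{N-1}$ after relabeling so that $i\ne N$, or else use the symmetric tangent formulation. I would instead do it directly: for the curve $t\mapsto(1-t)p+te_i$, the flow of the field $p_i(e_i-p)$ has the explicit form of a Dirichlet reweighting. The cleaner route is to compute
\[
 \operatorname{div}(W^{(i)}vw)/w=(\text{div}\,W^{(i)})v+W^{(i)}\cdot\nabla v+v\,W^{(i)}\cdot\nabla\log w .
\]

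The three terms are computed as follows.
\begin{enumerate}
\item The middle term is $p_iT_iv$.
\item For the last term, $W^{(i)}\cdot\nabla\log w=\sum_j(a_j-1)p_i(\delta_{ij}-p_j)/p_j=(a_i-1)-p_i\sum_j(a_j-1)=(a_i-1)-p_i(|a|-N)$.
\item The intrinsic divergence of $W^{(i)}$ on the $(N-1)$-dimensional simplex is $1-Np_i$. In ambient $\R^N$ the divergence is $(1-p_i)-(N-1)p_i-p_i\cdot1$; after projecting out the normal direction $(1,\dots,1)/\sqrt N$, along which $W^{(i)}$ has the component $p_i(1-|p|)$ with derivative $-p_i\cdot N/N$, this leaves $1-Np_i$.
\end{enumerate}

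Summing the three terms gives $v\bigl(a_i-|a|p_i\bigr)+p_iT_iv$. With the minus sign from integration by parts, this is exactly $g\bigl((|a|p_i-a_i)v-p_iT_iv\bigr)$, as claimed.

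The main obstacle is the divergence bookkeeping on the simplex, where the constant $1-Np_i$ must come out exactly. To make it airtight, I would verify the whole identity on monomials instead. Take $g=\prod_jp_j^{b_j}$ and $v=1$. Then $p_iT_ig=b_ig-|b|p_ig$, and by \cref{lem:dirichlet}(i) the left side equals $(b_i-|b|\,\E_{a+b}p_i)\,\E_ag$. The right side is $\E_a[g(|a|p_i-a_i)]=(|a|\,\E_{a+b}p_i-a_i)\,\E_ag$. With $\E_{a+b}p_i=(a_i+b_i)/(|a|+|b|)$, both sides equal $\E_ag\,(b_i|a|-a_i|b|)/(|a|+|b|)$.

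The general case with $v$ then follows from the product rule: $p_iT_i(gv)=v\,p_iT_ig+g\,p_iT_iv$. Applying the $v=1$ case to $gv$ gives $\E_a[p_iT_i(gv)]=\E_a[gv(|a|p_i-a_i)]$, and subtracting $\E_a[g\,p_iT_iv]$ from both sides yields \eqref{eq:haar-ibp}. Bilinearity extends the identity from monomials to all polynomials, and that completes the proof.
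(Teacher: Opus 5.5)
Your monomial argument is correct and is essentially the paper's proof: the paper checks the identity directly on pairs of monomials $g=p^b$, $v=p^c$, using Euler's identity and the moment ratio $(|a|+|b|+|c|)\,\E_a[p^{b+c+e_i}]=(a_i+b_i+c_i)\,\E_a[p^{b+c}]$. You instead check only $v=1$ and recover general $v$ from the Leibniz rule for the derivation $T_i$, which saves carrying the exponent vector $c$. Your divergence-theorem sketch also computes correctly (intrinsic divergence $1-Np_i$, zero flux through every facet), but it would need regularity and boundary justifications that the monomial check, chosen by the paper precisely to avoid boundary terms, makes unnecessary.
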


We verify the identity on monomials using the Dirichlet moment formula, which avoids boundary terms. It is bilinear, with no conjugation, and applies to conjugates by substituting $\bar v$ for $v$.

The proof is in Appendix~\ref{app:haar-ibp}.

\paragraph{{\boldmath The form $\Gamma$ and the operator $L_a$.}}
For polynomials $f,g$, define the form $\Gamma$ and the operator $L_a$ by
\begin{equation}\label{eq:haar-energy-def}
\begin{aligned}
 \Gamma(f,g)&=\sum_{i,j}p_i(\delta_{ij}-p_j)\,\overline{\partial_if}\,\partial_jg,\\
 L_ag&=\sum_{i,j}p_i(\delta_{ij}-p_j)\,\partial_i\partial_jg+\sum_i(a_i-|a|p_i)\,\partial_ig .
\end{aligned}
\end{equation}

Expanding the square shows that on the simplex
\begin{align*}
 \Gamma(g,g)&=\sum_ip_i|\partial_ig|^2-\Bigl|\sum_jp_j\partial_jg\Bigr|^2\\
 &=\sum_ip_i|T_ig|^2\ge0,\\
 \text{and more generally}\quad\Gamma(f,g)&=\sum_ip_i\overline{T_if}\,T_ig .
\end{align*}

On the simplex the operator $L_a$ can be written with the tangent derivatives alone,
\begin{equation}\label{eq:haar-La-tangent}
 L_ag=\sum_i\bigl(p_iT_i^2g+a_iT_ig\bigr),
\end{equation}
so it is well defined on functions on the simplex, since each $T_i$ is. The calculation is in Appendix~\ref{app:haar-La}.

Since $L_ap_i=a_i-|a|p_i$, summing over $i\in S$ gives $L_ap(S)=a(S)-|a|p(S)$.

\begin{lemma}[Symmetry and eigenvalues]\label{lem:haar-energy}
For all polynomials $f,g$:
\begin{enumerate}
\item[(i)] the identity $\E_a\Gamma(f,g)=-\langle f,L_ag\rangle_a$ holds; in particular, the operator $L_a$ is symmetric for $\langle\cdot,\cdot\rangle_a$ and $\langle g,L_ag\rangle_a\le0$;
\item[(ii)] the operator $L_a$ maps polynomials of degree at most $m$ to polynomials of degree at most $m$, and $L_ag=-r(|a|+r-1)g$ for $g\in \mathcal R^a_r$;
\item[(iii)] for every polynomial $h$ of degree at most $r$,
\[
 \E_a\Gamma(h,h)\le r(|a|+r-1)\,\|h\|_a^2 .
\]
\end{enumerate}
\end{lemma}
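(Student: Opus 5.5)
For part~(i) I will apply the integration-by-parts identity \cref{lem:haar-ibp} with $g=f$ conjugated and $v=g$, summed over $i$. Substitute $\bar f$ for $g$ and $T_jg$ for $v$; that is, use the identity once for each $i$ in the bilinear form $\E_a[p_i\,\overline{T_if}\,T_ig]$. This gives
\[
 \E_a\bigl[p_i\overline{T_if}\,T_ig\bigr]=\E_a\bigl[\bar f\bigl((|a|p_i-a_i)T_ig-p_iT_i^2g\bigr)\bigr].
\]
Summing over $i$, the term $|a|\sum_ip_iT_ig$ vanishes because $\sum_ip_iT_i=0$. What remains is $-\E_a[\bar f\sum_i(p_iT_i^2g+a_iT_ig)]$, which by \eqref{eq:haar-La-tangent} is $-\langle f,L_ag\rangle_a$. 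Since the left side is $\E_a\Gamma(f,g)$, symmetry follows from $\Gamma(f,g)=\overline{\Gamma(g,f)}$, and nonpositivity from $\Gamma(g,g)=\sum_ip_i|T_ig|^2\ge0$.

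For part~(ii) I will use the explicit coordinate form \eqref{eq:haar-energy-def} on a monomial $p^b$ with $|b|=m$. The second-order part $\sum_{i,j}p_i(\delta_{ij}-p_j)\partial_i\partial_j$ gives terms of degree $m-1$ (from $\delta_{ij}p_i$) and degree $m$ (from $-p_ip_j$). The first-order part $\sum_i(a_i-|a|p_i)\partial_i$ gives terms of degree $m-1$ and $m$. So the degree does not increase. Moreover, the top-degree part of $L_ap^b$ is $-\bigl(\sum_{i,j}b_i(b_j-\delta_{ij})+|a|m\bigr)p^b=-(m(m-1)+|a|m)p^b$. Hence, modulo lower degree, $L_a$ acts as the scalar $-m(|a|+m-1)$ on homogeneous degree-$m$ parts. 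A subtlety is that polynomials are identified on the simplex, where degrees are not graded. I will handle this by homogenizing with $\sum_jp_j=1$, which is compatible because $L_a$ is intrinsic by \eqref{eq:haar-La-tangent}; equivalently, I work with the filtration by degree, which is all that is needed.

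Now let $g\in\mathcal R^a_r$. Then $L_ag+r(|a|+r-1)g$ has degree at most $r-1$. By the symmetry in (i), $L_a$ preserves the orthogonal complement of degree $\le r-1$ inside degree $\le r$: for $h$ of degree $\le r-1$ we have $\langle h,L_ag\rangle_a=\langle L_ah,g\rangle_a=0$, since $L_ah$ has degree $\le r-1$. So the difference is both of degree $\le r-1$ and orthogonal to all such polynomials, and therefore it is zero.

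For part~(iii) I will write $h=\sum_{k\le r}h_k$ with $h_k\in\mathcal R^a_k$. By (i) and (ii), $\E_a\Gamma(h,h)=-\langle h,L_ah\rangle_a=\sum_kk(|a|+k-1)\|h_k\|_a^2$, using orthogonality of the $h_k$. This is at most $r(|a|+r-1)\|h\|_a^2$ because $k(|a|+k-1)$ is increasing in $k$.

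The main obstacle is the well-definedness issue in (ii). The top-degree computation must be made rigorous on the simplex, where the representation of a polynomial is not unique. I will settle this by computing on homogeneous representatives, which is legitimate because $L_a$ depends only on restrictions.
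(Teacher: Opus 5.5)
Your proposal is correct and follows the same route as the paper. Part (i) uses the integration-by-parts identity with $\bar f$ and $T_ig$, summed over $i$. Part (ii) computes the top-degree action of $L_a$ on a representative of degree at most $r$ and then uses symmetry to show the lower-degree remainder vanishes. Part (iii) expands $h$ in the spaces $\mathcal R^a_k$.

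The differences are cosmetic. You compute on monomials rather than applying Euler's identity to the homogeneous part. Also, your opening sentence of (i) garbles the substitution: it should be $\bar f$ for $g$ and $T_ig$ for $v$, using that $T_i$ has real coefficients so that $T_i\bar f=\overline{T_if}$.
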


The proof is in Appendix~\ref{app:haar-energy}.

The proof of \cref{lem:haar-subset} is in Appendix~\ref{app:haar-subset}.

\subsection{The score in terms of the level}\label{sec:haar-extraction}

For a unit polynomial $F$, write $F_0=P_0F$ and $F_+=F-F_0$. Expanding along this decomposition gives
\begin{equation}\label{eq:haar-score-split}
 \langle F,XF\rangle=\langle F_0,XF_0\rangle+2\operatorname{Re}\langle F_0,XF_+\rangle+\langle F_+,XF_+\rangle .
\end{equation}
The first term is at most the ideal score by \cref{prop:haar-level-zero}. We bound the cross term by $O(\lambda(F))$ and the last term by $O(\lambda(F)^2)$, using \cref{lem:haar-subset} for the latter.

\begin{lemma}[The cross term]\label{lem:haar-cross}
For every $F\in\mathcal P_{\le d}(\mathcal H)$ with $F_0=P_0F$ and $F_+=F-F_0$,
\[
 |\langle F_0,XF_+\rangle|\le\sqrt2\,\|F_0\|\,\|F_+\|\qquad\text{and}\qquad\|F_+\|\le\lambda(F).
\]
\end{lemma}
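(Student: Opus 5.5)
The second inequality is immediate, and I would dispose of it first. Every block of $F_+$ has level at least one, so
\[
 \|F_+\|^2=\sum_{k\ge1}\|P_kF\|^2\le\sum_kk\,\|P_kF\|^2=\lambda(F)^2 .
\]

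For the cross term, the plan is to write $X=\sum_zNp_z\Pi_z$ as its level-preserving part plus a fluctuation. The cross term should see only the fluctuation, and that has small variance. Concretely, decompose $F_0$ into orthogonal terms $u_\gamma\Pi_zv$, with $\gamma$ square-free and $v$ constant, as in the proof of \cref{prop:haar-level-zero}. The operator $X$ preserves the pattern and the branch. So $\langle F_0,XF_+\rangle$ splits over pairs $(\gamma,z)$, and each summand pairs $u_\gamma\Pi_zv_{\gamma,z}$ with the component $u_\gamma\Pi_zh_{\gamma,z}$ of $F_+$ in the same sector and branch. Because $\gamma$ is square-free, this component of $F_+$ has $h_{\gamma,z}$ orthogonal to constants under $\mathrm{Dir}(a(\gamma))$.

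By \eqref{eq:haar-norm-conversion}, each summand equals
\[
 M_\gamma\,\E_{a(\gamma)}\bigl[Np_z\,\langle\Pi_zv,\Pi_zh\rangle\bigr] .
\]
Since $h$ has mean zero, I would replace $Np_z$ by $Np_z-N\E_{a(\gamma)}p_z$ and then apply Cauchy--Schwarz. This bounds the summand by
\[
 M_\gamma\,\|\Pi_zv\|\,\bigl(N^2\Var_{a(\gamma)}p_z\bigr)^{1/2}\,\|\Pi_zh\|_{a(\gamma)} .
\]
By \cref{lem:dirichlet}(i), with $a_z\le2$ and $|a|\ge N$,
\[
 N^2\Var_{a(\gamma)}p_z=\frac{N^2a_z(|a|-a_z)}{|a|^2(|a|+1)}\le\frac{a_zN}{N+1}\le2 .
\]
Converting back with \eqref{eq:haar-norm-conversion}, the summand is at most $\sqrt2\,\|u_\gamma\Pi_zv\|\,\|u_\gamma\Pi_zh\|$. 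A final Cauchy--Schwarz over $(\gamma,z)$, using orthogonality of the pieces on both sides, gives $\sqrt2\|F_0\|\|F_+\|$.

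The main care point is the matching step. I need to check that the component of $F_+$ within the sector $\gamma$ and branch $z$ is exactly the part of its polynomial factor orthogonal to constants. This holds because, for a square-free $\gamma$, level zero means degree zero, so the positive-level blocks in that sector are precisely those with $r\ge1$. Other sectors of $F_+$ are orthogonal after applying $X$, since $X$ preserves patterns, and so are other branches, since $X$ commutes with $\Pi_z$.
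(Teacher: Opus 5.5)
Your proposal is correct and follows essentially the same route as the paper. You reduce the pairing to sectors with square-free pattern and a fixed branch, center $Np_z$ against the mean-zero polynomial factor of $F_+$, and bound $N^2\Var_{a(\gamma)}p_z\le2$ via \cref{lem:dirichlet}(i). You then finish with Cauchy--Schwarz over sectors, and the second inequality is handled in the same one-line way.
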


\begin{proof}
\emph{Step 1: reduction to one sector.} Since $X$ preserves branches and patterns and $F_0$ lies in sectors with square-free patterns and constant polynomial factors, the pairing is a sum over such sectors. In one such sector, write $a=a(\gamma)$. Let $v$ be the constant coefficient of $\Pi_zF_0$ and $g$ the polynomial factor of $\Pi_zF_+$, with $\E_ag=0$ coefficientwise.
Then $\E_a\langle v,g\rangle=0$, so only the centered part of $p_z$ contributes:
\begin{align*}
 \bigl|N\,\E_a[p_z\langle v,g\rangle]\bigr|&=\bigl|N\,\E_a[(p_z-\E_ap_z)\langle v,g\rangle]\bigr|\\
 &\le N\sqrt{\operatorname{Var}_a(p_z)}\,\|v\|\,\|g\|_a .
\end{align*}

\emph{Step 2: the variance and the first claim.} By \cref{lem:dirichlet}(i), with $a_z\le2$ and $|a|\ge N$,
\begin{align*}
 N^2\operatorname{Var}_a(p_z)&=\frac{N^2a_z(|a|-a_z)}{|a|^2(|a|+1)}\\
 &\le\frac{2N^2}{|a|(|a|+1)}\le2 .
\end{align*}
Multiplying by $M_\gamma$ and applying the Cauchy--Schwarz inequality over the sectors gives the first claim.

For the second claim, we have
\begin{align*}
 \|F_+\|^2&=\sum_{k\ge1}\|P_kF\|^2\\
 &\le\sum_{k\ge1}k\|P_kF\|^2=\lambda(F)^2 . \qedhere
\end{align*}
\end{proof}

\begin{lemma}[The positive-level piece]\label{lem:haar-global}
For every $F\in\mathcal P_{\le d}(\mathcal H)$ with $F_+=F-P_0F$,
\[
 \langle F_+,XF_+\rangle\le72\,\lambda(F)^2 .
\]
\end{lemma}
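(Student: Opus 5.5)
The plan is to reduce to a single sector and branch, and then bound each block of positive level by a multiple of its level. The multiplication operator $X$ preserves branches and patterns, so the quadratic form $\langle F_+,XF_+\rangle$ is a sum over patterns $\gamma$ and outputs $z$ of
\[
 N M_\gamma\,\E_{a}\bigl[p_z\|g_{\gamma,z}\|^2\bigr],\qquad a=a(\gamma),
\]
where $g_{\gamma,z}$ is the polynomial factor of $\Pi_zF_+$ in the sector of $\gamma$. Here I use \eqref{eq:haar-norm-conversion}, and there is no cross term between sectors because phases average out. The target is therefore the sector-wise inequality
\[
 N\,\E_a\bigl[p_z\|g\|^2\bigr]\le 72\sum_{r}\bigl(\operatorname{rep}(\gamma)+r\bigr)\,\|\Pi^a_rg\|_a^2
\]
for the part $g$ of positive level. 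Summing over sectors then gives $72\lambda(F)^2$ directly from the definition of $\lambda$.

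Within a sector I apply \cref{lem:haar-subset} with $S=\{z\}$, which gives $N\,\E_a[p_z|h|^2]\le 8N(a_z+r)/|a|\cdot\|h\|_a^2$ for $h$ of degree at most $r$. Since $|a|\ge N$, this is at most $8(1+|\gamma_z|+r)\|h\|_a^2$. The factor $1+|\gamma_z|$ is at most $2+\operatorname{rep}(\gamma)$, so each single block of level $k=\operatorname{rep}(\gamma)+r\ge1$ satisfies the bound with a constant $c(k+2)\le 3ck$. The obstacle is that $g$ is a sum of blocks of different degrees $r$, and $X$ can couple them, since multiplying by $p_z$ raises the degree by one.

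To control this coupling I decompose $g=\sum_r g_r$ with $g_r\in\mathcal R^a_r$. By Cauchy--Schwarz in $L^2(p_z\,d\mathrm{Dir}(a))$, the cross terms satisfy $|\E_a[p_z\langle g_r,g_{r'}\rangle]|\le(\E_a p_z\|g_r\|^2)^{1/2}(\E_a p_z\|g_{r'}\|^2)^{1/2}$. This alone would lose a factor equal to the number of degrees. I would avoid that loss by noting that $p_z g_r$ has degree at most $r+1$ and so is orthogonal to $g_{r'}$ whenever $|r-r'|\ge2$. Only neighboring pairs $r'=r\pm1$ survive, and AM--GM on those gives at most a factor $3$ times the diagonal sum $\sum_r N\E_a[p_z\|g_r\|^2]$. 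Each diagonal term is at most $8(2+\operatorname{rep}(\gamma)+r)\|g_r\|_a^2\le 24(\operatorname{rep}(\gamma)+r)\|g_r\|_a^2$ whenever the level is at least $1$. The result is $3\cdot24=72$, which matches the constant in the statement and suggests this is the intended route.

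The step to check carefully is the case $r=0$ inside a sector with $\operatorname{rep}(\gamma)\ge1$. There the level is still at least $1$, so the factor $2+k\le 3k$ holds. I also need the orthogonality argument to apply coefficientwise to $\mathcal H$-valued polynomials, which holds because $\Pi_z$ and the orthonormal basis of $\mathcal H$ commute with everything involved.
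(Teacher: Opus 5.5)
Your proposal is correct and is essentially the paper's proof. It uses the same reduction to one branch and pattern via \eqref{eq:haar-norm-conversion}, the same observation that multiplication by $p_z$ couples only adjacent degrees $r$ (so that Cauchy--Schwarz and AM--GM cost a factor $3$), and the same application of \cref{lem:haar-subset} with $S=\{z\}$ together with $1+|\gamma_z|+r\le 2+k\le 3k$, which yields $3\cdot 24=72$.
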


\begin{proof}
\emph{Step 1: reduction to one sector.} Fix a branch $z$ and a pattern $\gamma$, and write $a=a(\gamma)$. Let the component of $\Pi_zF_+$ in this sector be
\[
 u_\gamma\sum_rh_r,\qquad h_r\in \mathcal R^a_r,
\]
with $\mathcal H$-valued coefficients treated coefficientwise. Since $X$ preserves branches and patterns, the quadratic form is a sum over sectors, and within a sector it is, by \eqref{eq:haar-norm-conversion},
\[
 N\,M_\gamma\,\E_a\Bigl[p_z\Bigl\|\sum_rh_r\Bigr\|^2\Bigr] .
\]

\emph{Step 2: multiplication by $p_z$.} Multiplication by $p_z$ connects only equal or adjacent degrees under the Dirichlet measure. Indeed, $p_zh_t$ has degree at most $t+1$, so $\langle h_r,p_zh_t\rangle_a=0$ for $r>t+1$, and symmetry gives the same conclusion for $t>r+1$. Write $w_r=\E_a[p_z\|h_r\|^2]$. Applying the Cauchy--Schwarz inequality to $\sqrt{p_z}h_r$ and $\sqrt{p_z}h_{r\pm1}$ gives $|\langle h_r,p_zh_{r\pm1}\rangle_a|\le\sqrt{w_rw_{r\pm1}}$. Since $2\sqrt{w_rw_{r+1}}\le w_r+w_{r+1}$, we obtain
\begin{align*}
 \E_a\Bigl[p_z\Bigl\|\sum_rh_r\Bigr\|^2\Bigr]&\le\sum_rw_r+2\sum_r\sqrt{w_rw_{r+1}}\\
 &\le3\sum_rw_r .
\end{align*}

\emph{Step 3: comparison with the level.} Each $w_r$ is the quantity bounded in \cref{lem:haar-subset} for the set $S=\{z\}$, whose parameter is $a_z=1+|\gamma_z|$, so that lemma and $|a|\ge N$ give
\[
 Nw_r\le8(1+|\gamma_z|+r)\|h_r\|_a^2 .
\]
Now compare with the level $k=\operatorname{rep}(\gamma)+r$ of the block $(\gamma,r)$, which is at least $1$ for every block of $F_+$. Since $|\gamma_z|\le1+\max(|\gamma_z|-1,0)\le1+\operatorname{rep}(\gamma)$, we have $1+|\gamma_z|+r\le2+k\le3k$.
Here the level does not count the first occurrence of $z$ in the pattern, as in \cref{prop:haar-level-zero}, but it counts every further occurrence.

Finally, summing over the sectors, which are orthogonal, converting back by $M_\gamma$, and using that $\Pi_z$ commutes with every $P_k$, we obtain
\[
 \langle F_+,XF_+\rangle\le24\sum_{k\ge1}3k\,\|P_kF\|^2=72\,\lambda(F)^2 . \qedhere
\]
\end{proof}

\begin{theorem}[Score bound]\label{thm:haar-extraction}
For every unit vector $F\in\mathcal P_{\le d}(\mathcal H)$,
\[
 \langle F,XF\rangle\le\beta_N+2\sqrt2\,\lambda(F)+72\,\lambda(F)^2 .
\]
\end{theorem}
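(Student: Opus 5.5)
The proof will assemble the three pieces already in place along the split \eqref{eq:haar-score-split}. Write $F_0=P_0F$ and $F_+=F-F_0$. These two parts are orthogonal, because the $P_k$ are orthogonal projections. Since $F$ is a unit vector, this gives $\|F_0\|^2+\|F_+\|^2=1$, and in particular $\|F_0\|\le1$.

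For the first term of \eqref{eq:haar-score-split}, the vector $F_0$ satisfies $F_0=P_0F_0$. \Cref{prop:haar-level-zero} therefore gives $\langle F_0,XF_0\rangle\le\beta_N\|F_0\|^2\le\beta_N$.

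For the cross term, \cref{lem:haar-cross} gives
\[
 2\operatorname{Re}\langle F_0,XF_+\rangle\le2\sqrt2\,\|F_0\|\,\|F_+\|\le2\sqrt2\,\lambda(F),
\]
using $\|F_0\|\le1$ together with the second claim of that lemma, $\|F_+\|\le\lambda(F)$.

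For the last term, \cref{lem:haar-global} gives $\langle F_+,XF_+\rangle\le72\,\lambda(F)^2$ directly.

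Adding the three bounds yields the theorem. One point deserves a check: $X$ raises degree, so $XF$ lies outside $\mathcal P_{\le d}$. However, the expansion \eqref{eq:haar-score-split} only uses bilinearity of the quadratic form $\langle\cdot,X\cdot\rangle$ on square-integrable functions, which is legitimate because $X$ is a bounded multiplication operator there ($Np_z\le N$).

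There is no real obstacle at this stage: all the substantive work lies in the cited lemmas, with \cref{lem:haar-global} (via \cref{lem:haar-subset}) the deepest. One could slightly sharpen the bound by keeping the factor $\|F_0\|^2=1-\|F_+\|^2$, but that is not needed here.
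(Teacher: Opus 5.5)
Your proposal is correct and is essentially the paper's proof: the same split \eqref{eq:haar-score-split}, with the three terms bounded by \cref{prop:haar-level-zero}, \cref{lem:haar-cross} (using $\|F_0\|\le1$ and $\|F_+\|\le\lambda(F)$), and \cref{lem:haar-global}. Your remark that $X$ is a bounded self-adjoint multiplication operator, which justifies the expansion, is a harmless addition that the paper leaves implicit.
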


\begin{proof}
We bound the three terms in \eqref{eq:haar-score-split} as follows.
\begin{enumerate}
\item The first term is at most $\beta_N\|F_0\|^2\le\beta_N$ by \cref{prop:haar-level-zero}.
\item The second term is at most $2\sqrt2\,\|F_0\|\,\|F_+\|\le2\sqrt2\lambda(F)$ by \cref{lem:haar-cross}.
\item The third term is at most $72\lambda(F)^2$ by \cref{lem:haar-global}. \qedhere
\end{enumerate}
\end{proof}

\subsection[Projection estimates for Dirichlet polynomials]{Projection estimates for Dirichlet polynomials}\label{sec:haar-lowering}

We next prove the estimates needed to bound the effect of a query on the level. Multiplying by an amplitude changes the pattern and therefore the Dirichlet inner product. Fix $a_i\ge1$. Weighting $\mathrm{Dir}(a)$ by $p_i$ and normalizing gives $\mathrm{Dir}(a+e_i)$. For $g\in\mathcal R^a_r$ with $r\ge1$, the posterior decomposition has only degrees $r$ and $r-1$. With weights $a_i/|a|$, the sum of the lower-degree squared norms equals $r/(|a|+2r-1)$ times $\|g\|_a^2$, and $r/(|a|+2r-1)\le r/N$ when $|a|\ge N$. We then use this identity to bound the degree-$r+1$ part of $\sum_i p_ih_i$ for polynomials $h_i$ of degree at most $r$.

\begin{lemma}[Two adjacent degrees]\label{lem:haar-adjacent}
Let $i\in[N]$.
\begin{enumerate}
\item[(i)] If $g\in \mathcal R^a_r$, then $g\in \mathcal R^{a+e_i}_r\oplus \mathcal R^{a+e_i}_{r-1}$.
\item[(ii)] Conversely, if $h\in \mathcal R^{a+e_i}_r$, then $p_ih\in \mathcal R^a_{r+1}\oplus \mathcal R^a_r$.
\end{enumerate}
\end{lemma}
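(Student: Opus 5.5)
Both parts reduce to degree counting plus the density conversion of \cref{lem:dirichlet}(ii), $\E_a[p_iu]=(a_i/|a|)\E_{a+e_i}[u]$. For~(i), take $g\in\mathcal R^a_r$. Since $g$ is a polynomial of degree at most $r$, its expansion under $\mathrm{Dir}(a+e_i)$ only involves $\mathcal R^{a+e_i}_k$ with $k\le r$. It therefore suffices to show that $g$ is $\langle\cdot,\cdot\rangle_{a+e_i}$-orthogonal to every polynomial $u$ of degree at most $r-2$. By the conversion,
\[
 \langle u,g\rangle_{a+e_i}=\frac{|a|}{a_i}\,\E_a\bigl[p_i\,\bar u\,g\bigr]=\frac{|a|}{a_i}\,\langle p_iu,g\rangle_a .
\]
Here $p_iu$ has degree at most $r-1$, so the right side vanishes because $g\in\mathcal R^a_r$. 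Hence $g\in\mathcal R^{a+e_i}_r\oplus\mathcal R^{a+e_i}_{r-1}$.

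For~(ii), take $h\in\mathcal R^{a+e_i}_r$. Then $p_ih$ has degree at most $r+1$, so it lies in $\bigoplus_{k\le r+1}\mathcal R^a_k$. To exclude the degrees $k\le r-1$, we test against an arbitrary polynomial $u$ of degree at most $r-1$:
\[
 \langle u,p_ih\rangle_a=\E_a[p_i\bar u h]=\frac{a_i}{|a|}\langle u,h\rangle_{a+e_i}=0,
\]
because $h$ is orthogonal, for $\mathrm{Dir}(a+e_i)$, to all polynomials of degree at most $r-1$. This gives $p_ih\in\mathcal R^a_{r+1}\oplus\mathcal R^a_r$.

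One technical point needs care: degrees must be counted on the simplex, where $\sum_jp_j=1$ lets a polynomial have several representations. This causes no trouble. Multiplying by $p_i$ raises the minimal degree of a representation by at most one, and the spaces $\mathcal R^a_k$ are defined intrinsically through the inner product on functions on the simplex. The $\mathcal V$-valued case follows coefficientwise, as the paper does elsewhere.

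There is no real obstacle here. The only subtle step is the bookkeeping of which side of the identity carries the weight $p_i$ in each direction, and the two displayed identities handle that.
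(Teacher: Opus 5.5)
Your proposal is correct and matches the paper's proof essentially line for line. In both, the degree bound puts the polynomial in the span of the lower-degree spaces. The density conversion of \cref{lem:dirichlet}(ii) then transfers orthogonality between $\mathrm{Dir}(a)$ and $\mathrm{Dir}(a+e_i)$: against polynomials of degree at most $r-2$ in part~(i), and at most $r-1$ in part~(ii).
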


\begin{proof}
For the first claim, $g$ has degree at most $r$, so it suffices to show that $g$ is orthogonal under $\mathrm{Dir}(a+e_i)$ to every polynomial $v$ of degree at most $r-2$. By \cref{lem:dirichlet}(ii),
\[
 \langle v,g\rangle_{a+e_i}=\frac{|a|}{a_i}\E_a[p_i\bar vg]=0,
\]
because $p_i\bar v$ has degree at most $r-1$ and $g\in \mathcal R^a_r$.

For the second claim, $p_ih$ has degree at most $r+1$, and for $v$ of degree at most $r-1$,
\[
 \langle v,p_ih\rangle_a=\frac{a_i}{|a|}\langle v,h\rangle_{a+e_i}=0
\]
because $h\in \mathcal R^{a+e_i}_r$.
\end{proof}

\begin{theorem}[The projection onto the lower degree]\label{thm:haar-projection}
Let $g\in \mathcal R^a_r$ with $r\ge1$ and let $i\in[N]$. Then
\[
 \Pi^{a+e_i}_{\le r-1}\,g=\frac{T_ig+rg}{|a|+2r-1}\in \mathcal R^{a+e_i}_{r-1} .
\]
\end{theorem}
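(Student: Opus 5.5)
By \cref{lem:haar-adjacent}(i), $g=g_r+g_{r-1}$ with $g_r\in\mathcal R^{a+e_i}_r$ and $g_{r-1}=\Pi^{a+e_i}_{\le r-1}g\in\mathcal R^{a+e_i}_{r-1}$. The plan is to identify $g_{r-1}$ using the eigenvalue description of \cref{lem:haar-energy}(ii) applied to the posterior parameter $a+e_i$. The operators $L_a$ and $L_{a+e_i}$ differ only in the first-order part. By \eqref{eq:haar-La-tangent},
\[
 L_{a+e_i}=L_a+T_i ,
\]
since changing $a_i$ to $a_i+1$ adds exactly $T_i$. Because $g\in\mathcal R^a_r$, \cref{lem:haar-energy}(ii) gives $L_ag=-r(|a|+r-1)g$. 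Hence
\[
 L_{a+e_i}g=-r(|a|+r-1)g+T_ig .
\]

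On the other hand, $L_{a+e_i}$ acts on the two components by its eigenvalues, with $|a+e_i|=|a|+1$:
\[
 L_{a+e_i}g=-r(|a|+r)g_r-(r-1)(|a|+r-1)g_{r-1} .
\]
Substitute $g_r=g-g_{r-1}$ and compare the two expressions for $L_{a+e_i}g$. The coefficient of $g$ gives
\[
 -r(|a|+r)g+r(|a|+r-1)g=-rg ,
\]
so $T_ig=-rg+cg_{r-1}$, where
\[
 c=r(|a|+r)-(r-1)(|a|+r-1)=|a|+2r-1 .
\]
Rearranging yields
\[
 g_{r-1}=\frac{T_ig+rg}{|a|+2r-1},
\]
which is the claim. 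Membership in $\mathcal R^{a+e_i}_{r-1}$ is already given by \cref{lem:haar-adjacent}(i).

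The only step needing care is the identity $L_{a+e_i}-L_a=T_i$. I would verify it directly from the tangent form \eqref{eq:haar-La-tangent}: $a$ enters only through $\sum_j a_jT_j$, so raising $a_i$ by one adds $T_i$. This uses the polynomial/simplex consistency already established for $T_i$.

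A second point is that the eigenvalue relation must hold as an identity of functions on the simplex, in $L^2(\mathrm{Dir}(a+e_i))$. This is fine because the tangent form makes $L$ well defined on the simplex, and the two Dirichlet measures have the same support.

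The main obstacle is ensuring that $|a|+2r-1\neq0$ and that $r\ge1$, so that $g_{r-1}$ is the full lower-degree projection rather than a trivial one. Both follow from $a_i\ge1$ and $r\ge1$.
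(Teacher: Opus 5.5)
Your proof is correct, and it takes a different route from the paper's.

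The paper computes directly. Euler's identity shows that $T_ig+rg$ has degree at most $r-1$. The integration-by-parts formula \eqref{eq:haar-ibp}, with Euler's identity again to handle the leftover term, gives the pairing identity $\E_a[p_i(T_ig+rg)v]=(|a|+2r-1)\E_a[p_igv]$ for all $v$ of degree at most $r-1$. Converting $\E_a[p_i\,\cdot\,]$ to the posterior then shows that $g-(T_ig+rg)/(|a|+2r-1)$ is orthogonal to all polynomials of degree at most $r-1$. \Cref{lem:haar-adjacent} enters only at the end, to place the projection in $\mathcal R^{a+e_i}_{r-1}$.

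You instead use \cref{lem:haar-adjacent}(i) at the start, to reduce $g$ to two posterior components. You then compare two spectral expressions for $L_{a+e_i}g$, using $L_{a+e_i}=L_a+T_i$ and \cref{lem:haar-energy}(ii) at both parameters $a$ and $a+e_i$. The identity $L_{a+e_i}=L_a+T_i$ even holds for the polynomial formula \eqref{eq:haar-energy-def}: raising $a_i$ and $|a|$ by one adds exactly $\partial_i-\sum_jp_j\partial_j$.

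What your route buys: it is shorter, and it needs no representative or homogeneous splitting at this stage. The degree bound on $T_ig+rg$ comes for free, since it equals a multiple of $g_{r-1}$. It also explains the constant $|a|+2r-1$ as the gap $r(|a|+r)-(r-1)(|a|+r-1)$ between consecutive eigenvalues of $L_{a+e_i}$.

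What the paper's route buys: it avoids the spectral decomposition of $L_{a+e_i}$ and produces the pairing identity \eqref{eq:haar-key-pairing} explicitly. Both routes ultimately rest on \cref{lem:haar-ibp}, since \cref{lem:haar-energy} is proved from it.

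Your remark about $|a|+2r-1\neq0$ is not a real obstacle. The point that does need the care you gave it is that the eigenvalue relations hold as identities of polynomial functions on the simplex. This holds because the Dirichlet densities are positive on its interior.
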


\begin{proof}
\emph{Step 1: the degree.} Choose a representative of $g$ of degree at most $r$ and split off its homogeneous part $g_r$ of degree $r$, writing $g=g_r+g_{<r}$. Euler's identity $\sum_jp_j\partial_jg_r=rg_r$ shows that the top parts cancel in
\[
 T_ig+rg=\partial_ig+\Bigl(rg_{<r}-\sum_jp_j\partial_jg_{<r}\Bigr),
\]
which therefore has degree at most $r-1$.

\emph{Step 2: the pairing with polynomials of lower degree.} We next show that for every polynomial $v$ of degree at most $r-1$,
\begin{equation}\label{eq:haar-key-pairing}
 \E_a\bigl[p_i(T_ig+rg)v\bigr]=(|a|+2r-1)\,\E_a[p_igv] .
\end{equation}
By \eqref{eq:haar-ibp},
\begin{align*}
 \E_a[p_i(T_ig)v]&=|a|\,\E_a[p_igv]-a_i\E_a[gv]\\
 &\quad-\E_a[g\,p_i\partial_iv]+\E_a\Bigl[g\,p_i\sum_jp_j\partial_jv\Bigr].
\end{align*}
The second and third terms vanish, because $v$ and $p_i\partial_iv$ have degree at most $r-1$ and $g\in \mathcal R^a_r$.

In the last term, write $w=\sum_jp_j\partial_jv-(r-1)v$. If $v$ is split into its homogeneous part of degree $r-1$ and the rest, then Euler's identity shows that $w$ has degree at most $r-2$. Hence $p_iw$ has degree at most $r-1$ and is orthogonal to $g$, and the last term equals $(r-1)\E_a[p_igv]$. Adding $r\E_a[p_igv]$ to both sides gives \eqref{eq:haar-key-pairing}.

\emph{Step 3: the projection.} Finally, by \cref{lem:dirichlet}(ii), $\E_a[p_i\,\cdot\,]$ is $a_i/|a|$ times the expectation under $\mathrm{Dir}(a+e_i)$, so \eqref{eq:haar-key-pairing} with $\bar v$ in place of $v$ states that
\[
 g-\frac{T_ig+rg}{|a|+2r-1}
\]
is orthogonal under $\mathrm{Dir}(a+e_i)$ to every polynomial of degree at most $r-1$. Since this quotient has degree at most $r-1$, it is the orthogonal projection of $g$ onto that space.

Moreover, the projection lies in $\mathcal R^{a+e_i}_{r-1}$, and not only among the polynomials of degree at most $r-1$, because, by \cref{lem:haar-adjacent}, the polynomial $g$ has no posterior components of degree $r-2$ or lower.
\end{proof}

\begin{theorem}[Average squared norm of the lower-degree projection]\label{thm:lowering}
For every $r\ge1$ and $g\in \mathcal R^a_r$,
\[
 \sum_{i=1}^N\frac{a_i}{|a|}\,\bigl\|\Pi^{a+e_i}_{\le r-1}\,g\bigr\|_{a+e_i}^2=\frac r{|a|+2r-1}\,\|g\|_a^2 .
\]
\end{theorem}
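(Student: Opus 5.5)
The plan is to compute each summand through the explicit formula of \cref{thm:haar-projection}, so that the sum over $i$ collapses to an expression on the prior $\mathrm{Dir}(a)$. Write $c=|a|+2r-1$ and $\pi_i=\Pi^{a+e_i}_{\le r-1}g=(T_ig+rg)/c$. We treat $g$ as scalar; the $\mathcal V$-valued case follows coefficientwise.

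First, since $\pi_i$ is an orthogonal projection of $g$ in $L^2(\mathrm{Dir}(a+e_i))$, its squared norm equals its inner product with $g$:
\[
 \|\pi_i\|_{a+e_i}^2=\langle\pi_i,g\rangle_{a+e_i}.
\]
Next we convert this pairing back to the prior with \cref{lem:dirichlet}(ii), which states $(a_i/|a|)\E_{a+e_i}[h]=\E_a[p_ih]$. This gives
\[
 \frac{a_i}{|a|}\,\|\pi_i\|_{a+e_i}^2=\E_a\bigl[p_i\,\overline{\pi_i}\,g\bigr]=\frac1c\,\E_a\bigl[p_i\,\overline{(T_ig+rg)}\,g\bigr].
\]

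Then we sum over $i$. The operators $T_i$ of \eqref{eq:haar-tangent} have real coefficients, so $\overline{T_ig}=T_i\bar g$. The identity $\sum_ip_iT_i=0$, applied to $\bar g$, therefore removes all of the derivative terms. What remains is
\[
 \frac rc\,\E_a\Bigl[\Bigl(\sum_ip_i\Bigr)|g|^2\Bigr]=\frac r{|a|+2r-1}\,\|g\|_a^2,
\]
because $\sum_ip_i=1$ on the simplex.

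No step here is a real obstacle, since the substantive work is already contained in \cref{thm:haar-projection}. The only points that need care are the following. The identity $\sum_ip_iT_i=0$ must be applied pointwise on the simplex, where $T_i$ is well defined. The passage from the weighted $\mathrm{Dir}(a)$ expectation to the $\mathrm{Dir}(a+e_i)$ norm must be justified by \cref{lem:dirichlet}(ii). Both are immediate from results stated earlier in the paper.
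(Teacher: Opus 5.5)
Your proof is correct, and it takes a shorter route than the paper's. Both arguments start from \cref{thm:haar-projection} and the conversion of \cref{lem:dirichlet}(ii).

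The paper converts the squared norm $\|\Pi^{a+e_i}_{\le r-1}g\|_{a+e_i}^2$ itself, which gives $(|a|+2r-1)^{-2}\,\E_a\bigl[\sum_ip_i|T_ig+rg|^2\bigr]$. Expanding the square produces $\Gamma(g,g)+r^2|g|^2$. The paper then needs the energy identity $\E_a\Gamma(g,g)=r(|a|+r-1)\|g\|_a^2$ from \cref{lem:haar-energy}(i),(ii), that is, the symmetry of $L_a$ and its eigenvalue on $\mathcal R^a_r$. Only after that does the cancellation $r(|a|+r-1)+r^2=r(|a|+2r-1)$ give the result.

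You instead use $\|\Pi h\|^2=\langle\Pi h,h\rangle$ for an orthogonal projection. This makes the expression linear rather than quadratic in $T_ig$, so the derivative term vanishes pointwise by $\sum_ip_iT_i\bar g=0$. You need no integration by parts beyond what is already inside \cref{thm:haar-projection}, and you do not use \cref{lem:haar-energy} at all.

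The paper's route ties the identity to the spectral structure of $L_a$. Conversely, comparing the two computations shows that your argument, combined with \cref{thm:haar-projection}, reproves $\E_a\Gamma(g,g)=r(|a|+r-1)\|g\|_a^2$ for $g\in\mathcal R^a_r$. The saving is local to this theorem, since the paper still needs \cref{lem:haar-energy} for \cref{lem:haar-subset}.
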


\begin{proof}
\emph{Step 1: reduction to a prior expectation.} Write $b=|a|+2r-1$. If we substitute \cref{thm:haar-projection} and convert each posterior norm to a prior expectation by \cref{lem:dirichlet}(ii), then the weight $a_i/|a|$ cancels against the conversion factor $|a|/a_i$:
\[
 \sum_i\frac{a_i}{|a|}\bigl\|\Pi^{a+e_i}_{\le r-1}g\bigr\|_{a+e_i}^2=\frac1{b^2}\,\E_a\Bigl[\sum_ip_i|T_ig+rg|^2\Bigr].
\]

On the simplex, expanding the square gives
\begin{align*}
 \sum_ip_i|T_ig+rg|^2&=\Gamma(g,g)+2r\operatorname{Re}\Bigl(\bar g\sum_ip_iT_ig\Bigr)+r^2|g|^2\\
 &=\Gamma(g,g)+r^2|g|^2,
\end{align*}
because $\sum_ip_iT_ig=0$ and $\sum_ip_i=1$.

\emph{Step 2: the expectation.} By \cref{lem:haar-energy}(i) and (ii),
\[
 \E_a\Gamma(g,g)=r(|a|+r-1)\|g\|_a^2 ,
\]
and $r(|a|+r-1)+r^2=rb$. Dividing by $b^2$ gives the theorem.
\end{proof}

At $r=1$ the identity can be verified directly. A centered linear statistic $g=\sum_jc_jp_j$ with $\sum_ja_jc_j=0$ lies in $\mathcal R^a_1$, its projection onto the constants under the posterior $\mathrm{Dir}(a+e_i)$ is its posterior mean $c_i/(|a|+1)$, and \cref{lem:dirichlet}(i) gives
\[
 \sum_i\frac{a_i}{|a|}\Bigl|\frac{c_i}{|a|+1}\Bigr|^2=\frac1{|a|+1}\|g\|_a^2 .
\]

\begin{lemma}[Sums over outcomes]\label{lem:haar-coherent}
Let $h_1,\dots,h_N$ be polynomials in $p$ of degree at most $r$, and let $Y=\sum_{i=1}^Np_ih_i$. Then
\[
 \bigl\|\Pi^a_{r+1}Y\bigr\|_a^2\le\frac{r+1}{|a|+2r+1}\sum_{i=1}^N\E_a\bigl[p_i|h_i|^2\bigr].
\]
\end{lemma}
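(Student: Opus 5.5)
This is a duality argument. We test $\Pi^a_{r+1}Y$ against the space $\mathcal R^a_{r+1}$ and move each factor $p_i$ onto the test polynomial. The change of measure in \cref{lem:dirichlet}(ii) turns this into the lowering identity of \cref{thm:lowering}, applied at degree $r+1$.

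The first step is to write the norm as a supremum,
\[
 \bigl\|\Pi^a_{r+1}Y\bigr\|_a=\sup\bigl\{|\langle g,Y\rangle_a|:\ g\in\mathcal R^a_{r+1},\ \|g\|_a=1\bigr\}.
\]
For $\mathcal H$-valued coefficients we either work coefficientwise or let $g$ take values in the same space; the argument is identical.

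For fixed $g$ we then compute
\[
 \langle g,Y\rangle_a=\sum_i\E_a[p_i\,\bar g\,h_i]=\sum_i\frac{a_i}{|a|}\langle g,h_i\rangle_{a+e_i},
\]
using \cref{lem:dirichlet}(ii) (applied with $\bar g$). Since $h_i$ has degree at most $r$, we may replace $g$ by its posterior projection $\Pi^{a+e_i}_{\le r}g$. By \cref{lem:haar-adjacent}(i), this projection is the component of $g$ in $\mathcal R^{a+e_i}_r$, and \cref{thm:haar-projection} identifies it as $(T_ig+(r+1)g)/(|a|+2r+1)$.

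Next comes Cauchy--Schwarz, first within each posterior space and then over $i$ with the weights $a_i/|a|$:
\[
 |\langle g,Y\rangle_a|\le\Bigl(\sum_i\frac{a_i}{|a|}\bigl\|\Pi^{a+e_i}_{\le r}g\bigr\|_{a+e_i}^2\Bigr)^{1/2}\Bigl(\sum_i\frac{a_i}{|a|}\|h_i\|_{a+e_i}^2\Bigr)^{1/2}.
\]
The second factor equals $\bigl(\sum_i\E_a[p_i|h_i|^2]\bigr)^{1/2}$, again by \cref{lem:dirichlet}(ii). For the first factor, \cref{thm:lowering} with $r+1$ in place of $r$ gives exactly $\frac{r+1}{|a|+2r+1}\|g\|_a^2$. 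Squaring and taking the supremum over unit $g$ yields the claim.

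The main point to check is that \cref{thm:lowering} concerns the projection onto degree at most $(r+1)-1=r$, which is exactly the degree of the $h_i$. I expect no obstacle beyond this bookkeeping. The degree-$(r+1)$ part of $Y$ is detected only by $g\in\mathcal R^a_{r+1}$, and lower-degree parts of $g$ are not needed since we project onto $\mathcal R^a_{r+1}$ alone.
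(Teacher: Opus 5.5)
Your proposal is correct and follows the paper's proof essentially step for step: the supremum over unit $g\in\mathcal R^a_{r+1}$, the change of measure from \cref{lem:dirichlet}(ii), the insertion of $\Pi^{a+e_i}_{\le r}$, Cauchy--Schwarz in the weighted direct sum of the posterior spaces, and \cref{thm:lowering} at degree $r+1$ for the first factor. The explicit identification of the projection via \cref{thm:haar-projection,lem:haar-adjacent} is correct but not needed, since only the norm identity of \cref{thm:lowering} enters.
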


\begin{proof}
\emph{Step 1: the pairing in the posterior spaces.} Since
\[
 \bigl\|\Pi^a_{r+1}Y\bigr\|_a=\sup_{g\in \mathcal R^a_{r+1},\ \|g\|_a=1}|\langle g,Y\rangle_a| ,
\]
we fix such a $g$. Converting each term to the posterior by \cref{lem:dirichlet}(ii), and inserting the projection $\Pi^{a+e_i}_{\le r}$, which is allowed because $h_i$ has degree at most $r$, we obtain
\begin{align*}
 \langle g,Y\rangle_a&=\sum_i\E_a[p_i\bar gh_i]\\
 &=\sum_i\frac{a_i}{|a|}\langle g,h_i\rangle_{a+e_i}\\
 &=\sum_i\frac{a_i}{|a|}\bigl\langle\Pi^{a+e_i}_{\le r}g,h_i\bigr\rangle_{a+e_i}.
\end{align*}

\emph{Step 2: the two factors.} By the Cauchy--Schwarz inequality in the weighted direct sum of the posterior spaces,
\[
 |\langle g,Y\rangle_a|^2\le\Bigl(\sum_i\frac{a_i}{|a|}\bigl\|\Pi^{a+e_i}_{\le r}g\bigr\|_{a+e_i}^2\Bigr)\Bigl(\sum_i\frac{a_i}{|a|}\|h_i\|_{a+e_i}^2\Bigr).
\]

The first factor equals $(r+1)/(|a|+2r+1)$ by \cref{thm:lowering} with $r+1$ in place of $r$. The second factor equals $\sum_i\E_a[p_i|h_i|^2]$ by \cref{lem:dirichlet}(ii).
\end{proof}

\subsection{The effect of one query on the level}\label{sec:haar-growth}

We now use \cref{lem:haar-subset,lem:haar-coherent} to bound the effects of appending $\ket\phi$ and applying $\bra\phi$. Each map changes the level by at most one. Bounds on the components that move between levels then bound the commutator with $C^{1/2}$, and hence the change in progress.

\paragraph{The oracle in terms of two maps.}
Let $\mathcal W$ be the space of all registers other than the query register $\mathsf Q$, so that $\mathcal V=\mathcal W\otimes\mathsf Q$ after reordering. Define, pointwise in $\phi$, the maps $L_+:\mathcal W\to\mathcal W\otimes\mathsf Q$ and $L_-:\mathcal W\otimes\mathsf Q\to\mathcal W$ by
\begin{align*}
 L_+G&=G\otimes\ket\phi=\sum_i\alpha_i\,G\otimes\ket i,\\
 L_-F&=\sum_i\overline{\alpha_i}\,F_i\qquad\text{for }F=F_\bot\otimes\ket\bot+\sum_iF_i\otimes\ket i,
\end{align*}
together with $J G=G\otimes\ket\bot$ and its adjoint $J^\dagger F=F_\bot$. The maps $L_+$ and $J$ are isometries pointwise in $\phi$, and $L_-=L_+^\dagger$ both pointwise and for the inner product of vector-valued polynomials.

For $N=2$, consider $G_\pm=\alpha_1\ket1\pm\alpha_2\ket2$. Both are normalized and have level zero, but
\[
 L_-G_+=p_1+p_2=1,\qquad L_-G_-=p_1-p_2=2p_1-1.
\]
Since $p_1$ is uniform on $[0,1]$, the first output has level zero and the second has squared level-one norm $\int_0^1(2p-1)^2\,dp=1/3$, attaining \cref{lem:haar-coherent} at $r=0$. The input components are orthogonal in the query register, but their images under $L_-$ are added in the same space. We must therefore bound their sum before taking its norm. We use $L_-$ to analyze the oracle; the full query remains unitary.

We write
\[
 K_\phi=\frac{J-L_+}{\sqrt2},\qquad K_\phi^\dagger=\frac{J^\dagger-L_-}{\sqrt2} .
\]
The map $K_\phi$ appends $\ket{\eta_\phi}=(\ket\bot-\ket\phi)/\sqrt2$ from \eqref{eq:prelim-canonical-oracle}, and its adjoint applies $\bra{\eta_\phi}$. Thus $K_\phi K_\phi^\dagger$ is the projection onto $\ket{\eta_\phi}$ in the query register, and
\begin{equation}\label{eq:haar-oracle-ER}
 O_\phi=I-2K_\phi K_\phi^\dagger .
\end{equation}
The maps $J$ and $J^\dagger$ commute with every $P_k$, by the third fact after \cref{def:haar-level}, so the analysis of a query reduces to that of $L_+$ and $L_-$.

\paragraph{The effect of one factor on a block.}
Fix a block $u_\gamma h$ with $h\in\mathcal R^a_r$, where $a=a(\gamma)$ and $k=\operatorname{rep}(\gamma)+r$. The map $L_-$ multiplies by $\overline{\alpha_i}$ at outcome $i$. This changes $\gamma_i$ to $\gamma_i-1$ and the polynomial factor as in \cref{lem:haar-adjacent}. The table lists the resulting degrees under the Dirichlet measure and levels.
\begin{center}
\small\renewcommand{\arraystretch}{1.15}
\begin{tabular}{@{}cccccc@{}}
\toprule
input entry $\gamma_i$ & new entry & change of $\operatorname{rep}$ & measure & degrees & levels\\
\midrule
$\gamma_i\le-1$ & $\gamma_i-1$ & $+1$ & $a\to a+e_i$ & $r,\ r-1$ & $k+1,\ k$\\
$\gamma_i=0$ & $-1$ & $0$ & $a\to a+e_i$ & $r,\ r-1$ & $k,\ k-1$\\
$\gamma_i=1$ & $0$ & $0$ & $a\to a-e_i$ & $r,\ r+1$ & $k,\ k+1$\\
$\gamma_i\ge2$ & $\gamma_i-1$ & $-1$ & $a\to a-e_i$ & $r,\ r+1$ & $k-1,\ k$\\
\bottomrule
\end{tabular}
\end{center}
In the first two rows the magnitude $|\gamma_i|$ grows, the new Dirichlet measure is the posterior $\mathrm{Dir}(a+e_i)$, and part~(i) of \cref{lem:haar-adjacent} splits $h$ into degrees $r$ and $r-1$. In the last two rows the magnitude shrinks, the factor $\alpha_i\overline{\alpha_i}=p_i$ appears, and part~(ii) of the lemma splits $p_ih$ into degrees $r$ and $r+1$ for $\mathrm{Dir}(a-e_i)$. Multiplication by $\alpha_i$, which is what $L_+$ does, obeys the same table with the signs of the entries reversed.

Two consequences of the table are used repeatedly. First, a single factor changes the level by at most one. Second, the level increases only in the first and third rows, that is, only when the new factor falls on an outcome that already appears in the pattern. Combined with \cref{lem:haar-subset}, which bounds the weight that a polynomial of degree $d$ places on the occupied outcomes, this gives the rate $(d+1)/N$ in \cref{lem:haar-up}, and hence the rate
$\sqrt{(d+1)/N}$ in \cref{thm:haar-one-query}.

For $L_+$, we use \cref{lem:haar-subset}. For $L_-$, repeated unpaired factors use the same estimate, while paired conjugate factors require \cref{lem:haar-coherent}. As the $G_\pm$ example shows, terms contributing to the same output block must be added before taking norms.

\begin{lemma}[Raising the level by one]\label{lem:haar-up}
For $L\in\{L_+,L_-\}$ and every polynomial $F\in\mathcal P_{\le d}$ in the domain of $L$,
\[
 \sum_{k\ge0}\bigl\|P_{k+1}LP_kF\bigr\|^2\le36\,\frac{d+1}N\,\|F\|^2 .
\]
\end{lemma}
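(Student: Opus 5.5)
We would prove \cref{lem:haar-up} block by block, using the table above. Expand $F=\sum_{\gamma,r}u_\gamma h_{\gamma,r}$ with $h_{\gamma,r}\in\mathcal R^{a(\gamma)}_r$, so that $P_kF$ is the sum of the blocks with $\operatorname{rep}(\gamma)+r=k$. By the table, a single factor at outcome $i$ raises the level in only two situations. In the \emph{repeat} case the factor has the same sign as a nonzero entry $\gamma_i$; then $\operatorname{rep}$ grows by one, and the level-$(k+1)$ output is the degree-$r$ part of $h$ for $\mathrm{Dir}(a+e_i)$. In the \emph{pairing} case $|\gamma_i|=1$ with the opposite sign; then the factor $p_i$ appears, and the level-$(k+1)$ output is the degree-$(r+1)$ part of $p_ih$ for $\mathrm{Dir}(a-e_i)$. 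In both cases $i$ lies in $S(\gamma)=\{i:\gamma_i\ne0\}$. By \eqref{eq:haar-density-conversion}, the squared norm of the whole product $\alpha_iu_\gamma h$ (or $\overline{\alpha_i}u_\gamma h$) is $M_\gamma\,\E_{a(\gamma)}[p_i|h|^2]$. This quantity bounds the squared norm of any projection of it.

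\emph{The map $L_+$.} Outputs with different $i$ lie in orthogonal states $\ket i$ of the query register, so we may work one $i$ at a time. For fixed $i$ and $k$, distinct source blocks $(\gamma,r)$ give orthogonal targets: the pattern $\gamma+e_i$ determines $\gamma$, and the degree determines $r$ within $P_k$. Hence
\[
\sum_k\|P_{k+1}L_+P_kF\|^2\le\sum_{\gamma,r}M_\gamma\,\E_{a(\gamma)}\bigl[p(S(\gamma))\,|h_{\gamma,r}|^2\bigr].
\]
We then apply \cref{lem:haar-subset} with $S=S(\gamma)$. Since $a(S)\le|S|+|\gamma|_1\le2d$, $r\le d$, and $|a|\ge N$, each term is at most $8\cdot3d\,M_\gamma\|h_{\gamma,r}\|^2_{a(\gamma)}/N$. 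By \eqref{eq:haar-norm-conversion}, $M_\gamma\|h_{\gamma,r}\|^2_{a(\gamma)}=\|u_\gamma h_{\gamma,r}\|^2$, so summing over blocks gives the bound with constant $24$ (a $d+1$ is needed when $d=0$).

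\emph{The map $L_-$ is the main obstacle.} Here all components $F_i$ are summed into the same space, and, as the $G_\pm$ example shows, norms cannot simply be added. We split the level-raising part into its repeat and pairing contributions and bound each, losing at most a factor $2$ from $\|x+y\|^2\le2\|x\|^2+2\|y\|^2$.

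For the \emph{pairing} part, fix a target pattern $\delta$ and degree $r+1$. Its contributions have the form $u_\delta\,\Pi^{a(\delta)}_{r+1}\sum_{i:\delta_i=0}p_ih_i$, where the $h_i$ come from sources $\delta+e_i$ of degree $r$. This is exactly the situation of \cref{lem:haar-coherent}. We would apply that lemma, then convert back with \eqref{eq:haar-density-conversion} and \cref{lem:haar-subset} restricted to $S(\gamma)$. This should give $O((r+1)/N)\sum_i\|\text{source}\|^2$; if instead we bound the source weights crudely by $1$, we only get $O((d+1)/N)$.

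For the \emph{repeat} part, a target $\delta$ receives $u_\delta\sum_i\Pi h_i$ over outcomes $i$ with $|\delta_i|\ge2$. There are at most $d$ such $i$, so a naive Cauchy--Schwarz loses a factor $d$. To avoid this, we would first try to rewrite the sum, again via \cref{lem:dirichlet}(ii), as a pairing in the weighted direct sum of posterior spaces, as in the proof of \cref{lem:haar-coherent}. The weights $a(\delta)_i/|a(\delta)|$ on the occupied outcomes then sum to at most $3d/N$. If this fails, we would fall back on duality: bounding $\|P_{k+1}L_-P_kF\|$ is the same as bounding the down-moving part of $L_+$ on level $k+1$, and that part can be controlled block by block with the same subset estimate.

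Collecting the constants, each part should come out with at most $18(d+1)/N$, which gives the stated $36$.
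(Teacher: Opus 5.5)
Your treatment of $L_+$ is the paper's argument (with $a(S)+r\le2(|\gamma|_1+2r)\le2d$ one gets $16d/N$). Your pairing part of $L_-$ is also right, except that \cref{lem:haar-subset} plays no role there. \Cref{lem:haar-coherent} followed by \eqref{eq:haar-density-conversion} gives directly $\frac{r+1}{|a|+2r+1}\sum_i\|u_{\delta+e_i}h_i\|^2$. The degree constraint $|\delta|_1+1+2r\le d$ on the sources then gives $r+1\le(d+1)/2$, which is already the required rate.

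The gap is in the repeat part of $L_-$, exactly the step you flag (there only $\delta_i\le-2$ occurs). Pair the target with a unit $g\in\mathcal R^a_r$, $a=a(\delta)$. Apply Cauchy--Schwarz so that one factor is the source mass $\sum_{i\in S}\E_a[|h_i|^2/p_i]=M_\delta^{-1}\sum_{i\in S}\|u_{\delta+e_i}h_i\|^2$. The other factor is then
\[
 \sum_{i\in S}\frac{a_i}{|a|}\,\|g\|_{a+e_i}^2=\E_a\bigl[p(S)\,|g|^2\bigr],
\]
not the sum $a(S)/|a|$ of the weights. The posterior norms $\|g\|_{a+e_i}$ are not bounded by $\|g\|_a$: a normalized degree-$r$ orthogonal polynomial in $p_i$ alone has $\E_a[p_i|g|^2]$ of order $(a_i+r)/|a|$.

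In \cref{lem:haar-coherent} the analogous factor is evaluated exactly by \cref{thm:lowering}. That works because the partner of $g$ in each posterior space is a polynomial of lower degree than $g$, so a projection can be inserted. Here the partner is $h_i/p_i$, and no such projection is available. What closes the step is \cref{lem:haar-subset} applied to this factor, which gives $8(a(S)+r)/|a|\le16(d+1)/N$; the $r$ term is precisely what the weight count misses.

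The paper does the same thing from the source side. The pointwise inequality $|\sum_{i\in S}h_i|^2\le p(S)\sum_{i\in S}|h_i|^2/p_i$, together with \eqref{eq:haar-density-conversion} in the direction of decreasing magnitude, bounds the squared target norm by $\sum_{i\in S}M_{\delta+e_i}\E_{a-e_i}[p(S)|h_i|^2]$. \Cref{lem:haar-subset} is then applied to each source in its own measure $\mathrm{Dir}(a-e_i)$. With the pairing bound this gives $2(16+\tfrac12)(d+1)/N\le36(d+1)/N$. Note that after the factor $2$ you need the two parts together, not each, to be at most $18(d+1)/N$.

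Your duality fallback also works and is a genuinely different route, but only with the right ingredient on the dual side.
\begin{itemize}
\item Dualizing the repeat piece alone gives the part of $L_+$ that multiplies an entry $\gamma_i\le-2$ and drops one level. Its outputs for distinct $i$ are orthogonal in the query register, so the subset estimate handles it block by block.
\item If you dualize the whole raising part of $L_-$, the adjoint $\sum_kP_kL_+P_{k+1}$ also contains the lowering at unoccupied outcomes, $\gamma_i=0$. There the subset estimate is useless and \cref{thm:lowering} is needed.
\end{itemize}
In the second case, with $S_-=\{i:\gamma_i\le-2\}$, each block $(\gamma,r)$ of $G\in\mathcal P_{\le d+1}$ costs at most $\bigl(r+8(a(S_-)+r)\bigr)/N\le16.5(d+1)/N$ times its squared norm. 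Because all outputs of $L_+$ are mutually orthogonal, this yields the bound for $L_-$ with constant $16.5$, without the factor $2$ and without ever adding contributions coherently.
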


The proof is in Appendix~\ref{app:haar-up}.

For $L_+$, only occupied outcomes can raise the level, so \cref{lem:haar-subset} bounds their total squared norm by $16d/N$ times the input norm squared. For $L_-$, repeated unpaired factors use the same estimate in the measure $\mathrm{Dir}(a-e_i)$, while paired conjugate factors use \cref{lem:haar-coherent}. Their bounds are $16(d+1)/N$ and $(d+1)/(2N)$, respectively. Within an output block we add these two contributions before taking the norm; distinct output blocks are orthogonal. The appendix verifies that every input component is counted at most once, giving the stated constant $36$.

\begin{lemma}[Lowering the level by one]\label{lem:haar-down}
For $L\in\{L_+,L_-\}$ and every polynomial $F\in\mathcal P_{\le d}$ in the domain of $L$,
\[
 \sum_{k\ge1}\bigl\|P_{k-1}LP_kF\bigr\|^2\le36\,\frac{d+2}N\,\|F\|^2 .
\]
\end{lemma}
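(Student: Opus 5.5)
The plan is to obtain \cref{lem:haar-down} from \cref{lem:haar-up} by duality, with no new estimates on Dirichlet polynomials. Fix $L\in\{L_+,L_-\}$ and a polynomial $F\in\mathcal P_{\le d}$ in its domain. Define the lowering part
\[
 T_\downarrow F=\sum_{k\ge1}P_{k-1}LP_kF .
\]
The summands lie in the ranges of the distinct projections $P_{k-1}$, which are pairwise orthogonal. So the left side of the lemma equals $\|T_\downarrow F\|^2$, and it suffices to show $\|T_\downarrow F\|\le6\sqrt{(d+2)/N}\,\|F\|$.

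\emph{Degree bookkeeping.} By \cref{lem:haar-degree}, $L$ multiplies by a single amplitude or its conjugate, so it maps $\mathcal P_{\le d}$ into $\mathcal P_{\le d+1}$. By the first of the three facts after \cref{def:haar-level}, each $P_k$ preserves $\mathcal P_{\le d}$ and $\mathcal P_{\le d+1}$. Hence $T_\downarrow F\in\mathcal P_{\le d+1}$, and
\[
 \|T_\downarrow F\|=\sup\bigl\{|\langle G,T_\downarrow F\rangle|:\ G\in\mathcal P_{\le d+1},\ \|G\|=1\bigr\},
\]
with $G$ taken in the codomain of $L$. The point of restricting to $\mathcal P_{\le d+1}$ is that the dual estimate will be applied to polynomials of degree at most $d+1$. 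This is what produces $d+2$ in place of $d+1$.

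\emph{The adjoint step.} The $P_k$ are self-adjoint, and $L_-=L_+^\dagger$ for the inner product of vector-valued polynomials, as noted before \eqref{eq:haar-oracle-ER}. Therefore
\[
 \langle G,T_\downarrow F\rangle=\sum_{k\ge1}\langle P_kL^\dagger P_{k-1}G,F\rangle=\Bigl\langle\sum_{j\ge0}P_{j+1}L^\dagger P_jG,\,F\Bigr\rangle ,
\]
where $L^\dagger\in\{L_-,L_+\}$ is again one of the two maps. Applying \cref{lem:haar-up} to $L^\dagger$ and the polynomial $G\in\mathcal P_{\le d+1}$, which lies in the domain of $L^\dagger$, gives
\[
 \Bigl\|\sum_jP_{j+1}L^\dagger P_jG\Bigr\|^2=\sum_j\|P_{j+1}L^\dagger P_jG\|^2\le36\,\frac{d+2}N\|G\|^2 .
\]
The Cauchy--Schwarz inequality then yields $|\langle G,T_\downarrow F\rangle|\le6\sqrt{(d+2)/N}\,\|F\|$ for every unit $G$, which is the claim.

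\emph{Expected obstacle.} The only delicate point is the legitimacy of the adjoint manipulation. The $P_k$ are defined through the block decomposition \eqref{eq:haar-blocks} of each finite-degree space, and $L^\dagger G$ has degree $d+2$, outside $\mathcal P_{\le d+1}$. I would check that the block decompositions for different $d$ are compatible, meaning the blocks of $\mathcal P_{\le d}$ are blocks of $\mathcal P_{\le d+2}$. This holds because a block is indexed by $(\gamma,r)$ alone and $\mathcal R^{a(\gamma)}_r$ does not depend on $d$. With that, all $P_k$ are restrictions of a single family of orthogonal projections on $\mathcal P_{\le d+2}$, and the displayed identity is an honest adjoint identity in that finite-dimensional space.
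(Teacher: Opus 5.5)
Your proposal is correct and matches the paper's proof: both express the lowering part of $L$ through the raising part of $L^\dagger$, take the supremum over unit $G\in\mathcal P_{\le d+1}$, apply \cref{lem:haar-up} to $L^\dagger$ at degree $d+1$ (which is where $d+2$ comes from), and finish with Cauchy--Schwarz. One small citation point: the fact that $L_\pm$ raises the degree by at most one follows directly from the definition of $L_\pm$, whereas \cref{lem:haar-degree} concerns the algorithm's states.
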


The proof is in Appendix~\ref{app:haar-down}.

This follows by duality from \cref{lem:haar-up}, since $L_+^\dagger=L_-$ and the output has degree at most $d+1$. Applying the raising estimate at that degree gives $d+2$ in the numerator.

\begin{lemma}[One map and the level]\label{lem:haar-commutator}
For $L\in\{L_+,L_-,J,J^\dagger\}$ and every polynomial $F\in\mathcal P_{\le d}$ in the domain of $L$,
\[
 \bigl\|C^{1/2}LF-LC^{1/2}F\bigr\|\le12\sqrt{\frac{d+2}N}\,\|F\| .
\]
\end{lemma}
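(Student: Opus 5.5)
For $J$ and $J^\dagger$ there is nothing to prove: by the third fact after \cref{def:haar-level} they commute with every $P_k$, hence with $C^{1/2}=\sum_k\sqrt k\,P_k$, so the left side vanishes. The plan for $L\in\{L_+,L_-\}$ is to decompose $LF$ by levels and use the table above, which shows that a single factor changes the level of a block by at most one. Write $F_k=P_kF$. Then
\[
 C^{1/2}LF-LC^{1/2}F=\sum_k\sum_{j\in\{k-1,k,k+1\}}(\sqrt j-\sqrt k)\,P_jLF_k ,
\]
and the terms with $j=k$ vanish. So the commutator splits into a raising part $R=\sum_k(\sqrt{k+1}-\sqrt k)P_{k+1}LF_k$ and a lowering part $R'=\sum_{k\ge1}(\sqrt{k-1}-\sqrt k)P_{k-1}LF_k$.

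For the raising part, the summands lie in the mutually orthogonal ranges of the $P_{k+1}$. Since $0\le\sqrt{k+1}-\sqrt k\le1$, we get $\|R\|^2\le\sum_k\|P_{k+1}LP_kF\|^2$. Here I need the bound for $F\in\mathcal P_{\le d}$, since each $P_k$ preserves $\mathcal P_{\le d}$ (first fact), and \cref{lem:haar-up} bounds this sum by $36(d+1)\|F\|^2/N$. Similarly, orthogonality of the ranges of $P_{k-1}$ and $|\sqrt{k-1}-\sqrt k|\le1$ give $\|R'\|^2\le\sum_{k\ge1}\|P_{k-1}LP_kF\|^2\le36(d+2)\|F\|^2/N$ by \cref{lem:haar-down}. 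The triangle inequality then yields $\|R+R'\|\le 6\sqrt{(d+1)/N}+6\sqrt{(d+2)/N}\le12\sqrt{(d+2)/N}\,\|F\|$.

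The step I expect to need care is justifying that $P_jLF_k=0$ for $|j-k|\ge2$, that is, the table's claim applies to every block including level changes through the polynomial factor. This relies on \cref{lem:haar-adjacent}: the growing-magnitude rows split $h$ into Dirichlet degrees $r,r-1$, and the shrinking rows split $p_ih$ into degrees $r,r+1$. Because $\operatorname{rep}$ changes by at most one in the opposite compensating way listed there, each row gives levels only in $\{k-1,k,k+1\}$. By linearity and the orthogonal block decomposition of \cref{lem:haar-blocks}, this extends from single blocks to $F_k$. I also need to check that $LF$ lies in $\mathcal P_{\le d+1}$, so the level projections make sense on the output. With this in place, the estimate above follows directly from \cref{lem:haar-up,lem:haar-down}.
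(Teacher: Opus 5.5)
Your proposal is correct and follows essentially the same argument as the paper: split $C^{1/2}L-LC^{1/2}$ into the parts that raise and lower the level by one (because a single factor moves each block to adjacent levels only), bound each part by orthogonality of the level ranges together with \cref{lem:haar-up,lem:haar-down}, and combine the two with the triangle inequality. The point you flag, that $P_jLP_k=0$ for $|j-k|\ge2$, is exactly what the paper reads off from the table and \cref{lem:haar-adjacent}. The case of $J$ and $J^\dagger$ is handled the same way as in the paper, through their commutation with every $P_k$.
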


\begin{proof}
For $J$ and $J^\dagger$ the left side is zero. Let $L\in\{L_+,L_-\}$.

Since $L$ maps each block to adjacent levels only, inserting the level decomposition on both sides of $L$ gives
\begin{align*}
 C^{1/2}L-LC^{1/2}&=\sum_{k\ge0}\bigl(\sqrt{k+1}-\sqrt k\bigr)P_{k+1}LP_k\\
 &\quad+\sum_{k\ge0}\bigl(\sqrt k-\sqrt{k+1}\bigr)P_kLP_{k+1},
\end{align*}
and every scalar factor has absolute value at most one.

The terms of the first sum lie at distinct levels, so its norm on $F$ is at most
\[
 \Bigl(\sum_k\|P_{k+1}LP_kF\|^2\Bigr)^{1/2}\le6\sqrt{\frac{d+1}N}\,\|F\|
\]
by \cref{lem:haar-up}. Likewise, the second sum is at most $6\sqrt{(d+2)/N}\,\|F\|$ by \cref{lem:haar-down}. The triangle inequality gives the claim.
\end{proof}

\begin{theorem}[One-query growth]\label{thm:haar-one-query}
There is an absolute constant $c_0$ such that for all \mbox{$F\in\mathcal P_{\le d}(\mathcal V)$},
\[
 \lambda(O_\phi F)\le\lambda(F)+c_0\sqrt{\frac{d+1}N}\,\|F\| ,
\]
and the same holds for the controlled query $O^{\mathrm c}_\phi$.
\end{theorem}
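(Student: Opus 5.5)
The plan is to write the query through \eqref{eq:haar-oracle-ER} as $O_\phi=I-2K_\phi K_\phi^\dagger$ and to estimate the commutator of $C^{1/2}$ with each factor, using \cref{lem:haar-commutator}. Since $\lambda(G)=\|C^{1/2}G\|$ and $O_\phi$ is unitary, the triangle inequality gives
\[
 \lambda(O_\phi F)=\|C^{1/2}O_\phi F\|\le\|O_\phi C^{1/2}F\|+\|(C^{1/2}O_\phi-O_\phi C^{1/2})F\|=\lambda(F)+\|[C^{1/2},O_\phi]F\| .
\]
It therefore suffices to bound $\|[C^{1/2},O_\phi]F\|$ by $c_0\sqrt{(d+1)/N}\,\|F\|$. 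Here $[C^{1/2},I]=0$, so only $[C^{1/2},K_\phi K_\phi^\dagger]$ needs to be controlled.

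For this commutator I would use the Leibniz expansion
\[
 [C^{1/2},K_\phi K_\phi^\dagger]=[C^{1/2},K_\phi]K_\phi^\dagger+K_\phi[C^{1/2},K_\phi^\dagger].
\]
The map $K_\phi$ is $(J-L_+)/\sqrt2$. The commutator with $J$ vanishes, and \cref{lem:haar-commutator} applied to $L_+$ gives
\[
 \|[C^{1/2},K_\phi]G\|\le\tfrac{12}{\sqrt2}\sqrt{(e+2)/N}\,\|G\|\qquad\text{for }G\in\mathcal P_{\le e}(\mathcal W).
\]
The same bound holds for $K_\phi^\dagger$, using $L_-$ and $J^\dagger$.

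Degrees need to be tracked as follows. The input $F$ has degree at most $d$. The vector $K_\phi^\dagger F$ has degree at most $d+1$ and norm at most $\|F\|$, since $K_\phi^\dagger$ is a contraction pointwise: $K_\phi$ is an isometry, because $\ket{\eta_\phi}$ is a unit vector. The second term uses the commutator bound at degree $d$, and the factor $K_\phi$ in front of it has norm at most one. The first term uses the commutator bound at degree $d+1$ on $K_\phi^\dagger F$. Adding the two contributions gives
\[
 \|[C^{1/2},O_\phi]F\|\le2\cdot2\cdot\tfrac{12}{\sqrt2}\sqrt{(d+3)/N}\,\|F\|,
\]
and $d+3\le3(d+1)$ then yields an absolute constant $c_0$. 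The point I would check carefully is that the maps act on $\mathcal P_{\le d}(\mathcal V)$, with $\mathcal V=\mathcal W\otimes\mathsf Q$, in a way consistent with the lemma's hypotheses. This holds because $C$ acts blockwise and is unaffected by the register bookkeeping, by the facts listed after \cref{def:haar-level}.

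For the controlled query I would write
\[
 O^{\mathrm c}_\phi=\ket0\bra0_{\mathsf C}\otimes I+\ket1\bra1_{\mathsf C}\otimes O_\phi .
\]
The projections on $\mathsf C$ are $\phi$-independent, so they commute with $C^{1/2}$. Hence
\[
 [C^{1/2},O^{\mathrm c}_\phi]F=\ket1\bra1_{\mathsf C}\,[C^{1/2},O_\phi]\,\ket1\bra1_{\mathsf C}F,
\]
whose norm is at most the bound above applied to $\ket1\bra1_{\mathsf C}F$. That vector has degree at most $d$ and norm at most $\|F\|$, so the same $c_0$ works. All the real content sits in \cref{lem:haar-up,lem:haar-down}, which are taken as given; the only potential obstacle here is keeping the degree and domain bookkeeping of $K_\phi$ and $K_\phi^\dagger$ straight, which the argument above does.
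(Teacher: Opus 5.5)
Your proposal is correct and follows the paper's proof essentially step for step. It uses the triangle inequality to reduce to $\|(C^{1/2}O_\phi-O_\phi C^{1/2})F\|$, splits $C^{1/2}K_\phi K_\phi^\dagger-K_\phi K_\phi^\dagger C^{1/2}$ by the Leibniz rule with \cref{lem:haar-commutator} applied to $L_+$ at degree $d+1$ and to $L_-$ at degree $d$, and handles $O^{\mathrm c}_\phi$ through the $\phi$-independent control projection, exactly as the paper does. The only difference is that you bound $d+2$ by $d+3$ in the second term, which gives $c_0=24\sqrt6\approx59$ instead of the paper's $54$; this is immaterial.
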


\begin{proof}
By \eqref{eq:haar-oracle-ER},
\begin{align*}
 C^{1/2}O_\phi-O_\phi C^{1/2}&=-2\Bigl((C^{1/2}K_\phi-K_\phi C^{1/2})K_\phi^\dagger\\
 &\qquad+K_\phi(C^{1/2}K_\phi^\dagger-K_\phi^\dagger C^{1/2})\Bigr),
\end{align*}
and we bound the two terms on $F\in\mathcal P_{\le d}(\mathcal V)$ separately.

\emph{Step 1: the two terms.} For the first term, we have $K_\phi^\dagger F\in\mathcal P_{\le d+1}$ and $\|K_\phi^\dagger F\|\le\|F\|$, because $K_\phi^\dagger$ is a contraction pointwise in $\phi$. Moreover, $K_\phi=(J-L_+)/\sqrt2$ with $J$ commuting with $C^{1/2}$, so \cref{lem:haar-commutator} for $L_+$ at degree $d+1$ gives
\[
 \bigl\|(C^{1/2}K_\phi-K_\phi C^{1/2})K_\phi^\dagger F\bigr\|\le\frac{12}{\sqrt2}\sqrt{\frac{d+3}N}\,\|F\| .
\]

For the second term, the map $K_\phi$ is an isometry pointwise and $K_\phi^\dagger=(J^\dagger-L_-)/\sqrt2$, so the lemma for $L_-$ at degree $d$ gives
\[
 \bigl\|K_\phi(C^{1/2}K_\phi^\dagger-K_\phi^\dagger C^{1/2})F\bigr\|\le\frac{12}{\sqrt2}\sqrt{\frac{d+2}N}\,\|F\| .
\]

\emph{Step 2: the level after the query.} Using $d+3\le3(d+1)$ and $d+2\le2(d+1)$, the two terms together give
\begin{align*}
 \bigl\|C^{1/2}O_\phi F-O_\phi C^{1/2}F\bigr\|&\le12\sqrt2\,(\sqrt3+\sqrt2)\sqrt{\frac{d+1}N}\,\|F\|\\
 &\le54\sqrt{\frac{d+1}N}\,\|F\| .
\end{align*}

Since $O_\phi$ preserves norms for every $\phi$, it also preserves the norm of a polynomial. Hence
\begin{align*}
 \lambda(O_\phi F)&=\|C^{1/2}O_\phi F\|\\
 &\le\|O_\phi C^{1/2}F\|+\|C^{1/2}O_\phi F-O_\phi C^{1/2}F\|\\
 &=\lambda(F)+\|C^{1/2}O_\phi F-O_\phi C^{1/2}F\| .
\end{align*}

\emph{Step 3: the controlled query.} The control projections are $\phi$-independent and commute with $C^{1/2}$, so
\[
 C^{1/2}O^{\mathrm c}_\phi-O^{\mathrm c}_\phi C^{1/2}=\ket1\bra1\otimes(C^{1/2}O_\phi-O_\phi C^{1/2}) .
\]
Applying the bound just proved to $(\ket1\bra1\otimes I)F\in\mathcal P_{\le d}(\mathcal V)$, whose norm is at most $\|F\|$, gives the same estimate, and $O^{\mathrm c}_\phi$ is again unitary pointwise.
\end{proof}

The proof gives $c_0=54$ and completes \cref{thm:haar-lower}. Only the rank-one part enters the commutator, since the identity commutes with $C^{1/2}$. Inverse queries are also covered because $O_\phi^{-1}=O_\phi$.

\begin{remark}[State designs]\label{rem:haar-designs}
For a fixed algorithm given $s$ copies and making $q$ queries to the canonical oracle, $\Psi(\phi)$ has degree at most $s+2q$ (\cref{lem:haar-degree}). The score integrand $\sum_zNp_\phi(z)\|\Psi_z(\phi)\|^2$ therefore has degree at most $2s+4q+2$, since $Np_\phi(z)=N\alpha_z\overline{\alpha_z}$ has degree two. Averaging over a uniform global phase eliminates monomials with unequal numbers of amplitudes and conjugates. Every remaining monomial has degree at most $s+2q+1$ in each, so an exact $(s+2q+1)$-design preserves its Haar expectation. The score, and hence the bound of \cref{thm:haar-lower} for the supplied $s$ copies, is unchanged when the state is drawn from such a design and multiplied by an independent uniform global phase.
\end{remark}

\subsection{Haar-random unitary access}\label{sec:haar-unitary}

With Haar-random unitary access the algorithm applies a Haar-random $U$, its inverse, and controlled versions of both, and the state scored is $\ket\phi=U\ket{0^n}$. The following simulation result reduces this model to the canonical oracle.

The conditional distribution below is the random state-preparation oracle of \cite[Definition~8]{Kretschmer}. Fix any unitary $V$ with $V\ket{0^n}=\ket\phi$, and let $W$ fix $\ket{0^n}$ and be Haar-random on its orthogonal complement. Then $U_\phi=VW$, and Haar invariance makes its distribution independent of the choice of $V$.

\begin{theorem}[{\cite[Definition~8 and Theorem~19]{Kretschmer}}]\label{thm:kretschmer-simulation}
For a unit vector $\ket\phi\in\C^N$, let $U_\phi$ be a random unitary with $U_\phi\ket{0^n}=\ket\phi$, distributed according to the Haar measure on the set of such unitaries.
\begin{enumerate}
\item For every algorithm $\mathcal A$ that makes $q$ queries to $U_\phi$, including inverse and controlled queries, there is an algorithm $\mathcal B$ that makes $2q$ queries to $O_\phi$ such that, for every $\ket\phi$, the channel from the input to the premeasurement state of $\mathcal A$, averaged over $U_\phi$, and the corresponding channel of $\mathcal B$ are within diamond distance $(10q+4)/\sqrt N$. Here the diamond distance is the completely bounded trace norm of the difference, without a factor $1/2$.
\item If $\ket\phi$ is Haar-random, then $U_\phi$ is a Haar-random unitary.
\end{enumerate}
\end{theorem}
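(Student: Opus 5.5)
Part~2 is a symmetry argument, and part~1 rests on an exact coupling followed by one approximation step per query. For part~2, we would condition on $\ket\phi=U\ket{0^n}$ for a Haar-random $U$. Left and right invariance of the Haar measure under unitaries fixing $\ket{0^n}$ show that, given $\ket\phi$, the conditional law of $U$ is that of $VW$, with $V$ any fixed unitary mapping $\ket{0^n}$ to $\ket\phi$ and $W$ Haar-random on the stabilizer of $\ket{0^n}$. This is the definition of $U_\phi$. Since $U\ket{0^n}$ is a Haar-random state, averaging over $\ket\phi$ recovers the Haar measure.

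For part~1, the key observation is an exact coupling. Let $\widetilde U$ be a Haar-random unitary sampled by $\mathcal B$ itself, which costs no queries, and put $\ket\psi=\widetilde U\ket{0^n}$. Let $G_{\psi,\phi}$ be any unitary that depends only on $(\psi,\phi)$ and maps $\ket\psi$ to $\ket\phi$. Then $G_{\psi,\phi}\widetilde U$ has exactly the law of $U_\phi$. Indeed, conditioned on $\psi$, the law of $\widetilde U$ is invariant under right multiplication by the stabilizer of $\ket{0^n}$. Left multiplication by a unitary determined by $(\psi,\phi)$ carries this law to the corresponding invariant law on unitaries sending $\ket{0^n}$ to $\ket\phi$, which by part~2 is the law of $U_\phi$. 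We would therefore choose $G$ to be a reflection. After a phase adjustment making $\langle\psi|\phi\rangle$ real, we take $G=I-2\ket w\bra w$ with $\ket w\propto\ket\psi-\ket\phi$. The phase adjustment can be absorbed because $\ket\psi$ is known to $\mathcal B$ and has a uniformly random global phase. Each query to $U_\phi$, including controlled and inverse queries, then becomes an application of $\widetilde U$ (free) together with one controlled application of $G$. It remains to implement $G$ using about two queries to $O_\phi$.

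The main obstacle is this last step. Since $\ket\psi$ is independent of $\ket\phi$, we have $|\langle\psi|\phi\rangle|=O(1/\sqrt N)$ with high probability, so $\ket w$ is within $O(1/\sqrt N)$ of $(\ket\psi-\ket\phi)/\sqrt2$. This has the same form as $\ket{\eta_\phi}=(\ket\bot-\ket\phi)/\sqrt2$ in \eqref{eq:prelim-canonical-oracle}, with $\ket\psi$ in place of $\ket\bot$. We would first apply a known unitary $S_\psi$ that swaps $\ket\psi$ with $\ket\bot$ and is the identity elsewhere. Then $S_\psi O_\phi S_\psi$ is close to the reflection about $(\ket\psi-\ket\phi)/\sqrt2$, with the error coming only from the $O(1/\sqrt N)$ component of $\ket\phi$ along $\ket\psi$. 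A second query fixes the action on $\ket\bot$ and supplies the controlled version, which accounts for the count $2q$.

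We would bound the per-query error in diamond norm by $O(1/\sqrt N)$, averaged over $\widetilde U$. The hybrid argument then gives $(10q+c)/\sqrt N$ in total, where the additive $c=4$ absorbs the one-time cost of the phase alignment and of the event that $|\langle\psi|\phi\rangle|$ is unusually large. Getting the precise constants $10$ and $4$ would require tracking these contributions through the controlled and inverse cases, and here we would follow \cite[Theorem~19]{Kretschmer} directly.
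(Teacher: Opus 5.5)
The paper does not prove this statement. Part~1 is imported from \cite[Theorem~19]{Kretschmer}, and the only argument in the text is the coset description $U_\phi=VW$, with $W$ Haar on the stabilizer of $\ket{0^n}$. That is your part~2, and it is fine. Your plan for part~1 (the exact coupling in which $G_{\psi,\phi}\widetilde U$ has the law of $U_\phi$, an implementation of $G_{\psi,\phi}$ by $S_\psi O_\phi S_\psi$, and a hybrid argument) is sensible, but the step that makes the coupling exact fails. The phase that makes $\langle\psi|\phi\rangle$ real is $\theta=\arg\langle\psi|\phi\rangle$, which depends on the unknown $\phi$, so $\mathcal B$ cannot compute it. The uniformly random global phase of $\ket\psi$ does not help, because only the phase of $\ket\psi$ relative to $\ket\phi$ matters. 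Nor can the phase be ignored. It changes the reflection vector $w\propto e^{\mathrm i\theta}\psi-\phi$ itself, and hence the swap $S_{e^{\mathrm i\theta}\psi}$ that $\mathcal B$ would have to apply. Under controlled queries, even a global phase on $\widetilde U$ is observable.

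The repair is to give up exactness of the implemented unitary and put the mismatch into the per-query error. Write $c=\langle\psi|\phi\rangle$ and $R'=S_\psi O_\phi S_\psi=I-2\ket{\eta'}\bra{\eta'}$, where $\ket{\eta'}=S_\psi\ket{\eta_\phi}=\bigl((1+c)\ket\psi-\ket\phi-c\ket\bot\bigr)/\sqrt2$; then $R'\ket\psi=\ket\phi+c(\ket\bot-\ket\psi)$. With $w=(\psi-\phi)/\|\psi-\phi\|$ and $R_w=I-2\ket w\bra w$, one checks that $\|R'-(R_w\oplus1_\bot)\|=O(|c|)$ and $\|R_w\psi-\phi\|=O(|c|)$. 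Let $G''$ be a unitary carrying $R_w\psi$ to $\phi$ with $\|G''-I\|=O(|c|)$. Then $G_{\psi,\phi}=G''R_w$ maps $\psi$ to $\phi$ exactly and is within $O(|c|)$ of $R'$ on all of $\C^{N+1}$, including $\ket\bot$, so your coupling applies to it. The hybrid argument then costs $2\|G_{\psi,\phi}\oplus1_\bot-R'\|$ in diamond norm per query. This holds also for controlled and inverse queries, since controlled queries to $O_\phi$ are allowed in the model and $R'$ is self-inverse. A fixed channel that folds $\ket\bot$ back into $\C^N$ at the end does not increase the distance. Finally, $\E|c|\le N^{-1/2}$ for every fixed $\phi$, since $\psi$ is uniform and independent of $\phi$.

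This gives $Cq/\sqrt N$ with a single query to $O_\phi$ per query to $U_\phi$. It needs no additive term, no high-probability event, and no second query to correct the action on $\ket\bot$, so your accounts of the factor $2$ in $2q$ and of the term $4$ are guesses rather than steps. You also do not derive the constant $10$ but defer it to the source. Because \cref{thm:haar-unitary} uses the explicit constant, your argument establishes the statement as written only once $C$ is computed and shown to be at most $10$.
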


Converting a distance between channels into a difference of scores requires a bound on the largest possible reward, which is $N$ times the probability of the heaviest outcome.

\begin{lemma}[The heaviest outcome]\label{lem:heaviest}
Let $H_N=\sum_{j=1}^N1/j$. For a Haar-random state,
\[
 \E_\phi\bigl[N\max_zp_z\bigr]=H_N\le\ln N+1 .
\]
\end{lemma}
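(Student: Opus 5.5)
We reduce the claim to the classical formula for the expected maximum of i.i.d.\ exponentials, using the representation of the Haar measure from \cref{sec:prelim-dirichlet}. Let $E_1,\dots,E_N$ be independent standard exponential random variables. Put $S=\sum_jE_j$ and $p_z=E_z/S$. Then $p$ has the uniform distribution $\mathrm{Dir}(1,\dots,1)$ on the simplex, which is the law of the probabilities $p_z=|\alpha_z|^2$ of a Haar-random state. This follows, for example, by taking $|g_z|^2$ for independent standard complex Gaussians $g_z$, whose squared moduli are independent exponentials (up to a common scale), exactly as in the polar-coordinate description quoted there. By \cref{lem:dirichlet}(iii), or by the standard gamma-ratio argument, the normalized vector $p$ is independent of the sum $S$, and $S$ has the Gamma$(N,1)$ distribution with mean $N$.

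Since $\max_zE_z=S\max_zp_z$ and the two factors are independent, we get
\[
 \E\max_zE_z=\E S\cdot\E\max_zp_z=N\,\E\max_zp_z .
\]
The left side is $H_N$ by the Rényi representation of exponential order statistics. The spacings of the order statistics are independent, and the $k$-th spacing from the bottom is exponential with rate $N-k+1$, by memorylessness and the minimum of $m$ exponentials being exponential with rate $m$. Summing the spacings gives $\E\max_zE_z=\sum_{m=1}^N1/m=H_N$. This proves the identity $\E_\phi[N\max_zp_z]=H_N$.

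For the inequality, compare the sum with an integral: $H_N=1+\sum_{j=2}^N1/j\le1+\int_1^N dx/x=1+\ln N$.

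The only step needing care is the independence of $S$ and $p$. If one prefers to avoid it, the same identity follows from \cref{lem:dirichlet}(iii) together with inclusion--exclusion. One uses
\[
 \E\max p_z=\int_0^1\Pr[\max p_z>t]\,dt
\]
and
\[
 \Pr[p_{z_1}>t,\dots,p_{z_k}>t]=(1-kt)_+^{N-1}.
\]
This gives $\sum_k(-1)^{k+1}\binom Nk/(kN)$, which equals $H_N/N$ by the standard identity $\sum_k(-1)^{k+1}\binom Nk/k=H_N$. I expect no real obstacle; the exponential route is the cleanest.
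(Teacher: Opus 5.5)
Your proof is correct. Your fallback argument (the joint tail $(1-kt)_+^{N-1}$, integration in $t$, and inclusion--exclusion) is exactly the paper's proof. The paper finishes it by writing $\sum_k(-1)^{k+1}\binom Nk/k$ as $\int_0^1(1-(1-x)^N)/x\,dx$ and expanding $(1-(1-x)^N)/x=\sum_{j=0}^{N-1}(1-x)^j$, rather than quoting the identity.

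Your main route is genuinely different. You lift the uniform distribution on the simplex to independent exponentials, and you use the independence of $p=E/S$ and $S\sim\mathrm{Gamma}(N,1)$ to factor $\E\max_zE_z=N\,\E\max_zp_z$. You then evaluate the left side by the R\'enyi spacings; linearity of expectation suffices there, and the independence of the spacings is not needed. This explains where $H_N$ comes from: it is the expected maximum of $N$ standard exponentials. It also avoids any alternating sum. The cost is that it imports two outside facts, the gamma--Dirichlet independence and the order-statistics representation, whereas the paper's argument needs only the volume of a scaled simplex.

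The two routes are closer than they look. Substituting $x=e^{-u}$ in $\E\max_zE_z=\int_0^\infty\bigl(1-(1-e^{-u})^N\bigr)\,du$ gives exactly the paper's integral $\int_0^1(1-(1-x)^N)/x\,dx$.

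One small correction: \cref{lem:dirichlet}(iii) concerns splitting a Dirichlet vector into $p(S)$ and the normalized subvector. It does not give the independence of $E/S$ from $S$ for gamma variables, so cite the standard gamma-ratio (change of variables) argument instead, which you also mention. Your integral comparison for $H_N\le\ln N+1$ fills in a step the paper leaves implicit.
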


The proof is in Appendix~\ref{app:haar-heaviest}.

\begin{theorem}[Haar lower bound for unitary access]\label{thm:haar-unitary}
Every algorithm that makes $q$ queries to a Haar-random unitary, with inverse and controlled access, satisfies
\[
 B(\mathcal A)\le\beta_N+2\sqrt2\,\Lambda_{0,2q}+72\,\Lambda_{0,2q}^2+\frac{(5q+2)H_N}{\sqrt N} .
\]
Consequently, there is an absolute constant $c_3>0$ such that, for every $0<\varepsilon\le1$ and every $N$ with $(5N^{1/3}+2)(\ln N+1)\le\varepsilon\sqrt N/2$, a score of $2+\varepsilon$ requires at least $c_3\,\varepsilon^{2/3}N^{1/3}$ queries.
\end{theorem}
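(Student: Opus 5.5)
The plan is to reduce unitary access to the canonical oracle with \cref{thm:kretschmer-simulation}, apply \cref{thm:haar-lower} to the simulating algorithm, and pay for the simulation error with \cref{lem:heaviest}.

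Fix $\mathcal A$ making $q$ queries to a Haar-random $U$. By part~2 of \cref{thm:kretschmer-simulation}, drawing $\phi$ Haar-random and then $U_\phi$ from the conditional law yields a Haar-random $U$ with $U\ket{0^n}=\ket\phi$. Hence $B(\mathcal A)=\E_\phi\E_{U_\phi}\sum_zNp_\phi(z)\Pr[Z=z\mid U_\phi]$. For each fixed $\phi$, part~1 gives an algorithm $\mathcal B$ with $2q$ queries to $O_\phi$. The channel of $\mathcal B$ is within diamond distance $(10q+4)/\sqrt N$ of the $U_\phi$-averaged channel of $\mathcal A$. Since $\mathcal B$ is built uniformly from $\mathcal A$ and not from $\phi$, I would check that it is a fixed oracle algorithm of the type in \cref{def:experiment}.

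Apply both channels to the same input and measure with the same $\{\Pi_z\}$. The two output distributions $a_\phi$ and $b_\phi$ then satisfy $\sum_z|a_\phi(z)-b_\phi(z)|\le(10q+4)/\sqrt N$. The diamond norm here carries no factor $1/2$, and measurement is contractive in the trace norm. Therefore
\[
 \Bigl|\sum_zNp_\phi(z)\bigl(a_\phi(z)-b_\phi(z)\bigr)\Bigr|\le\frac{(10q+4)}{\sqrt N}\,N\max_zp_\phi(z)\cdot\frac12 .
\]
The factor $\tfrac12$ holds because $a_\phi-b_\phi$ sums to zero. We may shift the reward by $c=\tfrac12N\max_z p_\phi(z)$, and then $|Np_\phi(z)-c|\le c$ since rewards lie in $[0,N\max_z p_\phi(z)]$. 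Averaging over $\phi$ and applying \cref{lem:heaviest} gives a loss of at most $(5q+2)H_N/\sqrt N$.

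\Cref{thm:haar-lower} with $s=0$ and $2q$ queries bounds $B(\mathcal B)$ by $\beta_N+2\sqrt2\Lambda_{0,2q}+72\Lambda_{0,2q}^2$. Adding the loss gives the displayed inequality.

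For the consequence, suppose $B(\mathcal A)\ge2+\varepsilon$ with $q\le c_3\varepsilon^{2/3}N^{1/3}\le N^{1/3}$ for a small $c_3$. The hypothesis on $N$, together with $H_N\le\ln N+1$, makes the error term at most $\varepsilon/2$. Since $\beta_N\le2$, this leaves $\varepsilon/2\le2\sqrt2\Lambda+72\Lambda^2$ with $\Lambda=\Lambda_{0,2q}$. Then, as in \cref{cor:haar-birthday}, $\Lambda\ge\varepsilon/150$. Meanwhile $\Lambda_{0,2q}\le c_0\cdot2q\sqrt{4q}/\sqrt N=O(q^{3/2}/\sqrt N)$, which gives $q\ge c\,\varepsilon^{2/3}N^{1/3}$. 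This contradicts the assumption on $q$ once $c_3<c$, so the lower bound holds.

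The main obstacle is the bookkeeping of the simulation step. It must yield a bound on the \emph{averaged} output distribution for every fixed $\phi$, so that the score difference can be controlled pointwise in $\phi$ before averaging against $\max_zp_\phi(z)$. The constant $5q+2$ also depends on the factor $\tfrac12$ from the centering trick.
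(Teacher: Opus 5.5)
Your proposal is correct and follows the paper's proof. You simulate with $2q$ canonical queries via \cref{thm:kretschmer-simulation} and bound the score change pointwise in $\phi$ by the total variation distance times $N\max_z p_\phi(z)$, where your centering trick is equivalent to the paper's appeal to nonnegative rewards bounded by the maximum; you then average using \cref{lem:heaviest}, apply \cref{thm:haar-lower} with $s=0$, and for the consequence rerun the argument of \cref{cor:haar-birthday} with $\varepsilon/2$ and $2q$ after reducing to $q\le N^{1/3}$, exactly as the paper does (your inline bound $\Lambda_{0,2q}\le 4c_0q^{3/2}/\sqrt N$ replaces citing $c_2$).
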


\begin{proof}
\emph{Step 1: a fixed state.} Fix $\phi$ and let $Z$ and $Z'$ be the outputs of $\mathcal A$, averaged over $U_\phi$, and of the simulator $\mathcal B$ of \cref{thm:kretschmer-simulation}. The diamond distance bounds the trace norm of the difference of the premeasurement states, and the total variation distance of the output distributions is at most half of that trace norm. Since the reward $z\mapsto Np_z$ is nonnegative and at most $N\max_zp_z$, the two scores differ by at most the total variation distance times $N\max_zp_z$, that is,
\[
 N\,\E[p_Z\mid\phi]\le N\,\E[p_{Z'}\mid\phi]+\frac{5q+2}{\sqrt N}N\max_zp_z .
\]

\emph{Step 2: averaging over the state.} We now average over the Haar-random $\phi$, under which $U_\phi$ is Haar-random. The simulator $\mathcal B$ is a mixture of algorithms of the form of \cref{def:experiment}, each making $2q$ queries to $O_\phi$, and the expected score is linear in the mixture. Hence \cref{thm:haar-lower} with $s=0$ and $2q$ queries bounds the first term on the right; \cref{lem:heaviest} bounds the second, which gives the display in the statement.

\emph{Step 3: the consequence.} We may assume $q\le N^{1/3}$, because for $q>N^{1/3}$ the claim holds for every $c_3\le1$. Then, by the hypothesis on $N$, the last term of the display in the statement is at most $(5N^{1/3}+2)(\ln N+1)/\sqrt N\le\varepsilon/2$, and $\beta_N\le2$ gives $\varepsilon/2\le2\sqrt2\,\Lambda_{0,2q}+72\,\Lambda_{0,2q}^2$.
The proof of \cref{cor:haar-birthday}, with $\varepsilon/2$ in place of $\varepsilon$ and $2q$ in place of $q$, gives $2q\ge c_2(\varepsilon/2)^{2/3}N^{1/3}$, so the claim holds with $c_3=\min\{1,c_2/2^{5/3}\}$.
\end{proof}

Together with \cref{thm:squaring-scores} in \cref{sec:squaring}, which applies to unitary access without change because the squaring algorithm uses $U$ only to prepare $\ket\phi$ and to reflect about it, this proves part~2 of \cref{thm:intro-xeb} and the Haar case of the final statement of that theorem.

The quantity $H_N=\Theta(\log N)$ is also the supremum of the expected scores of all algorithms: by \eqref{eq:prelim-score} no algorithm scores more than $\E_\phi\,N\max_zp_z=H_N$, and an algorithm with enough queries can identify the heaviest outcome with probability close to one.

\section{The squaring algorithm}\label{sec:squaring}

In this section we prove part~3 of \cref{thm:intro-xeb} and the formal version of \cref{cor:collision}. The algorithm and its analysis are the same for the Fourier ensemble, for the canonical oracle, and for Haar-random unitary access, because the algorithm uses the oracle only to prepare a state whose measurement gives the ideal distribution, and to reflect about it. We treat the three forms of access together and write $\omega$ for the oracle, $p=p_\omega$ for the ideal distribution, and $p_z=p(z)$. The normalized moments of $p$ and the \emph{collision distribution} are
\begin{equation}\label{eq:squaring-moments-def}
 M_j=N^{j-1}\sum_zp_z^j\qquad(j\ge1),\qquad r_p(z)=\frac{p_z^2}{\sum_yp_y^2}=\frac{Np_z^2}{M_2},
\end{equation}
where $r_p$ is the distribution of the common value of two independent samples from $p$ conditioned on their being equal. The ideal score \eqref{eq:prelim-ideal-score} is $\E_\omega M_2$, and $M_2\ge1$ for every distribution by the Cauchy--Schwarz inequality
\[
 1=\Bigl(\sum_zp_z\Bigr)^2\le N\sum_zp_z^2 .
\]

\subsection{The three forms of access}\label{sec:squaring-access}

In each case the oracle provides access for state preparation and reflection: there is a unit vector $\ket{\psi_\omega}\in\C^N$ whose computational-basis measurement has distribution $p_\omega$, one query prepares $\ket{\psi_\omega}$, and two queries implement the reflection $2\ket{\psi_\omega}\bra{\psi_\omega}-I$.

\emph{Fourier ensemble.} Let $\Delta_f\ket x=f(x)\ket x$ be the phase query, which is one query to the value oracle with the target qubit in $\ket-$ by \eqref{eq:prelim-phase-kickback}, and let $U_f=H^{\otimes n}\Delta_fH^{\otimes n}$, a unitary with $U_f^\dagger=U_f$. Then
\[
 \ket{\psi_f}=U_f\ket{0^n}=\sum_z\widehat f(z)\ket z,
\]
because $H^{\otimes n}\ket{0^n}=N^{-1/2}\sum_x\ket x$ and $H^{\otimes n}\ket x=N^{-1/2}\sum_z\chi_z(x)\ket z$.

\emph{Canonical oracle.} Here $\ket{\psi_\phi}=\ket\phi=O_\phi\ket\bot$. The reflection about $\ket\phi$ acts on the query register $\mathsf Q=\C\ket\bot\oplus\C^N$, but the algorithm below never leaves $\C^N$ once $\ket\phi$ has been prepared, because $\ket\phi\in\C^N$ and both reflections it uses map $\C^N$ to itself.

\emph{Haar-random unitary access.} Here $\ket{\psi_U}=U\ket{0^n}$.

The reflections in the three cases are
\begin{align*}
 2\ket{\psi_f}\bra{\psi_f}-I&=U_f(2\ket{0^n}\bra{0^n}-I)U_f,&&\text{(Fourier ensemble)}\\
 2\ket\phi\bra\phi-I_{\mathsf Q}&=O_\phi(2\ket\bot\bra\bot-I_{\mathsf Q})O_\phi,&&\text{(canonical oracle)}\\
 2\ket{\psi_U}\bra{\psi_U}-I&=U(2\ket{0^n}\bra{0^n}-I)U^\dagger,&&\text{(unitary access)}
\end{align*}
where the second identity holds by \eqref{eq:prelim-oracle-expansion}.

For $j\ge1$ let $\kappa_j=j!$ in the two Haar cases, where $p$ is uniform on the simplex, and, in the Fourier ensemble, $\kappa_j=(2j-1)!!=1\cdot3\cdots(2j-1)$. These are the limits of $\E M_j$ as $N\to\infty$, and \cref{lem:moments} gives $\E M_j\le\kappa_j$ for every $j$ together with the exact values of $\E M_2$ and $\E M_3$. In particular, $\kappa_2$, which is $2$ for Haar access and $3$ for the Fourier ensemble, is the limit of the ideal score.

\subsection{The algorithm}\label{sec:squaring-algorithm}

The algorithm stores the outcomes of $m$ measured samples and amplifies the component of one more sample that lies in the stored set. The probability of the stored set is a random variable whose mean is about $\kappa_2m/N$ (\cref{lem:squaring-table}), and the algorithm chooses its number of amplification steps as if this probability were exactly $P_\star=\kappa_2m/N$; it never estimates the actual probability.

\begin{definition}[The squaring algorithm $\mathsf{Sq}_m$]\label{def:squaring}
Let $m\ge1$ be such that $P_\star=\kappa_2m/N\le1/4$, let $\vartheta_0=\arcsin\sqrt{P_\star}$, let $\ell$ be a nonnegative integer nearest to $\pi/(4\vartheta_0)-1/2$, and put $L=2\ell+1$, so that $|L\vartheta_0-\pi/2|\le\vartheta_0$. The algorithm proceeds as follows.
\begin{enumerate}
\item Prepare and measure $m$ copies of $\ket{\psi_\omega}$, obtaining outcomes $Y_1,\dots,Y_m$, and store the set $S=\{Y_1,\dots,Y_m\}$.
\item Prepare one more copy of $\ket{\psi_\omega}$.
\item Apply $\ell$ times the operator
\[
 (2\ket{\psi_\omega}\bra{\psi_\omega}-I)(I-2\Pi_S),\qquad\text{where }\Pi_S=\sum_{z\in S}\ket z\bra z .
\]
\item Measure in the computational basis and output the outcome $Z$.
\end{enumerate}
\end{definition}

The reflection $I-2\Pi_S$ uses only the stored set, so the number of queries is $m+1+2\ell$. Since $|L\vartheta_0-\pi/2|\le\vartheta_0$ gives $2\ell\le\pi/(2\vartheta_0)$, and $\arcsin u\ge u$ gives $\vartheta_0\ge\sqrt{P_\star}$, the number of queries is at most
\begin{equation}\label{eq:squaring-queries}
 m+1+\frac\pi{2\vartheta_0}\le m+1+\frac\pi2\sqrt{\frac N{\kappa_2m}} .
\end{equation}
The number of amplification steps depends only on $N$, $m$, and the ensemble, through $P_\star=\kappa_2m/N$; it does not depend on the oracle or on the stored outcomes. The rest of the section proves that the resulting output distribution is close to the collision distribution, which is proportional to $p_z^2$.

\subsection{Moments of the two ensembles}\label{sec:squaring-moments}

\begin{lemma}[Moments]\label{lem:moments}
The following hold.
\begin{enumerate}
\item[(i)] In the Haar ensemble, for all $j\ge1$,
\begin{align*}
 &\E M_j=\frac{N^{j-1}\,j!}{(N+1)(N+2)\cdots(N+j-1)}\le j!,\qquad
 \E M_j^2\le(j!)^2+\frac{(2j)!}N,\quad\text{together with}\\
 &\Var M_j\le\frac{(2j)!+j^2(j!)^2}N ,\quad\text{and in particular}\\
 &\E M_2=\frac{2N}{N+1},\qquad \E M_3=\frac{6N^2}{(N+1)(N+2)},\qquad \E M_4\le24,\quad\text{and}\\
 &\E(M_2-2)^2=\frac{4N^2-4N+24}{(N+1)(N+2)(N+3)}\le\frac4N .
\end{align*}
\item[(ii)] In the Fourier ensemble, $\E\bigl[(\sqrt N\widehat f(z))^{2j}\bigr]\le\kappa_j$ for every $z$ and $j\ge1$, hence $\E M_j\le\kappa_j$ and $\E M_j^2\le\kappa_{2j}$, and
\[
 \E M_2=3-\frac2N,\qquad \E M_3=15-\frac{30}N+\frac{16}{N^2},
\]
\[
 \Var M_2\le\frac{24}N,\qquad \E(M_2-3)^2\le\frac{28}N,\qquad \Var M_3\le\frac{10^4}N .
\]
\end{enumerate}
\end{lemma}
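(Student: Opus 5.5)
The plan is to reduce every claim to a fixed-$z$ moment computation and then expand the powers of $M_j$ as sums over outcomes. Linearity gives $\E M_j=N^{j}\,\E p_z^j$ by symmetry over $z$.

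\emph{Haar case.} By \cref{lem:dirichlet}(i) with $a=(1,\dots,1)$, $\E p_z^j=j!/(N)_j$. Hence
\[
 \E M_j=\frac{N^{j-1}j!}{(N+1)\cdots(N+j-1)}\le j!,
\]
and the values of $\E M_2$ and $\E M_3$ follow at once, as does $\E M_4\le24$.

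For the second moment, write $M_j^2=N^{2j-2}\sum_{z,w}p_z^jp_w^j$ and split into the diagonal $z=w$ and the off-diagonal $z\ne w$. The same lemma gives
\[
 \E p_z^{2j}=\frac{(2j)!}{(N)_{2j}},\qquad \E p_z^jp_w^j=\frac{(j!)^2}{(N)_{2j}} \quad(z\ne w).
\]
The diagonal contributes $N^{2j-1}(2j)!/(N)_{2j}\le(2j)!/N$. The off-diagonal contributes $N^{2j-1}(N-1)(j!)^2/(N)_{2j}\le(j!)^2$, since $(N)_{2j}\ge N^{2j-1}(N-1)$ holds factor by factor.

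For the variance, subtract $(\E M_j)^2$. The off-diagonal term minus $(\E M_j)^2$ equals
\[
 (j!)^2N^{2j-2}\Bigl[\frac{N(N-1)}{(N)_{2j}}-\frac1{((N)_j)^2}\Bigr].
\]
Comparing $(N)_{2j}$ with $((N)_j)^2/(N(N-1))$ and using $\prod(1+i/N)$-type estimates bounds this by $j^2(j!)^2/N$; this ratio estimate is the only fiddly point in the Haar part. The exact formula for $\E(M_2-2)^2$ is obtained by plugging $j=2$ into the exact expressions,
\[
 \E M_2^2=N^2\,\frac{24+4(N-1)}{(N)_4},
\]
subtracting $4\E M_2-4$, and simplifying the rational function. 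The bound $\le4/N$ then follows from $4N^3\ge(4N^2-4N+24)\cdot N$ compared against $(N+1)(N+2)(N+3)$.

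\emph{Fourier case.} Write $X_z=\sqrt N\widehat f(z)=N^{-1/2}\sum_x\chi_z(x)f(x)$, a normalized sum of $N$ independent Rademacher signs. Expanding $\E X_z^{2j}$, only pairings in which each index appears an even number of times survive. The count is bounded by the Gaussian moment $(2j-1)!!$, since each term with multiplicities $2m_1,2m_2,\dots$ is dominated by the corresponding Gaussian term. This is the standard comparison; equivalently, $\E e^{sX}\le e^{s^2/2}$ termwise in power series coefficients. Hence $\E M_j\le\kappa_j$, and by Cauchy--Schwarz over $z$, $M_j^2\le N^{2j-2}\cdot N\sum_z p_z^{2j}=M_{2j}$, so $\E M_j^2\le\kappa_{2j}$.

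The exact values use the cumulant-style counts for $\E X^4$ and $\E X^6$:
\[
 \E X^4=3-\frac2N,\qquad \E X^6=15-\frac{30}N+\frac{16}{N^2},
\]
obtained by counting index multisets (all-distinct pairs, one quadruple, one sextuple) with the correct inclusion–exclusion.

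For the variance bounds, compute $\E M_2^2=N^{-2}\cdots$ by expanding $\sum_{z,w}\E X_z^4X_w^4/N^2$. For $z\ne w$, the products $\chi_z\chi_w=\chi_{z\oplus w}$ make the cross correlation an explicit sum of eighth-order sign moments. Correlations between $X_z^2$ and $X_w^2$ are $O(1/N)$, which gives $\Var M_2\le24/N$ and $\E(M_2-3)^2=\Var M_2+(2/N)^2\le28/N$.

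\emph{Main obstacle.} The expected main obstacle is $\Var M_3\le10^4/N$. Here I would bound $\operatorname{Cov}(X_z^6,X_w^6)$ by expanding over index 12-tuples and showing that every contributing configuration not factoring into independent $z$- and $w$-parts forces an extra coincidence, costing a factor $1/N$. The number of such configuration types is bounded by a crude combinatorial constant, which is absorbed into $10^4$.
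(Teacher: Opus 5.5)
Most of your plan works: the Haar moments, the bound $\E X_z^{2j}\le\kappa_j$ via the coefficientwise comparison $\cosh(s/\sqrt N)^N\preceq e^{s^2/2}$ (a valid alternative to the paper's count of pairings), the inequality $M_j^2\le M_{2j}$ giving $\E M_j^2\le\kappa_{2j}$, and the exact values of $\E X^4$ and $\E X^6$.

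The gap is in the Fourier variance bounds $\Var M_2\le24/N$ and $\Var M_3\le10^4/N$. Write $\Var M_3=N^{-2}\sum_{z,w}\operatorname{Cov}(X_z^6,X_w^6)$. The diagonal terms $z=w$ alone contribute $\Var(X^6)/N$. As $N\to\infty$, $\Var(X^6)\to11!!-15^2=10170$, which already exceeds $10^4$. Likewise the diagonal of $\Var M_2$ is $\Var(X^4)/N$, and $\Var(X^4)\to105-9=96>24$.

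The stated bounds hold only because the off-diagonal covariances are negative, with totals of about $-4050/N$ and $-72/N$. The source is Parseval's identity. For a tuple whose odd-multiplicity addresses form a pair $\{a,b\}$, the factor $\chi_z(x_1)\cdots\chi_z(x_{2j})$ equals $\chi_z(a\oplus b)$, which sums to zero over $z$. These contributions therefore cancel exactly, but only between the diagonal and off-diagonal terms. Suppose you bound each off-diagonal covariance by $C/N$ in absolute value and absorb $C$ into the constant. That is what you propose for $M_3$, and implicitly for $M_2$, where ``correlations are $O(1/N)$'' does not produce the constant $24$. This can only give $O(1/N)$, with constants above $10170$ and $96$.

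The missing step is to sum over $z$ before taking expectations, using $\sum_z\chi_z(y)=N\mathbf 1[y=0]$. Then $M_2=N^{-2}\sum_{x,y,u}f(x)f(y)f(u)f(x\oplus y\oplus u)$, and $M_3=N^{-3}\sum f(x_1)\cdots f(x_6)$ over $6$-tuples with $x_1\oplus\cdots\oplus x_6=0$. The odd set of such a tuple has XOR zero, so its size is $0$, $4$, or $6$, never $2$. Direct counting then gives $\Var M_2=24N(N-1)(N-2)/N^4$ and $\Var M_3=N^{-6}\sum_{\mathcal Y\ne\varnothing}c_{\mathcal Y}^2\le N^{-6}\bigl(720N^5+(N^3/24)(400N)^2\bigr)\le10^4/N$, with no cancellation left to track.

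In the Haar part, fix two slips. First, the bracket in your variance computation should be $N(N-1)/(N)_{2j}-N^2/((N)_j)^2$. Without the factor $N^2$ it is positive of order $N^{2-2j}$, and your bound fails. With the correct factor the bracket is nonpositive, because $(N)_{2j}=(N)_j(N+j)_j\ge((N)_j)^2$. You then get even $\Var M_j\le(2j)!/N$ without any ratio estimate. Second, $\E M_2^2=N^3(4N+20)/(N)_4$, not $N^2(4N+20)/(N)_4$. Finally, for $\E(M_2-2)^2\le4/N$, compare $N(4N^2-4N+24)$ directly with $4(N+1)(N+2)(N+3)$. Your intermediate inequality $4N^3\ge N(4N^2-4N+24)$ fails for $N<6$.
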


The proof is in Appendix~\ref{app:squaring-moments}.

The two limiting distributions are visible in these values: $Np_z$ is nearly an exponential random variable in the Haar ensemble, with moments $j!$, and $\sqrt N\widehat f(z)$ is nearly a standard Gaussian in the Fourier ensemble, with even moments $(2j-1)!!$.

\subsection{The output distribution for a fixed stored set}\label{sec:squaring-fixed-table}

Fix the oracle and the outcomes $Y_1,\dots,Y_m$, and let $P_S=p(S)=\sum_{z\in S}p_z$ be the mass of the stored set, in the notation of \cref{lem:dirichlet}(iii); it is positive because every stored outcome has positive probability. For a set $T$ with $p(T)>0$ let
\[
 \mu_T(z)=\frac{p_z\mathbf 1[z\in T]}{p(T)}
\]
be $p$ conditioned on $T$. We use $\mu_S$ and, when $P_S<1$, $\mu_{S^{\mathrm c}}$, where $S^{\mathrm c}$ is the complement of $S$. Put $g(P)=\sin^2\bigl(L\arcsin\sqrt P\bigr)$.

\begin{lemma}[Output distribution for a fixed stored set]\label{lem:squaring-fixed}
Conditioned on the oracle and on $S$, the output of $\mathsf{Sq}_m$ has the distribution
\[
 g(P_S)\,\mu_S+(1-g(P_S))\,\mu_{S^{\mathrm c}},
\]
where the second term is absent if $P_S=1$.
\end{lemma}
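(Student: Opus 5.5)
This follows directly from \cref{lem:amplification}, applied with $\psi=\ket{\psi_\omega}$ and $\Pi=\Pi_S$, for the fixed oracle and stored set. The state prepared in step~2 is $\ket{\psi_\omega}$, and $P=\|\Pi_S\psi_\omega\|^2=\sum_{z\in S}p_z=P_S>0$.

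\emph{Case $P_S=1$.} Part~(i) of the lemma shows that after $\ell$ steps the state is $(-1)^\ell\psi_\omega$. Its measurement distribution is $p$, and $p$ is supported on $S$, so it equals $\mu_S$. This matches the claim, because $g(1)=\sin^2(L\pi/2)=1$ for odd $L$, and the second term is absent.

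\emph{Case $0<P_S<1$.} Part~(ii) gives the final state $\sin(L\vartheta)\,u+\cos(L\vartheta)\,u^\perp$. Here $\vartheta=\arcsin\sqrt{P_S}$ and $L=2\ell+1$, with
\[
u=\frac{\Pi_S\psi_\omega}{\sqrt{P_S}},\qquad u^\perp=\frac{(I-\Pi_S)\psi_\omega}{\sqrt{1-P_S}}.
\]
The vector $u$ is supported on $S$ and $u^\perp$ on $S^{\mathrm c}$, so a computational-basis measurement has no cross terms. Outcome $z\in S$ therefore occurs with probability $\sin^2(L\vartheta)\,|\langle z,\psi_\omega\rangle|^2/P_S$. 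Since $|\langle z,\psi_\omega\rangle|^2=p_z$ in all three forms of access, this equals $g(P_S)\mu_S(z)$. In the same way, $z\notin S$ occurs with probability $\cos^2(L\vartheta)\,p_z/(1-P_S)=(1-g(P_S))\mu_{S^{\mathrm c}}(z)$.

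The step needing the most care is checking that the amplification operator in \cref{def:squaring} is exactly the operator of \cref{lem:amplification} acting on $\psi_\omega$ in every access model.

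\emph{Canonical oracle.} The reflection $2\ket\phi\bra\phi-I_{\mathsf Q}$ and $I-2\Pi_S$ both preserve $\C^N$, so the computation stays in $\C^N$ once $\ket\phi$ has been prepared, as already noted in \cref{sec:squaring-access}.

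\emph{Fourier ensemble and unitary access.} The reflection identities of \cref{sec:squaring-access} give the required operator directly.

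Finally, the number of steps $\ell$ does not depend on $S$, so conditioning on $S$ fixes the whole circuit.
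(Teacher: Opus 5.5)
Your proposal is correct and follows the paper's proof essentially verbatim: you apply \cref{lem:amplification} with $\psi=\ket{\psi_\omega}$ and $\Pi=\Pi_S$, handle $P_S=1$ by part~(i), and read off the measurement probabilities in part~(ii) from the disjoint supports of $u$ and $u^\perp$. Your explicit check that $g(1)=1$ because $L$ is odd is a small addition; the paper only makes that check later, in the proof of \cref{thm:output-law}.
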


\begin{proof}
Apply \cref{lem:amplification} with $\psi=\ket{\psi_\omega}$ and $\Pi=\Pi_S$, so that $P=\|\Pi_S\psi_\omega\|^2=P_S$. If $P_S=1$, the state after step~3 of \cref{def:squaring} is $\pm\ket{\psi_\omega}$ and the output has distribution $p=\mu_S$. Otherwise the state is $\sin(L\vartheta)u+\cos(L\vartheta)u^\perp$, where
\[
 \vartheta=\arcsin\sqrt{P_S},\qquad u=\frac{\Pi_S\psi_\omega}{\sqrt{P_S}},\qquad u^\perp=\frac{(I-\Pi_S)\psi_\omega}{\sqrt{1-P_S}} .
\]

When $P_S<1$, measuring this state gives $z\in S$ with probability $\sin^2(L\vartheta)\,p_z/P_S$ and $z\notin S$ with probability $\cos^2(L\vartheta)\,p_z/(1-P_S)$, because $u$ and $u^\perp$ are supported on $S$ and on its complement.
\end{proof}

\begin{lemma}[Amplification error]\label{lem:squaring-amplification-error}
Let $0<P_\star\le1/4$ and let $L$ be as in \cref{def:squaring}. Then, for every $P\in(0,1]$,
\[
 1-g(P)\le18\,(P/P_\star-1)^2+3P_\star .
\]
\end{lemma}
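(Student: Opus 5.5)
We split the claim into two facts and multiply them. The first is a sensitivity estimate in the angle variable. Write $\vartheta=\arcsin\sqrt P$. Then $1-g(P)=\cos^2(L\vartheta)=\sin^2(L\vartheta-\pi/2)\le(L\vartheta-\pi/2)^2$. The triangle inequality gives
\[
 |L\vartheta-\pi/2|\le L|\vartheta-\vartheta_0|+|L\vartheta_0-\pi/2|,
\]
and the last term is at most $\vartheta_0$ by the choice of $\ell$ in \cref{def:squaring}. Using $(x+y)^2\le2x^2+2y^2$, we obtain
\[
 1-g(P)\le2L^2(\vartheta-\vartheta_0)^2+2\vartheta_0^2 .
\]
For the second term, $P_\star\le1/4$ gives $\vartheta_0\le\pi/6$, so $\vartheta_0\le(\pi/3)\sqrt{P_\star}$ by concavity of $\arcsin$ on $[0,1/2]$. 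Hence $2\vartheta_0^2\le(2\pi^2/9)P_\star\le3P_\star$.

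It remains to bound $L^2(\vartheta-\vartheta_0)^2$ by $9(P/P_\star-1)^2$. First, $L\le\pi/(2\vartheta_0)+1\le(\pi/2+\pi/6)/\vartheta_0=2\pi/(3\vartheta_0)$, because $L\vartheta_0\le\pi/2+\vartheta_0$ and $\vartheta_0\le\pi/6$. So it suffices to show
\[
 |\vartheta-\vartheta_0|\le c\,\vartheta_0\,|P/P_\star-1| ,\qquad\text{with }2(2\pi/3)^2c^2\le18,\ \text{that is, } c\le 9/(2\pi)\approx1.43 .
\]
The first subcase is $P\le1/2$, where $\vartheta,\vartheta_0\le\pi/4$. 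By the mean value theorem, $|\vartheta-\vartheta_0|\le\max(\arcsin')\,|\sqrt P-\sqrt{P_\star}|$, and the derivative is at most $\sqrt2$ on $[0,1/\sqrt2]$. Moreover,
\[
 |\sqrt P-\sqrt{P_\star}|=\sqrt{P_\star}\,|P/P_\star-1|/(\sqrt{P/P_\star}+1)\le\sqrt{P_\star}\,|P/P_\star-1| ,
\]
and $\sqrt{P_\star}\le\vartheta_0$. This gives $c=\sqrt2$, which is at most $1.43$.

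The second subcase is $P>1/2$, which I expect to be the main obstacle because the derivative of $\arcsin$ blows up as $P\to1$. Here we do not need the mean value theorem. We use the crude bound $|\vartheta-\vartheta_0|\le\pi/2$. At the same time, $P/P_\star-1\ge 2-1=1$, and in fact $P/P_\star\ge 1/(2P_\star)\ge \pi^{-2}\cdot 9/(2\vartheta_0^2)$, since $P_\star\ge(3\vartheta_0/\pi)^2$ by the concavity bound. It then suffices to check that $\pi/2\le c\,\vartheta_0(P/P_\star-1)$, and this holds with room to spare since $\vartheta_0(P/P_\star-1)\gtrsim 1/\vartheta_0\ge 6/\pi$. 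If the constants prove tight anywhere, I would instead bound $2L^2(\vartheta-\vartheta_0)^2\le 2L^2(\pi/2)^2$ directly against $18(P/P_\star-1)^2$ using $L\lesssim 1/\sqrt{P_\star}$.

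The statement follows by adding the two pieces, giving the constants $18$ and $3$.
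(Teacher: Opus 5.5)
Your reduction to the angle, the bound $2\vartheta_0^2\le3P_\star$, and the subcase $P\le1/2$ are correct and follow the paper's proof. You bound $L\le2\pi/(3\vartheta_0)$ and use $\sqrt{P_\star}\le\vartheta_0$, which gives the constant $4(2\pi/3)^2\approx17.5$; the paper uses $L\sqrt{P_\star}\le\pi/2+1/2$ and gets $4(\pi/2+1/2)^2\approx17.2$.

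The subcase $P>1/2$, however, does not go through. The crude bound $|\vartheta-\vartheta_0|\le\pi/2$ is too lossy when $P_\star$ is of order one, and the claim that your sufficient condition holds ``with room to spare'' is false. Take $P_\star=1/4$, so that $\vartheta_0=\pi/6$, $\ell=1$, and $L=3$, and let $P\downarrow1/2$.
\begin{itemize}
\item Your condition $\pi/2\le c\,\vartheta_0(P/P_\star-1)$ with $c\le9/(2\pi)$ would then require $\pi/2\le(9/(2\pi))(\pi/6)=3/4$.
\item Your fallback would require $2L^2(\pi/2)^2=9\pi^2/2\approx44.4$ to be at most $18(P/P_\star-1)^2\to18$.
\end{itemize}
The heuristic $\vartheta_0(P/P_\star-1)\gtrsim1/\vartheta_0$ discards the $-1$ and a constant factor, and these are exactly what matter here.

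The justification is also reversed. The inequality $P_\star\ge(3\vartheta_0/\pi)^2$ is a lower bound on $P_\star$, so it cannot yield a lower bound on $1/(2P_\star)$. The correct input is $P_\star\le\vartheta_0^2$. Even that gives only $\vartheta_0(P/P_\star-1)\ge1/(2\vartheta_0)-\vartheta_0\approx0.43$ at $\vartheta_0=\pi/6$, which is still not enough.

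The repair is to bypass the angle estimate when $P>1/2$, as the paper does. Since $P_\star\le1/4$, we have $P/P_\star>2$, hence $1-g(P)\le1<(P/P_\star-1)^2$, and the claim follows at once.

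A smaller point: the inequality $\arcsin u\le(\pi/3)u$ on $[0,1/2]$ comes from the \emph{convexity} of $\arcsin$ on $[0,1]$, not from concavity. The chord from $(0,0)$ to $(1/2,\pi/6)$ lies above the graph; concavity would give the reverse inequality.
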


The two terms have different origins: the first is the change of the amplification angle when the actual probability $P$ differs from the design value $P_\star$, and the second is the rounding of the number of steps to an integer.

\begin{proof}
\emph{Step 1: reduction to the error in the angle.} Let $\vartheta=\arcsin\sqrt P\in(0,\pi/2]$. Then
\begin{align*}
 1-g(P)&=\cos^2(L\vartheta)=\sin^2(L\vartheta-\pi/2)\\
 &\le(L\vartheta-\pi/2)^2 ,
\end{align*}
and, by the triangle inequality and the property $|L\vartheta_0-\pi/2|\le\vartheta_0$ of \cref{def:squaring},
\begin{align*}
 |L\vartheta-\pi/2|&\le L|\vartheta-\vartheta_0|+|L\vartheta_0-\pi/2|\\
 &\le L|\vartheta-\vartheta_0|+\vartheta_0 ,\qquad\text{and therefore}\\
 1-g(P)&\le2L^2(\vartheta-\vartheta_0)^2+2\vartheta_0^2 .
\end{align*}

\emph{Step 2: bounds on $\vartheta_0$ and on $L$.} Since $\arcsin$ is convex on $[0,1]$ with $\arcsin(1/2)=\pi/6$, we have $\arcsin u\le(\pi/3)u$ for $0\le u\le1/2$. Hence
\[
 \vartheta_0\le(\pi/3)\sqrt{P_\star},\qquad 2\vartheta_0^2\le(2\pi^2/9)P_\star\le3P_\star .
\]
Moreover, $|L\vartheta_0-\pi/2|\le\vartheta_0$ gives $L\le\pi/(2\vartheta_0)+1$, and with $\sqrt{P_\star}\le\vartheta_0$ and $\sqrt{P_\star}\le1/2$ this gives $L\sqrt{P_\star}\le\pi/2+1/2$.

\emph{Step 3: the cases $P\le1/2$ and $P>1/2$.} If $P\le1/2$, then $\sqrt P$ and $\sqrt{P_\star}$ lie in $[0,1/\sqrt2]$, where the derivative of $\arcsin$ is at most $\sqrt2$, so
\begin{align*}
 |\vartheta-\vartheta_0|&\le\sqrt2\,\bigl|\sqrt P-\sqrt{P_\star}\bigr|=\sqrt{2P_\star}\,\bigl|\sqrt{P/P_\star}-1\bigr|\\
 &\le\sqrt{2P_\star}\,|P/P_\star-1| ,
\end{align*}
using $|\sqrt y-1|\le|y-1|$ for $y\ge0$. Therefore, by the bound on $L\sqrt{P_\star}$ in Step 2,
\begin{align*}
 2L^2(\vartheta-\vartheta_0)^2&\le4(L\sqrt{P_\star})^2(P/P_\star-1)^2\\
 &\le4(\pi/2+1/2)^2(P/P_\star-1)^2\\
 &\le18\,(P/P_\star-1)^2 .
\end{align*}

If instead $P>1/2$, then $P/P_\star>2$, so $1-g(P)\le1\le(P/P_\star-1)^2$, and the claim follows in both cases.
\end{proof}

\subsection{The output distribution of the algorithm}\label{sec:squaring-output-law}

We now average over the stored outcomes. Fix the oracle, let $Y_1,\dots,Y_m$ be independent with distribution $p$, and put
\[
 W_i=Np_{Y_i},\qquad \overline W=\frac1m\sum_{i=1}^mW_i .
\]
Let $\mathcal D$ be the event that $Y_1,\dots,Y_m$ are distinct, and let
\[
 \widetilde\mu_Y(z)=\frac{\sum_iW_i\mathbf 1[Y_i=z]}{m\overline W}
\]
be the probability distribution that gives each stored outcome a weight proportional to its reward; it is well defined because every $W_i$ is positive. On the event $\mathcal D$, $\widetilde\mu_Y$ equals $\mu_S$, the output distribution of the algorithm conditioned on the amplified sample lying in $S$ (\cref{lem:squaring-fixed}).

For a single stored set the normalization $\overline W$ has fluctuations of order $m^{-1/2}$, but the average of $\widetilde\mu_Y$ over the stored outcomes is within $O(1/m)$ of the collision distribution. The reason, made precise in the proof of the next lemma, is that conditioned on one stored outcome $z$, the normalization $m\overline W$ is $Np_z$ plus the sum of the rewards of the other $m-1$ outcomes, which is independent of $z$, so Jensen's inequality bounds the average from below in terms of the mean $(m-1)M_2$ of that sum.

\begin{lemma}[Identities for the stored set]\label{lem:squaring-table}
For every distribution $p$:
\begin{enumerate}
\item[(i)] $\E_Y\overline W=M_2$ and $\Var_Y(\overline W)\le M_3/m$;
\item[(ii)] on $\mathcal D$, $\widetilde\mu_Y=\mu_S$ and $P_S=m\overline W/N$;
\item[(iii)] $\Pr_Y[\mathcal D^{\mathrm c}]\le m(m-1)M_2/(2N)$;
\item[(iv)] the average $\E_Y\widetilde\mu_Y$ satisfies
\[
 \TV\bigl(\E_Y\widetilde\mu_Y,\,r_p\bigr)\le\frac{M_3}{mM_2^2},\qquad\text{where }\TV(a,b)=\frac12\sum_z|a(z)-b(z)| .
\]
\end{enumerate}
\end{lemma}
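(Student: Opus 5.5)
The four parts are independent, and I will prove them in order using only that $Y_1,\dots,Y_m$ are i.i.d.\ with law $p$ and that $W_i=Np_{Y_i}$.

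\emph{Part (i).} By linearity, $\E W_i=N\sum_zp_z^2=M_2$. By independence, $\Var\overline W=\Var(W_1)/m\le\E W_1^2/m$. Since $\E W_1^2=N^2\sum_zp_z^3=M_3$, this gives $\Var\overline W\le M_3/m$.

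\emph{Part (ii).} This is bookkeeping. On $\mathcal D$ each $z\in S$ equals exactly one $Y_i$, so $\widetilde\mu_Y(z)=Np_z/(m\overline W)$ for $z\in S$ and $\widetilde\mu_Y(z)=0$ otherwise. Moreover $m\overline W=N\sum_{z\in S}p_z=NP_S$. Together these give $\widetilde\mu_Y=\mu_S$ and $P_S=m\overline W/N$.

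\emph{Part (iii).} Apply a union bound over the pairs $i<j$. Each pair collides with probability $\sum_zp_z^2=M_2/N$, and there are $m(m-1)/2$ pairs, which gives the stated bound.

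\emph{Part (iv).} This is the main step. I write the average by symmetry over the index carrying $z$:
\[
 \E_Y\widetilde\mu_Y(z)=m\,\E\Bigl[\frac{W_1\mathbf 1[Y_1=z]}{m\overline W}\Bigr]=p_z\,\E\Bigl[\frac{mNp_z}{Np_z+R}\Bigr],
\]
where $R=\sum_{i\ge2}W_i$ is independent of $Y_1$ and has mean $(m-1)M_2$. Define $\nu(z)=\E_Y\widetilde\mu_Y(z)$ and the function $x\mapsto mNp_z^2/(Np_z+x)$. This function is convex in $x$, so Jensen gives
\[
 \nu(z)\ge\frac{mNp_z^2}{Np_z+(m-1)M_2}.
\]
Since both $\nu$ and $r_p$ are probability distributions, $\TV(\nu,r_p)=\sum_z(r_p(z)-\nu(z))_+$. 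So it suffices to bound each positive part by the difference $r_p(z)$ minus the Jensen lower bound. Recalling $r_p(z)=Np_z^2/M_2$, this difference is
\[
 \frac{Np_z^2}{M_2}-\frac{mNp_z^2}{Np_z+(m-1)M_2}=\frac{Np_z^2\,(Np_z-M_2)}{M_2\,(Np_z+(m-1)M_2)}.
\]
I drop the negative terms, use $(Np_z-M_2)_+\le Np_z$, and bound the denominator below by $mM_2^2$. This needs $Np_z+(m-1)M_2\ge mM_2$ on the terms that survive, which holds exactly when $Np_z\ge M_2$, and that is the case whenever the difference is positive. Summing then gives
\[
 \sum_z\frac{N^2p_z^3}{mM_2^2}=\frac{M_3}{mM_2^2}.
\]

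The only real obstacle is checking the direction of Jensen's inequality and the case split on $Np_z\ge M_2$ used to lower-bound the denominator. Both go through as written, so part (iv) holds.
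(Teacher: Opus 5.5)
Your proposal is correct and follows the paper's proof essentially step for step: the same moment computations for (i)--(iii), and for (iv) the same exchangeability reduction to $p_z\,\E[mNp_z/(Np_z+R)]$, Jensen's inequality for the convex map $x\mapsto 1/(Np_z+x)$, and the bound $(Np_z-M_2)_+/(Np_z+(m-1)M_2)\le Np_z/(mM_2)$ on the outcomes with $Np_z>M_2$. The only cosmetic difference is that you absorb the paper's two-case split into the positive part $(Np_z-M_2)_+$; summing gives the same $M_3/(mM_2^2)$.
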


\begin{proof}
Part~(i) holds because $\E W_1=\sum_zp_z\cdot Np_z=M_2$, $\Var W_1\le\E W_1^2=\sum_zp_zN^2p_z^2=M_3$, and the $W_i$ are independent. On $\mathcal D$,
\[
 \sum_iW_i\mathbf 1[Y_i=z]=Np_z\mathbf 1[z\in S],\qquad m\overline W=N\sum_{z\in S}p_z=NP_S ,
\]
so $\widetilde\mu_Y(z)=p_z\mathbf 1[z\in S]/P_S=\mu_S(z)$, which proves (ii). Part~(iii) follows from the union bound over pairs, since $\Pr[Y_i=Y_j]=\sum_zp_z^2=M_2/N$.

For part~(iv), write $w_z=Np_z$ and $\bar\mu=\E_Y\widetilde\mu_Y$, which is a probability distribution. Since the $Y_i$ are exchangeable,
\begin{align*}
 \bar\mu(z)&=m\,\E_Y\Bigl[\frac{W_1\mathbf 1[Y_1=z]}{W_1+W'}\Bigr]\\
 &=m\,p_z\,w_z\,\E\Bigl[\frac1{w_z+W'}\Bigr],\qquad W'=\sum_{i=2}^mW_i ,
\end{align*}
where $W'$ is independent of $Y_1$ and has mean $(m-1)M_2$ by (i). This holds for every $z$ with $p_z>0$; if $p_z=0$, then $\bar\mu(z)=r_p(z)=0$.

Since $t\mapsto1/t$ is convex on $t>0$, Jensen's inequality gives
\[
 \E\Bigl[\frac1{w_z+W'}\Bigr]\ge\frac1{w_z+(m-1)M_2} .
\]
Substituting this into the display above for $\bar\mu(z)$ and using $r_p(z)=p_zw_z/M_2$, we obtain
\begin{align*}
 \bar\mu(z)&\ge\frac{m\,p_zw_z}{w_z+(m-1)M_2}\\
 &=r_p(z)\,\frac{mM_2}{w_z+(m-1)M_2} .
\end{align*}

We distinguish two cases.
\begin{enumerate}
\item If $w_z\le M_2$, the fraction is at least $1$ and $\bar\mu(z)\ge r_p(z)$.
\item If $w_z>M_2$, then
\begin{align*}
 r_p(z)-\bar\mu(z)&\le r_p(z)\,\frac{w_z-M_2}{w_z+(m-1)M_2}\\
 &\le r_p(z)\,\frac{w_z-M_2}{mM_2} ,
\end{align*}
where the second inequality holds because $w_z+(m-1)M_2>mM_2$.
\end{enumerate}

Since $\bar\mu$ and $r_p$ are probability distributions, $\TV(\bar\mu,r_p)$ is the sum over $z$ of the positive parts of $r_p(z)-\bar\mu(z)$. Moreover,
\begin{equation}\label{eq:squaring-collision-score}
 \sum_zr_p(z)w_z=\frac{N^2\sum_zp_z^3}{N\sum_zp_z^2}=\frac{M_3}{M_2},
\end{equation}
so
\[
 \TV(\bar\mu,r_p)\le\frac1{mM_2}\sum_zr_p(z)\,w_z=\frac{M_3}{mM_2^2} . \qedhere
\]
\end{proof}

\begin{theorem}[Output distribution of the squaring algorithm]\label{thm:output-law}
For each oracle $\omega$ let $a_\omega(z)=\Pr[Z=z\mid\omega]$ be the output distribution of $\mathsf{Sq}_m$, where $m\ge1$ satisfies $\kappa_2m/N\le1/4$. Then
\begin{equation}\label{eq:output-law}
 \E_\omega\TV(a_\omega,r_{p_\omega})\le\frac1m\,\E_\omega\Bigl[\frac{M_3}{M_2^2}\Bigr]+\binom m2\frac{\E M_2}N+\frac{18}{\kappa_2^2}\Bigl(\frac{\E M_3}m+\E(M_2-\kappa_2)^2\Bigr)+\frac{3\kappa_2m}N .
\end{equation}
In particular, if $N\ge64$ and $m=\lceil N^{1/3}\rceil$, then $\mathsf{Sq}_m$ makes at most $3N^{1/3}$ queries and
\begin{equation}\label{eq:output-law-rate}
 \E_\omega\TV(a_\omega,r_{p_\omega})\le
 \begin{cases}
 40\,N^{-1/3}&\text{in the Haar ensemble,}\\
 60\,N^{-1/3}&\text{in the Fourier ensemble.}
 \end{cases}
\end{equation}
\end{theorem}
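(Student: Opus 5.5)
Fix the oracle $\omega$ and decompose the output distribution via \cref{lem:squaring-fixed}: $a_\omega=\E_Y\bigl[g(P_S)\mu_S+(1-g(P_S))\mu_{S^{\mathrm c}}\bigr]$. I compare it with $\E_Y\widetilde\mu_Y$ and then with $r_p$, by the triangle inequality for $\TV$ and convexity of $\TV$ in its first argument:
\[
 \TV(a_\omega,r_p)\le\TV\bigl(\E_Y\widetilde\mu_Y,r_p\bigr)+\E_Y\TV\bigl(g(P_S)\mu_S+(1-g(P_S))\mu_{S^{\mathrm c}},\,\widetilde\mu_Y\bigr).
\]
By \cref{lem:squaring-table}(iv), the first term is at most $M_3/(mM_2^2)$, which gives the first term of \eqref{eq:output-law} after averaging over $\omega$.

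For the second term I split on the event $\mathcal D$. Off $\mathcal D$ the distance is at most $1$, and \cref{lem:squaring-table}(iii) bounds $\Pr[\mathcal D^{\mathrm c}]$ by $\binom m2M_2/N$; averaging over $\omega$ gives the second term. On $\mathcal D$, $\widetilde\mu_Y=\mu_S$ by \cref{lem:squaring-table}(ii), so the distance is at most $1-g(P_S)$. By \cref{lem:squaring-amplification-error} this is at most $18(P_S/P_\star-1)^2+3P_\star$. On $\mathcal D$ we have $P_S/P_\star=\overline W/\kappa_2$, so
\[
 \E_Y\bigl[(P_S/P_\star-1)^2\mathbf 1_{\mathcal D}\bigr]\le\kappa_2^{-2}\,\E_Y(\overline W-\kappa_2)^2=\kappa_2^{-2}\bigl(\Var_Y\overline W+(M_2-\kappa_2)^2\bigr)\le\kappa_2^{-2}\bigl(M_3/m+(M_2-\kappa_2)^2\bigr),
\]
using \cref{lem:squaring-table}(i). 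Averaging over $\omega$ gives the third and fourth terms of \eqref{eq:output-law}, with $3P_\star=3\kappa_2m/N$. This proves \eqref{eq:output-law}.

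For the rate, take $m=\lceil N^{1/3}\rceil$ with $N\ge64$. Then $m\le N^{1/3}+1\le\tfrac54N^{1/3}$, so $P_\star\le3\cdot\tfrac54N^{-2/3}\le1/4$. The query bound \eqref{eq:squaring-queries} gives at most $m+1+\tfrac\pi2\sqrt{N/(\kappa_2m)}\le3N^{1/3}$, using $\kappa_2\ge2$ and $N^{1/3}\ge4$.

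For each term of \eqref{eq:output-law}, I use $M_2\ge1$ so that $\E[M_3/M_2^2]\le\E M_3\le\kappa_3$, and $\binom m2\le m^2/2\le\tfrac{25}{32}N^{2/3}$. Then \cref{lem:moments} gives $\E M_2\le\kappa_2$, $\E M_3\le\kappa_3$, and $\E(M_2-\kappa_2)^2\le4/N$ (Haar) or $28/N$ (Fourier). With $\kappa_2,\kappa_3=2,6$ (Haar) or $3,15$ (Fourier), each term is a constant times $N^{-1/3}$; the mean-square term is at most $N^{-2/3}\le N^{-1/3}/4$. The main obstacle is only keeping the constants within $40$ and $60$.

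A rough check for Haar gives $6+\tfrac{25}{16}+\tfrac{18}{4}(6+1)+\tfrac{15}{2}\approx46.6$, which overshoots $40$. I would tighten using $m\ge N^{1/3}$ in the $1/m$ terms. That gives $6+1.56+4.5(6)+4.5/4+7.5\approx43$, still slightly above, so I would also sharpen $\E[M_3/M_2^2]$ by using $M_2$ concentrating near $2$, giving about $6/4$ instead of $6$. This yields roughly $38.7\le40$. The Fourier case is handled the same way, using $M_2\approx3$ so that $\E[M_3/M_2^2]\approx15/9$.

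To make the sharpening rigorous I would bound $\E[M_3/M_2^2]$ by $\E M_3/c^2+\E[M_3\mathbf 1_{M_2<c}]$. The second piece is controlled by Cauchy--Schwarz with $\E M_3^2\le\kappa_6$ or $(3!)^2+6!/N$ and Chebyshev with the variance of $M_2$ from \cref{lem:moments}. This bookkeeping is where I expect to spend the effort.
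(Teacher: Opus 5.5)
Your derivation of \eqref{eq:output-law} is the paper's argument step for step. You use the same triangle inequality through $\E_Y\widetilde\mu_Y$, the same split on $\mathcal D$, and the same use of \cref{lem:squaring-amplification-error} with $P_S/P_\star=\overline W/\kappa_2$ and $\E_Y(\overline W-\kappa_2)^2\le M_3/m+(M_2-\kappa_2)^2$. Your checks of $P_\star\le1/4$ and of the query count $3N^{1/3}$ are also fine.

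The rate computation, however, contains an arithmetic slip, and the detour you propose to compensate for it would not work.

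\textbf{The slip.} The rounding term satisfies $3\kappa_2m/N\le\tfrac{15\kappa_2}{4}N^{-2/3}\le\tfrac{15\kappa_2}{16}N^{-1/3}$, because $N^{1/3}\ge4$. That is $\tfrac{30}{16}$ for Haar and $\tfrac{45}{16}$ for Fourier, not $\tfrac{15}{2}$. You used the coefficient of $N^{-2/3}$ as if it multiplied $N^{-1/3}$, although you applied $N^{-2/3}\le N^{-1/3}/4$ correctly to the mean-square term.

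\textbf{With the slip corrected, no sharpening is needed.} The crude bounds $\E[M_3/M_2^2]\le\E M_3\le\kappa_3$, $m\ge N^{1/3}$, $m(m-1)\le\tfrac{25}{16}N^{2/3}$ and $N^{-1}\le\tfrac1{16}N^{-1/3}$ already give the following totals:
\begin{itemize}
\item Haar ensemble: $6+\tfrac{25}{16}+27+\tfrac{18}{16}+\tfrac{30}{16}<38$.
\item Fourier ensemble: $15+\tfrac{75}{32}+30+\tfrac{56}{16}+\tfrac{45}{16}<54$.
\end{itemize}
This is exactly how the paper finishes, without sharpening $\E[M_3/M_2^2]$.

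\textbf{The sharpening would not have rescued the miscount.} It cannot work uniformly down to $N=64$. The Haar bounds available to you are $\E M_3^2\le36+720/N$ and $\E(M_2-2)^2\le4/N$. With these, your Cauchy--Schwarz/Chebyshev estimate at $N=64$ gives $\E[M_3\mathbf 1_{M_2<c}]\lesssim1.7/(2-c)$. Minimizing over $c$, the best certified value of $\E[M_3/M_2^2]$ is still close to $6$ (about $5.8$--$5.9$), not $6/4$. Your erroneous total would then be about $43>40$. So, as written, the plan for the constant fails at small $N$. The remedy is simply to fix the arithmetic.
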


\begin{proof}
\emph{Step 1: the triangle inequality.} Fix $\omega$, write $p=p_\omega$, let $a_{\omega,S}$ be the output distribution conditioned on $S$, which \cref{lem:squaring-fixed} describes, and let $\bar\mu=\E_Y\widetilde\mu_Y$. Since $a_\omega=\E_Ya_{\omega,S}$ and the total variation distance is convex, the triangle inequality and \cref{lem:squaring-table}(iv) give
\begin{align*}
 \TV(a_\omega,r_p)&\le\TV(a_\omega,\bar\mu)+\TV(\bar\mu,r_p)\\
 &\le\E_Y\TV(a_{\omega,S},\widetilde\mu_Y)+\frac{M_3}{mM_2^2} .
\end{align*}

\emph{Step 2: the two events.} We bound $\TV(a_{\omega,S},\widetilde\mu_Y)$ separately on $\mathcal D$ and on $\mathcal D^{\mathrm c}$.
\begin{enumerate}
\item On $\mathcal D$ we have $\widetilde\mu_Y=\mu_S$ by \cref{lem:squaring-table}(ii), and
\[
 a_{\omega,S}-\mu_S=(1-g(P_S))(\mu_{S^{\mathrm c}}-\mu_S),
\]
where $\mu_S$ and $\mu_{S^{\mathrm c}}$ are distributions with disjoint supports when $P_S<1$, and therefore $\TV(a_{\omega,S},\widetilde\mu_Y)=1-g(P_S)$. When $P_S=1$, both sides are $0$, because then $a_{\omega,S}=p=\mu_S$ by \cref{lem:squaring-fixed} and $g(1)=\sin^2(L\pi/2)=1$ since $L$ is odd.
\item On $\mathcal D^{\mathrm c}$, the total variation distance is at most $1$.
\end{enumerate}
Therefore
\[
 \TV(a_\omega,r_p)\le\E_Y\bigl[\mathbf 1_{\mathcal D}\,(1-g(P_S))\bigr]+\Pr\nolimits_Y[\mathcal D^{\mathrm c}]+\frac{M_3}{mM_2^2} .
\]

\emph{Step 3: the amplification bound and averaging over the oracle.} On $\mathcal D$ we have $P_S/P_\star=\overline W/\kappa_2$ by \cref{lem:squaring-table}(ii), so \cref{lem:squaring-amplification-error} gives
\[
 1-g(P_S)\le\frac{18}{\kappa_2^2}(\overline W-\kappa_2)^2+3P_\star ,
\]
and
\begin{align*}
 \E_Y(\overline W-\kappa_2)^2&=\Var_Y(\overline W)+(M_2-\kappa_2)^2\\
 &\le M_3/m+(M_2-\kappa_2)^2
\end{align*}
by \cref{lem:squaring-table}(i).

With part (iii) of the same lemma, the bounds of Step 2 and the two bounds above give
\[
 \TV(a_\omega,r_p)\le\frac{M_3}{mM_2^2}+\binom m2\frac{M_2}N+\frac{18}{\kappa_2^2}\Bigl(\frac{M_3}m+(M_2-\kappa_2)^2\Bigr)+\frac{3\kappa_2m}N ,
\]
and averaging over $\omega$ proves \eqref{eq:output-law}.

\emph{Step 4: the parameters of the second statement.} Let $N\ge64$ and $m=\lceil N^{1/3}\rceil$, so $m\ge N^{1/3}$ and
\begin{align*}
 m&\le N^{1/3}+1\le\frac54N^{1/3},\\
 \kappa_2m/N&\le\frac{15}4N^{-2/3}\le1/4 .
\end{align*}
By \eqref{eq:squaring-queries} the number of queries is at most
\begin{align*}
 m+1+\frac\pi2\sqrt{N/(\kappa_2m)}&\le\Bigl(\frac54+\frac\pi{2\sqrt2}\Bigr)N^{1/3}+1\\
 &\le3N^{1/3},
\end{align*}
since $N^{1/3}\ge4$. In \eqref{eq:output-law} we use $M_2\ge1$, the moments of \cref{lem:moments}, and the bounds
\begin{alignat*}{2}
 m&\ge N^{1/3}, &\qquad m(m-1)&\le\frac{25}{16}N^{2/3},\\
 m/N&\le\frac54N^{-2/3}\le\frac5{16}N^{-1/3}, &\qquad N^{-1}&\le\frac1{16}N^{-1/3} .
\end{alignat*}

\emph{Step 5: the Haar and Fourier ensembles.} The four terms of \eqref{eq:output-law} are at most the four summands on the left, in the Haar ensemble (first row) and in the Fourier ensemble (second row):
\begin{align*}
 \frac6m+\frac{m(m-1)}N+\Bigl(\frac{27}m+\frac{18}N\Bigr)+\frac{6m}N&\le\Bigl(6+\frac{25}{16}+27+\frac{18}{16}+\frac{30}{16}\Bigr)N^{-1/3}\le40N^{-1/3},\\
 \frac{15}m+\frac{3m(m-1)}{2N}+\Bigl(\frac{30}m+\frac{56}N\Bigr)+\frac{9m}N&\le\Bigl(15+\frac{75}{32}+30+\frac{56}{16}+\frac{45}{16}\Bigr)N^{-1/3}\le60N^{-1/3} . \qedhere
\end{align*}
\end{proof}

For $m$ of order $N^{1/3}$ the first three terms of \eqref{eq:output-law} are of order $N^{-1/3}$ and the fourth is of order $N^{-2/3}$: the first is the bias of the averaged stored-set distribution, the second the probability of a repeated stored outcome, the third the failure of the amplification caused by the deviation of $P_S$ from $P_\star$, and the fourth the rounding of the number of amplification steps.

\subsection{The score of the output distribution}\label{sec:squaring-scores}

A small error in total variation does not by itself control the score, since the reward $Np_z$ of a single outcome may be as large as $N$. The score that an output distribution $a$ loses relative to $r_p$ comes from the outcomes to which $a$ assigns less probability than $r_p$, and the next lemma makes use of this fact.

\begin{lemma}[Total variation and the score]\label{lem:squaring-tv-score}
For all distributions $p$ and $a$ on $\{0,1\}^n$,
\[
 \sum_za(z)\,Np_z\ge\frac{M_3}{M_2}-\sqrt{\frac{M_4}{M_2}\,\TV(a,r_p)} .
\]
Consequently, an algorithm whose output distribution for the oracle $\omega$ is $a_\omega$ has
\[
 B(\mathcal A)\ge\E_\omega\Bigl[\frac{M_3}{M_2}\Bigr]-\sqrt{\E_\omega\Bigl[\frac{M_4}{M_2}\Bigr]\,\E_\omega\TV(a_\omega,r_{p_\omega})} .
\]
\end{lemma}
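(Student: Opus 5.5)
The plan is to write the score of $a$ as the score of $r_p$ minus a loss, and to control the loss by Cauchy--Schwarz. By \eqref{eq:squaring-collision-score}, $\sum_zr_p(z)\,Np_z=M_3/M_2$. Let $w_z=Np_z$ and $d(z)=(r_p(z)-a(z))_+$, the positive part. Then
\[
 \sum_za(z)w_z\ge\sum_z\bigl(r_p(z)-d(z)\bigr)w_z=\frac{M_3}{M_2}-\sum_zd(z)w_z ,
\]
since $a(z)\ge r_p(z)-d(z)$ for every $z$ and $w_z\ge0$. The whole proof therefore reduces to bounding $\sum_zd(z)w_z$.

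For this I use the two facts mentioned before the statement. First, $\sum_zd(z)=\TV(a,r_p)$, because both $a$ and $r_p$ are probability distributions. Second, $0\le d(z)\le r_p(z)$, so $d(z)^2\le d(z)\,r_p(z)$. Writing $d(z)w_z=\sqrt{d(z)}\cdot\sqrt{d(z)}\,w_z$ and applying Cauchy--Schwarz gives
\[
 \sum_zd(z)w_z\le\Bigl(\sum_zd(z)\Bigr)^{1/2}\Bigl(\sum_zd(z)w_z^2\Bigr)^{1/2}\le\sqrt{\TV(a,r_p)}\,\Bigl(\sum_zr_p(z)w_z^2\Bigr)^{1/2}.
\]
It remains to compute the last sum, exactly as in \eqref{eq:squaring-collision-score}:
\[
 \sum_zr_p(z)w_z^2=\frac{N^3\sum_zp_z^4}{N\sum_zp_z^2}=\frac{M_4}{M_2}.
\]
This proves the pointwise inequality.

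For the consequence, apply the pointwise inequality to $a_\omega$ and $p_\omega$ for each oracle and take the expectation over $\omega$. The left side becomes $B(\mathcal A)$ by \eqref{eq:prelim-score}. The error term is $\E_\omega\sqrt{(M_4/M_2)\TV(a_\omega,r_{p_\omega})}$, and one more Cauchy--Schwarz inequality over $\omega$ bounds it by the square root of the product of the two expectations.

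No step is a real obstacle. The only point needing care is the choice of Cauchy--Schwarz split that uses $d\le r_p$, which is what turns the unbounded reward $w_z$ into the moment ratio $M_4/M_2$. One should also note that outcomes with $p_z=0$ contribute nothing, and that $M_2\ge1$, so all the ratios are well defined.
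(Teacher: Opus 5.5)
Your proposal is correct and follows the paper's proof essentially step for step: the same shortfall $d_z=(r_p(z)-a(z))_+$, the same Cauchy--Schwarz split using $d_z\le r_p(z)$ to obtain $M_4/M_2$, and the same final Cauchy--Schwarz over $\omega$. The remark $d(z)^2\le d(z)\,r_p(z)$ is not actually used, since the step you need is $\sum_zd(z)w_z^2\le\sum_zr_p(z)w_z^2$, which follows directly from $d\le r_p$, but this does not affect correctness.
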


\begin{proof}
The score of the collision distribution is $M_3/M_2$ by \eqref{eq:squaring-collision-score}.

\emph{Step 1: the shortfall of $a$ relative to $r_p$.} Let $d_z=\max\{r_p(z)-a(z),0\}$, so that
\[
 0\le d_z\le r_p(z),\qquad \sum_zd_z=\TV(a,r_p),\qquad a(z)\ge r_p(z)-d_z .
\]
Then, since the score of $r_p$ is $M_3/M_2$,
\[
 \sum_za(z)Np_z\ge\frac{M_3}{M_2}-\sum_zNp_zd_z .
\]

\emph{Step 2: the Cauchy--Schwarz bound.} By the Cauchy--Schwarz inequality, and then by the properties of $d_z$ in Step 1,
\begin{align*}
 \sum_zNp_zd_z&\le\Bigl(\sum_zN^2p_z^2d_z\Bigr)^{1/2}\Bigl(\sum_zd_z\Bigr)^{1/2}\\
 &\le\Bigl(\sum_zN^2p_z^2r_p(z)\Bigr)^{1/2}\TV(a,r_p)^{1/2},
\end{align*}
and
\[
 \sum_zN^2p_z^2r_p(z)=\frac{N^3\sum_zp_z^4}{N\sum_zp_z^2}=\frac{M_4}{M_2} .
\]
With Step 1, this proves the first claim.

For the second claim, average over $\omega$ and then bound $\E_\omega\bigl[(M_4/M_2)^{1/2}\TV(a_\omega,r_{p_\omega})^{1/2}\bigr]$ by the Cauchy--Schwarz inequality.
\end{proof}

\begin{lemma}[The limiting scores]\label{lem:squaring-limits}
Recall that $(\kappa_3,\kappa_4)=(6,24)$ in the Haar ensemble and $(15,105)$ in the Fourier ensemble, so that $\kappa_3/\kappa_2$ equals $3$ and $5$, and $\kappa_4/\kappa_2$ equals $12$ and $35$. Then
\[
 \E\Bigl[\frac{M_3}{M_2}\Bigr]\ge\frac{\kappa_3}{\kappa_2}-\frac{C_{\mathrm{lim}}}{\sqrt N},\qquad
 \E\Bigl[\frac{M_4}{M_2}\Bigr]\le\frac{\kappa_4}{\kappa_2}+\frac{C_{\mathrm{lim}}}{\sqrt N}
\]
for an absolute constant $C_{\mathrm{lim}}$.
\end{lemma}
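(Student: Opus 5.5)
The plan is to compare each ratio with the ratio of expectations and control the error using $M_2\ge1$ together with the variance bounds of \cref{lem:moments}.

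For the first claim, write
\[
\frac{M_3}{M_2}-\frac{\E M_3}{\E M_2}=\frac{M_3-\E M_3}{M_2}-\frac{\E M_3\,(M_2-\E M_2)}{M_2\,\E M_2}.
\]
Since $M_2\ge1$, Cauchy--Schwarz gives
\[
\E\Bigl|\frac{M_3}{M_2}-\frac{\E M_3}{\E M_2}\Bigr|\le\sqrt{\Var M_3}+\frac{\E M_3}{\E M_2}\sqrt{\Var M_2}.
\]
By \cref{lem:moments}, $\Var M_3$ and $\Var M_2$ are $O(1/N)$ in both ensembles. In the Haar case this uses the bound $\Var M_j\le((2j)!+j^2(j!)^2)/N$; in the Fourier case it uses $\Var M_3\le10^4/N$ and $\Var M_2\le24/N$. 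Hence $\E[M_3/M_2]\ge\E M_3/\E M_2-O(N^{-1/2})$.

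It then remains to compare $\E M_3/\E M_2$ with $\kappa_3/\kappa_2$, using the exact values in \cref{lem:moments}. For Haar,
\[
\frac{\E M_3}{\E M_2}=\frac{3N}{N+2}=3-\frac{6}{N+2}.
\]
For Fourier,
\[
\frac{\E M_3}{\E M_2}=\frac{15-30/N+16/N^2}{3-2/N}=5-O(1/N).
\]
Both deviations are $O(1/N)\le O(N^{-1/2})$.

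For the second claim I would use the same decomposition with $M_4$ in place of $M_3$. This needs $\Var M_4=O(1/N)$ and $\E M_4\le\kappa_4$. In the Haar case both follow from \cref{lem:moments}(i) with $j=4$.

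In the Fourier case, \cref{lem:moments}(ii) gives only $\E M_4\le\kappa_4$ and $\E M_4^2\le\kappa_8$, not a variance bound. There is, however, a route that needs only upper bounds. By Cauchy--Schwarz and $M_2\ge1$,
\[
\frac{M_4}{M_2}\le\frac{M_4}{\kappa_2}+\frac{M_4\,(\kappa_2-M_2)_+}{\kappa_2},
\qquad\text{so}\qquad
\E\Bigl[\frac{M_4}{M_2}\Bigr]\le\frac{\kappa_4}{\kappa_2}+\frac{1}{\kappa_2}\sqrt{\E M_4^2\,\E(M_2-\kappa_2)^2}.
\]
Both ensembles give $\E(M_2-\kappa_2)^2=O(1/N)$: the Haar bound is $4/N$ and the Fourier bound is $28/N$. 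Combined with $\E M_4^2\le\kappa_8$, or with $(4!)^2+8!/N$ for Haar, this yields the $O(N^{-1/2})$ error. Since this route avoids any variance of $M_4$, I would use it for both ensembles.

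The same one-sided trick also gives the first claim. Write $M_3/M_2\ge M_3/\kappa_2-M_3(M_2-\kappa_2)_+/\kappa_2$. This requires a lower bound on $\E M_3$, which is exact, together with $\E M_3^2=O(1)$ and $\E(M_2-\kappa_2)^2=O(1/N)$. It is cleaner and avoids $\Var M_3$.

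The only delicate point is this lower-bound step. In the term $M_3(M_2-\kappa_2)_+/M_2$, I would use $M_2\ge\kappa_2$ on the set where the positive part is nonzero, so the denominator is at least $\kappa_2$. Everything else is routine bookkeeping of constants, so I expect no real obstacle; $C_{\mathrm{lim}}$ is then the maximum of the resulting explicit constants.
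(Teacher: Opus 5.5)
Your proposal is correct, and the route you settle on for both claims is exactly the paper's argument: write $M_j/M_2-M_j/\kappa_2=M_j(\kappa_2-M_2)/(\kappa_2M_2)$, bound the denominator using $M_2\ge1$, and apply Cauchy--Schwarz with $\E M_j^2=O(1)$, $\E(M_2-\kappa_2)^2=O(1/N)$, $\E M_3\ge\kappa_3-30/N$, and $\E M_4\le\kappa_4$. The paper uses the two-sided form $|M_j/M_2-M_j/\kappa_2|\le M_j|M_2-\kappa_2|/\kappa_2$ for $j=3,4$ at once, so neither the positive parts nor the appeal to $M_2\ge\kappa_2$ is needed, and your variance-based route for $M_3$, though valid, is superfluous.
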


\begin{proof}
\emph{Step 1: comparison with $\E M_j/\kappa_2$.} Let $j\in\{3,4\}$. Since $M_2\ge1$,
\[
 \Bigl|\frac{M_j}{M_2}-\frac{M_j}{\kappa_2}\Bigr|=\frac{M_j\,|M_2-\kappa_2|}{\kappa_2M_2}\le\frac{M_j\,|M_2-\kappa_2|}{\kappa_2},
\]
so, by the Cauchy--Schwarz inequality,
\[
 \Bigl|\E\Bigl[\frac{M_j}{M_2}\Bigr]-\frac{\E M_j}{\kappa_2}\Bigr|\le\frac{\bigl(\E M_j^2\;\E(M_2-\kappa_2)^2\bigr)^{1/2}}{\kappa_2} .
\]

\emph{Step 2: the moment bounds.} By \cref{lem:moments},
\[
 \E M_3\ge\kappa_3-30/N,\qquad \E M_4\le\kappa_4,\qquad \E(M_2-\kappa_2)^2\le28/N,
\]
and $\E M_3^2$ and $\E M_4^2$ are bounded by absolute constants, in both ensembles; in the Haar ensemble the first bound holds because $\kappa_3-\E M_3=6(3N+2)/((N+1)(N+2))\le18/N$. With Step 1, this gives both bounds.
\end{proof}

\begin{theorem}[Scores of the squaring algorithm]\label{thm:squaring-scores}
Let $N\ge64$ and $m=\lceil N^{1/3}\rceil$. Then the squaring algorithm has the following properties.
\begin{enumerate}
\item[(i)] It makes $O(N^{1/3})$ queries.
\item[(ii)] Its output distribution satisfies $\E_\omega\TV(a_\omega,r_{p_\omega})=O(N^{-1/3})$.
\item[(iii)] Its expected score is at least $5-O(N^{-1/6})$ in the Fourier ensemble and at least $3-O(N^{-1/6})$ for either form of Haar access.
\end{enumerate}
\end{theorem}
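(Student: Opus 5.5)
Items~(i) and~(ii) are the second statement of \cref{thm:output-law}, applied with $m=\lceil N^{1/3}\rceil$ and $N\ge64$. That theorem gives at most $3N^{1/3}$ queries and $\E_\omega\TV(a_\omega,r_{p_\omega})\le40N^{-1/3}$ in the Haar ensemble and $60N^{-1/3}$ in the Fourier ensemble. The same analysis covers the canonical oracle and Haar-random unitary access. The reason is that \cref{sec:squaring-access} shows that in all three forms of access one query prepares $\ket{\psi_\omega}$ and two queries implement the reflection about it. Lemmas~\ref{lem:squaring-fixed}--\ref{lem:squaring-table} use only this structure and the distribution of $p_\omega$. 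In both Haar cases $\ket{\psi}=U\ket{0^n}$ or $\ket\phi$ is Haar-random, so $p$ is uniform on the simplex and the Haar moment bounds of \cref{lem:moments} apply. So (i) and (ii) need only citation.

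For item~(iii) I will combine the second claim of \cref{lem:squaring-tv-score} with \cref{lem:squaring-limits}. That lemma gives
\[
 B(\mathsf{Sq}_m)\ge\E\Bigl[\frac{M_3}{M_2}\Bigr]-\sqrt{\E\Bigl[\frac{M_4}{M_2}\Bigr]\,\E\TV(a_\omega,r_{p_\omega})} .
\]
By \cref{lem:squaring-limits}, the first term is at least $\kappa_3/\kappa_2-C_{\mathrm{lim}}/\sqrt N$, which is $5-O(N^{-1/2})$ for Fourier and $3-O(N^{-1/2})$ for Haar. The factor $\E[M_4/M_2]$ is at most $\kappa_4/\kappa_2+C_{\mathrm{lim}}/\sqrt N=O(1)$. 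By item~(ii), $\E\TV=O(N^{-1/3})$. The square root is therefore $O(N^{-1/6})$, and this dominates the $N^{-1/2}$ error, giving the claimed $5-O(N^{-1/6})$ and $3-O(N^{-1/6})$.

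The only point needing care is that \cref{lem:squaring-limits} and the total-variation bound are stated for the ideal distribution of the oracle ensemble. I must check that under unitary access the scored state $U\ket{0^n}$ has the same law of $p$ as in the canonical case. It does, since the first column of a Haar unitary is a Haar-random unit vector, so the moment inputs are identical. Otherwise the argument is direct bookkeeping, and I expect no real obstacle beyond this routine check. The explicit constants could be tracked, for example giving $\sqrt{(35+o(1))\cdot60}\,N^{-1/6}$ for Fourier, but only the order is claimed.
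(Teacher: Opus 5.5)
Your proposal is correct and follows the paper's proof: items (i) and (ii) come from the second statement of \cref{thm:output-law}, and item (iii) comes from inserting that total-variation bound and \cref{lem:squaring-limits} into the second claim of \cref{lem:squaring-tv-score}, where the square-root term of order $N^{-1/6}$ dominates the $O(N^{-1/2})$ error. Your check that $p$ is uniform on the simplex under unitary access is exactly what lets the paper treat both forms of Haar access with the same moments.
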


\begin{proof}
Combine \cref{thm:output-law} with \cref{lem:squaring-tv-score,lem:squaring-limits}. The score is then at least
\[
 \frac{\kappa_3}{\kappa_2}-O(N^{-1/2})-\Bigl(\bigl(\kappa_4/\kappa_2+O(N^{-1/2})\bigr)\,O(N^{-1/3})\Bigr)^{1/2} . \qedhere
\]
\end{proof}

This proves part~3 of \cref{thm:intro-xeb}. The algorithm reaches every fixed score below $5$ in the Fourier ensemble and below $3$ in the Haar ensemble, which are the limiting scores of the collision distribution. Together with \cref{thm:fourier-lower} and, for the Haar ensemble, \cref{thm:haar-lower}, its \cref{cor:haar-birthday}, and \cref{thm:haar-unitary}, it also gives the formal version of \cref{cor:collision}.

\begin{corollary}[Sampling the collision distribution]\label{cor:collision-formal}
Fix $0<\nu<4/35$ in the Fourier ensemble, or $0<\nu<1/12$ for either form of Haar access.
\begin{enumerate}
\item[(i)] Every algorithm with no supplied copies whose output distribution satisfies $\E_\omega\TV(a_\omega,r_{p_\omega})\le\nu$ has expected score at~least
\[
 \frac{\kappa_3}{\kappa_2}-\sqrt{(\kappa_4/\kappa_2)\,\nu}-O(N^{-1/2}),
\]
which exceeds the ideal score by a positive constant for all large $N$, and therefore makes $\Omega(N^{1/3})$ queries.
\item[(ii)] For $N\ge64$, the squaring algorithm with $m=\lceil N^{1/3}\rceil$ makes $O(N^{1/3})$ queries and has $\E_\omega\TV(a_\omega,r_{p_\omega})=O(N^{-1/3})$.
\end{enumerate}
\end{corollary}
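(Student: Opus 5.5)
Part~(i) follows from \cref{lem:squaring-tv-score}, the limits of \cref{lem:squaring-limits}, and the lower bounds proved earlier. Part~(ii) is a direct restatement of \cref{thm:squaring-scores}(i),(ii), itself a consequence of \cref{thm:output-law}. So the plan for part~(ii) is simply to cite \cref{thm:output-law} with $m=\lceil N^{1/3}\rceil$, which gives at most $3N^{1/3}$ queries and total variation error at most $60N^{-1/3}$.

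For part~(i), let $a_\omega$ be the output distribution of an algorithm with $\E_\omega\TV(a_\omega,r_{p_\omega})\le\nu$. The second inequality of \cref{lem:squaring-tv-score} gives
\[
 B(\mathcal A)\ge\E\Bigl[\frac{M_3}{M_2}\Bigr]-\sqrt{\E\Bigl[\frac{M_4}{M_2}\Bigr]\nu} .
\]
By \cref{lem:squaring-limits}, the right side is at least
\[
 \kappa_3/\kappa_2-C_{\mathrm{lim}}/\sqrt N-\sqrt{(\kappa_4/\kappa_2+C_{\mathrm{lim}}/\sqrt N)\nu} .
\]
Using $\sqrt{x+y}\le\sqrt x+\sqrt y$ and $\nu\le1$, this is $\kappa_3/\kappa_2-\sqrt{(\kappa_4/\kappa_2)\nu}-O(N^{-1/4})$. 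To get the stated $O(N^{-1/2})$, I would instead bound $\sqrt{x+\epsilon}\le\sqrt x+\epsilon/(2\sqrt x)$ with $x=(\kappa_4/\kappa_2)\nu$ fixed and positive.

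The limiting score is $\kappa_3/\kappa_2=5$ (Fourier) or $3$ (Haar), and the ideal score is $\kappa_2=3$ or at most $2$. The gap
\[
 \kappa_3/\kappa_2-\kappa_2-\sqrt{(\kappa_4/\kappa_2)\nu}
\]
is positive exactly when $\sqrt{35\nu}<2$ or $\sqrt{12\nu}<1$, that is, when $\nu<4/35$ or $\nu<1/12$. These are the hypotheses. So for fixed $\nu$ the score exceeds the ideal score by a constant $\varepsilon(\nu)>0$ once $N$ is large.

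It remains to convert this excess into a query bound.
\begin{itemize}
\item For Fourier sampling, \cref{thm:fourier-lower} with $\varepsilon=\min(\varepsilon(\nu),4)$ gives $q\ge(3\varepsilon/16)^{2/3}N^{1/3}$.
\item For the canonical oracle, \cref{cor:haar-birthday}(i) with $s=0$ gives $q\ge c_2\varepsilon^{2/3}N^{1/3}$.
\item For unitary access, \cref{thm:haar-unitary} applies once $N$ is large enough in terms of $\varepsilon$, which is fine because $\nu$ is fixed and the statement is asymptotic.
\end{itemize}
For small $N$, the $\Omega$ notation absorbs constants, because every algorithm that is not query-free already makes at least one query. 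I would also check that a query-free algorithm cannot meet the hypothesis for large $N$. This holds because its score is $1$, below the forced lower bound.

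The main obstacle is mostly bookkeeping: verifying that the thresholds $4/35$ and $1/12$ are exactly where the constant excess vanishes, and choosing $N$ large enough that the $O(N^{-1/2})$ error and, under unitary access, the $H_N$ term in \cref{thm:haar-unitary} are both below half the gap.
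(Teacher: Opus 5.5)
Your proposal is correct and follows the paper's proof essentially step for step. You combine \cref{lem:squaring-tv-score} with \cref{lem:squaring-limits}, use the same first-order bound $\sqrt{(c+t)\nu}-\sqrt{c\nu}\le t\sqrt\nu/(2\sqrt c)$ to get the $O(N^{-1/2})$ error, check the same thresholds, and cite \cref{thm:fourier-lower}, \cref{cor:haar-birthday}, and \cref{thm:haar-unitary} for the query bounds, with part~(ii) read off from \cref{thm:squaring-scores}; your extra remarks on small $N$ and query-free algorithms are harmless bookkeeping that the paper leaves implicit.
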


\begin{proof}
The score bound follows from \cref{lem:squaring-tv-score,lem:squaring-limits}, using
\[
 \sqrt{(c+t)\nu}-\sqrt{c\nu}=\frac{t\sqrt\nu}{\sqrt{c+t}+\sqrt c}\le\frac{t\sqrt\nu}{2\sqrt c}
\]
with $c=\kappa_4/\kappa_2$ and $t=C_{\mathrm{lim}}/\sqrt N$.

The thresholds are where the bound meets the ideal score:
\begin{align*}
 5-\sqrt{35\nu}>3&\quad\text{if and only if}\quad\nu<4/35,\\
 3-\sqrt{12\nu}>2&\quad\text{if and only if}\quad\nu<1/12 .
\end{align*}

The query lower bound then follows from \cref{thm:fourier-lower} in the Fourier ensemble and from \cref{cor:haar-birthday}, with no copies, and \cref{thm:haar-unitary} in the Haar ensemble. Part~(ii) restates parts~(i) and~(ii) of \cref{thm:squaring-scores}.
\end{proof}

The lower bounds of this paper concern the expected score, and \cref{sec:intro-results} explains how they apply to the empirical score of a list of samples. In the other direction, the empirical score of independent runs of the squaring algorithm is, with high probability, at least any fixed value below its limiting score.

\begin{corollary}[Empirical scores]\label{cor:empirical}
Let $N\ge64$ and $m=\lceil N^{1/3}\rceil$, let $Z_1,\dots,Z_k$ be the outputs of $k$ independent runs of $\mathsf{Sq}_m$ on the same oracle $\omega$, each with a fresh stored set, and let
\[
 \widehat B_k=\frac1k\sum_{i=1}^kNp_\omega(Z_i)
\]
be their empirical score. Then, for $0<t\le1$,
\[
 \Pr\Bigl[\widehat B_k\le\frac{\kappa_3}{\kappa_2}-t\Bigr]\le\frac C{t^2}\Bigl(\frac1k+N^{-1/6}\Bigr)
\]
for an absolute constant $C$, where the probability is over the oracle and the $k$ runs, and $\kappa_3/\kappa_2$ is $5$ in the Fourier ensemble and $3$ for either form of Haar access. In particular, for every fixed $b<\kappa_3/\kappa_2$ and $\zeta>0$, $k=O(1/\zeta)$ runs suffice for the empirical score to be at least $b$ with probability at least $1-\zeta$, for all large $N$.
\end{corollary}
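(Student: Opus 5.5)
We prove \cref{cor:empirical} with Chebyshev's inequality applied conditionally on the oracle, plus a Markov bound that handles the oracle-dependent mean. Fix $\omega$ and write $a_\omega$ for the output distribution of $\mathsf{Sq}_m$. Each run draws a fresh stored set, so the $Z_i$ are i.i.d.\ with law $a_\omega$. Hence $\widehat B_k$ has conditional mean
\[
\mu_\omega=\sum_za_\omega(z)Np_z
\]
and conditional variance at most $\frac1k\sum_za_\omega(z)N^2p_z^2$.

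\textbf{The conditional mean.} By the first claim of \cref{lem:squaring-tv-score},
\[
\mu_\omega\ge\frac{M_3}{M_2}-\sqrt{\frac{M_4}{M_2}\TV(a_\omega,r_p)}.
\]
The event $\widehat B_k\le\kappa_3/\kappa_2-t$ is therefore contained in the union of three events:
\begin{enumerate}
\item[(a)] $M_3/M_2\le\kappa_3/\kappa_2-t/3$;
\item[(b)] $\sqrt{(M_4/M_2)\TV(a_\omega,r_p)}\ge t/3$;
\item[(c)] $\widehat B_k\le\mu_\omega-t/3$.
\end{enumerate}

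\textbf{Events (a) and (b).} For (a), use $M_2\ge1$ and the bound
\[
\Bigl|\frac{M_3}{M_2}-\frac{\kappa_3}{\kappa_2}\Bigr|\le|M_3-\kappa_3|/\kappa_2+M_3|M_2-\kappa_2|/\kappa_2,
\]
as in the proof of \cref{lem:squaring-limits}. Chebyshev's inequality together with \cref{lem:moments} ($\Var M_3=O(1/N)$, $\E(M_2-\kappa_2)^2=O(1/N)$, $\E M_3^2=O(1)$, and $|\E M_3-\kappa_3|=O(1/N)$) gives probability $O(1/(t^2\sqrt N))$. For the product term, Markov's inequality applied to $M_3|M_2-\kappa_2|$, whose expectation is $O(N^{-1/2})$ by Cauchy--Schwarz, gives $O(1/(tN^{1/2}))$, which is still $O(N^{-1/6}/t^2)$ because $t\le1$.

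For (b), Markov's inequality gives
\[
\Pr\bigl[(M_4/M_2)\TV\ge t^2/9\bigr]\le\frac{9}{t^2}\,\E\Bigl[\frac{M_4}{M_2}\TV\Bigr].
\]
Since $\TV\le1$ and $M_2\ge1$, the integrand is at most $\sqrt{M_4^2}\sqrt{\TV}$. By Cauchy--Schwarz,
\[
\E\Bigl[\frac{M_4}{M_2}\TV\Bigr]\le\sqrt{\E M_4^2}\sqrt{\E\TV}=O(N^{-1/6}),
\]
using $\E M_4^2=O(1)$ from \cref{lem:moments} and $\E\TV=O(N^{-1/3})$ from \cref{thm:output-law}. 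This is where the $N^{-1/6}$ term in the statement comes from.

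\textbf{Event (c), the main obstacle.} The difficulty is controlling the second moment of the reward under $a_\omega$ without invoking sharper tail information. I would bound it crudely, with no total-variation argument. By \cref{lem:squaring-fixed}, the output law conditioned on $S$ is at most
\[
\mu_S+\mu_{S^{\mathrm c}}\le\frac{p}{P_S}+\frac{p}{1-P_S}.
\]
This depends on $P_S$, which is awkward. Instead, I use the exact averaged form from the proof of \cref{lem:squaring-table}(iv): the $S$-part equals
\[
mp_zw_z\,\E\bigl[1/(w_z+W')\bigr]\le\frac{mp_zw_z}{w_z}=mp_z,
\]
which is still too large. Better is the bound $\le p_zw_z$ times $\E[m/(w_z+W')]$; if $W'$ concentrates near $(m-1)M_2$, this is roughly $r_p(z)$. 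So I would split once more: on $\mathcal D$ the output lies in $S$ or $S^{\mathrm c}$, and the second moment of $Np_z$ under $a_\omega$ is at most
\[
\E_Y\Bigl[\frac{\sum_{z\in S}Np_z^3N}{NP_S}\Bigr]+\frac{M_2}{1-P_\star}\cdots
\]
If that becomes messy, I fall back on a truncation. Replace the reward by $\min(Np_z,R)$ with a constant $R$. The truncated empirical mean then has variance at most $R^2/k$, and Chebyshev gives $O(R^2/(kt^2))$. The truncation loss in the mean is at most $\sum_za_\omega(z)Np_z\mathbf 1[Np_z>R]$. This is at most the corresponding $r_p$ term plus, by the shortfall argument of \cref{lem:squaring-tv-score}, $O(\sqrt{\TV})$. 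The $r_p$ term is at most $M_4/(RM_2)$, which is $O(1/R)$ in expectation. Choosing $R$ to be a constant times $1/t$ makes the truncation loss at most $t/6$ in expectation, apart from a Markov event of probability $O(1/t^2)$ times small quantities. The resulting constants are absorbed into $C/t^2$, with the $1/k$ term picking up a $t$-dependent factor that I would then tune. The final claim follows by taking $t=\kappa_3/\kappa_2-b$ and $k$ large.
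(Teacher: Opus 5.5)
Your treatment of events (a) and (b) is correct and matches the paper. Event (c), however, is never closed, and it is the step that needs a new idea. Chebyshev requires $\E_\omega Q_\omega=O(1)$ with an absolute constant, where $Q_\omega=\sum_za_\omega(z)N^2p_\omega(z)^2$ is the conditional second moment of one reward; the trivial bound $Q_\omega\le(N\max_zp_\omega(z))^2$ gives only $O(n^2)$. Your first two attempts are left unfinished, and the truncation fallback fails as written, for three reasons.

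(1) Bounding the truncated variance by $R^2$ gives a Chebyshev term $R^2/(kt^2)$. Any $N$-independent $R$ must grow as $t\to0$ for the truncation loss to be at most of order $t$. With your estimate $(Np_z-R)_+\le N^2p_z^2/R$ you need $R\gtrsim1/t$, so you get $1/(kt^4)$, not $C/(kt^2)$.

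(2) A loss $M_4/(M_2R)$ that is at most $t/6$ \emph{in expectation} is not at most $t/6$ with small probability. Markov gives only $\Pr[M_4/(M_2R)\ge t/6]\le6\E M_4/(tR)$, which is a constant when $R\asymp1/t$, and \cref{lem:moments} provides no variance bound for $M_4$ in the Fourier ensemble.

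(3) You bound $\sum_za_\omega(z)Np_z\mathbf 1[Np_z>R]$ from above by the $r_p$ term plus $O(\sqrt{\TV})$ via the shortfall argument. That argument controls only the deficit $(r_p-a_\omega)_+\le r_p$. The excess $(a_\omega-r_p)_+$ is not dominated by $r_p$, so Cauchy--Schwarz against $r_p$ does not apply to it.

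Even the final claim $k=O(1/\zeta)$ does not follow: with constant $R$ and Markov you need $R\gtrsim1/(t\zeta)$, and then $k\gtrsim1/(t^4\zeta^3)$.

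The missing step in the paper is a split of the second moment along $a_\omega=r_p+(a_\omega-r_p)$. Since $\sum_zr_p(z)N^2p_z^2=M_4/M_2\le M_4$ and $\sum_z|a_\omega(z)-r_p(z)|=2\TV(a_\omega,r_p)$, one has $Q_\omega\le M_4+2\TV(a_\omega,r_p)(X^\ast_\omega)^2$ with $X^\ast_\omega=N\max_zp_\omega(z)$. Cauchy--Schwarz with $\TV\le1$ then gives $\E_\omega Q_\omega\le\kappa_4+2\sqrt{\E_\omega\TV(a_\omega,r_p)}\sqrt{\E(X^\ast_\omega)^4}$. The tail bound $\Pr[X^\ast_\omega\ge u]\le2Ne^{-u/2}$ (\cref{lem:fr-max-weight} for Fourier sampling, and the $\operatorname{Beta}(1,N-1)$ marginal with a union bound for Haar) gives $\E(X^\ast_\omega)^4=O(n^4)$. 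Hence $\E_\omega Q_\omega=\kappa_4+O(n^2N^{-1/6})=O(1)$, and Chebyshev with margin $t/2$ finishes.

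Alternatively, your truncation can be rescued without any tail bound by letting $R$ grow with $N$. Take $R=N^{1/6}$ and bound the truncated second moment by $M_4/M_2+R^2\TV(a_\omega,r_p)$ instead of by $R^2$; by \cref{thm:output-law} this has mean at most $\kappa_4+O(1)$. Then apply the shortfall argument of \cref{lem:squaring-tv-score} directly to the bounded reward $\min(Np_z,R)$. This gives a conditional mean of at least $M_3/M_2-M_4/(M_2R)-\sqrt{M_4\TV(a_\omega,r_p)}$, whose middle term exceeds $t/6$ with probability at most $6\kappa_4/(tN^{1/6})$.
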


\begin{proof}
\emph{Step 1: reduction by Chebyshev's inequality.} Let
\[
 B_\omega=\sum_za_\omega(z)Np_\omega(z),\qquad Q_\omega=\sum_za_\omega(z)N^2p_\omega(z)^2
\]
be the mean and the second moment of the reward of one run conditioned on $\omega$. Conditioned on $\omega$, the rewards $Np_\omega(Z_i)$ are independent with mean $B_\omega$ and variance at most $Q_\omega$, so Chebyshev's inequality gives
\[
 \Pr[\widehat B_k\le B_\omega-t/2\mid\omega]\le\frac{4Q_\omega}{kt^2},
\]
and therefore
\[
 \Pr\Bigl[\widehat B_k\le\frac{\kappa_3}{\kappa_2}-t\Bigr]\le\Pr\Bigl[B_\omega\le\frac{\kappa_3}{\kappa_2}-\frac t2\Bigr]+\frac{4\,\E_\omega Q_\omega}{kt^2} .
\]

\emph{Step 2: the second moment.} Write $X^\ast_\omega=N\max_zp_\omega(z)$. Since $\sum_zr_p(z)N^2p_z^2=M_4/M_2\le M_4$, as in the proof of \cref{lem:squaring-tv-score}, and $\sum_z|a_\omega(z)-r_p(z)|=2\TV(a_\omega,r_p)$, we have
\[
 Q_\omega\le M_4+2\TV(a_\omega,r_p)(X^\ast_\omega)^2 .
\]
Since $\TV(a_\omega,r_p)\le1$, the Cauchy--Schwarz inequality gives
\[
 \E_\omega Q_\omega\le\kappa_4+2\sqrt{\E_\omega\TV(a_\omega,r_p)}\,\sqrt{\E_\omega (X^\ast_\omega)^4} .
\]

In both ensembles, $\Pr[X^\ast_\omega\ge u]\le2Ne^{-u/2}$ for $u\ge0$. For Fourier sampling this is part of \cref{lem:fr-max-weight}. For a Haar-random state, \cref{lem:dirichlet}(iii) with $S=\{z\}$ shows that $p_z$ has the density $(N-1)(1-s)^{N-2}$ on $[0,1]$, so, for $0\le u\le N$,
\begin{align*}
 \Pr[Np_z\ge u]&=(1-u/N)^{N-1}\le e^{-u(N-1)/N}\\
 &\le e^{-u/2}\qquad\text{since $N-1\ge N/2$},
\end{align*}
and $\Pr[Np_z\ge u]=0$ for $u>N$; the union bound over $z$ then applies. Hence, with $u_0=2\ln(2N)\le2(n+1)$, the tail bound gives
\begin{align*}
 \E (X^\ast_\omega)^4&\le u_0^4+\int_{u_0}^\infty4u^3\cdot2Ne^{-u/2}\,du\\
 &=u_0^4+\int_0^\infty4(v+u_0)^3e^{-v/2}\,dv\\
 &\le u_0^4+32u_0^3+1536 ,
\end{align*}
using $(v+u_0)^3\le4(v^3+u_0^3)$ in the last step. With $\E_\omega\TV(a_\omega,r_p)=O(N^{-1/3})$ from \cref{thm:output-law}, this gives $\E_\omega Q_\omega\le\kappa_4+O(n^2N^{-1/6})$, which is bounded by an absolute constant.

\emph{Step 3: the conditional mean.} By \cref{lem:squaring-tv-score} and $M_2\ge1$,
\[
 B_\omega\ge\frac{M_3}{M_2}-\sqrt{M_4\TV(a_\omega,r_p)} .
\]
Let $\mathcal E_1$ be the event $|M_3/M_2-\kappa_3/\kappa_2|\ge t/4$, and let $\mathcal E_2$ be the event $M_4\TV(a_\omega,r_p)\ge t^2/16$. By this bound on $B_\omega$ and the union bound,
\[
 \Pr\Bigl[B_\omega\le\frac{\kappa_3}{\kappa_2}-\frac t2\Bigr]\le\Pr[\mathcal E_1]+\Pr[\mathcal E_2],
\]
and, by Markov's inequality for nonnegative random variables,
\begin{align*}
 \Pr[\mathcal E_1]&\le\frac4t\,\E\Bigl|\frac{M_3}{M_2}-\frac{\kappa_3}{\kappa_2}\Bigr|,\\
 \Pr[\mathcal E_2]&\le\frac{16}{t^2}\,\E\bigl[M_4\TV(a_\omega,r_p)\bigr] .
\end{align*}
Since $M_2\ge1$,
\[
 \Bigl|\frac{M_3}{M_2}-\frac{\kappa_3}{\kappa_2}\Bigr|\le|M_3-\kappa_3|+\frac{\kappa_3|M_2-\kappa_2|}{\kappa_2},
\]
whose expectation is $O(N^{-1/2})$, because ${\E|M_3-\kappa_3|}\le(\Var M_3+(\E M_3-\kappa_3)^2)^{1/2}$ and ${\E|M_2-\kappa_2|}\le(\E(M_2-\kappa_2)^2)^{1/2}$, while \cref{lem:moments} gives $\Var M_3\le10^4/N$, $|\E M_3-\kappa_3|\le30/N$, and $\E(M_2-\kappa_2)^2\le28/N$ in both ensembles. Moreover, using $\TV(a_\omega,r_p)\le1$,
\[
 \E[M_4\TV(a_\omega,r_p)]\le\sqrt{\E M_4^2\,\E\TV(a_\omega,r_p)}=O(N^{-1/6}) .
\]

\emph{Step 4: conclusion.} Collecting the bounds and using $t\le1$ proves the first claim. The second follows by taking $t=\min\{1,\kappa_3/\kappa_2-b\}$ and $k\ge2C/(\zeta t^2)$, for $N$ with $CN^{-1/6}/t^2\le\zeta/2$.
\end{proof}

\subsection[Heavy outputs]{Heavy outputs}\label{sec:heavy-outputs}
The collision distribution also determines the probability of producing a heavy output. The original heavy output generation (HOG) task asks for a list with at least $2/3$ of its outputs above the median ideal probability~\cite[Problem~1]{AaronsonChen}. Here we study the expected heavy probability for the Haar ensemble. This differs from linear XEB, which weights an output by its ideal probability~\cite{AaronsonGunn,Kretschmer}. Honest sampling already exceeds $2/3$ in the limit; our question is how many queries are needed to improve on its heavy probability by a fixed amount.

Let $N=2^n\ge2$ and $p_z=|\langle z|\phi\rangle|^2$, where $\ket\phi$ is Haar-random, or $\ket\phi=U\ket{0^n}$ for Haar-random $U$. Queries may be controlled, and unitary queries may also be inverse queries. The other initial registers have a joint state independent of $\phi$ or $U$; supplied copies, when allowed, are $\ket\phi^{\otimes s}$ in the device registers, tensored with that state. Define
\begin{equation}\label{eq:heavy-threshold}
 \theta_N=1-2^{-1/(N-1)},\qquad
 B^{\mathrm{hog}}_\tau(\mathcal A)=\Pr[p_Z>\tau].
\end{equation}
The probability averages over the oracle and the algorithm's randomness. The fixed threshold $\theta_N$ is the median of the marginal law $\operatorname{Beta}(1,N-1)$. For comparison, let $\widehat\theta_N=(p_{(N/2)}+p_{(N/2+1)})/2$ be the empirical median, where the probabilities are sorted in increasing order. Honest sampling at the fixed threshold has probability
\begin{equation}\label{eq:heavy-honest}
 \beta_N^{\mathrm{hog}}=\frac{1+(N-1)\theta_N}{2}
 \le h_{\mathrm{hon}}:=\frac{1+\ln2}{2}=0.84657\ldots,
 \qquad \beta_N^{\mathrm{hog}}\longrightarrow h_{\mathrm{hon}}.
\end{equation}

\begin{theorem}[Heavy-output query bounds]\label{thm:heavy-outputs}
With $\Lambda_{s,q}$ as in \cref{thm:haar-lower}, an algorithm given $s$ copies and making at most $q$ canonical queries satisfies the first bound below; an algorithm making at most $q$ unitary queries, without supplied copies, satisfies the second:
\begin{align}
 B^{\mathrm{hog}}_{\theta_N}(\mathcal A)
 &\le \beta_N^{\mathrm{hog}}+\Lambda_{s,q},
 \label{eq:heavy-canonical}\\
 B^{\mathrm{hog}}_{\theta_N}(\mathcal A)
 &\le \beta_N^{\mathrm{hog}}+\Lambda_{0,2q}
       +\frac{5q+2}{\sqrt N}.
 \label{eq:heavy-unitary}
\end{align}
Set
\begin{equation}\label{eq:heavy-limit}
 h_\star=\frac{2+2\ln2+(\ln2)^2}{4}=0.96669\ldots.
\end{equation}
For either threshold $\tau=\theta_N$ or $\tau=\widehat\theta_N$, and every fixed $h\in(h_{\mathrm{hon}},h_\star)$, achieving $B^{\mathrm{hog}}_\tau(\mathcal A)\ge h$ without supplied copies has query complexity $\Theta_h(N^{1/3})$ in either Haar access model. For $N\ge64$, $\mathsf{Sq}_m$ with $m=\lceil N^{1/3}\rceil$ makes at most $3N^{1/3}$ queries and its heavy probability tends to $h_\star$ under both threshold conventions.
\end{theorem}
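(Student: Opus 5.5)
The plan is to run the score argument of \cref{sec:haar} with the reward $Np_z$ replaced by the indicator $\mathbf 1[p_z>\theta_N]$. Let $X^{\mathrm{hog}}=\sum_z\mathbf 1[p_z>\theta_N]\Pi_z$, so that $B^{\mathrm{hog}}_{\theta_N}(\mathcal A)=\langle\Psi,X^{\mathrm{hog}}\Psi\rangle$.

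\emph{Score in terms of the level.} Like $X$, the operator $X^{\mathrm{hog}}$ multiplies by a function of $p$ and by branch projections, so it preserves patterns and branches. Split $\Psi=F_0+F_+$ as in \eqref{eq:haar-score-split}.
\begin{enumerate}
\item[(a)] For a level-zero term $u_\gamma\Pi_zv$, the form is $\Pr_{a(\gamma)}[p_z>\theta_N]\,\|u_\gamma\Pi_zv\|^2$, and by \cref{lem:dirichlet}(iii) $p_z\sim\mathrm{Beta}(1+|\gamma_z|,N+|\gamma|_1-1-|\gamma_z|)$. The tail is largest when $|\gamma_z|=1$ and $|\gamma|_1=1$, where it equals the honest value $\beta^{\mathrm{hog}}_N$. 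I need monotonicity of the Beta tail in the second parameter, which I expect to be a routine stochastic-ordering fact. This gives $\langle F_0,X^{\mathrm{hog}}F_0\rangle\le\beta^{\mathrm{hog}}_N\|F_0\|^2$.
\item[(b)] Since $0\le X^{\mathrm{hog}}\le I$ pointwise, I bound the rest crudely as
\[
\langle\Psi,X^{\mathrm{hog}}\Psi\rangle\le\langle F_0,X^{\mathrm{hog}}F_0\rangle+2\|F_0\|\|F_+\|+\|F_+\|^2 .
\]
This is too weak on its own. Instead I write $\sqrt{X^{\mathrm{hog}}}\Psi=\sqrt{X^{\mathrm{hog}}}F_0+\sqrt{X^{\mathrm{hog}}}F_+$ and use the triangle inequality, giving $\sqrt{B}\le\sqrt{\beta^{\mathrm{hog}}_N}\|F_0\|+\|F_+\|$.
\item[(c)] Squaring, with $\|F_0\|^2+\|F_+\|^2=1$ and Cauchy--Schwarz, gives $B\le\beta^{\mathrm{hog}}_N+\|F_+\|$ up to constants. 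This is the additive bound with no $\lambda^2$ term. I will get the exact form $B\le\beta^{\mathrm{hog}}_N+\|F_+\|$ by noting
\[
B\le\beta\|F_0\|^2+2\sqrt\beta\|F_0\|\|F_+\|+\|F_+\|^2,
\]
and maximizing over the split: the maximum is the top eigenvalue of $\begin{pmatrix}\beta&\sqrt\beta\\\sqrt\beta&1\end{pmatrix}$ restricted appropriately. If that is not sharp enough, I will use the cruder form with constant $2$ and adjust.
\end{enumerate}
Then \cref{lem:haar-cross} gives $\|F_+\|\le\lambda(\Psi)\le\Lambda_{s,q}$ by \cref{thm:haar-lower}, which yields \eqref{eq:heavy-canonical}.

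\emph{Unitary access.} For \eqref{eq:heavy-unitary} I follow \cref{thm:haar-unitary}. The reward is now bounded by $1$, so the simulation error of \cref{thm:kretschmer-simulation} costs only half the diamond distance, $(5q+2)/\sqrt N$, with no $H_N$ factor. The $2q$-query simulator is then bounded by \eqref{eq:heavy-canonical} with $s=0$.

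\emph{Matching upper bound.} Let $a_\omega$ be the output distribution of $\mathsf{Sq}_m$. Since the reward is bounded by $1$, the heavy probability differs from $\E_\omega r_p(\{p_z>\tau\})$ by at most $\E\,\TV(a_\omega,r_{p_\omega})=O(N^{-1/3})$, by \cref{thm:output-law}. It remains to compute the limit of
\[
r_p(\{p_z>\tau\})=\frac{\sum_z p_z^2\mathbf 1[p_z>\tau]}{\sum_z p_z^2}.
\]
With $Np_z\approx\mathrm{Exp}(1)$ and median $\ln2$, the ratio concentrates at
\[
\frac{\E[E^2\mathbf 1[E>\ln2]]}{2}=\frac{\tfrac12(\ln^2 2+2\ln2+2)}{2}=h_\star .
\]
Concentration follows from the moment bounds of \cref{lem:moments}, together with a law-of-large-numbers estimate for the truncated second moment, for which I would use the Gamma representation of the Dirichlet. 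The same estimate shows $\beta^{\mathrm{hog}}_N\to h_{\mathrm{hon}}$.

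\emph{Lower bound and the empirical median.} The $\Omega(N^{1/3})$ lower bound follows since $\Lambda_{0,q}=O(q^{3/2}/\sqrt N)$. For the empirical median $\widehat\theta_N$, I note that $|\widehat\theta_N-\theta_N|=O(N^{-3/2}\sqrt{\log N})$ with high probability. The mass that $\Psi$ or $r_p$ places on outcomes with $p_z$ between the two thresholds is then small: for $r_p$ it is about $N\cdot N^{-1/2}\cdot N^{-1}$ of order $N^{-1/2}$. For a general algorithm this needs care, because an adversary could target that band. I would handle it by applying \eqref{eq:heavy-canonical} at the slightly shifted fixed threshold $\theta_N-N^{-3/2}\log N$; (a) holds for any threshold with $\beta$ replaced by the honest tail there, which changes by $o(1)$.

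I expect step (a), the Beta tail monotonicity, and the empirical-median transfer to be the main obstacles.
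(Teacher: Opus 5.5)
\textbf{Gap in step (c).} Most of your route matches the paper. You calibrate the level-zero part by Beta tails, bound the rest by $\|F_+\|\le\lambda(\Psi)\le\Lambda_{s,q}$, pass to unitary access at total-variation cost $(5q+2)/\sqrt N$, and obtain $h_\star$ from \cref{thm:output-law} and the exponential representation. In (a) you also need the comparison $(1-\theta)^b\le(1-\theta)^{N-1}(1+(N-1)\theta)$ for $b\ge N-1$, but it is trivial.

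The gap is in step (c). Write $c=\|F_0\|$, $s=\|F_+\|$ and $\beta=\beta^{\mathrm{hog}}_N$, so that $c^2+s^2=1$. From $\sqrt B\le\sqrt\beta\,c+s$, squaring gives $B\le\beta+2\sqrt\beta\,cs+(1-\beta)s^2$. Since $\beta\ge1/2$, the coefficient $2\sqrt\beta$ exceeds $1$, so for small $s$ this right side is strictly larger than $\beta+s$. No optimization over the split can recover coefficient $1$; the matrix you mention is rank one with top eigenvalue $1+\beta$, which gives nothing. What your argument actually proves is $B\le\beta+2\|F_+\|$. That is \eqref{eq:heavy-canonical} and \eqref{eq:heavy-unitary} with $2\Lambda$ in place of $\Lambda$. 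This still yields the $\Theta_h(N^{1/3})$ claims, but not the explicit inequalities as stated.

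\textbf{The missing idea.} Do not split $\sqrt{X^{\mathrm{hog}}}\Psi$; compare $\Psi$ with the normalized level-zero part. If $F_0\ne0$, let $G=F_0/\|F_0\|$, a unit vector of level zero in the Haar $L^2$ space, so $\langle G,X^{\mathrm{hog}}G\rangle\le\beta$ by (a). Since $\langle\Psi,G\rangle=\|F_0\|$, the operator $\ket\Psi\bra\Psi-\ket G\bra G$ is traceless of rank two with eigenvalues $\pm\sqrt{1-\|F_0\|^2}=\pm\|F_+\|$. Because $0\le X^{\mathrm{hog}}\le I$, it follows that $\langle\Psi,X^{\mathrm{hog}}\Psi\rangle-\langle G,X^{\mathrm{hog}}G\rangle\le\|F_+\|$. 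If $F_0=0$, then $\|F_+\|=1$ and the bound is trivial. The triangle inequality loses the factor $2$ because it lets the cross term and $\langle F_+,X^{\mathrm{hog}}F_+\rangle$ be extremal at the same time, which the joint constraint $X^{\mathrm{hog}}\le I$ forbids.

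\textbf{Empirical median (a remark, not a gap).} Your vanishing shift $N^{-3/2}\log N$ works, but it relies on a quantitative concentration bound for the sample median that you assert without proof. The paper avoids this by using the fixed threshold $(\ln2-\delta)/N$, with $\delta$ chosen so that the honest tail there stays below $h$. That needs only $N\widehat\theta_N\to\ln2$ in probability.
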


The proof is in Appendix~\ref{app:heavy-outputs}. The lower bound uses the same progress measure as \cref{sec:haar}, but the heavy-output reward is bounded by one. Its expectation is therefore controlled by the distance from the normalized level-zero part. The upper bound follows from the collision distribution approximation in \cref{thm:output-law}. The appendix also gives a sharper bound for arbitrary joint processing of supplied copies with no queries.

\section{Certified randomness at the ideal score}\label{sec:randomness}

In this section we prove \cref{thm:intro-randomness}. We consider the experiment of \cref{def:experiment} as a sequence of controlled queries and oracle-independent unitaries, as supplied by \cref{lem:normal-form}, with device register $A$, Eve's register $E$, and initial state $\psi_0\in A\otimes E$, whose reduced state on $E$ is $\rho_E$. The final state and its branches are
\[
 \Psi(\omega)=(V_\omega\otimes I_E)\psi_0,\qquad \Psi_z(\omega)=(\Pi_z\otimes I_E)\Psi(\omega),
\]
and the reduced states of the branches on $E$ are Eve's conditional states $\rho^\omega_z$. The device makes at most $q$ queries, and its expected score is at least the ideal score minus $\delta$.

Both ensembles are treated by the same three steps.
\begin{enumerate}
\item The score bound of the corresponding lower-bound section, with the terms that its proof discards now retained, shows that the branches $\Psi_z$ are close to \emph{auxiliary vectors} $G_z$.
\item For every fixed oracle, the auxiliary vector is the image of $\psi_0$ under an operator on the device alone whose norm is at most $\mathrm{poly}(q)\sqrt{p_\omega(z)}$.
\item The family of auxiliary vectors, normalized, is a state within purified distance $\eta$ of the actual one, and in this state Eve's guessing probability is at most $\mathrm{poly}(q)$
times her probability of guessing an ideal sample, provided $\eta\le1/2$.
\end{enumerate}
The third step is common to both ensembles, and we present it first.

\subsection[Min-entropy from auxiliary vectors]{Min-entropy from auxiliary vectors}\label{sec:randomness-assembly}

\begin{proposition}[Min-entropy from auxiliary vectors]\label{prop:randomness-assembly}
Suppose that for every oracle $\omega$ and every $z$ there is a linear operator $K_z(\omega)$ on $A$, depending measurably on $\omega$, such that the vectors $G_z(\omega)=(K_z(\omega)\otimes I_E)\psi_0$ satisfy
\begin{equation}\label{eq:assembly-hypotheses}
 \|K_z(\omega)\|^2\le\kappa\,p_\omega(z)\quad\text{for all $\omega$ and $z$},\qquad
 \E_\omega\sum_z\|\Psi_z(\omega)-G_z(\omega)\|^2\le\eta_0^2
\end{equation}
for some $\kappa>0$ and $0\le\eta_0<1$. Let $\xi=\E_\omega\sum_z\|G_z(\omega)\|^2$, which is finite, because $\sum_z\|G_z(\omega)\|^2\le\kappa\sum_zp_\omega(z)=\kappa$, and at least $(1-\eta_0)^2>0$ by the second hypothesis in \eqref{eq:assembly-hypotheses}, and let $\tau^\omega_z$ be the reduced state of $G_z(\omega)$ on $E$ divided by $\xi$. Then the following hold.
\begin{enumerate}
\item[(i)] The family $\tau$ is a normalized family of conditional states with $P(\rho,\tau)\le\eta_0$.
\item[(ii)] The family $\tau$ satisfies
\begin{equation}\label{eq:assembly-domination}
 \tau^\omega_z\preceq\frac\kappa\xi\,p_\omega(z)\,\rho_E\qquad\text{for all $\omega$ and $z$},
\end{equation}
so that for every $\omega$ and every measurement $(M_z)_z$ on $E$,
\begin{equation}\label{eq:assembly-guess}
 \sum_z\operatorname{Tr}(M_z\tau^\omega_z)\le\frac{\kappa}\xi\max_zp_\omega(z) .
\end{equation}
\item[(iii)] Consequently,
\begin{equation}\label{eq:assembly-entropy}
 \Hmin^{\eta}(Z\mid RE)_\rho\ge n-\log\kappa-\log\Bigl(N\,\E_\omega\max_zp_\omega(z)\Bigr)-\log\frac1\xi\qquad\text{for every }\eta\in[\eta_0,1].
\end{equation}
\item[(iv)] If $\E_\omega\sum_z\langle\Psi_z(\omega),G_z(\omega)\rangle=\xi$, as when, for each $z$, the function $\omega\mapsto G_z(\omega)$ is the image of $\omega\mapsto\Psi_z(\omega)$ under an orthogonal projection of the space of square-integrable functions of $\omega$, then
\[
 \xi=1-\E_\omega\sum_z\|\Psi_z(\omega)-G_z(\omega)\|^2\ge1-\eta_0^2 .
\]
\end{enumerate}
\end{proposition}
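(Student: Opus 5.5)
The proof treats the four parts in order, and nearly every step is a direct application of a fact already stated in the paper.

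\emph{Part (i).} Normalization is immediate from the definition of $\xi$, since $\operatorname{Tr}\tau^\omega_z=\|G_z(\omega)\|^2/\xi$. For the purified distance, fix $\omega$ and $z$. The branch $\Psi_z(\omega)$ purifies $\rho^\omega_z$, and $G_z(\omega)/\sqrt\xi$ purifies $\tau^\omega_z$. Applying \eqref{eq:prelim-uhlmann} in its unnormalized form gives
\[
\|\sqrt{\rho^\omega_z}\sqrt{\tau^\omega_z}\|_1\ge|\langle\Psi_z,G_z\rangle|/\sqrt\xi,
\]
and hence
\[
F(\rho,\tau)\ge\E_\omega\sum_z\operatorname{Re}\langle\Psi_z,G_z\rangle/\sqrt\xi .
\]
Write $c=\E\sum_z\operatorname{Re}\langle\Psi_z,G_z\rangle$, so that $c=(1+\xi-\epsilon)/2$ with $\epsilon=\E\sum_z\|\Psi_z-G_z\|^2\le\eta_0^2$. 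The bound then reads $F\ge(1+\xi-\epsilon)/(2\sqrt\xi)$. This is the cosine of the angle between the unit vector $\sum_z\ket z\otimes\Psi_z$ and the normalized vector $\sum_z\ket z\otimes G_z$, viewed in the Hilbert space of square-integrable functions of $\omega$.

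The step that needs the most care is the elementary geometric fact behind the next bound: if a unit vector $u$ and a vector $v$ satisfy $\|u-v\|\le\eta_0<1$, then $\operatorname{Re}\langle u,v\rangle/\|v\|\ge\sqrt{1-\eta_0^2}$. Equivalently, the sine of the angle between $u$ and $v$ is at most $\eta_0$, because the distance from $u$ to the ray through $v$ is at most $\eta_0$. I would prove it by minimizing $(1+s^2-\epsilon)/(2s)$ over $s=\|v\|$. The minimum occurs at $s=\sqrt{1-\epsilon}$ and equals $\sqrt{1-\epsilon}$. This yields $F\ge\sqrt{1-\eta_0^2}$, that is, $P(\rho,\tau)\le\eta_0$. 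The same computation also gives $\xi\ge(1-\eta_0)^2$.

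\emph{Part (ii).} Apply \cref{lem:local-operator} with $K=K_z(\omega)$ and $\psi=\psi_0$. For every positive semidefinite $M$ it gives $\operatorname{Tr}(M\,\xi\tau^\omega_z)\le\|K_z\|^2\operatorname{Tr}(M\rho_E)$, and with the hypothesis $\|K_z\|^2\le\kappa p_\omega(z)$ this is exactly the operator inequality \eqref{eq:assembly-domination}. To get \eqref{eq:assembly-guess}, sum over $z$ against a measurement: $\sum_z\operatorname{Tr}(M_z\rho_E)=1$ and $p_\omega(z)\le\max_{z'}p_\omega(z')$.

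\emph{Part (iii).} Take the supremum over measurements in \eqref{eq:assembly-guess} and then average over $\omega$. This gives
\[
p_{\mathrm{guess}}(Z\mid RE)_\tau\le(\kappa/\xi)\,\E_\omega\max_zp_\omega(z).
\]
Writing $\E_\omega\max_z p_\omega(z)=N^{-1}\cdot N\,\E_\omega\max_zp_\omega(z)$ and taking $-\log$ produces the right-hand side of \eqref{eq:assembly-entropy}. Part (i) shows that $\tau$ is admissible in \cref{def:entropies} for every $\eta\ge\eta_0$, so the bound holds for all such $\eta$.

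\emph{Part (iv).} Expand the square:
\[
\E\sum_z\|\Psi_z-G_z\|^2=1-2\operatorname{Re}\E\sum_z\langle\Psi_z,G_z\rangle+\xi,
\]
using $\E\sum_z\|\Psi_z\|^2=1$. Under the hypothesis $\E\sum_z\langle\Psi_z,G_z\rangle=\xi$ this equals $1-\xi$, which rearranges to the claimed identity $\xi=1-\E\sum_z\|\Psi_z-G_z\|^2\ge1-\eta_0^2$. When $G_z$ is the orthogonal projection of $\Psi_z$, the hypothesis holds because $\langle\Psi_z,P\Psi_z\rangle=\|P\Psi_z\|^2$.
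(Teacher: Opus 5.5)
Your proposal is correct and follows the paper's proof essentially step for step. You use Uhlmann's inequality branch by branch, the fact that the sine of the angle between $\sum_z\ket z\otimes\Psi_z$ and the ray through $\sum_z\ket z\otimes G_z$ is at most $\eta_0$, \cref{lem:local-operator} for the domination bound, and expansion of the square for part (iv). The only cosmetic difference is that you get the angle bound by minimizing $(1+s^2-\epsilon)/(2s)$ over $s=\sqrt\xi$ using the real part of the overlap, whereas the paper minimizes $\E_\omega\sum_z\|\Psi_z-tG_z/\sqrt\xi\|^2$ over the scalar $t$ to obtain $1-|c|^2\le\eta_0^2$.
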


The family $\tau$ is normalized by the total weight $\xi$ rather than oracle by oracle. This is what makes it a normalized family in the sense of \cref{def:entropies}, and it allows the fidelity of $\rho$ and $\tau$ to be bounded from below by a single inner product. It is a comparison family for the smoothing in \cref{def:entropies} and not the state of another experiment: for a fixed $\omega$, the weight $\sum_z\operatorname{Tr}\tau^\omega_z$ need not be $1$.

\begin{proof}
\emph{Step 1: the scalar $\xi$.} We have
\[
 \E_\omega\sum_z\|\Psi_z(\omega)\|^2=1,\qquad
 \E_\omega\sum_z\|\Psi_z(\omega)-G_z(\omega)\|^2\le\eta_0^2,
\]
where the second bound is the second hypothesis in \eqref{eq:assembly-hypotheses}, and by definition $\xi=\E_\omega\sum_z\|G_z(\omega)\|^2$. Hence, by the triangle inequality in the direct sum over $z$ and the average over $\omega$,
\begin{align*}
 \sqrt\xi&\ge\Bigl(\E_\omega\sum_z\|\Psi_z(\omega)\|^2\Bigr)^{1/2}-\Bigl(\E_\omega\sum_z\|\Psi_z(\omega)-G_z(\omega)\|^2\Bigr)^{1/2}\\
 &\ge1-\eta_0>0 .
\end{align*}
Thus $\xi\ge(1-\eta_0)^2>0$, as stated, and the rescaled vectors $G_z(\omega)/\sqrt\xi$ satisfy $\E_\omega\sum_z\bigl\|G_z(\omega)/\sqrt\xi\bigr\|^2=1$.

\emph{Step 2: the overlap $c$.} Let $c=\xi^{-1/2}\,\E_\omega\sum_z\langle G_z(\omega),\Psi_z(\omega)\rangle$ be the overlap of the branches with the rescaled vectors. For every scalar $t$, expanding the square and using the two normalizations of Step 1 and the definition of $c$, we obtain
\begin{align*}
 \E_\omega\sum_z\bigl\|\Psi_z(\omega)-tG_z(\omega)/\sqrt\xi\bigr\|^2
 &=1-2\operatorname{Re}(\bar tc)+|t|^2\\
 &=|c-t|^2+1-|c|^2 .
\end{align*}
The squared distance on the left is therefore minimized over $t$ at $t=c$, where it equals $1-|c|^2$, so $1-|c|^2$ is at most its value at $t=\sqrt\xi$. That value is the left side of the second hypothesis in \eqref{eq:assembly-hypotheses}, and hence
\begin{equation}\label{eq:assembly-overlap}
 1-|c|^2\le\E_\omega\sum_z\|\Psi_z(\omega)-G_z(\omega)\|^2\le\eta_0^2 .
\end{equation}

\emph{Step 3: the bound on $P(\rho,\tau)$.} The family $\tau$ is normalized by construction, since
\[
 \E_\omega\sum_z\operatorname{Tr}\tau^\omega_z=\E_\omega\sum_z\|G_z(\omega)\|^2/\xi=1 .
\]
For each $\omega$ and $z$, the vectors $\Psi_z(\omega)$ and $G_z(\omega)/\sqrt\xi$ in $A\otimes E$ have reduced states $\rho^\omega_z$ and $\tau^\omega_z$ on $E$, so \eqref{eq:prelim-uhlmann} gives
\[
 \|\sqrt{\rho^\omega_z}\sqrt{\tau^\omega_z}\|_1\ge\frac{|\langle\Psi_z(\omega),G_z(\omega)\rangle|}{\sqrt\xi} .
\]
Summing over $z$, averaging over $\omega$, applying the triangle inequality to the sum and the average, and using \eqref{eq:assembly-overlap}, we obtain
\begin{align*}
 F(\rho,\tau)&\ge\frac1{\sqrt\xi}\,\Bigl|\E_\omega\sum_z\langle\Psi_z(\omega),G_z(\omega)\rangle\Bigr|=|c|\\
 &\ge\sqrt{1-\eta_0^2},
\end{align*}
hence $P(\rho,\tau)\le\eta_0$.

\emph{Step 4: the domination bound and the entropy bound.} For \eqref{eq:assembly-domination}, fix $\omega$ and $z$. By \cref{lem:local-operator} applied to $K_z(\omega)$ and the first hypothesis in \eqref{eq:assembly-hypotheses}, the reduced state $\tilde\rho^\omega_z$ of $G_z(\omega)$ on $E$ satisfies
\[
 \operatorname{Tr}(M\tilde\rho^\omega_z)\le\kappa\,p_\omega(z)\operatorname{Tr}(M\rho_E)\qquad\text{for every positive semidefinite $M$},
\]
which is the operator inequality $\tilde\rho^\omega_z\preceq\kappa\,p_\omega(z)\rho_E$. Dividing by $\xi$ gives \eqref{eq:assembly-domination}. For a measurement $(M_z)_z$, it follows that
\[
 \sum_z\operatorname{Tr}(M_z\tau^\omega_z)\le\frac\kappa\xi\sum_zp_\omega(z)\operatorname{Tr}(M_z\rho_E)\le\frac\kappa\xi\max_zp_\omega(z),
\]
because $\sum_z\operatorname{Tr}(M_z\rho_E)=\operatorname{Tr}\rho_E=1$.

Taking the supremum over measurements and averaging over $\omega$ gives
\[
 p_{\mathrm{guess}}(Z\mid RE)_\tau\le(\kappa/\xi)\,\E_\omega\max_zp_\omega(z),
\]
and \eqref{eq:assembly-entropy} follows from \cref{def:entropies}, since $P(\rho,\tau)\le\eta_0\le\eta$ and $\log N=n$.

\emph{Step 5: part (iv).} Finally, assume the hypothesis of (iv), so that $\E_\omega\sum_z\langle\Psi_z(\omega),G_z(\omega)\rangle=\xi$. Expanding the squares and using $\E_\omega\sum_z\|\Psi_z(\omega)\|^2=1$ gives
\[
 \E_\omega\sum_z\|\Psi_z(\omega)-G_z(\omega)\|^2=1-2\xi+\xi=1-\xi . \qedhere
\]
\end{proof}

Nothing in the proposition or its proof depends on the dimensions of $A$ and $E$. The proposition also explains why smoothing is needed: the actual branches need not satisfy a bound like \eqref{eq:assembly-guess}, because a device may output a string known to Eve with a small probability, which changes its score only slightly but makes that branch fully predictable, and the nearby state removes such branches.

\subsection{Fourier sampling}\label{sec:randomness-fourier}

We use the notation of \cref{sec:fourier}: the space $\mathcal F$ of functions of $f$ with values in $A\otimes E$, the expansion \eqref{eq:fourier-expansion} in recorded sets, the operators $a_x$, $b_z$, and $D_z=b_z^\dagger b_z$, the final state $\Psi$ with branches $\Psi_z$, the progress measure $\lambda$, and the bound $\bar\lambda_q$ of \cref{lem:fourier-growth}.

\begin{theorem}[Certified randomness from Fourier sampling]\label{thm:fourier-randomness}
Let $q\ge1$ and $\delta\ge0$, and consider an experiment in the Fourier ensemble in which the device makes at most $q$ queries and has score $B(\mathcal A)\ge3-2/N-\delta$. Let $\eta_0$ and $\bar\lambda_q$ be given by
\begin{equation}\label{eq:fr-eta0}
\begin{aligned}
 \eta_0&=\bar\lambda_q+\biggl(\frac\delta{2(1-1/N)}+\frac{\bar\lambda_q}{1-1/N}+\bar\lambda_q^2\biggr)^{1/2},\\
 \bar\lambda_q&=\frac2{\sqrt N}\sum_{t=0}^{q-1}\sqrt t .
\end{aligned}
\end{equation}
If $\eta_0<1$, then for every $\eta\in[\eta_0,1]$,
\begin{equation}\label{eq:fr-main-bound}
 \Hmin^\eta(Z\mid RE)_\rho\ge n-4\log q-\log(n+2)-3-2\log\frac1{1-\eta_0} .
\end{equation}
Moreover, $\bar\lambda_1=0$, and
\begin{equation}\label{eq:fr-eta0-simple}
 \eta_0\le\sqrt\delta+2\,q^{3/4}N^{-1/4}+3\,q^{3/2}N^{-1/2} .
\end{equation}
\end{theorem}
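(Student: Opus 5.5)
The plan is to apply \cref{prop:randomness-assembly} with the auxiliary vectors $G_z=(b_z+b_z^\dagger)b_z\Psi_z$ of the overview, and $\kappa=4q^4$. The proof has three steps: bound $\sum_z\|\Psi_z-G_z\|^2$ by $\eta_0^2$, realise $G_z$ as $(K_z(f)\otimes I_E)\psi_0$ with a suitable norm bound, and assemble.

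\emph{Step 1: the distance bound.} Start from the identity
\[
 \Psi_z-G_z=(I-D_z)\Psi_z-b_z^2\Psi_z .
\]
Two bounds control the two terms on the right.
\begin{enumerate}
\item Keep the last term of \cref{lem:fourier-score-bound} and combine it with the hypothesis $B(\mathcal A)\ge3-2/N-\delta$. This gives
\[
 2(1-1/N)\sum_z\|(D_z-I)\Psi_z\|^2\le\delta+2\lambda+2(1-1/N)\lambda^2 .
\]
\item Step~1 of the proof of that lemma gives $\sum_z\|b_z^2\Psi_z\|^2\le\lambda^2$.
\end{enumerate}
Now apply the triangle inequality in $\bigoplus_z\mathcal F$, and use $\lambda\le\bar\lambda_q$ from \cref{lem:fourier-growth}. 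The result is
\[
 \Bigl(\sum_z\|\Psi_z-G_z\|^2\Bigr)^{1/2}\le\eta_0 ,
\]
with $\eta_0$ exactly as in \eqref{eq:fr-eta0}.

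For the simplified bound \eqref{eq:fr-eta0-simple}, first note $\bar\lambda_1=0$. Then use $\bar\lambda_q\le\frac43q^{3/2}N^{-1/2}$ together with $\sqrt{u+v}\le\sqrt u+\sqrt v$. Finally use $1/(1-1/N)\le2$, which gives $\sqrt{2\bar\lambda_q}\le2q^{3/4}N^{-1/4}$. This is routine arithmetic.

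\emph{Step 2: the operator $K_z(f)$ and its norm.} Fix $f$. The operator $a_x$ acts as the discrete derivative at $x$, and $\Psi=V_f\psi_0$, where $V_f$ has a unique multilinear extension of degree at most $q$.

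From this I expect
\[
 (b_z\Psi)(f)=(\partial_{\chi_z}V_f\otimes I)\psi_0 ,
\]
the derivative in the normalized direction $\chi_z/\sqrt N$. One checks this on the expansion \eqref{eq:fourier-expansion}: the partial derivative of a multilinear polynomial in $f(x)$ equals its discrete derivative. Since $\Pi_z$ commutes with $b_z$, and $b_z+b_z^\dagger$ is multiplication by $\sqrt N\widehat f(z)$, this gives
\[
 G_z(f)=(K_z(f)\otimes I_E)\psi_0,\qquad K_z(f)=\sqrt N\widehat f(z)\,\Pi_z\,\partial_{\chi_z}V_f .
\]

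The norm bound comes from Markov's inequality for polynomials, in the operator-valued endpoint form of \cref{lem:fr-markov}. Split the addresses into $\mathcal I_\pm$ according to the sign of $\chi_z(x)f(x)$. Scaling the values of $f$ on $\mathcal I_+$, or on $\mathcal I_-$, by $t\in[-1,1]$ gives a polynomial in $t$ of degree at most $q$. Its values are averages of unitaries, so it is a contraction on $[-1,1]$. Markov's inequality then bounds its derivative at $t=1$ by $q^2$.

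Since $\partial_{\chi_z}V_f$ is $N^{-1/2}$ times the difference of these two derivatives, $\|\partial_{\chi_z}V_f\|\le2q^2/\sqrt N$. Hence $\|K_z\|^2\le4q^4p_f(z)$, so $\kappa=4q^4$. This is the step I expect to be the main obstacle. The delicate parts are the identification of $b_z$ with the derivative of the multilinear extension, and checking that the scaled points stay in the cube where the contraction bound holds.

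\emph{Step 3: assembly.} Apply \cref{prop:randomness-assembly}(iii). The term $\log\kappa$ equals $2+4\log q$. The factor $N\E_f\max_zp_f(z)$ is at most $2(n+2)$ by \cref{lem:fr-max-weight}. For $\xi$, use $\xi\ge(1-\eta_0)^2$, so $\log(1/\xi)\le2\log\frac1{1-\eta_0}$. Adding these up gives the constant $3$ in \eqref{eq:fr-main-bound}.
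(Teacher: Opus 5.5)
Your proposal is correct and follows the paper's proof essentially step for step. It uses the same auxiliary vectors $G_z=(b_z+b_z^\dagger)b_z\Psi_z$, controlled by the retained term of \cref{lem:fourier-score-bound} and the triangle inequality (\cref{lem:fr-comparison}); the same operator $K_z(f)=\sqrt N\widehat f(z)\,\Pi_z\,\partial_{\chi_z}V_f$, bounded through the multilinear extension and the endpoint Markov inequality along the two split directions $\mathcal I_\pm$ (\cref{lem:fr-polynomial,lem:fr-markov,lem:fr-derivative}); and the same assembly through \cref{prop:randomness-assembly} with $\kappa=4q^4$, $\xi\ge(1-\eta_0)^2$, and \cref{lem:fr-max-weight}, and your arithmetic for \eqref{eq:fr-eta0-simple} and the constant $3$ in \eqref{eq:fr-main-bound} checks out.
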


The proof occupies this subsection up to \cref{lem:fr-max-weight}. We first identify the auxiliary vectors, then bound the operators that produce them, and finally apply \cref{prop:randomness-assembly}.

To identify the auxiliary vectors, we show that a near-ideal score forces each branch to be close to the vector obtained by removing one recorded point in the state $\ket{\chi_z}$ and multiplying by $\sqrt N\widehat f(z)$, which is the structure described in \cref{sec:overview-fourier-randomness}.

\begin{lemma}[Auxiliary vectors]\label{lem:fr-comparison}
Assume $B(\mathcal A)\ge3-2/N-\delta$, let $\eta_0$ be as in \eqref{eq:fr-eta0}, and put
\[
 G_z=(b_z+b_z^\dagger)\,b_z\Psi_z=b_z^2\Psi_z+D_z\Psi_z .
\]
Then
\begin{align*}
 \sum_z\|(D_z-I)\Psi_z\|^2&\le\frac\delta{2(1-1/N)}+\frac{\bar\lambda_q}{1-1/N}+\bar\lambda_q^2,\\
 \Bigl(\sum_z\|\Psi_z-G_z\|^2\Bigr)^{1/2}&\le\eta_0 .
\end{align*}
\end{lemma}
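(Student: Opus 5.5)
The first inequality comes from \cref{lem:fourier-score-bound}, which keeps the nonpositive term. Write $S=\sum_z\|(D_z-I)\Psi_z\|^2$ and $\lambda=\lambda(\Psi)\le\bar\lambda_q$, the latter by \cref{lem:fourier-growth}. Combining the hypothesis $B(\mathcal A)\ge3-2/N-\delta$ with \eqref{eq:fourier-score-bound} gives
\[
 3-\frac2N-\delta\le3-\frac2N+2\lambda+2\Bigl(1-\frac1N\Bigr)\lambda^2-2\Bigl(1-\frac1N\Bigr)S .
\]
Dividing by $2(1-1/N)$ and using the monotonicity of the right side in $\lambda$ yields
\[
 S\le\frac{\delta}{2(1-1/N)}+\frac{\bar\lambda_q}{1-1/N}+\bar\lambda_q^2 ,
\]
which is the first claim. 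This requires $N\ge2$, which holds since $n\ge1$.

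For the second claim, I use the algebraic identity $G_z=b_z^2\Psi_z+D_z\Psi_z$. Since $(b_z+b_z^\dagger)b_z=b_z^2+b_z^\dagger b_z$, we get
\[
 \Psi_z-G_z=-(D_z-I)\Psi_z-b_z^2\Psi_z .
\]
Regard the families $\bigl((D_z-I)\Psi_z\bigr)_z$ and $\bigl(b_z^2\Psi_z\bigr)_z$ as vectors in the direct sum over $z$ of copies of $\mathcal F$, and apply the triangle inequality there. This gives
\[
 \Bigl(\sum_z\|\Psi_z-G_z\|^2\Bigr)^{1/2}\le S^{1/2}+e ,
\]
where $e=\bigl(\sum_z\|b_z^2\Psi_z\|^2\bigr)^{1/2}$ is the branch quantity from Step~1 of the proof of \cref{lem:fourier-score-bound}.

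To control $e$, note that $\Pi_z$ commutes with $b_z$ and is a projection. Hence $e\le\lambda\le\bar\lambda_q$.

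Substituting the bounds on $S$ and $e$ reproduces exactly the expression $\eta_0$ of \eqref{eq:fr-eta0}. Nothing here is a genuine obstacle. The only care needed is in bookkeeping: checking that the retained term appears with coefficient $2(1-1/N)$, and that the bound on $\lambda$ can be substituted in both the $\lambda$ and $\lambda^2$ terms. The latter is justified because both coefficients are nonnegative.
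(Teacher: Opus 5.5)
Your proposal is correct and matches the paper's proof step for step. The first bound comes from the retained term of \cref{lem:fourier-score-bound} together with $\lambda\le\bar\lambda_q$. The second comes from the triangle inequality in the direct sum over $z$ applied to $\Psi_z-G_z=-(D_z-I)\Psi_z-b_z^2\Psi_z$, using $\sum_z\|b_z^2\Psi_z\|^2\le\lambda^2$ because $\Pi_z$ commutes with $b_z^2$.
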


\begin{proof}
\emph{Step 1: the first bound.} The hypothesis and \eqref{eq:fourier-score-bound} give
\[
 2\Bigl(1-\frac1N\Bigr)\sum_z\|(D_z-I)\Psi_z\|^2\le\delta+2\lambda+2\Bigl(1-\frac1N\Bigr)\lambda^2,
\]
where $\lambda=\lambda(\Psi)$ satisfies $\lambda\le\bar\lambda_q$ by \cref{lem:fourier-growth}. Dividing by $2(1-1/N)$ gives the first bound.

\emph{Step 2: the second bound.} Since $G_z=b_z^2\Psi_z+D_z\Psi_z$, we have $\Psi_z-G_z=-(D_z-I)\Psi_z-b_z^2\Psi_z$, and the triangle inequality in the direct sum over $z$ gives
\[
 \Bigl(\sum_z\|\Psi_z-G_z\|^2\Bigr)^{1/2}\le\Bigl(\sum_z\|(D_z-I)\Psi_z\|^2\Bigr)^{1/2}+\Bigl(\sum_z\|b_z^2\Psi_z\|^2\Bigr)^{1/2} .
\]
The last sum is at most $\lambda^2$, because $\Pi_z$ commutes with $b_z^2$, so that
\[
 \sum_z\|b_z^2\Psi_z\|^2=\sum_z\|\Pi_zb_z^2\Psi\|^2\le\sum_z\|b_z^2\Psi\|^2 ,
\]
and the right side is $\lambda^2$ by the definition of $\lambda$. Hence, by the first bound and $\lambda\le\bar\lambda_q$, the left side of the triangle inequality is at most $\eta_0$.
\end{proof}

\paragraph{The device as a bounded polynomial.}
For a fixed $f$, the vector $G_z(f)$ can be written in terms of the unitary $V_f$ of the device. Let $(\partial_x\Phi)(f)=\bigl(\Phi(f^{x,+})-\Phi(f^{x,-})\bigr)/2$ be the discrete derivative of \cref{sec:fourier-recording}, which is the operator $a_x$, and let $\partial_{\chi_z}=N^{-1/2}\sum_x\chi_z(x)\,\partial_x$ be the derivative along the character $\chi_z$, which is the operator $b_z$. Both act on functions of $f$ with values in any vector space, and they commute with the pointwise application of any linear map.

The final state is the pointwise image of the operator-valued function $f\mapsto V_f$ under the linear map $T\mapsto(T\otimes I_E)\psi_0$, and $\Pi_z$ commutes with $b_z$, so
\[
 (b_z\Psi_z)(f)=(\Pi_z(\partial_{\chi_z}V)(f)\otimes I_E)\psi_0 .
\]
Multiplying by $\sqrt N\widehat f(z)$, which is the operator $b_z+b_z^\dagger$, gives
\begin{equation}\label{eq:fr-Gz-pointwise}
\begin{aligned}
 G_z(f)&=\sqrt N\widehat f(z)\,\bigl(\Pi_z(\partial_{\chi_z}V)(f)\otimes I_E\bigr)\psi_0=(K_z(f)\otimes I_E)\psi_0,\\
 K_z(f)&=\sqrt N\widehat f(z)\,\Pi_z\,(\partial_{\chi_z}V)(f),
\end{aligned}
\end{equation}
where $(\partial_{\chi_z}V)(f)$ is an operator on $A$. We now bound its norm, for every $f$, by Markov's inequality for polynomials.

\begin{lemma}[The multilinear extension]\label{lem:fr-polynomial}
Write $V_f=\sum_{|S|\le q}\widehat V_S\chi_S(f)$ for the expansion \eqref{eq:fourier-expansion} of the operator-valued function $f\mapsto V_f$. The coefficients $\widehat V_S$ vanish for $|S|>q$ because $V_f$ is a product of fixed unitaries and of $q$ factors $P_\varnothing+\sum_xf(x)P^{\mathrm c}_x$, each affine in the values of $f$. Define
\[
 \widetilde V(u)=\sum_{|S|\le q}\widehat V_S\prod_{y\in S}u_y\qquad\text{for }u\in\R^{\{0,1\}^n}.
\]
Then:
\begin{enumerate}
\item[(i)] $\widetilde V(f)=V_f$ for every $f$;
\item[(ii)] $\|\widetilde V(u)\|\le1$ for every $u\in[-1,1]^{\{0,1\}^n}$;
\item[(iii)] $(\partial\widetilde V/\partial u_x)(f)=(\partial_xV)(f)$ for every $f$ and $x$.
\end{enumerate}
\end{lemma}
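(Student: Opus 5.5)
Part (i) is immediate. On the cube $\{-1,1\}^N$, each $\prod_{y\in S}u_y$ evaluates to $\chi_S(f)$, so $\widetilde V(f)=\sum_S\widehat V_S\chi_S(f)=V_f$ by the expansion \eqref{eq:fourier-expansion}.

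For (ii), I would express $\widetilde V(u)$ as an average of unitaries. Given $u\in[-1,1]^{\{0,1\}^n}$, choose independent random signs $F(y)$ with $\E F(y)=u_y$, that is, $\Pr[F(y)=1]=(1+u_y)/2$. By independence, $\E\prod_{y\in S}F(y)=\prod_{y\in S}u_y$. Linearity then gives
\[
\widetilde V(u)=\sum_S\widehat V_S\,\E\chi_S(F)=\E\,V_F ,
\]
using (i) pointwise in $F$. Each $V_F$ is unitary, so the triangle inequality for the operator norm gives $\|\widetilde V(u)\|\le\E\|V_F\|=1$.

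For (iii), I would use that $\widetilde V$ is affine in each coordinate $u_x$. Hence
\[
\frac{\partial\widetilde V}{\partial u_x}(u)=\frac12\bigl(\widetilde V(u^{x,+})-\widetilde V(u^{x,-})\bigr),
\]
where $u^{x,\pm}$ sets the $x$-coordinate to $\pm1$. This holds for any function of the form $\alpha+\beta u_x$ with $\alpha,\beta$ independent of $u_x$. At $u=f$ the right side equals $\tfrac12\bigl(V_{f^{x,+}}-V_{f^{x,-}}\bigr)$ by (i), and this is $(\partial_xV)(f)$ by definition.

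The only point requiring care is the degree claim, namely that $\widehat V_S=0$ for $|S|>q$. I would justify it briefly in the statement's own terms. Each factor $P_\varnothing+\sum_xf(x)P^{\mathrm c}_x$ has Fourier support on sets of size at most $1$. Multiplying functions with supports of size at most $k$ and at most $1$ gives support of size at most $k+1$, since $\chi_S\chi_{\{x\}}=\chi_{S\triangle\{x\}}$, and fixed unitaries do not change the support. Induction on the number of factors then gives support of size at most $q$. None of this is a real obstacle; the main conceptual step is the averaging representation in (ii).
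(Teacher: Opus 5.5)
Your proof is correct and follows the paper's argument essentially step for step: (i) comes from the expansion, (ii) comes from writing $\widetilde V(u)$ as the average $\E V_F$ over independent signs with means $u_y$, and (iii) comes from the affine dependence on each coordinate combined with (i) at the points $f^{x,\pm}$. Your induction showing that $\widehat V_S$ vanishes for $|S|>q$ is a valid expansion of the one-line justification given in the statement.
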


\begin{proof}
Part~(i) is the expansion.

For (ii), let $\tilde f$ be a random function whose values are independent signs with $\E\tilde f(y)=u_y$. Then $\E\chi_S(\tilde f)=\prod_{y\in S}u_y$, so $\widetilde V(u)=\E V_{\tilde f}$ is an average of unitaries and has norm at most $1$.

For (iii), $\widetilde V$ is affine in each coordinate: $\widetilde V(u)=\widetilde V_0(u)+u_x\widetilde V_1(u)$ with $\widetilde V_0,\widetilde V_1$ independent of $u_x$, so $\partial\widetilde V/\partial u_x=\widetilde V_1$, while, by (i),
\[
 (\partial_xV)(f)=\frac12\bigl(\widetilde V(f^{x,+})-\widetilde V(f^{x,-})\bigr)=\widetilde V_1(f) . \qedhere
\]
\end{proof}

The extension must be the multilinear one: another polynomial that agrees with $V$ on the signs, such as the product of the operators $P_\varnothing+\sum_xu_xP^{\mathrm c}_x$, is also bounded on the cube, but its partial derivatives at the signs need not equal $\partial_xV$.

\begin{lemma}[Markov's inequality at an endpoint]\label{lem:fr-markov}
Let $\mathcal Y$ be a finite-dimensional normed space, and let $g(t)=\sum_{k=0}^dy_kt^k$ with $y_k\in\mathcal Y$. Then
\[
 \|g'(1)\|\le d^2\max_{-1\le t\le1}\|g(t)\| .
\]
\end{lemma}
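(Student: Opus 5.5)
We reduce the vector-valued statement to the classical scalar Markov inequality at the endpoint by duality. Put $M=\max_{-1\le t\le1}\|g(t)\|$. Since $\mathcal Y$ is a finite-dimensional normed space, the Hahn--Banach theorem provides a linear functional $\ell$ on $\mathcal Y$ with $\|\ell\|\le1$ and $\ell(g'(1))=\|g'(1)\|$. If $\mathcal Y$ is a complex space we take $\ell$ complex-linear and replace it by its real part $\operatorname{Re}\ell$, which is a real-linear functional of norm at most $1$ that still takes the value $\|g'(1)\|$ at $g'(1)$. The scalar function $h(t)=\operatorname{Re}\ell(g(t))=\sum_k\operatorname{Re}\ell(y_k)\,t^k$ is then a real polynomial of degree at most $d$. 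Moreover $|h(t)|\le\|g(t)\|\le M$ on $[-1,1]$, and
\[
 h'(1)=\operatorname{Re}\ell(g'(1))=\|g'(1)\| .
\]
It therefore suffices to show that $|h'(1)|\le d^2\max_{[-1,1]}|h|$ for real polynomials $h$ of degree at most $d$.

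For this scalar step we use Lagrange interpolation at the Chebyshev extremal nodes $t_j=\cos(j\pi/d)$, $j=0,\dots,d$, which recovers every polynomial of degree at most $d$. This gives $h'(1)=\sum_jh(t_j)\,\ell_j'(1)$, where $\ell_j$ are the Lagrange basis polynomials. Applying the same formula to the Chebyshev polynomial $T_d$, which satisfies $T_d(t_j)=(-1)^j$, gives $T_d'(1)=d^2=\sum_j(-1)^j\ell_j'(1)$.

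The key fact is that the signs of $\ell_j'(1)$ alternate as $(-1)^j$. Granting this, we get $\sum_j|\ell_j'(1)|=d^2$, and hence
\[
 |h'(1)|\le\max_j|h(t_j)|\,d^2\le d^2\max_{[-1,1]}|h| .
\]
The case $d=0$ is trivial.

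The sign pattern is the only real obstacle, and we would verify it by direct computation. Write $\omega(t)=(t^2-1)U_{d-1}(t)$, which vanishes exactly at the nodes. Then $\ell_j(t)=\omega(t)/\bigl((t-t_j)\omega'(t_j)\bigr)$. For $j\ge1$ we have $\ell_j(1)=0$, so $\ell_j'(1)=\omega'(1)/\bigl((1-t_j)\omega'(t_j)\bigr)$. Here $\omega'(1)=2U_{d-1}(1)=2d>0$, and $\operatorname{sign}\omega'(t_j)=(-1)^j$ up to a fixed global sign, because $\omega$ has simple roots whose derivatives alternate in sign. The remaining term $\ell_0'(1)$ then has the sign opposite to $\ell_1'(1)$, which is forced because $\sum_j\ell_j'(1)=0$ (the derivative of the constant $1$). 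The global sign is pinned down by $\sum_j(-1)^j\ell_j'(1)=d^2>0$.

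Alternatively, we could simply cite the classical Markov inequality~\cite{BorweinErdelyi} for real polynomials. The duality argument of the first paragraph is then the only new ingredient.
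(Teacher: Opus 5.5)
Your argument has the same core as the paper's proof: interpolation at $t_j=\cos(j\pi/d)$, the sign pattern $(-1)^j\ell_j'(1)>0$, and the evaluation $\sum_j(-1)^j\ell_j'(1)=T_d'(1)=d^2$. The Hahn--Banach reduction is valid but not needed on this route. The paper applies $g=\sum_j\ell_j\,g(t_j)$ coordinatewise to the vector-valued $g$ and uses the triangle inequality in $\mathcal Y$, which gives $\|g'(1)\|\le\bigl(\sum_j|\ell_j'(1)|\bigr)\max_j\|g(t_j)\|$ directly. What your reduction does buy is the option of citing the classical scalar Markov inequality, and with that citation your proof is complete.

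Your self-contained sign verification, however, does not establish the case $j=0$. For $j\ge1$ it is fine. Since $\omega'(t_0)=\omega'(1)=2d>0$ and the derivatives at consecutive simple roots alternate, you even get $\operatorname{sign}\omega'(t_j)=(-1)^j$ with no global ambiguity. For $j=0$, though, $\sum_j\ell_j'(1)=0$ only gives $\ell_0'(1)=-\sum_{j\ge1}\ell_j'(1)$. For $d\ge2$ this is an alternating sum whose sign is not determined by the signs of its terms, so it does not force $\ell_0'(1)>0$. Nor can $\sum_j(-1)^j\ell_j'(1)=d^2$ supply the missing sign. With signs known only for $j\ge1$, it merely gives $\ell_0'(1)=d^2-\sum_{j\ge1}|\ell_j'(1)|$. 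Moreover, your conclusion $\sum_j|\ell_j'(1)|=d^2$ is then equivalent to $\ell_0'(1)\ge0$, which is exactly what remains to be shown. The repair is the paper's one line: $\ell_0(t)=\prod_{k\ge1}(t-t_k)/(1-t_k)$, so logarithmic differentiation gives $\ell_0'(1)=\sum_{k\ge1}(1-t_k)^{-1}>0$.
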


\begin{proof}
For $d=0$ there is nothing to prove, so we let $d\ge1$.

\emph{Step 1: interpolation at the Chebyshev nodes.} Let $t_j=\cos(j\pi/d)$ for $j=0,\dots,d$, so that $1=t_0>t_1>\dots>t_d=-1$, and let
\[
 \ell_j(t)=\prod_{k\ne j}\frac{t-t_k}{t_j-t_k}
\]
be the Lagrange polynomials for these nodes. Interpolation at $d+1$ nodes is exact for polynomials of degree at most $d$, coordinate by coordinate, so
\[
 g=\sum_j\ell_jg(t_j)\qquad\text{and}\qquad\|g'(1)\|\le\Bigl(\sum_j|\ell_j'(1)|\Bigr)\max_j\|g(t_j)\| .
\]

\emph{Step 2: the signs of the derivatives $\ell_j'(1)$.} We claim that $(-1)^j\ell_j'(1)>0$ for every $j$.
\begin{enumerate}
\item For $j=0$, we have $\ell_0(1)=1$ and $\ell_0'(1)=\sum_{k\ge1}(1-t_k)^{-1}>0$.
\item For $j\ge1$, $\ell_j$ vanishes at $t_0=1$, so
\[
 \ell_j'(1)=\frac{\prod_{k\notin\{0,j\}}(1-t_k)}{\prod_{k\ne j}(t_j-t_k)},
\]
whose numerator is positive and whose denominator has exactly $j$ negative factors, namely those with $k<j$.
\end{enumerate}
Hence $\sum_j|\ell_j'(1)|=\sum_j(-1)^j\ell_j'(1)$.

\emph{Step 3: the value of this sum.} The Chebyshev polynomial $T_d$, defined by $T_d(\cos\vartheta)=\cos(d\vartheta)$, has degree $d$ and satisfies $T_d(t_j)=(-1)^j$, so interpolation gives
\[
 T_d=\sum_j(-1)^j\ell_j\qquad\text{and}\qquad\sum_j(-1)^j\ell_j'(1)=T_d'(1) .
\]
Differentiating $T_d(\cos\vartheta)=\cos(d\vartheta)$ gives $T_d'(\cos\vartheta)=d\sin(d\vartheta)/\sin\vartheta$ for $0<\vartheta<\pi$, and letting $\vartheta\to0$ gives $T_d'(1)=d^2$.
\end{proof}

\begin{lemma}[Derivative bound]\label{lem:fr-derivative}
For every $f$ and $z$,
\[
 \|(\partial_{\chi_z}V)(f)\|\le2q^2/\sqrt N .
\]
Consequently, $\|K_z(f)\|^2\le4q^4p_f(z)$.
\end{lemma}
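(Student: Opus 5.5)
The plan is to reduce the directional derivative $\partial_{\chi_z}V$ at the sign vector $f$ to two one-variable derivatives at an endpoint, and to bound each by \cref{lem:fr-markov}. By \cref{lem:fr-polynomial}(iii), $(\partial_{\chi_z}V)(f)=N^{-1/2}\sum_x\chi_z(x)\,(\partial\widetilde V/\partial u_x)(f)$. Split the addresses into $\mathcal I_+=\{x:\chi_z(x)=f(x)\}$ and $\mathcal I_-=\{x:\chi_z(x)=-f(x)\}$. Then
\[
 \sum_x\chi_z(x)\frac{\partial\widetilde V}{\partial u_x}(f)=\sum_{x\in\mathcal I_+}f(x)\frac{\partial\widetilde V}{\partial u_x}(f)-\sum_{x\in\mathcal I_-}f(x)\frac{\partial\widetilde V}{\partial u_x}(f).
\]

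For $\sigma\in\{+,-\}$, define the operator-valued polynomial $g_\sigma(t)=\widetilde V(u^\sigma(t))$ of one real variable. Here $u^\sigma(t)_x=tf(x)$ for $x\in\mathcal I_\sigma$, and $u^\sigma(t)_x=f(x)$ otherwise. Since $\widetilde V$ has total degree at most $q$, $g_\sigma$ has degree at most $q$. For $t\in[-1,1]$ the point $u^\sigma(t)$ lies in the cube, so $\|g_\sigma(t)\|\le1$ by \cref{lem:fr-polynomial}(ii). By the chain rule at $t=1$, where $u^\sigma(1)=f$,
\[
 g_\sigma'(1)=\sum_{x\in\mathcal I_\sigma}f(x)\,\frac{\partial\widetilde V}{\partial u_x}(f).
\]
This is exactly the corresponding sum above.

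\Cref{lem:fr-markov}, applied in the finite-dimensional space of operators on $A$ with the operator norm, gives $\|g_\sigma'(1)\|\le q^2$. The triangle inequality then yields
\[
 \|(\partial_{\chi_z}V)(f)\|\le N^{-1/2}\bigl(\|g_+'(1)\|+\|g_-'(1)\|\bigr)\le2q^2/\sqrt N.
\]
For the consequence, use \eqref{eq:fr-Gz-pointwise} together with $\|\Pi_z\|\le1$:
\[
 \|K_z(f)\|\le\sqrt N\,|\widehat f(z)|\cdot2q^2/\sqrt N,
\]
so $\|K_z(f)\|^2\le4q^4\widehat f(z)^2=4q^4p_f(z)$.

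The only delicate point is to use the multilinear extension, so that its partial derivatives at signs coincide with the discrete derivatives. \Cref{lem:fr-polynomial} supplies exactly this, together with boundedness on the cube. The scaling trick keeps each one-variable restriction inside the cube, and this is what lets Markov's endpoint inequality apply.
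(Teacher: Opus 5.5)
Your proposal is correct and follows the paper's proof essentially step for step: the same split into $\mathcal I_\pm$, the same one-variable restrictions $g_\pm(t)=\widetilde V(u^\pm(t))$ kept inside the cube, the chain rule at $t=1$ via \cref{lem:fr-polynomial}(iii), and \cref{lem:fr-markov} in the operator-norm space. The consequence for $K_z(f)$ then follows from \eqref{eq:fr-Gz-pointwise} exactly as you wrote.
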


\begin{proof}
\emph{Step 1: two polynomials in one variable.} Fix $f$ and $z$, and split the addresses into
\[
 \mathcal I_+=\{x:\chi_z(x)=f(x)\},\qquad \mathcal I_-=\{x:\chi_z(x)=-f(x)\}.
\]
For real $t$, let $u^\pm(t)$ be the point of $\R^{\{0,1\}^n}$ with coordinates $tf(x)$ for $x\in \mathcal I_\pm$ and $f(x)$ otherwise, and let $g_\pm(t)=\widetilde V(u^\pm(t))$. Each monomial $\prod_{y\in S}u_y$ with $|S|\le q$ becomes a constant multiple of $t^{|S\cap \mathcal I_\pm|}$, so $g_\pm$ is a polynomial in $t$ of degree at most $q$ with operator coefficients. Moreover, $\|g_\pm(t)\|\le1$ for $|t|\le1$ by \cref{lem:fr-polynomial}(ii), because $u^\pm(t)$ then lies in the cube.

\emph{Step 2: the derivative along $\chi_z$.} By the chain rule and \cref{lem:fr-polynomial}(iii),
\[
 g_\pm'(1)=\sum_{x\in \mathcal I_\pm}f(x)\frac{\partial\widetilde V}{\partial u_x}(f)=\sum_{x\in \mathcal I_\pm}f(x)\,(\partial_xV)(f) .
\]
Since $f(x)=\chi_z(x)$ on $\mathcal I_+$ and $f(x)=-\chi_z(x)$ on $\mathcal I_-$, we obtain
\[
 \sqrt N\,(\partial_{\chi_z}V)(f)=\sum_x\chi_z(x)(\partial_xV)(f)=g_+'(1)-g_-'(1).
\]

\emph{Step 3: Markov's inequality.} By \cref{lem:fr-markov}, applied in the space of operators on $A$ with the operator norm, we have $\|g_\pm'(1)\|\le q^2$. The second claim follows from \eqref{eq:fr-Gz-pointwise}, $\|\Pi_z\|\le1$, and $N\widehat f(z)^2=Np_f(z)$.
\end{proof}

The splitting into two directions is needed because the normalized character direction $\chi_z/\sqrt N$, whose coordinates are $\pm N^{-1/2}$, does not stay inside the cube when followed from $f$, whereas each of $u^\pm(t)$ does for $|t|\le1$. Apart from the score bound and the growth bound of
\cref{lem:fourier-growth}, the bound on the number of queries enters the
proof only here.

Finally, we bound $\E_f\max_zp_f(z)$, the expected probability of the most likely outcome, which is Eve's probability of guessing an ideal sample.

\begin{lemma}[Maximal Fourier weight]\label{lem:fr-max-weight}
For a uniformly random $f$,
\[
 \E_f\max_zp_f(z)\le\bigl(2\ln(2N)+2\bigr)/N\le2(n+2)/N .
\]
Moreover, for every $u\ge0$,
\[
 \Pr\bigl[\max_zNp_f(z)\ge u\bigr]\le2Ne^{-u/2} .
\]
\end{lemma}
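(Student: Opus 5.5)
The plan is to prove the tail bound first and then integrate it to bound the expectation. Fix $z$ and write $Y_z=\sqrt N\widehat f(z)=N^{-1/2}\sum_x\chi_z(x)f(x)$. This is a normalized sum of $N$ independent uniform signs, because $\chi_z(x)f(x)$ is again a uniform sign for each $x$. Hoeffding's inequality, or directly $\E e^{sY_z}=\cosh(s/\sqrt N)^N\le e^{s^2/2}$ followed by a Chernoff bound, gives $\Pr[|Y_z|\ge v]\le2e^{-v^2/2}$ for every $v\ge0$.

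Since $Np_f(z)=Y_z^2$, taking $v=\sqrt u$ gives $\Pr[Np_f(z)\ge u]\le2e^{-u/2}$. The union bound over the $N$ outcomes $z$ then yields
\[
 \Pr\bigl[\max_zNp_f(z)\ge u\bigr]\le2Ne^{-u/2},
\]
which is the second claim.

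For the expectation, write $X=\max_zNp_f(z)\ge0$ and use $\E X=\int_0^\infty\Pr[X\ge u]\,du$. Split the integral at $u_0=2\ln(2N)$, which is where the tail bound equals $1$. Bound the probability by $1$ on $[0,u_0]$ and by $2Ne^{-u/2}$ beyond $u_0$. This gives
\[
 \E X\le u_0+2N\cdot2e^{-u_0/2}=2\ln(2N)+2.
\]
Dividing by $N$ gives the first inequality. The second follows from $\ln(2N)=(n+1)\ln2\le n+1$.

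No step is a genuine obstacle. The only point needing care is the sub-Gaussian bound with constant exactly $1/2$ in the exponent, which follows from $\cosh y\le e^{y^2/2}$.
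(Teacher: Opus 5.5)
Your proposal is correct and follows essentially the same route as the paper: a Chernoff bound via $\cosh y\le e^{y^2/2}$ for the normalized sign sum, a union bound over $z$, and integration of the tail split at $u_0=2\ln(2N)$, giving $u_0+4Ne^{-u_0/2}=2\ln(2N)+2$. The final step $\ln(2N)=(n+1)\ln2\le n+1$ also matches the paper's conversion to $2(n+2)/N$.
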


\begin{proof}
Fix $z$. The signs $r_x=\chi_z(x)f(x)$ are independent and uniform, and
\[
 Np_f(z)=N^{-1}\Bigl(\sum_xr_x\Bigr)^2 .
\]
For $\vartheta\ge0$,
\[
 \E e^{\vartheta\sum_xr_x}=(\cosh\vartheta)^N\le e^{N\vartheta^2/2},
\]
because $\cosh\vartheta\le e^{\vartheta^2/2}$, since $(2k)!\ge2^kk!$. By Markov's inequality with $\vartheta=s/N$,
\begin{align*}
 \Pr\Bigl[\sum_xr_x\ge s\Bigr]&\le e^{-\vartheta s}(\cosh\vartheta)^N\le e^{-s^2/(2N)},\\
 \Pr\Bigl[\Bigl|\sum_xr_x\Bigr|\ge s\Bigr]&\le2e^{-s^2/(2N)}\qquad\text{by symmetry}.
\end{align*}

With $s=\sqrt{uN}$, where $u\ge0$, this reads $\Pr[Np_f(z)\ge u]\le2e^{-u/2}$, and the union bound over $z$ gives the second assertion.

The first assertion follows by integrating this tail bound. For every $u_0\ge0$,
\begin{align*}
 \E\max_zNp_f(z)&=\int_0^\infty\Pr\bigl[\max_zNp_f(z)\ge u\bigr]\,du\\
 &\le u_0+\int_{u_0}^\infty2Ne^{-u/2}\,du=u_0+4Ne^{-u_0/2},
\end{align*}
and $u_0=2\ln(2N)$ gives
\[
 2\ln(2N)+2=2(n+1)\ln2+2\le2(n+2) . \qedhere
\]
\end{proof}

\begin{proof}[Proof of \cref{thm:fourier-randomness}]
\emph{Step 1: the hypotheses and the entropy bound.} By \cref{lem:fr-comparison}, the hypotheses \eqref{eq:assembly-hypotheses} of \cref{prop:randomness-assembly} hold with $G_z(f)$ given by \eqref{eq:fr-Gz-pointwise}, with $\kappa=4q^4$ by \cref{lem:fr-derivative}, and with the stated $\eta_0$. Here $\E_f\sum_z\|\Psi_z(f)-G_z(f)\|^2=\sum_z\|\Psi_z-G_z\|^2$, with the norm of $\mathcal F$ on the right.

The proposition, with its divisor $\xi\ge(1-\eta_0)^2$, and \cref{lem:fr-max-weight} give \eqref{eq:fr-main-bound}, since
\[
 \log(4q^4)=2+4\log q,\qquad \log\bigl(N\E_f\max_zp_f(z)\bigr)\le1+\log(n+2) .
\]

\emph{Step 2: the parameters.} The only term of the sum defining $\bar\lambda_1$ is $\sqrt0=0$, and, by \cref{lem:fourier-growth}, $\bar\lambda_q\le(4/3)q^{3/2}N^{-1/2}$. Using $\sqrt{a+b+c}\le\sqrt a+\sqrt b+\sqrt c$ and $1/(1-1/N)\le2$, we obtain
\begin{align*}
 \eta_0&\le\sqrt\delta+\sqrt{2\bar\lambda_q}+2\bar\lambda_q\\
 &\le\sqrt\delta+\sqrt{\frac83}\,q^{3/4}N^{-1/4}+\frac83q^{3/2}N^{-1/2},
\end{align*}
and $\sqrt{8/3}\le2$ and $8/3\le3$ give \eqref{eq:fr-eta0-simple}.
\end{proof}

For $q=1$ the polynomial $\widetilde V$ is affine, so the operator
\[
 (\partial_{\chi_z}V)(f)=N^{-1/2}U_1\Bigl(\sum_x\chi_z(x)P^{\mathrm c}_x\Bigr)U_0
\]
has norm at most $N^{-1/2}$ because the $P^{\mathrm c}_x$ are mutually orthogonal projections, and the factor $4q^4$ may be replaced by $1$. If moreover $\delta=0$, then $\eta_0=0$, the branches equal the auxiliary vectors, and the conditional states satisfy
\[
 \operatorname{Tr}(M\rho^f_z)\le p_f(z)\operatorname{Tr}(M\rho_E)\qquad\text{for all positive semidefinite $M$}.
\]
Since $\sum_z\rho^f_z=\rho_E=\sum_zp_f(z)\rho_E$, all these inequalities are equalities. This recovers the exact privacy statement of \cref{prop:fourier-one-query} for arbitrary one-query devices, not only for phase queries.

For one phase query, a near-ideal score forces the actual family of conditional states to be close to the \emph{ideal family} $\rho^\circ$ with $(\rho^\circ)^f_z=p_f(z)\rho_E$, in which the output is an exact sample from $p_f$ and is independent of Eve's register. This is the robust form of \cref{prop:fourier-one-query}, and it does not pass through a comparison family.

\begin{proposition}[Closeness to the ideal family]\label{prop:fr-one-phase}
Let the device make a single phase query, as defined in \cref{sec:fourier-one-query}, with Eve's register arbitrary, and let $\delta=3-2/N-B(\mathcal A)$ be its deficit, so that $0\le\delta\le2$ by \cref{prop:fourier-one-query}. Write $\eta_\delta=\sqrt{\delta-\delta^2/4}$. Then
\[
 P(\rho,\rho^\circ)\le\eta_\delta,\qquad\text{and consequently}\qquad \Hmin^{\eta_\delta}(Z\mid RE)_\rho\ge n-\log(2(n+2)) .
\]
\end{proposition}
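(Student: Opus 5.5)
The plan is to compare the actual branches directly with rescaled copies of the device's state on characters, and to compute their overlap exactly using the linearity of one phase query. By \eqref{eq:fourier-one-query-state}, $\Psi(f)=\sum_w\widehat f(w)\Psi(\chi_w)$, so the branch is $\Psi_z(f)=\sum_w\widehat f(w)\,\Pi_z\Psi(\chi_w)$. As comparison vector for the pair $(f,z)$ I take $\widehat f(z)\Psi(\chi_z)$, without the projection $\Pi_z$. Since $\Psi(\chi_z)=(V_{\chi_z}\otimes I_E)\psi_0$ with $V_{\chi_z}$ unitary on $A$, its reduced state on $E$ is exactly $\widehat f(z)^2\rho_E=p_f(z)\rho_E=(\rho^\circ)^f_z$. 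The family $\rho^\circ$ is normalized because $\E_f\sum_zp_f(z)=1$. So $\rho^\circ$ is the family of reduced states of these vectors, and no extra normalization is needed, unlike in \cref{prop:randomness-assembly}.

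Next I bound the fidelity using \eqref{eq:prelim-uhlmann}, which holds for subnormalized vectors. For each $f$ and $z$, $\|\sqrt{\rho^f_z}\sqrt{(\rho^\circ)^f_z}\|_1\ge|\langle\Psi_z(f),\widehat f(z)\Psi(\chi_z)\rangle|$. Summing over $z$, averaging over $f$, and using the triangle inequality gives
\[
 F(\rho,\rho^\circ)\ge\Bigl|\sum_z\bigl\langle\Pi_z\,\E_f[\widehat f(z)\Psi(f)],\,\Psi(\chi_z)\bigr\rangle\Bigr| .
\]
The key computation is $\E_f[\widehat f(z)\widehat f(w)]=N^{-2}\sum_x\chi_z(x)\chi_w(x)=\mathbf 1[z=w]/N$. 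It yields $\E_f[\widehat f(z)\Psi(f)]=N^{-1}\Psi(\chi_z)$, so the right side equals $N^{-1}\sum_z\|\Pi_z\Psi(\chi_z)\|^2=\sigma$. By \cref{prop:fourier-one-query}(i), $\sigma=(B(\mathcal A)-1+2/N)/2=1-\delta/2$, which also gives $0\le\delta\le2$ from $0\le\sigma\le1$. Therefore $P(\rho,\rho^\circ)\le\sqrt{1-(1-\delta/2)^2}=\sqrt{\delta-\delta^2/4}=\eta_\delta$.

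Finally, $\rho^\circ$ is an admissible comparison family in \cref{def:entropies}. For every $f$ and every measurement $(M_z)_z$ on $E$, $\sum_z\operatorname{Tr}(M_z\,p_f(z)\rho_E)\le\max_zp_f(z)$. So $p_{\mathrm{guess}}(Z\mid RE)_{\rho^\circ}\le\E_f\max_zp_f(z)\le2(n+2)/N$ by \cref{lem:fr-max-weight}. Taking $-\log$ gives $\Hmin^{\eta_\delta}(Z\mid RE)_\rho\ge n-\log(2(n+2))$.

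The only step needing care is the overlap computation. One must keep $\Pi_z$ on only one side of the inner product, using $\langle\Pi_z u,v\rangle=\langle u,\Pi_z v\rangle$, and take the absolute value after the sum and average, so that the exact value $\sigma$ appears rather than a weaker Cauchy--Schwarz bound.
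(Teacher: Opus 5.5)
Your proposal is correct and follows the paper's proof essentially step for step. You use the same comparison vectors $\widehat f(z)\Psi(\chi_z)$, whose reduced states are exactly $(\rho^\circ)^f_z$, together with Uhlmann's inequality. You then compute the same averaged overlap via $\E_f\widehat f(z)\widehat f(w)=\mathbf 1[z=w]/N$, giving $F(\rho,\rho^\circ)\ge\sigma=1-\delta/2$, and finish with the same guessing bound $\E_f\max_zp_f(z)\le2(n+2)/N$ for the ideal family.
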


\begin{proof}
\emph{Step 1: the score.} By \eqref{eq:fourier-one-query-state}, $\Psi(f)=\sum_w\widehat f(w)\Psi(\chi_w)$, and by \cref{prop:fourier-one-query},
\[
 B(\mathcal A)=1-2/N+2\sigma\qquad\text{with}\qquad\sigma=\frac1N\sum_w\|\Pi_w\Psi(\chi_w)\|^2,
\]
so $\delta=2(1-\sigma)$. Put $g_w=\Pi_w\Psi(\chi_w)$ and $G^\circ_z(f)=\widehat f(z)\,g_z$.

\emph{Step 2: the fidelity with the ideal family.} The vector $\widehat f(z)\Psi(\chi_z)$ has reduced state $p_f(z)\rho_E=(\rho^\circ)^f_z$ on $E$, because $\Psi(\chi_z)=(V_{\chi_z}\otimes I_E)\psi_0$ with $V_{\chi_z}$ unitary on $A$. Hence \eqref{eq:prelim-uhlmann} gives, for every $f$ and $z$,
\[
 \bigl\|\sqrt{\rho^f_z}\sqrt{(\rho^\circ)^f_z}\bigr\|_1\ge\bigl|\langle\Psi_z(f),\widehat f(z)\Psi(\chi_z)\rangle\bigr|=\bigl|\langle\Psi_z(f),G^\circ_z(f)\rangle\bigr| ,
\]
where the equality holds because $\Pi_z$ is an orthogonal projection and $\Psi_z(f)=\Pi_z\Psi(f)$. Since
\[
 \Psi_z(f)=\sum_w\widehat f(w)\Pi_z\Psi(\chi_w)\qquad\text{and}\qquad\E_f\widehat f(w)\widehat f(z)=\mathbf 1[w=z]/N,
\]
we have
\begin{align*}
 \E_f\sum_z\langle\Psi_z(f),G^\circ_z(f)\rangle&=\frac1N\sum_z\|g_z\|^2=\sigma,\\
 F(\rho,\rho^\circ)&\ge\E_f\sum_z|\langle\Psi_z(f),G^\circ_z(f)\rangle|\\
 &\ge\sigma=1-\delta/2,\\
 P(\rho,\rho^\circ)&\le\sqrt{1-(1-\delta/2)^2}=\eta_\delta ,
\end{align*}
where the second line follows by summing the first display of this step over $z$ and averaging over $f$, the third uses the first line, the triangle inequality, and Step~1, and the last uses the definition of $P$.

\emph{Step 3: the min-entropy of the ideal family.} In the family $\rho^\circ$, Eve's state does not depend on $z$, so for every measurement
\[
 \sum_z\operatorname{Tr}(M_zp_f(z)\rho_E)\le\max_zp_f(z),
\]
with equality for the measurement that always outputs a heaviest $z$. Hence, by \cref{lem:fr-max-weight},
\[
 p_{\mathrm{guess}}(Z\mid RE)_{\rho^\circ}=\E_f\max_zp_f(z)\le2(n+2)/N ,
\]
and $\Hmin(Z\mid RE)_{\rho^\circ}\ge n-\log(2(n+2))$. The claim follows from \cref{def:entropies}, since $\rho^\circ$ is a normalized family.
\end{proof}

The case $q=1$ of \cref{thm:fourier-randomness}, specialized as above, gives $\sqrt{\delta/(2(1-1/N))}$ as the smoothing parameter. For large $N$ and small positive deficit $\delta$, this is smaller than the ideal-family parameter by a factor approaching $\sqrt2$. The exact ratio is $[2(1-1/N)(1-\delta/4)]^{-1/2}$. It certifies closeness to a comparison family rather than to the ideal family itself.

\subsection{Haar-random states}\label{sec:randomness-haar}

We use the notation of \cref{sec:haar}: vector-valued polynomials, patterns, the level operator $C$, the progress measure $\lambda$, the operator $X$ of the score, and the bound $\lambda(\Psi)\le\Lambda_q$ of \cref{thm:haar-lower}, where $\Lambda_q:=\Lambda_{0,q}$. The oracle is $O_\phi$, the device may make controlled queries, and Eve learns $\phi$.

\begin{theorem}[Certified randomness from Haar-random states]\label{thm:haar-randomness}
Let $q\ge1$ and $\delta\ge0$, and consider an experiment with the canonical oracle in which the device makes at most $q$ queries and has score $B(\mathcal A)\ge\beta_N-\delta$. Let $\eta_0$ and $\Lambda_q$ be given by
\begin{equation}\label{eq:hr-eta0}
\begin{aligned}
 \eta_0&=\bigl(3(\delta+2\sqrt2\,\Lambda_q+72\,\Lambda_q^2)+\Lambda_q^2\bigr)^{1/2},\\
 \Lambda_q&=\frac{c_0}{\sqrt N}\sum_{t=0}^{q-1}\sqrt{2t+1}\le\sqrt3\,c_0\,\frac{q^{3/2}}{\sqrt N} .
\end{aligned}
\end{equation}
If $\eta_0<1$, then for every $\eta\in[\eta_0,1]$,
\begin{equation}\label{eq:hr-main-bound}
 \Hmin^\eta(Z\mid RE)_\rho\ge n-2\log q-\log H_N-2-\log\frac1{1-\eta_0^2},
\end{equation}
where $H_N\le\ln N+1$ is the harmonic number of \cref{lem:heaviest}. Moreover, for every $q$,
\begin{equation}\label{eq:hr-eta0-simple}
 \eta_0\le\sqrt{3\delta}+4\sqrt{c_0}\,q^{3/4}N^{-1/4}+26\,c_0\,q^{3/2}N^{-1/2} .
\end{equation}
\end{theorem}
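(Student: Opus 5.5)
The plan is to apply \cref{prop:randomness-assembly}. The auxiliary vector of each branch will be its part that depends on the phase $\theta_z$. For a fixed $z$, let $R_z^\theta$ multiply the amplitude $\alpha_z$ by $e^{\mathrm i\theta}$, and write $\phi^{z,\theta}$ for the rotated state. Define
\[
 G_z(\phi)=\Psi_z(\phi)-\frac1{2\pi}\int_0^{2\pi}\Psi_z(\phi^{z,\theta})\,d\theta .
\]
The phases are independent and uniform under the Haar measure, and rotating $\theta_z$ leaves $p$ unchanged. So averaging over $\theta$ is the orthogonal projection, in the space of square-integrable functions of $\phi$, onto the sectors with $\gamma_z=0$. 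Hence $G_z$ is the image of $\Psi_z$ under the complementary projection onto the sectors with $\gamma_z\ne0$. Part~(iv) of the proposition then gives
\[
 \xi=1-\sum_z\|\Psi_z-G_z\|^2\ge1-\eta_0^2,
\]
which produces the term $\log\frac1{1-\eta_0^2}$ in \eqref{eq:hr-main-bound}.

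\emph{Closeness of the branches to the auxiliary vectors.} The difference $\Psi_z-G_z$ is the $\gamma_z=0$ part of $\Psi_z$. Write $F_0=P_0\Psi$, and let $A_z$ be the $\gamma_z=0$ part of $\Pi_zF_0$. The $\gamma_z=0$ part of $\Pi_zF_+$ has total weight at most $\|F_+\|^2\le\lambda^2\le\Lambda_q^2$ by \cref{lem:haar-cross}. For the level-zero part, I use the sharper case split in \cref{prop:haar-level-zero}. Terms with $\gamma_z=0$ have ratio at most $1$, and the others have ratio at most $\beta_N$. Therefore
\[
 \langle F_0,XF_0\rangle\le\beta_N\|F_0\|^2-(\beta_N-1)\sum_z\|A_z\|^2 .
\]
Next, I insert this into the split \eqref{eq:haar-score-split} and bound the cross term and the positive-level term by \cref{lem:haar-cross,lem:haar-global}. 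With the hypothesis $B(\mathcal A)\ge\beta_N-\delta$, $\lambda\le\Lambda_q$ from \cref{thm:haar-lower}, and $\beta_N-1\ge1/3$ for $N\ge2$, this gives
\[
 \sum_z\|A_z\|^2\le3\bigl(\delta+2\sqrt2\Lambda_q+72\Lambda_q^2\bigr).
\]
The parts $A_z$ and $(I-P_0)$ are orthogonal, so adding the $F_+$ contribution gives $\sum_z\|\Psi_z-G_z\|^2\le\eta_0^2$ exactly as defined in \eqref{eq:hr-eta0}. The simplified bound \eqref{eq:hr-eta0-simple} follows from $\sqrt{a+b+c}\le\sqrt a+\sqrt b+\sqrt c$ and the estimate $\Lambda_q\le\sqrt3c_0q^{3/2}N^{-1/2}$; the latter comes from $\sqrt{2t+1}\le\sqrt{3}\sqrt{t+1}$ and an integral comparison.

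\emph{Operator bound by a hybrid argument.} I expect this to be the main step. For a fixed $\phi$,
\[
 G_z(\phi)=(K_z(\phi)\otimes I_E)\psi_0,\qquad K_z(\phi)=\frac1{2\pi}\int_0^{2\pi}\Pi_z\bigl(V_\phi-V_{\phi^{z,\theta}}\bigr)\,d\theta ,
\]
and $K_z$ acts on the device alone. Replacing the oracle one query at a time, the difference of two products of $q$ unitaries that differ only in their query factors gives
\[
 \|V_\phi-V_{\phi'}\|\le q\,\|O_\phi-O_{\phi'}\| .
\]
Each $O$ is $I$ minus twice a rank-one projection onto a unit vector $\eta_\phi$. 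The difference of two such projections has norm $\sqrt{1-|\langle\eta,\eta'\rangle|^2}\le\|\eta-\eta'\|$. Here
\[
 \|\eta_\phi-\eta_{\phi^{z,\theta}}\|=\|\phi-\phi^{z,\theta}\|/\sqrt2=|1-e^{\mathrm i\theta}|\sqrt{p_z}/\sqrt2 .
\]
Averaging $|1-e^{\mathrm i\theta}|$ over $\theta$ gives $4/\pi$. So
\[
 \|K_z(\phi)\|\le q\sqrt2\,(4/\pi)\sqrt{p_z}\le2q\sqrt{p_z},
\]
and I take $\kappa=4q^2$. This holds for every $\phi$, and $K_z$ depends measurably on $\phi$. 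Controlled queries are handled the same way, since the control projection does not increase the norm of the difference.

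\emph{Conclusion.} \cref{prop:randomness-assembly}(iii), with $\kappa=4q^2$, $N\E_\phi\max_zp_z=H_N$ from \cref{lem:heaviest}, and $\xi\ge1-\eta_0^2$, gives
\[
 n-2-2\log q-\log H_N-\log\frac1{1-\eta_0^2},
\]
for every $\eta\in[\eta_0,1]$, which is \eqref{eq:hr-main-bound}. The most delicate bookkeeping is in the closeness step. The ratio-one bound for $\gamma_z=0$ terms must survive alongside the cross and positive-level estimates. This works because those estimates are stated for arbitrary $F$ and only the first term of \eqref{eq:haar-score-split} is refined.
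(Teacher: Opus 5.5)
Your proposal is correct and follows the paper's proof essentially step for step. You use the same auxiliary vectors $G_z=(I-Q_z)\Psi_z$ with part~(iv) of \cref{prop:randomness-assembly}, the same refinement of the level-zero term (this is \cref{lem:hr-phase-mass}), the same hybrid argument giving $\|K_z\|^2\le(32/\pi^2)q^2p_z\le4q^2p_z$, and \cref{lem:heaviest} for $N\E\max_zp_z=H_N$. The only cosmetic difference is that you bound $\|O_\phi-O_{\Theta_{z,t}\phi}\|$ through $\sqrt{1-|\langle u,v\rangle|^2}\le\|u-v\|$ rather than the exact formula of \cref{lem:hr-perturbation}, and this yields the same estimate $2\sqrt2\,\lvert\sin(t/2)\rvert\sqrt{p_z}$.
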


The three steps are as in the Fourier case, with the phase $\theta_z$ of the amplitude $\alpha_z$ in the role of the Fourier coefficient. For $z\in[N]$ and real $t$, let $\Theta_{z,t}$ be the map on the sphere that multiplies the $z$-th coordinate by $e^{\mathrm it}$, and for a vector-valued polynomial $F$ define the \emph{phase average}
\[
 (Q_zF)(\phi)=\frac1{2\pi}\int_0^{2\pi}F(\Theta_{z,t}\phi)\,dt .
\]

\begin{lemma}[Phase averaging]\label{lem:hr-phase-average}
The operator $Q_z$ is the orthogonal projection onto the sum of the blocks whose pattern satisfies $\gamma_z=0$. It commutes with every $P_k$, with $C$, with every $\phi$-independent operator on $\mathcal V$, and with $Q_{z'}$ for every $z'$.
\end{lemma}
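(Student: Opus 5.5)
I would compute the action of $Q_z$ directly on each block of the decomposition of \cref{lem:haar-blocks}. The rotation $\Theta_{z,t}$ leaves every probability $p_x$ unchanged and multiplies $\alpha_z$ by $e^{\mathrm it}$ and $\overline{\alpha_z}$ by $e^{-\mathrm it}$. Hence, for a block element $u_\gamma h\otimes v$ with $h$ a polynomial in $p$, we have $(u_\gamma h)(\Theta_{z,t}\phi)=e^{\mathrm i\gamma_zt}u_\gamma(\phi)h(p)$. Averaging over $t\in[0,2\pi)$ gives $\mathbf 1[\gamma_z=0]\,u_\gamma h$.

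By linearity, $Q_z$ acts on $\mathcal P_{\le d}(\mathcal V)=\bigoplus u_\gamma\mathcal R^{a(\gamma)}_r\otimes\mathcal V$ as the identity on the blocks with $\gamma_z=0$ and as zero on the others. Since the blocks are pairwise orthogonal, $Q_z$ is the orthogonal projection onto the sum of the blocks with $\gamma_z=0$. This holds on each $\mathcal P_{\le d}$, and these spaces are nested, so the description is consistent.

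The commutation claims follow from this diagonal form. Each $P_k$ and $C$ is diagonal in the same block decomposition, since they are the sums of block projections by level, so they commute with $Q_z$. Likewise $Q_{z'}$ is diagonal in the blocks, with eigenvalue $\mathbf 1[\gamma_{z'}=0]$, so $Q_z$ and $Q_{z'}$ commute. A $\phi$-independent operator $T$ on $\mathcal V$ commutes with $Q_z$ even more directly, because $(Q_zTF)(\phi)=T(Q_zF)(\phi)$ by linearity of the integral. Equivalently, $T$ maps each block to itself.

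The only point needing care is that the averaging is applied to functions, whereas the block decomposition identifies polynomials that agree on the sphere. Since $\Theta_{z,t}$ preserves the sphere, $Q_z$ is well defined on functions on the sphere, and the computation above is done on the restrictions, so no issue arises. I expect no real obstacle here.
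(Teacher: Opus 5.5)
Your proposal is correct and follows essentially the same route as the paper: you compute $u_\gamma(\Theta_{z,t}\phi)=e^{\mathrm i\gamma_zt}u_\gamma(\phi)$, so that $Q_z$ acts on each block of \cref{lem:haar-blocks} as $0$ or $1$. You then obtain the commutation claims from the fact that each listed operator maps every block to itself. Your extra remarks on well-definedness on the sphere and on consistency across the nested spaces $\mathcal P_{\le d}$ are harmless additions that the paper leaves implicit.
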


\begin{proof}
Since $\Theta_{z,t}$ replaces $\alpha_z$ by $e^{\mathrm it}\alpha_z$ and leaves $p$ unchanged, we have $u_\gamma(\Theta_{z,t}\phi)=e^{\mathrm i\gamma_zt}u_\gamma(\phi)$, so
\[
 Q_z(u_\gamma h\otimes v)=\mathbf 1[\gamma_z=0]\,u_\gamma h\otimes v\qquad\text{for polynomials $h(p)$ and $v\in\mathcal V$}.
\]
By \cref{lem:haar-blocks}, $Q_z$ acts on each block as $0$ or $1$, which proves the first claim, and the operators in the second claim all map each block to itself.
\end{proof}

\begin{lemma}[The phase-independent part of the branches]\label{lem:hr-phase-mass}
Let $F\in\mathcal P_{\le d}(\mathcal H)$ be a unit vector with $\langle F,XF\rangle\ge\beta_N-\delta$, and let $F_z=\Pi_zF$. Then
\begin{align*}
 \sum_z\|Q_zF_z\|^2&\le\frac{\delta+2\sqrt2\lambda(F)+72\lambda(F)^2}{\beta_N-1}+\lambda(F)^2\\
 &\le3\bigl(\delta+2\sqrt2\lambda(F)+72\lambda(F)^2\bigr)+\lambda(F)^2 .
\end{align*}
\end{lemma}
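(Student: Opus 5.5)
We split $F$ at level zero, writing $F_0=P_0F$ and $F_+=F-F_0$, and redo the proof of \cref{thm:haar-extraction} while keeping track of how much of the level-zero mass lies in the phase-independent part of each branch. Since $Q_z$ commutes with $\Pi_z$ and with $P_0$ (\cref{lem:hr-phase-average}), we have
\[
 \sum_z\|Q_zF_z\|^2\le\sum_z\|Q_z\Pi_zF_0\|^2+\sum_z\|Q_z\Pi_zF_+\|^2 ,
\]
because the two pieces lie at different levels and are therefore orthogonal. The second sum is at most $\|F_+\|^2\le\lambda(F)^2$ by \cref{lem:haar-cross}; this produces the additive $\lambda(F)^2$ in the statement. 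It remains to show that $w:=\sum_z\|Q_z\Pi_zF_0\|^2$ is at most $(\delta+2\sqrt2\lambda+72\lambda^2)/(\beta_N-1)$.

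To bound $w$, we refine \cref{prop:haar-level-zero}. Decompose $\Pi_zF_0$ into its sectors $u_\gamma\Pi_zv$. By \cref{lem:hr-phase-average}, the blocks with $\gamma_z=0$ are exactly the ones that make up $Q_z\Pi_zF_0$. The proposition gives ratio at most $1$ for these blocks and at most $\beta_N$ for the others, and distinct sectors do not interfere under $X$. Hence
\[
 \langle F_0,XF_0\rangle\le\beta_N\|F_0\|^2-(\beta_N-1)\,w\le\beta_N-(\beta_N-1)\,w .
\]
Combining this with the split \eqref{eq:haar-score-split}, the cross term is bounded by $2\sqrt2\lambda$ (\cref{lem:haar-cross}) and the positive-level term by $72\lambda^2$ (\cref{lem:haar-global}). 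Together with the hypothesis $\langle F,XF\rangle\ge\beta_N-\delta$, this gives
\[
 \beta_N-\delta\le\beta_N-(\beta_N-1)w+2\sqrt2\lambda+72\lambda^2 ,
\]
and so $w\le(\delta+2\sqrt2\lambda+72\lambda^2)/(\beta_N-1)$, which is the first inequality.

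For the second inequality we need $1/(\beta_N-1)\le3$. We have $\beta_N-1=(N-1)/(N+1)$, which is $1/3$ at $N=2$ and increases with $N$, so the bound holds for all $N\ge2$. For $N=1$ the numerator side is degenerate; we would note this separately or assume $N\ge2$, which is automatic since $N=2^n$ with $n\ge1$.

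The main point to check carefully is the first step of the refinement: the cross terms of $X$ between blocks with $\gamma_z=0$ and blocks with $\gamma_z\ne0$ inside one branch must vanish. This holds because $X$ preserves patterns (\cref{lem:haar-blocks}). One also has to confirm that the quadratic form of $X$ on level zero genuinely decomposes over the triples $(z,\gamma,v)$, exactly as in Step 1 of \cref{prop:haar-level-zero}.
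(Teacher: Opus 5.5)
Your proposal is correct and follows the paper's proof essentially step for step. Like the paper, you split each $Q_zF_z$ orthogonally into its level-zero and positive-level pieces, bounding the latter by $\|F_+\|^2\le\lambda(F)^2$. You then derive $\langle F_0,XF_0\rangle\le\beta_N\|F_0\|^2-(\beta_N-1)w_0$ from the exact ratios in \cref{prop:haar-level-zero}, and solve for $w_0$ using the cross-term and positive-level bounds of \cref{lem:haar-cross,lem:haar-global}, with $\beta_N-1\ge1/3$ for $N\ge2$.
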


\begin{proof}
\emph{Step 1: the positive-level part.} Write $F=F_0+F_+$ with $F_0=P_0F$. Since $Q_z\Pi_z$ commutes with $P_0$ (\cref{lem:hr-phase-average}), we have $Q_zF_z=Q_z\Pi_zF_0+Q_z\Pi_zF_+$ with orthogonal summands, and, by \cref{lem:haar-cross},
\begin{align*}
 \sum_z\|Q_z\Pi_zF_+\|^2&\le\sum_z\|\Pi_zF_+\|^2=\|F_+\|^2\\
 &\le\lambda(F)^2 .
\end{align*}

\emph{Step 2: the level-zero part.} Let $w_0=\sum_z\|Q_z\Pi_zF_0\|^2$ be the weight of the level-zero terms whose pattern has $\gamma_z=0$ at their own output $z$. By the exact ratio computed in \cref{prop:haar-level-zero}, a level-zero term with $\gamma_z=0$ has ratio $N/(N+|\gamma|_1)\le1$ on its own branch, and a term with $|\gamma_z|=1$ has ratio at most $\beta_N$, so
\begin{align*}
 \langle F_0,XF_0\rangle&\le w_0+\beta_N\bigl(\|F_0\|^2-w_0\bigr)\\
 &=\beta_N\|F_0\|^2-(\beta_N-1)w_0 .
\end{align*}

\emph{Step 3: solving for $w_0$.} By \cref{lem:haar-cross,lem:haar-global}, using $\|F_0\|\le1$,
\begin{align*}
 \beta_N-\delta&\le\langle F,XF\rangle\\
 &\le\beta_N-(\beta_N-1)w_0+2\sqrt2\lambda(F)+72\lambda(F)^2 .
\end{align*}
Solving for $w_0$ and adding the positive-level part gives the first inequality, and the second follows from $\beta_N-1=(N-1)/(N+1)\ge1/3$ for $N\ge2$.
\end{proof}

Thus a level-zero term that does not depend on the phase of $\alpha_z$ scores at most $1$ on the branch $z$, which is less than the ideal score by a constant, so a near-ideal score forces the branches to be almost entirely phase-dependent.

\begin{lemma}[Changing one phase of the oracle]\label{lem:hr-perturbation}
For every $\phi$, $z$, and $t$,
\[
 \|O_\phi-O_{\Theta_{z,t}\phi}\|=2\lvert\sin(t/2)\rvert\sqrt{p_z(2-p_z)}\le2\sqrt2\,\lvert\sin(t/2)\rvert\sqrt{p_z},
\]
and the same holds for the controlled queries.
\end{lemma}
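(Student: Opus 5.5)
The plan is to reduce $O_\phi-O_{\phi'}$, with $\phi'=\Theta_{z,t}\phi$, to a rank-two computation inside a three-dimensional subspace and then read off the norm. By \eqref{eq:prelim-canonical-oracle}, $O_\phi-O_{\phi'}=2\bigl(\ket{\eta_{\phi'}}\bra{\eta_{\phi'}}-\ket{\eta_\phi}\bra{\eta_\phi}\bigr)$. Both $\ket{\eta_\phi}$ and $\ket{\eta_{\phi'}}$ are unit vectors. The operator norm of a difference of two rank-one projections onto unit vectors $u,v$ equals $\sqrt{1-|\langle u,v\rangle|^2}$. This is a standard fact: the difference is traceless on $\mathrm{span}\{u,v\}$, zero on its complement, and has determinant $-(1-|\langle u,v\rangle|^2)$ there, so its eigenvalues are $\pm\sqrt{1-|\langle u,v\rangle|^2}$. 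The problem therefore reduces to computing one inner product.

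Next I compute $\langle\eta_\phi,\eta_{\phi'}\rangle=\tfrac12\bigl(1+\langle\phi,\phi'\rangle\bigr)$. The cross terms vanish because $\ket\bot$ is orthogonal to $\C^N$. Since $\phi'$ differs from $\phi$ only in the $z$-th coordinate, $\langle\phi,\phi'\rangle=1-p_z+p_ze^{\mathrm it}$. Hence
\[
\langle\eta_\phi,\eta_{\phi'}\rangle=1-\tfrac{p_z}{2}\bigl(1-e^{\mathrm it}\bigr).
\]
Writing $1-e^{\mathrm it}=-2\mathrm i\,e^{\mathrm it/2}\sin(t/2)$ and expanding $|1-w|^2$ with $w=\tfrac{p_z}2(1-e^{\mathrm it})$ gives
\[
|\langle\eta_\phi,\eta_{\phi'}\rangle|^2=1-2\operatorname{Re}w+|w|^2.
\]
Here $\operatorname{Re}w=p_z\sin^2(t/2)$, because $\operatorname{Re}(1-e^{\mathrm it})=1-\cos t=2\sin^2(t/2)$, and $|w|^2=p_z^2\sin^2(t/2)$. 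Therefore $1-|\langle\eta_\phi,\eta_{\phi'}\rangle|^2=\sin^2(t/2)\,p_z(2-p_z)$, and multiplying the square root by the factor $2$ gives the stated equality. The inequality follows from $2-p_z\le2$.

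For the controlled queries, $O^{\mathrm c}_\phi-O^{\mathrm c}_{\phi'}=\ket1\bra1\otimes(O_\phi-O_{\phi'})$ by \eqref{eq:prelim-controlled-query}. This operator has the same norm, since the $\ket0$ block cancels. I expect no real obstacle. The only care needed is in the algebra of $|1-w|^2$ and in verifying the projection-difference norm formula, which can be checked directly in the orthonormal basis $u,\ (v-\langle u,v\rangle u)/\|\cdot\|$ when $u\ne v$. When $t\in2\pi\Z$ or $p_z=0$, both sides are zero.
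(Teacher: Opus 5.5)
Your proposal is correct and follows the paper's proof essentially step for step. Like the paper, you write the difference of the two oracles as twice a difference of rank-one projections, use the norm formula $\sqrt{1-|\langle u,v\rangle|^2}$, and compute $\langle\eta_\phi,\eta_{\phi'}\rangle=1-w$ with $w=\tfrac{p_z}{2}(1-e^{\mathrm it})$, which gives $\sin^2(t/2)\,p_z(2-p_z)$; you then handle the controlled queries through the $\ket1\bra1$ block.
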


\begin{proof}
For unit vectors $u,v$, the operator $\ket u\bra u-\ket v\bra v$ vanishes on the orthogonal complement of their span, and on the span it has trace $0$ and determinant $-(1-|\langle u,v\rangle|^2)$, so its norm is $\sqrt{1-|\langle u,v\rangle|^2}$.

\emph{Step 1: the overlap of the two oracle vectors.} Let $\phi'=\Theta_{z,t}\phi$, and write $u=\eta_\phi$ and $v=\eta_{\phi'}$ for the vectors of \eqref{eq:prelim-canonical-oracle}, so that $O_\phi-O_{\phi'}=-2\bigl(\ket u\bra u-\ket v\bra v\bigr)$. Since $\phi'$ differs from $\phi$ only in the phase of the $z$-th coordinate, we have, with $w=(p_z/2)(1-e^{\mathrm it})$,
\begin{align*}
 \langle\phi,\phi'\rangle&=1-p_z+p_ze^{\mathrm it},\\
 \langle u,v\rangle&=\frac12\bigl(1+\langle\phi,\phi'\rangle\bigr)=1-w,
\end{align*}
and therefore
\begin{align*}
 1-|\langle u,v\rangle|^2&=2\operatorname{Re}w-|w|^2=p_z(1-\cos t)-\frac{p_z^2}2(1-\cos t)\\
 &=\sin^2(t/2)\,p_z(2-p_z) .
\end{align*}

\emph{Step 2: the norm and the controlled queries.} Taking the square root and multiplying by $2$ gives the norm of $O_\phi-O_{\phi'}$. For the controlled queries,
\[
 O^{\mathrm c}_\phi-O^{\mathrm c}_{\phi'}=\ket1\bra1\otimes(O_\phi-O_{\phi'}) . \qedhere
\]
\end{proof}

\begin{proof}[Proof of \cref{thm:haar-randomness}]
\emph{Step 1: the auxiliary vectors.} By \cref{lem:haar-degree}, the final state $\Psi$ is a unit vector in $\mathcal P_{\le2q}(\mathcal H)$, its score is $B(\mathcal A)=\langle\Psi,X\Psi\rangle$ by \eqref{eq:haar-score-form}, and $\lambda(\Psi)\le\Lambda_q$ by \cref{thm:haar-lower} with $s=0$. As auxiliary vectors we take $G_z=(I-Q_z)\Psi_z$, where $\Psi_z=\Pi_z\Psi$. Since the right side of \cref{lem:hr-phase-mass} increases with $\lambda(F)$,
\begin{align*}
 \sum_z\|\Psi_z-G_z\|^2&=\sum_z\|Q_z\Psi_z\|^2\\
 &\le3(\delta+2\sqrt2\Lambda_q+72\Lambda_q^2)+\Lambda_q^2=\eta_0^2 .
\end{align*}

\emph{Step 2: the operators $K_z$ and their norms.} Pointwise, since $\Pi_z$ does not depend on $\phi$,
\begin{align*}
 G_z(\phi)&=\Pi_z\Psi(\phi)-\frac1{2\pi}\int_0^{2\pi}\Pi_z\Psi(\Theta_{z,t}\phi)\,dt=(K_z(\phi)\otimes I_E)\psi_0,\\
 K_z(\phi)&=\frac{\Pi_z}{2\pi}\int_0^{2\pi}\bigl(V_\phi-V_{\Theta_{z,t}\phi}\bigr)\,dt .
\end{align*}

For products of unitaries,
\[
 W_q\cdots W_1-W'_q\cdots W'_1=\sum_jW_q\cdots W_{j+1}(W_j-W'_j)W'_{j-1}\cdots W'_1 ,
\]
and the unitaries $U_j$ do not depend on $\phi$, so \cref{lem:hr-perturbation} gives
\[
 \|V_\phi-V_{\Theta_{z,t}\phi}\|\le2q\lvert\sin(t/2)\rvert\sqrt{p_z(2-p_z)} .
\]
Averaging over $t$, with $(2\pi)^{-1}\int_0^{2\pi}2\lvert\sin(t/2)\rvert\,dt=4/\pi$, we obtain
\[
 \|K_z(\phi)\|\le\frac4\pi\,q\sqrt{p_z(2-p_z)}\le\frac{4\sqrt2}\pi\,q\sqrt{p_z},
\]
and hence $\|K_z(\phi)\|^2\le(32/\pi^2)\,q^2p_z$. This hybrid argument plays the role that Markov's inequality for polynomials played in \cref{lem:fr-derivative}, as explained in \cref{sec:overview-haar-randomness}.

\emph{Step 3: the entropy bound.} The hypotheses \eqref{eq:assembly-hypotheses} therefore hold with $\kappa=32q^2/\pi^2\le4q^2$, and $K_z$ depends continuously on $\phi$. By \cref{lem:hr-phase-average}, $I-Q_z$ is an orthogonal projection of the space of vector-valued polynomials, so $\E_\phi\langle\Psi_z(\phi),G_z(\phi)\rangle=\E_\phi\|G_z(\phi)\|^2$ for every $z$. This identity holds on average over $\phi$ and not pointwise, because $Q_z$ averages over the phase of $\alpha_z$, but the average is what part~(iv) of \cref{prop:randomness-assembly} requires, so that part applies with $\xi\ge1-\eta_0^2$, and together with \cref{lem:heaviest} it gives \eqref{eq:hr-main-bound}.

\emph{Step 4: the parameters.} Since $x\mapsto\sqrt{2x+1}$ is increasing,
\begin{align*}
 \sum_{t=0}^{q-1}\sqrt{2t+1}&\le\int_0^q\sqrt{2x+1}\,dx\le\frac13(2q+1)^{3/2}\\
 &\le\sqrt3q^{3/2}
\end{align*}
for $q\ge1$, which gives the bound on $\Lambda_q$. Hence $\eta_0^2=3\delta+6\sqrt2\Lambda_q+217\Lambda_q^2$ with
\begin{align*}
 6\sqrt2\Lambda_q&\le6\sqrt6\,c_0q^{3/2}N^{-1/2}\le16c_0q^{3/2}N^{-1/2},\\
 217\Lambda_q^2&\le651c_0^2q^3/N\le(26c_0)^2q^3/N,
\end{align*}
and $\sqrt{x+y+z}\le\sqrt x+\sqrt y+\sqrt z$ gives \eqref{eq:hr-eta0-simple}.
\end{proof}

\subsection{Tightness of the parameters}\label{sec:randomness-smoothing}

The first device shows that the term $\sqrt{\delta/(2(1-1/N))}$ in \eqref{eq:fr-eta0} is of the right order, including its constant, whenever many bits are to be certified. The second shows that the losses $4\log q$ in \cref{thm:fourier-randomness} and $2\log q$ in \cref{thm:haar-randomness} cannot be brought below $2\log q-O(1)$: it uses amplitude amplification to raise the probability that one ideal sample lies in a fixed set of about $N/q^2$ strings, which costs $q$ queries and only $O(q^{-2}+q^4N^{-1/2})$ in score.

\begin{proposition}[Mixing a fixed output with honest sampling]\label{prop:randomness-smoothing}
Let $0<\mu<1$ and $z_0\in\{0,1\}^n$, and consider the one-query device in the Fourier ensemble that, using a private coin, outputs $z_0$ with probability $\mu$ and otherwise runs the honest sampler, with Eve's register trivial. Then the following hold.
\begin{enumerate}
\item[(i)] The score of the device is $B(\mathcal A)=3-2/N-2(1-1/N)\mu$, so \cref{thm:fourier-randomness} applies with $q=1$, $\delta=2(1-1/N)\mu$, and $\eta_0=\sqrt\mu$.
\item[(ii)] The conditional min-entropy satisfies $\Hmin(Z\mid R)_\rho\le\log(1/\mu)$.
\item[(iii)] Every normalized family $\tau$ with $\Hmin(Z\mid R)_\tau\ge h$ satisfies
\[
 P(\rho,\tau)\ge\Bigl(1-\min\bigl\{1,\sqrt{1-\mu}+2^{-h/2}\bigr\}^2\Bigr)^{1/2},
\]
which tends to $\sqrt\mu=\eta_0$ as $h\to\infty$.
\end{enumerate}
\end{proposition}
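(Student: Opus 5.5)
The plan is to treat the three parts in order. Each part uses only linearity of the score, the definition of the guessing probability, and the fidelity bound of \cref{def:entropies}.

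\emph{Part (i).} The score is linear in the output distribution. The fixed output $z_0$ has expected reward $N\E_fp_f(z_0)=1$, and the honest sampler scores $3-2/N$ by \eqref{eq:prelim-ideal-score}. Hence
\[
 B(\mathcal A)=\mu+(1-\mu)(3-2/N)=3-2/N-2(1-1/N)\mu ,
\]
so $\delta=2(1-1/N)\mu$. For $q=1$ we have $\bar\lambda_1=0$, and \eqref{eq:fr-eta0} gives
\[
 \eta_0=\bigl(\delta/(2(1-1/N))\bigr)^{1/2}=\sqrt\mu .
\]

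\emph{Part (ii).} Eve's register is trivial, so she may simply guess $z_0$. She is then correct with probability at least $\mu$ for every $f$. This gives $p_{\mathrm{guess}}\ge\mu$ and hence $\Hmin(Z\mid R)_\rho\le\log(1/\mu)$.

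\emph{Part (iii).} This is the main step. With trivial $E$, the family $\rho$ consists of the nonnegative numbers
\[
 \rho^f_z=\mu\mathbf 1[z=z_0]+(1-\mu)p_f(z).
\]
A normalized family $\tau$ consists of nonnegative numbers $\tau^f_z$ with $\E_f\sum_z\tau^f_z=1$. The fidelity is
\[
 F=\E_f\sum_z\sqrt{\rho^f_z\tau^f_z}.
\]
Using $\sqrt{x+y}\le\sqrt x+\sqrt y$, I split it as
\[
 F\le\E_f\sqrt{\mu\,\tau^f_{z_0}}+\E_f\sum_z\sqrt{(1-\mu)p_f(z)\tau^f_z}.
\]
For the second term, Cauchy--Schwarz over $f$ and $z$, together with both normalizations, bounds it by $\sqrt{1-\mu}$. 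For the first term, the hypothesis $\Hmin(Z\mid R)_\tau\ge h$ means that Eve's strategy of guessing $z_0$ succeeds with probability $\E_f\tau^f_{z_0}\le2^{-h}$. Jensen's inequality then gives
\[
 \E_f\sqrt{\mu\tau^f_{z_0}}\le\sqrt\mu\,2^{-h/2}\le2^{-h/2}.
\]
Combining the two bounds with $F\le1$ gives
\[
 F\le\min\{1,\sqrt{1-\mu}+2^{-h/2}\},
\]
and the stated lower bound on $P=\sqrt{1-F^2}$ follows. The limit as $h\to\infty$ is $\sqrt{1-(1-\mu)}=\sqrt\mu$.

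The only point that needs care is the splitting of the square root. It has to be done so that the honest part keeps exactly the factor $\sqrt{1-\mu}$, with no loss, since otherwise the matching constant would be lost. The $z_0$ term must be handled separately, so that the guessing-probability constraint on $\tau$ enters directly.
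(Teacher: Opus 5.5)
Your proof is correct and follows the paper's argument in all three parts. In part (iii) it uses the same ingredients as the paper: Cauchy--Schwarz with the two normalizations, and the bound $\E_f\tau^f_{z_0}\le2^{-h}$ coming from the guessing probability of $\tau$. The only difference is how you split the fidelity: the paper splits the sum over $z$ into $z=z_0$ and $z\ne z_0$ and uses $\E_f\rho^f_{z_0}\ge\mu$, whereas you split $\rho$ as a mixture via $\sqrt{x+y}\le\sqrt x+\sqrt y$, which gives $F\le\sqrt{1-\mu}+\sqrt\mu\,2^{-h/2}$ and hence the same conclusion.
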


\begin{proof}
\emph{Step 1: the score and $\eta_0$.} For a fixed $z_0$, we have $\E_fNp_f(z_0)=1$, and the honest sampler scores $3-2/N$, so $B(\mathcal A)=(1-\mu)(3-2/N)+\mu$, which is the stated value. With $q=1$ we have $\bar\lambda_1=0$, and \eqref{eq:fr-eta0} gives
\[
 \eta_0=\sqrt{\delta/(2(1-1/N))}=\sqrt\mu .
\]

\emph{Step 2: the min-entropy of $\rho$.} Since $E$ is trivial, a family of conditional states is a function $\tau(f,z)\ge0$ with $\E_f\sum_z\tau(f,z)=1$, the guessing probability is $\E_f\max_z\tau(f,z)$, and
\[
 F(\rho,\tau)=\E_f\sum_z\sqrt{\rho(f,z)\tau(f,z)} .
\]
The actual family is $\rho(f,z)=(1-\mu)p_f(z)+\mu\mathbf 1[z=z_0]$, whose guessing probability is at least $\mu$; hence $\Hmin(Z\mid R)_\rho\le\log(1/\mu)$.

\emph{Step 3: the bound on $P(\rho,\tau)$.} Now let $\E_f\max_z\tau(f,z)\le2^{-h}$. We split the sum over $z$ according to whether $z=z_0$ and apply the Cauchy--Schwarz inequality to each part, with $\E_f\rho(f,z_0)\le1$ and $\E_f\sum_z\tau(f,z)=1$. This gives
\begin{align*}
 \E_f\sqrt{\rho(f,z_0)\tau(f,z_0)}&\le\bigl(\E_f\tau(f,z_0)\bigr)^{1/2}\le\bigl(\E_f\max_z\tau(f,z)\bigr)^{1/2}\\
 &\le2^{-h/2}
\end{align*}
and
\[
 \E_f\sum_{z\ne z_0}\sqrt{\rho(f,z)\tau(f,z)}\le\bigl(1-\E_f\rho(f,z_0)\bigr)^{1/2} .
\]

Finally, $\E_f\rho(f,z_0)=\mu+(1-\mu)/N\ge\mu$, so
\[
 F(\rho,\tau)\le\min\bigl\{1,\sqrt{1-\mu}+2^{-h/2}\bigr\},
\]
which gives the stated bound on $P(\rho,\tau)$.
\end{proof}

The second device uses amplitude amplification to raise the probability that one ideal sample lies in a fixed set of outcomes, as the squaring algorithm does for its stored set. Let $\mathcal T$ be the set of strings whose first $k$ bits are zero, so that $|\mathcal T|=N'$ with $N'=N/2^k$, and let $\Pi_{\mathcal T}=\sum_{z\in\mathcal T}\ket z\bra z$. For an oracle $\omega$, write $p_\omega(\mathcal T)=\sum_{z\in\mathcal T}p_\omega(z)$ and $P_\star=N'/N=2^{-k}$, where $P_\star$ is the mean of $p_\omega(\mathcal T)$ in both ensembles.

\begin{proposition}[Amplitude amplification onto a fixed set of outputs]\label{prop:randomness-amplification}
Let $k\ge4$ be an integer with $2^{k+10}\le N$, and consider the device that prepares the state $\ket{\psi_\omega}$ of \cref{sec:squaring-access} with one query, applies $\ell$ times the operator $(2\ket{\psi_\omega}\bra{\psi_\omega}-I)(I-2\Pi_{\mathcal T})$, where $\ell$ is chosen as in \cref{def:squaring} for the value $P_\star=2^{-k}$, and measures; Eve's register is trivial. Then the following hold.
\begin{enumerate}
\item[(i)] The device makes $q=L=2\ell+1$ queries, where $2^{k/2}\le q\le2^{k/2+1}$.
\item[(ii)] Its output lies in $\mathcal T$ with probability at least $3/4$, so that
\[
 \Hmin^\eta(Z\mid R)_\rho\le n-k+3\le n-2\log q+5\qquad\text{for every $0\le\eta\le1/2$} .
\]
\item[(iii)] Its expected score is at least the ideal score minus $C_{\mathrm{amp}}(q^{-2}+q^4N^{-1/2})$, in both ensembles, for an absolute constant $C_{\mathrm{amp}}$.
\item[(iv)] In particular, for $q\to\infty$ with $q\le N^{1/9}$ and $N$ large, the device scores within $o(1)$ of the ideal score and its output has at most $n-2\log q+5$ bits of smooth min-entropy for every smoothing parameter $0\le\eta\le1/2$.
\end{enumerate}
\end{proposition}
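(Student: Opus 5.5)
The proof analyzes the device for a fixed oracle with \cref{lem:amplification}, exactly as \cref{lem:squaring-fixed} does for the stored set. With $P=p_\omega(\mathcal T)$ and $g(P)=\sin^2(L\arcsin\sqrt P)$, the output law is
\[
 g(P)\,\mu_{\mathcal T}+(1-g(P))\,\mu_{\mathcal T^{\mathrm c}} .
\]
The query count in (i) is $1+2\ell=L$, one query to prepare $\ket{\psi_\omega}$ and two per reflection. The bounds $2^{k/2}\le L\le2^{k/2+1}$ follow from $|L\vartheta_0-\pi/2|\le\vartheta_0$ together with $\sqrt{P_\star}\le\vartheta_0\le(\pi/3)\sqrt{P_\star}$, which gives $L\in[\pi/(2\vartheta_0)-1,\pi/(2\vartheta_0)+1]$. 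The endpoints are then checked numerically using $k\ge4$.

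For (ii), \cref{lem:squaring-amplification-error} gives
\[
 1-g(P)\le18(P/P_\star-1)^2+3P_\star .
\]
I will average this over $\omega$. The variance of $p_\omega(\mathcal T)$ is small. In the Haar case, \cref{lem:dirichlet}(iii) shows it is $\mathrm{Beta}(N',N-N')$, so $\Var(P/P_\star)\le1/N'$. In the Fourier case it is $\sum_{z\in\mathcal T}\widehat f(z)^2$, a sum of $N'$ nearly independent squares with variance $O(1/N')$, computed from fourth moments as in \cref{lem:moments}. Hence $\Pr[Z\in\mathcal T]\ge1-O(2^k/N)-3\cdot2^{-k}\ge3/4$, using $k\ge4$ and $2^{k+10}\le N$. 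The entropy bound then follows. For any $\tau$ with $P(\rho,\tau)\le1/2$, trace distance is at most purified distance, so $\tau$ puts mass at least $1/4$ on $\mathcal T$. Eve, who guesses uniformly within $\mathcal T$, succeeds with probability at least $1/(4N')=2^{k-2}/N$, so the entropy is at most $n-k+2$. Finally $q\le2^{k/2+1}$ gives $k\ge2\log q-2$.

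Part (iii) is the step I expect to be the main obstacle, because the loss must be shown to be only $O(q^{-2}+q^4N^{-1/2})$. The score is
\[
 \E_\omega\Bigl[g(P)\,\frac{N\sum_{\mathcal T}p_z^2}{P}+(1-g(P))\,\frac{N\sum_{\mathcal T^{\mathrm c}}p_z^2}{1-P}\Bigr].
\]
Conditioning on $\mathcal T$ should preserve the collision statistics: $N\sum_{\mathcal T}p_z^2/P\approx M_2$. Concretely, write $N\sum_{\mathcal T}p_z^2=P_\star M_2^{\mathcal T}$, where $M_2^{\mathcal T}$ is the normalized second moment of the $N'$ outcomes in $\mathcal T$. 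This quantity has mean about $\kappa_2$ and fluctuations $O(1/\sqrt{N'})$ by \cref{lem:moments} applied to the restricted ensemble; in the Haar case use \cref{lem:dirichlet}(iii). The rounding term $3P_\star=O(q^{-2})$ multiplies a score difference of $O(1)$, which gives the $q^{-2}$ loss. The fluctuation terms are controlled by Cauchy--Schwarz, and I plan to bound them by $O(\sqrt{2^k/N})$ times powers of $2^k$. Since $2^k\approx q^2$, this yields a term of the form $O(q^4N^{-1/2})$; I would accept the crude power $q^4$. Part (iv) is then immediate: $q\to\infty$ and $q\le N^{1/9}$ give $q^4N^{-1/2}\le N^{-1/18}\to0$.
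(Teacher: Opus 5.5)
Parts (i) and (ii) are sound. Your entropy step in (ii) is a valid and slightly sharper alternative to the paper's: you use trace distance $\le$ purified distance to get $\tau(\mathcal T)\ge3/4-1/2$, hence $\Hmin\le n-k+2$. The paper instead applies Cauchy--Schwarz to the fidelity and gets weight $\ge1/8$ on $\mathcal T$. Keep the constant in your $O(2^k/N)$ explicit, since only $1/16$ of slack remains after $3\cdot2^{-k}\le3/16$; your variance bounds give $36\cdot2^{-10}<1/16$.

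The gap is in (iii). Your $M_2^{\mathcal T}$ is the average of $(Np_z)^2$ over $\mathcal T$ (in the Fourier case it is the paper's $\bar X_4$). So the reward on $\mathcal T$ is $M_2^{\mathcal T}/Y$ with $Y=P/P_\star$, and the real difficulty is the random normalization $1/Y$, not the fluctuations of $M_2^{\mathcal T}$. You need two estimates:
\begin{itemize}
\item a lower bound on $\E[M_2^{\mathcal T}/Y]$;
\item an upper bound on $\E[(1-g(P))M_2^{\mathcal T}/Y]\le18\,\E[(Y-1)^2M_2^{\mathcal T}/Y]+3P_\star\E[M_2^{\mathcal T}/Y]$.
\end{itemize}
Keeping $3P_\star$ linear, as you intend, is right. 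But each of these expectations contains $1/Y$, and Cauchy--Schwarz cannot handle that factor. In the Fourier ensemble $Y=0$ with positive probability (when all coset sums vanish), so negative moments of $Y$ are infinite. The only pointwise bound, $M_2^{\mathcal T}/Y\le N\max_zp_z$, has moments that grow with $n$. That leaves an $n$-dependent loss, for instance $O(n/q)$, which is incompatible with an absolute constant $C_{\mathrm{amp}}$. Separately, \cref{lem:moments} does not apply to a ``restricted ensemble'' in the Fourier case. The $X_z$ with $z\in\mathcal T$ are the $N'$-point Fourier transform of the normalized coset sums $2^{-k/2}\sum_{x\in\mathcal C}f(x)$, which are not signs.

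The paper removes $1/Y$ before estimating anything.
\begin{itemize}
\item \textbf{Haar.} The use of \cref{lem:dirichlet}(iii) that matters is that $\tilde p=(p_z/P)_{z\in\mathcal T}$ is independent of $P$. Then $N\sum_{\mathcal T}p_z^2/P=NP\sum_z\tilde p_z^2$, and the contribution of $\mathcal T$ is exactly $\frac{2N'}{N'+1}\bigl(1-\E[(1-g(P))Y]\bigr)$. This involves only a positive power of $Y$, and $Y\le2^k$ bounds the loss by $18\cdot2^{2k}/N+3\cdot2^{-k}$.
\item \textbf{Fourier.} The inequality $1/y\ge2-y$ gives $g(P)\bar X_4/Y\ge\bar X_4(2-Y)-2(1-g(P))\bar X_4$, which has no negative power of $Y$. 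The first term has mean $\E\bar X_4-\operatorname{Cov}(\bar X_4,Y)\ge3-2/N-\sqrt{105\Var(Y)}$. In the second, Cauchy--Schwarz is applied only to $(Y-1)^2\bar X_4$, using $\E(Y-1)^4=O(2^{2k}/N^2)$ from the decomposition of $Y$ into independent coset terms $Y_{\mathcal C}$. Meanwhile $3P_\star\E\bar X_4$ stays linear and gives the $q^{-2}$ term.
\end{itemize}

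Finally, in (iv), ``within $o(1)$ of the ideal score'' also needs the upper side, which follows from \cref{thm:fourier-lower,thm:haar-lower}. You should also check that $2^{k+10}\le N$ holds for $q\le N^{1/9}$ and large $N$.
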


\begin{proof}
\emph{Step 1: the number of queries and the output distribution.} Fix $\omega$, write $p=p_\omega$ and $P=p(\mathcal T)$, and put $Y=P/P_\star$, so that $\E_\omega Y=1$. By \cref{def:squaring}, $L=2\ell+1$ satisfies $|L\vartheta_0-\pi/2|\le\vartheta_0$, where $\vartheta_0=\arcsin\sqrt{P_\star}=\arcsin2^{-k/2}$, and, as in the proof of \cref{lem:squaring-amplification-error}, $2^{-k/2}\le\vartheta_0\le(\pi/3)2^{-k/2}$. Hence
\[
 \frac32\,2^{k/2}-1\le q\le1+\frac\pi2\,2^{k/2},
\]
and both stated bounds on $q$ follow for $k\ge4$.

If $P=0$, the marked projection annihilates $\ket{\psi_\omega}$, every amplification step fixes this state, and $g(0)=0$. For $P>0$, by \cref{lem:amplification} with $\Pi=\Pi_{\mathcal T}$, as in the proof of \cref{lem:squaring-fixed}, the output has the distribution
\[
 g(P)\,\mu_{\mathcal T}+(1-g(P))\,\mu_{\mathcal T^{\mathrm c}}\qquad\text{with}\qquad g(P)=\sin^2(L\arcsin\sqrt P),
\]
where the first term is absent if $P=0$ and the second if $P=1$. In particular, $\Pr[Z\in\mathcal T\mid\omega]=g(P)$, and \cref{lem:squaring-amplification-error}, which applies because $P_\star\le1/4$, gives
\begin{equation}\label{eq:amp-failure}
 1-g(P)\le18\,(Y-1)^2+3P_\star ,
\end{equation}
also when $P=0$, where the left side is $1$.

\emph{Step 2: the moments of $Y$.} In the Haar ensemble, $P$ has the Beta distribution with parameters $(N',N-N')$ by \cref{lem:dirichlet}(iii), so
\[
 \Var(P)=N'(N-N')/(N^2(N+1))\le N'/N^2
\]
by \cref{lem:dirichlet}(i) applied to the two-component vector $(P,1-P)$, and $\Var(Y)\le1/N'=2^k/N$.

In the Fourier ensemble, let $\mathcal T^\perp$ be the set of the $2^k$ strings whose last $n-k$ bits are zero, and let $\mathcal C$ run over the $N'$ cosets of $\mathcal T^\perp$ in $\{0,1\}^n$. With $X_z=\sqrt N\widehat f(z)$ as in the proof of \cref{lem:moments},
\begin{equation}\label{eq:amp-cosets}
\begin{aligned}
 Y&=\frac1{N'}\sum_{z\in\mathcal T}X_z^2=\frac1N\sum_{\mathcal C}\Bigl(\sum_{x\in\mathcal C}f(x)\Bigr)^2\\
 &=\frac1{N'}\sum_{\mathcal C}Y_{\mathcal C},\qquad Y_{\mathcal C}=\frac1{2^k}\Bigl(\sum_{x\in\mathcal C}f(x)\Bigr)^2 ,
\end{aligned}
\end{equation}
where the second equality follows from
\begin{align*}
 X_z^2&=\frac1N\sum_{x,y}\chi_z(x\oplus y)f(x)f(y),\\
 \sum_{z\in\mathcal T}\chi_z(x\oplus y)&=N'\mathbf 1[x\oplus y\in\mathcal T^\perp],
\end{align*}
since $x\oplus y\in\mathcal T^\perp$ exactly when $x$ and $y$ lie in the same coset.

The $Y_{\mathcal C}$ are independent, because distinct cosets use disjoint sets of values of $f$, and each is the normalized square of a sum of $2^k$ independent uniform signs, so $\E Y_{\mathcal C}=1$, $\E Y_{\mathcal C}^2=3-2/2^k\le3$, and $\E Y_{\mathcal C}^4\le105$ by the pairing bound in the proof of \cref{lem:moments}(ii). Hence
\begin{align*}
 \Var(Y)&=\frac1{N'^2}\sum_{\mathcal C}\Var(Y_{\mathcal C})\le2/N'\\
 &=2\cdot2^k/N ,
\end{align*}
and, since the summands $Y_{\mathcal C}-1$ are independent with mean zero,
\begin{align*}
 \E(Y-1)^4&=\frac1{N'^4}\Bigl(\sum_{\mathcal C}\E(Y_{\mathcal C}-1)^4+3\sum_{\mathcal C\ne\mathcal C'}\Var(Y_{\mathcal C})\Var(Y_{\mathcal C'})\Bigr)\\
 &\le\frac{8\cdot106}{N'^3}+\frac{12}{N'^2}\le\frac{900}{N'^2}\\
 &=\frac{900\cdot2^{2k}}{N^2},
\end{align*}
using $(y-1)^4\le8(y^4+1)$. In both ensembles, therefore, $\Var(Y)\le2\cdot2^k/N$.

\emph{Step 3: the min-entropy.} By \eqref{eq:amp-failure} and the variance bound,
\begin{align*}
 \varepsilon_{\mathcal T}:=\Pr[Z\notin\mathcal T]&=\E_\omega[1-g(P)]\\
 &\le18\Var(Y)+3P_\star\\
 &\le\frac{36\cdot2^k}N+\frac3{2^k}\le\frac14 ,
\end{align*}
using $2^{k+10}\le N$ and $k\ge4$.

Let $\tau$ be a normalized family within purified distance $\eta$ of $\rho$, where $\eta\le1/2$, and write $m_{\mathcal T}$ for the weight $\E_\omega\sum_{z\in\mathcal T}\tau(\omega,z)$ of $\tau$ on $\mathcal T$. Since $E$ is trivial,
\[
 F(\rho,\tau)=\E_\omega\sum_z\sqrt{\rho(\omega,z)\tau(\omega,z)},
\]
and the Cauchy--Schwarz inequality, applied separately to the pairs $(\omega,z)$ with $z\in\mathcal T$ and with $z\notin\mathcal T$, gives
\begin{align*}
 \sqrt{1-\eta^2}&\le F(\rho,\tau)\le\sqrt{m_{\mathcal T}}+\sqrt{\varepsilon_{\mathcal T}}\\
 &\le\sqrt{m_{\mathcal T}}+\frac12 ,
\end{align*}
so $m_{\mathcal T}\ge(\sqrt3/2-1/2)^2\ge1/8$. Since
\[
 \max_z\tau(\omega,z)\ge\frac1{N'}\sum_{z\in\mathcal T}\tau(\omega,z)\qquad\text{for every }\omega,
\]
the guessing probability of $\tau$ is at least $m_{\mathcal T}/N'\ge1/(8N')$, so $\Hmin(Z\mid R)_\tau\le n-k+3$, and $n-k+3\le n-2\log q+5$ because $k\ge2\log q-2$.

\emph{Step 4: the score in the Haar ensemble.} By \cref{lem:dirichlet}(iii), the vector $\tilde p=(p_z/P)_{z\in\mathcal T}$ is independent of $P$ and uniform on the simplex of dimension $N'-1$, so $\E\sum_{z\in\mathcal T}\tilde p_z^2=2/(N'+1)$ by \cref{lem:dirichlet}(i). The score of the device is at least the contribution of the outputs in $\mathcal T$, which is
\begin{align*}
 \E_\phi\Bigl[g(P)\,N\sum_{z\in\mathcal T}\frac{p_z^2}P\Bigr]&=N\,\E_\phi\bigl[g(P)P\bigr]\,\E\Bigl[\sum_{z\in\mathcal T}\tilde p_z^2\Bigr]\\
 &=\frac{2N}{N'+1}\,\E_\phi\bigl[g(P)P\bigr]\\
 &=\frac{2N'}{N'+1}\Bigl(1-\E_\phi\bigl[(1-g(P))Y\bigr]\Bigr),
\end{align*}
where we used $P=P_\star Y$, $NP_\star=N'$, and $\E Y=1$.

By \eqref{eq:amp-failure}, the bound $Y\le1/P_\star=2^k$, and $\Var(Y)\le2^k/N$, we obtain
\begin{align*}
 \E_\phi\bigl[(1-g(P))Y\bigr]&\le18\,\E_\phi\bigl[(Y-1)^2Y\bigr]+3P_\star\,\E_\phi Y\\
 &\le18\cdot2^k\Var(Y)+\frac3{2^k}\le\frac{18\cdot2^{2k}}N+\frac3{2^k} .
\end{align*}

Since $2N'/(N'+1)\le2$ and
\[
 2N'/(N'+1)=\beta_N-2/(N'+1)+2/(N+1)\ge\beta_N-2\cdot2^k/N,
\]
the score is at least
\[
 \beta_N-2\cdot2^k/N-36\cdot2^{2k}/N-6\cdot2^{-k}\ge\beta_N-38\,q^4/N-24\,q^{-2},
\]
using $2^k\le q^2$ and $2^{-k}\le4q^{-2}$.

\emph{Step 5: the score in the Fourier ensemble.} Write
\[
 \bar X_4=\frac1{N'}\sum_{z\in\mathcal T}X_z^4,\qquad\text{so that}\qquad N\sum_{z\in\mathcal T}p_z^2/P=\bar X_4/Y,
\]
with the value $0$ when $Y=0$. Since $1/y\ge2-y$ for $y>0$, we have $\bar X_4/Y\ge\bar X_4(2-Y)$, also when $Y=0$, where both sides vanish. The score is therefore at least
\[
 \E_f\bigl[g(P)\,\bar X_4(2-Y)\bigr]\ge\E_f\bigl[\bar X_4(2-Y)\bigr]-2\,\E_f\bigl[(1-g(P))\bar X_4\bigr],
\]
because $(1-g(P))\bar X_4(2-Y)\le2(1-g(P))\bar X_4$.

Here $\E_f\bar X_4=3-2/N$, and
\[
 \E_f\bar X_4^2\le\frac1{N'}\sum_{z\in\mathcal T}\E_fX_z^8\le105
\]
by \cref{lem:moments}(ii), so, using $\E_fY=1$,
\begin{align*}
 \E_f\bigl[\bar X_4(2-Y)\bigr]&=\E_f\bar X_4-\operatorname{Cov}(\bar X_4,Y)\\
 &\ge3-\frac2N-\sqrt{105\Var(Y)}\ge3-\frac2N-15\,\frac{2^{k/2}}{\sqrt N},
\end{align*}
and by \eqref{eq:amp-failure} and the Cauchy--Schwarz inequality,
\begin{align*}
 \E_f\bigl[(1-g(P))\bar X_4\bigr]&\le18\sqrt{\E_f(Y-1)^4\,\E_f\bar X_4^2}+3P_\star\,\E_f\bar X_4\\
 &\le18\cdot30\sqrt{105}\,\frac{2^k}N+\frac9{2^k}\le\frac{5600\cdot2^k}N+\frac9{2^k} .
\end{align*}

Hence the score is at least
\[
 3-2/N-15\cdot2^{k/2}N^{-1/2}-11200\cdot2^kN^{-1}-18\cdot2^{-k},
\]
which, with $2^{k/2}\le q$, $2^k\le q^2$, and $2^{-k}\le4q^{-2}$, is at least
\[
 3-2/N-15qN^{-1/2}-11200q^2N^{-1}-72q^{-2} .
\]

\emph{Step 6: the deficit in both ensembles, and part (iv).} In both ensembles the deficit is at most $C_{\mathrm{amp}}(q^{-2}+q^4N^{-1/2})$ for an absolute constant $C_{\mathrm{amp}}$, because
\[
 q^2N^{-1}\le qN^{-1/2}\le q^4N^{-1/2}\qquad\text{for }q\le\sqrt N ,
\]
where $q\le\sqrt N$ holds because $q\le2^{k/2+1}\le\sqrt N/16$.
For $q\to\infty$ with $q\le N^{1/9}$, this deficit tends to zero. The upper score bounds of \cref{thm:fourier-lower,thm:haar-lower}, with $s=0$ in the Haar case, also give an excess above the ideal score of at most $O(q^{3/2}N^{-1/2}+q^3/N)=o(1)$. Thus the score is within $o(1)$ of the ideal score on both sides. Part (iv) follows, since $2^{k+10}\le N$ holds for $q\le N^{1/9}$ and $N$ large, as $2^k\le q^2$.
\end{proof}

\begin{remark}[Standard smooth min-entropy]
\Cref{prop:randomness-smoothing,prop:randomness-amplification} also bound the standard smooth min-entropy of \cref{rem:entropy-conventions}, in the Haar ensemble for every finite discretization of $\phi$, since the arguments do not use the value of $R$.

The pinching map that measures $R$ and $Z$ in their bases leaves $\rho$ fixed and does not increase the purified distance. It also does not decrease $\Hmin(Z\mid R)$, because it is unital on $Z$ and a channel on $R$: if $\tau_{ZR}\le I_Z\otimes\sigma_R$, then the pinched state satisfies the same inequality with the pinched $\sigma_R$, which has the same trace. Hence the standard smoothing may be restricted to families that are classical on $RZ$ and have total mass at most $1$.

Since $\rho$ is normalized, the generalized fidelity of $\rho$ and such a family $\tau$ equals $F(\rho,\tau)$. The proofs of both propositions use only $F(\rho,\tau)$ together with the bound $\E_\omega\sum_z\tau(\omega,z)\le1$, so they apply to these families without change.
\end{remark}

\subsection{Proof of \texorpdfstring{\cref{thm:intro-randomness}}{Theorem 1.2}}\label{sec:randomness-proof}

\begin{proof}[Proof of \cref{thm:intro-randomness}]
Let $\eta_0$ be as in \cref{thm:fourier-randomness} for Fourier sampling and as in \cref{thm:haar-randomness} for Haar-random states, and put $\eta=\min\{1,\eta_0\}$ if $q^3\le N$ and $\eta=1$ otherwise.

\emph{Step 1: the bound on $\eta$.} If $q^3\le N$, then $q^{3/2}N^{-1/2}\le q^{3/4}N^{-1/4}$, so \eqref{eq:fr-eta0-simple} gives
\[
 \eta_0\le\sqrt\delta+5q^{3/4}N^{-1/4}
\]
and \eqref{eq:hr-eta0-simple} gives
\[
 \eta_0\le\sqrt{3\delta}+(4\sqrt{c_0}+26c_0)q^{3/4}N^{-1/4},
\]
while if $q^3>N$, then $q^{3/4}N^{-1/4}>1=\eta$. In both cases $\eta=O(\sqrt\delta+q^{3/4}N^{-1/4})$, as stated.

\emph{Step 2: the two entropy bounds.} The hypothesis $\eta\le1/2$ forces $q^3\le N$ and $\eta=\eta_0\le1/2$, so that $2\log(1/(1-\eta_0))\le2$ and $\log(1/(1-\eta_0^2))\le1$. Then \eqref{eq:fr-main-bound} gives
\[
 \Hmin^{\eta_0}(Z\mid RE)_\rho\ge n-4\log q-\log(n+2)-5\qquad\text{for Fourier sampling},
\]
and \eqref{eq:hr-main-bound} with $\log H_N\le\log(n+1)$ gives
\[
 \Hmin^{\eta_0}(Z\mid RE)_\rho\ge n-2\log q-\log(n+1)-3\qquad\text{for Haar-random states}.
\]
Finally, if $q\le n^{O(1)}$ and $\delta=o(1)$, then $\eta_0=o(1)$ in both cases, and $\eta_0$ is negligible when $\delta$ is.
\end{proof}

\begin{remark}[Scope of the theorems]\label{rem:randomness-scope}
The hypotheses of \cref{thm:fourier-randomness,thm:haar-randomness} concern the expected score of the experiment; estimating this score from finitely many runs, and conditioning on the acceptance of such a test, are not addressed here.

The theorems do not require $\delta\to0$: they give $n-O(\log n)$ bits whenever $q\le n^{O(1)}$ and $1-\eta_0\ge n^{-O(1)}$, and $(1-o(1))n$ bits for $q=2^{o(n)}$ when $1-\eta_0\ge2^{-o(n)}$. They are vacuous when $\eta_0\ge1$, in particular when $q\ge N^{1/3}+1$ in the Fourier case, because then
\[
 \bar\lambda_q\ge\frac43(q-1)^{3/2}N^{-1/2}\ge\frac43
\]
and $\eta_0\ge2\bar\lambda_q$; accordingly, the proof of \cref{thm:intro-randomness} sets $\eta=1$ when $q^3>N$.

The losses $4\log q$ and $2\log q$ come only from \cref{lem:fr-derivative} and from the hybrid argument in the proof of \cref{thm:haar-randomness}; by \cref{prop:randomness-amplification} neither loss can be brought below $2\log q-O(1)$, so the coefficient $2$ in the Haar case is optimal and the correct coefficient for Fourier sampling lies between $2$ and $4$. By \cref{prop:randomness-smoothing} the leading term of $\eta_0$ in the Fourier case cannot be improved by a constant factor if $\omega(1)$ bits are to be certified; we do not claim that the terms involving $\bar\lambda_q$ or $\Lambda_q$ are optimal.

Finally, for Haar-random states the result is for the canonical oracle; the obstacle for an Eve who learns a Haar-random unitary is explained at the end of \cref{sec:overview-haar-randomness}.
\end{remark}

\section{Quantum Hadamard products}\label{sec:hadamard}

The squaring algorithm measures its stored samples. Keeping them coherent and retaining the match index in the reversible computation gives a related algorithm that preserves the phases needed to prepare a quantum Hadamard product. For two unit vectors $\ket\alpha=\sum_z\alpha_z\ket z$ and $\ket\beta=\sum_z\beta_z\ket z$ in $\C^N$ whose coordinatewise product is not zero, the \emph{quantum Hadamard product} is the normalized state
\[
 \ket{\alpha\circ\beta}=\frac{\sum_z\alpha_z\beta_z\ket z}{\bigl(\sum_y|\alpha_y\beta_y|^2\bigr)^{1/2}} ,
\]
and preparing it is a task studied in~\cite{HolmesEtAl,HuangHadamard}.

We assume oracle access for state preparation and reflection, as in \cref{sec:squaring-access}: one query prepares each input state, and two queries reflect about it. This is the case for the canonical oracles $O_\alpha$ and $O_\beta$, and for unitaries that prepare the two states from $\ket{0^n}$, when
their inverses may also be applied.

The direct approach prepares $\ket\alpha\ket\beta$ and amplifies the component in which the two registers agree. When all amplitudes have magnitude $N^{-1/2}$, that component has weight $\sum_z|\alpha_z\beta_z|^2=1/N$, so the amplification takes about $\sqrt N$ reflections, each of which costs queries to both oracles. Our algorithm instead prepares $m$ coherent copies of $\ket\alpha$ and amplifies a copy of $\ket\beta$ onto their stored labels. It uses $m+1+O(\sqrt{N/m})$ queries, and reversible deletion retains the product phases. Taking $m=\lceil N^{1/3}\rceil$ gives $O(N^{1/3})$ queries and trace-distance error $O(N^{-1/6})$, as proved below.

\subsection[Hadamard products of states with uniform magnitudes]{Hadamard products of states with uniform magnitudes}\label{sec:hadamard-uniform}

\begin{theorem}[Hadamard products of states with uniform magnitudes]\label{thm:hadamard}
Let $\ket\alpha,\ket\beta\in\C^N$ satisfy $|\alpha_z|=|\beta_z|=N^{-1/2}$ for all $z$, so that $\ket{\alpha\circ\beta}=\sqrt N\sum_z\alpha_z\beta_z\ket z$, and let $1\le m\le N/4$. There is an algorithm with oracle access for state preparation and reflection to $\ket\alpha$ and $\ket\beta$ that makes $m$ preparations of $\ket\alpha$, one preparation of $\ket\beta$, and at most $\pi\sqrt{N/m}/2$ further queries, and outputs a state $\rho$ on $\C^N$ with
\[
 \frac12\bigl\|\rho-\ket{\alpha\circ\beta}\bra{\alpha\circ\beta}\,\bigr\|_1\le\frac\pi3\sqrt{m/N}+\sqrt2\,\frac m{\sqrt N} .
\]
For $N\ge8$ and $m=\lceil N^{1/3}\rceil$, the algorithm makes $O(N^{1/3})$ queries and the error is $O(N^{-1/6})$.
\end{theorem}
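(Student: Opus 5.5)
The algorithm keeps the $m$ samples of $\ket\alpha$ coherent and amplifies a single copy of $\ket\beta$ onto their labels, controlled on those labels. Because all magnitudes equal $N^{-1/2}$, the marked weight is the same for every collision-free configuration, so one fixed rotation works coherently on all of them. Concretely, I prepare registers $X_1,\dots,X_m$ each in $\ket\alpha$ ($m$ preparations) and $Y$ in $\ket\beta$ (one preparation). For a configuration $x=(x_1,\dots,x_m)$ let $S(x)=\{x_1,\dots,x_m\}$ and $\Pi=\sum_x\ket x\bra x\otimes\Pi_{S(x)}$, where $\Pi_{S(x)}$ projects $Y$ onto the labels in $S(x)$. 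I then apply $\ell$ times the operator $(I\otimes(2\ket\beta\bra\beta-I))(I-2\Pi)$. Here $\ell$ is chosen as in \cref{def:squaring} for $P_\star=m/N\le1/4$, that is, with $\kappa_2=1$. The marking $I-2\Pi$ is query-free, and each reflection about $\ket\beta$ costs two queries. The bound $2\ell\le\pi/(2\vartheta_0)\le\frac\pi2\sqrt{N/m}$ from \eqref{eq:squaring-queries} then gives the stated query count.

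Next I analyze the collision-free part. Let $\mathcal D$ be the set of configurations with distinct entries. For $x\in\mathcal D$ the $Y$-register state $\ket\beta$ has marked weight $\|\Pi_{S(x)}\beta\|^2=m/N=P_\star$ exactly. The amplification acts independently in each block $\ket x$, so by \cref{lem:amplification}(ii) each block ends in $\sin(L\vartheta_0)u_x+\cos(L\vartheta_0)u_x^\perp$, with $u_x=\sqrt{N/m}\,\Pi_{S(x)}\beta$. Since $|L\vartheta_0-\pi/2|\le\vartheta_0$ and $\vartheta_0\le(\pi/3)\sqrt{m/N}$ (as in \cref{lem:squaring-amplification-error}), the component along $\sum_{x\in\mathcal D}\alpha_x\ket x\otimes u_x$ has squared amplitude at least $1-(\pi/3)^2m/N$, where $\alpha_x=\prod_j\alpha_{x_j}$.

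On $\mathcal D$ I then compute reversibly the unique index $i$ with $x_i=y$ into a fresh register and keep it. Writing $u_x=\sqrt{N/m}\sum_i\beta_{x_i}\ket{x_i}$, the ideal target vector becomes
\[
 \sqrt{N/m}\sum_i\ket i\otimes\sum_{y}\alpha_y\beta_y\ket y_Y\otimes\ket{y}_{X_i}\otimes\bigotimes_{j\ne i}\alpha_{x_j}\ket{x_j} .
\]
The distinctness constraint is what I still have to deal with. I then XOR $Y$ into $X_i$, controlled on $i$, which clears $X_i$. The remaining registers, the index and the $X_j$ with $j\ne i$, are in a state that does not depend on $y$. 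Tracing them out leaves $Y$ in $\sqrt N\sum_y\alpha_y\beta_y\ket y=\ket{\alpha\circ\beta}$, which is normalized because $|\alpha_y\beta_y|=1/N$.

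To account for the error, I define the ideal vector $\Omega$ as the one obtained by running the same operations on the product state with no distinctness constraint. I then compare the actual final vector with $\Omega\otimes\ket{\text{junk}}$, where the junk is a fixed state. The difference splits into two pieces: the angle error, of norm at most $(\pi/3)\sqrt{m/N}$, and the configurations outside $\mathcal D$. The latter have total weight $\Pr[\mathcal D^{\mathrm c}]\le\binom m2/N\le m^2/(2N)$ under the uniform marginals $|\alpha_z|^2=1/N$, as in \cref{lem:squaring-table}(iii).

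The step I expect to be the main obstacle is the collision branch. There the marked weight differs from $P_\star$, and the index $i$ is ambiguous. I would handle it by noting that all operations are unitaries or isometries, so the non-distinct initial component maps to a vector of the same norm, at most $m/\sqrt{2N}$. The unconstrained ideal vector differs from its restriction to $\mathcal D$ by at most the same amount. Converting the resulting vector distance to trace distance via $\frac12\|\ket u\bra u-\ket v\bra v\|_1\le\|u-v\|$ for unit vectors (then partial trace), the two collision contributions add to $\sqrt2\,m/\sqrt N$. Adding the angle error gives the stated bound. For $m=\lceil N^{1/3}\rceil$ and $N\ge8$ we have $m\le N/4$, and both terms are $O(N^{-1/6})$.
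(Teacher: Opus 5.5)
Your proposal is correct and follows essentially the same route as the paper: the same coherent amplification with $\ell$ taken from \cref{def:squaring} at $P_\star=m/N$, and the same split of the error into three parts. These are the angle term, of norm at most $\vartheta_0\le(\pi/3)\sqrt{m/N}$, and two collision terms of norm $\sqrt{p_{\mathrm{rep}}}\le m/\sqrt{2N}$ each, one for the amplified non-distinct configurations and one for the non-distinct part of the unconstrained target, followed by the same conversion to trace distance.

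The remaining differences are cosmetic. You XOR $\mathsf Y$ directly into $X_i$ instead of first swapping $\mathsf S_j$ into $\mathsf S_m$, which works equally well because the leftover registers are still independent of $y$. The index computation should be specified on all basis states, for instance as the smallest matching index or $0$, as the paper does, so that it is a unitary. Finally, the angle-error norm is most cleanly obtained as $\sqrt{2(1-\cos\Delta)}=2|\sin(\Delta/2)|\le|\Delta|\le\vartheta_0$, rather than from the squared-amplitude bound you state.
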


\begin{proof}
\emph{The algorithm.} Let $\mathsf S_1,\dots,\mathsf S_m$ and $\mathsf Y$ be registers of the type on which the oracles act, and let $\mathsf J$ be a register with basis $\ket0,\ket1,\dots,\ket m$, initially in $\ket0$. Write $P_\star=m/N$ and $\vartheta_0=\arcsin\sqrt{P_\star}$, let $\ell$ be a nonnegative integer nearest to $\pi/(4\vartheta_0)-1/2$ and $L=2\ell+1$, as in \cref{def:squaring}, so that
\begin{align*}
 |L\vartheta_0-\pi/2|&\le\vartheta_0,\\
 2\ell&\le\pi/(2\vartheta_0)\le\pi\sqrt{N/m}/2 ,
\end{align*}
where the second line follows from the first and from $\vartheta_0\ge\sqrt{P_\star}$. The algorithm consists of the following four steps.
\begin{enumerate}
\item Prepare $\ket\alpha$ in each of $\mathsf S_1,\dots,\mathsf S_m$ and $\ket\beta$ in $\mathsf Y$.
\item Let $\Pi$ be the projection onto the span of the basis states $\ket{x_1,\dots,x_m}\ket y$ with $y\in\{x_1,\dots,x_m\}$, which is implemented by a reversible comparison of $\mathsf Y$ with each $\mathsf S_i$ and uses no query. Apply $\ell$ times the operator $(2\ket\beta\bra\beta_{\mathsf Y}-I)(I-2\Pi)$.
\item Compute into $\mathsf J$ the smallest index $j$ with $x_j=y$, or $0$ if there is none. For $j\ge1$, swap $\mathsf S_j$ with $\mathsf S_m$, controlled by $\mathsf J$; for $j=0$, perform no swap. Then add the contents of $\mathsf Y$ bitwise into $\mathsf S_m$, so that $\ket y_{\mathsf S_m}\ket y_{\mathsf Y}$ becomes $\ket{0^n}_{\mathsf S_m}\ket y_{\mathsf Y}$. All of this is a fixed reversible computation on the basis states and uses no query.
\item Output the register $\mathsf Y$.
\end{enumerate}
The query count is $m+1+2\ell$, which is as claimed.

\emph{The state after amplification.} Write $x=(x_1,\dots,x_m)$, $\alpha_x=\alpha_{x_1}\cdots\alpha_{x_m}$, and let $S(x)$ be the set $\{x_1,\dots,x_m\}$ of entries of $x$. After step~1, the state is $\sum_x\alpha_x\ket x\otimes\ket\beta$. The projection $\Pi$ and the reflection about $\ket\beta$ are both block diagonal with respect to the basis of $\mathsf S_1\cdots\mathsf S_m$. Indeed,
\[
 \Pi=\sum_x\ket x\bra x\otimes\Pi_{S(x)}\qquad\text{with}\qquad\Pi_S=\sum_{z\in S}\ket z\bra z ,
\]
and the reflection acts on $\mathsf Y$ alone. Hence, after step~2, the state is
\[
 \sum_x\alpha_x\ket x\otimes G_x^\ell\ket\beta\qquad\text{with}\qquad G_x=(2\ket\beta\bra\beta-I)(I-2\Pi_{S(x)}) .
\]

Call $x$ \emph{distinct} if its entries are distinct. For distinct $x$,
\begin{align*}
 \|\Pi_{S(x)}\beta\|^2&=\sum_{z\in S(x)}|\beta_z|^2\\
 &=m/N=P_\star
\end{align*}
holds exactly, because $S(x)$ has $m$ elements and $|\beta_z|^2=1/N$ for every $z$. Therefore \cref{lem:amplification} gives
\begin{align*}
 G_x^\ell\ket\beta&=\sin(L\vartheta_0)\ket{u_x}+\cos(L\vartheta_0)\ket{u_x^\perp},\\
 \ket{u_x}&=\frac{\Pi_{S(x)}\ket\beta}{\sqrt{P_\star}}=\sqrt{\frac Nm}\sum_{j=1}^m\beta_{x_j}\ket{x_j},
\end{align*}
where $\ket{u_x^\perp}$ is a unit vector supported outside $S(x)$.

The terms with $x$ not distinct have total squared norm
\[
 p_{\mathrm{rep}}=\smashoperator{\sum_{x\text{ not distinct}}}|\alpha_x|^2 ,
\]
which is the probability that $m$ independent uniformly random outcomes are not all distinct. Hence, by the union bound,
\[
 p_{\mathrm{rep}}\le m(m-1)/(2N)\le m^2/(2N) .
\]
Let
\[
 \Phi_{\mathrm{good}}=\smashoperator{\sum_{x\text{ distinct}}}\alpha_x\ket x\otimes\ket{u_x},\qquad\|\Phi_{\mathrm{good}}\|^2=1-p_{\mathrm{rep}} .
\]
Put $a=\sin(L\vartheta_0)$, $b=\cos(L\vartheta_0)$, and let
\[
 \Phi_{\mathrm{out}}=\smashoperator{\sum_{x\text{ distinct}}}\alpha_x\ket x\otimes\ket{u_x^\perp}.
\]
Let $\Phi_{\mathrm{rep}}$ be the terms of the amplified state $\Phi$ with repeated table entries. The vectors $\Phi_{\mathrm{good}}$, $\Phi_{\mathrm{out}}$, and $\Phi_{\mathrm{rep}}$ are mutually orthogonal: the first two have disjoint marked and unmarked support for each distinct $x$, and the last has disjoint table support. Their squared norms are respectively $1-p_{\mathrm{rep}}$, $1-p_{\mathrm{rep}}$, and $p_{\mathrm{rep}}$. Thus
\[
 \|\Phi-\Phi_{\mathrm{good}}\|^2
 =\bigl((a-1)^2+b^2\bigr)(1-p_{\mathrm{rep}})+p_{\mathrm{rep}}
 =2(1-a)(1-p_{\mathrm{rep}})+p_{\mathrm{rep}}.
\]
Writing $\Delta=L\vartheta_0-\pi/2$, we have $a=\cos\Delta$ and $|\Delta|\le\vartheta_0$, so $2(1-a)=4\sin^2(\Delta/2)\le\vartheta_0^2$. Consequently,
\[
 \|\Phi-\Phi_{\mathrm{good}}\|\le\sqrt{\vartheta_0^2(1-p_{\mathrm{rep}})+p_{\mathrm{rep}}}
 \le\vartheta_0+\sqrt{p_{\mathrm{rep}}}
 \le\frac\pi3\sqrt{m/N}+\sqrt{p_{\mathrm{rep}}}.
\]
The last inequality follows from $m/N\le1/4$ and the convexity bound for $\arcsin$ used in \cref{lem:squaring-amplification-error}.

\emph{Deletion.} The operation of step~3 is a unitary $U_{\mathrm{del}}$. For $m=2$ and distinct $x_1\ne x_2$, the vector $\Phi_{\mathrm{good}}$ contains the terms
\[
 \sqrt{N/2}\,\alpha_{x_1}\alpha_{x_2}\ket{x_1,x_2}\bigl(\beta_{x_1}\ket{x_1}+\beta_{x_2}\ket{x_2}\bigr)\ket0_{\mathsf J} .
\]
The first term has index $j=1$ in $\mathsf J$, so the swap moves $x_2$ into $\mathsf S_1$ and the bitwise addition clears $\mathsf S_2$. The second term has index $j=2$. Writing $w$ for the remaining entry and $z$ for the output, both terms become
\[
 \sqrt{N/2}\,\alpha_w\alpha_z\beta_z\ket w\ket{0^n}\ket z\ket j_{\mathsf J}\qquad\text{with }j\in\{1,2\},
\]
and summing over $w\ne z$ gives the projection of
\[
 \ket\alpha\otimes\ket{0^n}\otimes\ket{\alpha\circ\beta}\otimes\frac1{\sqrt2}(\ket1+\ket2)
\]
onto the basis states with $w\ne z$. The general case is the same computation: on a basis state $\ket x\ket{x_j}\ket0_{\mathsf J}$ with $x$ distinct, the index computed into $\mathsf J$ is $j$, and $U_{\mathrm{del}}$ maps the state to
\[
 \ket{x^{(j)}}\ket{0^n}_{\mathsf S_m}\ket{x_j}_{\mathsf Y}\ket j_{\mathsf J},
\]
where $x^{(j)}\in[N]^{m-1}$ lists the entries of $x$ other than $x_j$ in the order in which they appear in $\mathsf S_1,\dots,\mathsf S_{m-1}$ after the swap.

For fixed $j$, the map $[N]^m\to[N]^{m-1}\times[N]$, $x\mapsto(x^{(j)},x_j)$, is a bijection, and $\alpha_x=\alpha_{x^{(j)}}\alpha_{x_j}$, with $\alpha_{x^{(j)}}$ the product of the amplitudes of the entries of $x^{(j)}$. Therefore
\begin{align*}
 U_{\mathrm{del}}\Phi_{\mathrm{good}}
 &=\sqrt{\frac Nm}\sum_{j=1}^m\ \smashoperator{\sum_{x\text{ distinct}}}\alpha_{x^{(j)}}\,\alpha_{x_j}\beta_{x_j}\,\ket{x^{(j)}}\ket{0^n}\ket{x_j}\ket j\\
 &=\Pi_{\mathrm{dist}}\,\Phi_\circ ,
\end{align*}
where
\[
 \Phi_\circ=\ket\alpha^{\otimes(m-1)}\otimes\ket{0^n}_{\mathsf S_m}\otimes\ket{\alpha\circ\beta}_{\mathsf Y}\otimes\frac1{\sqrt m}\sum_{j=1}^m\ket j_{\mathsf J}
\]
and $\Pi_{\mathrm{dist}}$ is the projection onto the basis states $\ket w\ket{0^n}\ket y\ket j$ in which the $m$ entries of $(w,y)\in[N]^{m-1}\times[N]$ are distinct. Indeed, expanding $\Phi_\circ$ in the basis gives the amplitude $m^{-1/2}\alpha_w\sqrt N\alpha_y\beta_y$ at $\ket w\ket{0^n}\ket y\ket j$, and the terms of $U_{\mathrm{del}}\Phi_{\mathrm{good}}$ are exactly those in which $(w,y)$ comes from a distinct $x$, with the same amplitude.

The squared norm of $\Phi_\circ-\Pi_{\mathrm{dist}}\Phi_\circ$ is the total weight of the basis states in which the entries of $(w,y)$ are not distinct. Since $|\alpha_w|^2|\sqrt N\alpha_y\beta_y|^2/m=N^{-m}/m$ for every basis state, this weight is again the probability that $m$ uniformly random outcomes are not distinct, which is $p_{\mathrm{rep}}$. Hence
\[
\begin{aligned}
 \|U_{\mathrm{del}}\Phi-\Phi_\circ\|&\le\|\Phi-\Phi_{\mathrm{good}}\|+\sqrt{p_{\mathrm{rep}}}\\
 &\le\frac\pi3\sqrt{m/N}+2\sqrt{p_{\mathrm{rep}}} ,
\end{aligned}
\]
where the first step uses the triangle inequality and the unitarity of $U_{\mathrm{del}}$, and the second uses the improved orthogonality bound on $\|\Phi-\Phi_{\mathrm{good}}\|$ proved above.

\emph{The output.} For unit vectors $u,v$, the trace distance satisfies
\[
 \frac12\bigl\|\,\ket u\bra u-\ket v\bra v\,\bigr\|_1=\sqrt{1-|\langle u,v\rangle|^2}\le\|u-v\| ,
\]
where the inequality holds because
\begin{align*}
 1-|\langle u,v\rangle|^2&\le(1-\operatorname{Re}\langle u,v\rangle)(1+\operatorname{Re}\langle u,v\rangle)\\
 &\le2(1-\operatorname{Re}\langle u,v\rangle)=\|u-v\|^2 .
\end{align*}
The partial trace does not increase the trace distance, and the reduced state of $\mathsf Y$ in $\Phi_\circ$ is $\ket{\alpha\circ\beta}\bra{\alpha\circ\beta}$. Therefore the output $\rho$ satisfies
\[
 \frac12\bigl\|\rho-\ket{\alpha\circ\beta}\bra{\alpha\circ\beta}\,\bigr\|_1\le\frac\pi3\sqrt{m/N}+2\sqrt{p_{\mathrm{rep}}},
\]
and $2\sqrt{p_{\mathrm{rep}}}\le\sqrt2\,m/\sqrt N$ because $p_{\mathrm{rep}}\le m^2/(2N)$.
\end{proof}

The theorem holds for every fixed pair of input states with uniform magnitudes, not only on average over an ensemble, and the algorithm does not need to know the phases. With the canonical oracles, the registers are copies of $\mathsf Q=\C\ket\bot\oplus\C^N$, the prepared states lie in $\C^N$, and the reflection about $\ket\beta$ maps $\C^N$ to itself, so the component along $\ket\bot$ is never populated and the computation above applies without change.

\subsection[A lower bound for Hadamard squares of Haar-random states]{A lower bound for Hadamard squares of Haar-random states}\label{sec:hadamard-lower}

When both inputs are the same Haar-random state, we call their normalized coordinatewise product its Hadamard square. We have only a lower bound for this case, which follows from \cref{cor:collision-formal} by measuring the output.

\begin{proposition}[Hadamard squares of Haar-random states]\label{prop:self-product}
Fix $0<\nu<1/12$. Let $\omega$ be either a Haar-random state $\phi$, accessed through $O_\phi$, or a Haar-random unitary $U$, accessed through $U$ and $U^\dagger$. Write $\phi_\omega=\phi$ in the first case and $\phi_\omega=U\ket{0^n}$ in the second. Suppose an algorithm with no supplied copies of $\ket{\phi_\omega}$ outputs a state $\rho_\omega$ on $\C^N$ with
\[
 \E_\omega\frac12\bigl\|\rho_\omega-\ket{\phi_\omega\circ\phi_\omega}\bra{\phi_\omega\circ\phi_\omega}\,\bigr\|_1\le\nu ,
\]
where, writing $\ket{\phi_\omega}=\sum_z\alpha_z\ket z$ and $p_z=|\alpha_z|^2$,
\[
 \ket{\phi_\omega\circ\phi_\omega}=\frac{\sum_z\alpha_z^2\ket z}{\bigl(\sum_yp_y^2\bigr)^{1/2}} .
\]
Then the algorithm makes $\Omega(N^{1/3})$ queries, with a constant that depends on $\nu$.
\end{proposition}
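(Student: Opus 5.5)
The plan is to reduce to \cref{cor:collision-formal} by appending a computational-basis measurement to the algorithm. Measuring $\ket{\phi_\omega\circ\phi_\omega}$ in the computational basis gives outcome $z$ with probability
\[
 \frac{|\alpha_z^2|^2}{\sum_yp_y^2}=\frac{p_z^2}{\sum_yp_y^2}=r_{p}(z),
\]
so its output law is exactly the collision distribution of $p=p_{\phi_\omega}$. Let $\mathcal A'$ run the given algorithm and then measure $\rho_\omega$ in the computational basis. It makes the same number of queries and has no supplied copies. Its output distribution is $a_\omega(z)=\bra z\rho_\omega\ket z$.

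The key step is that measurement is a channel, so the total variation distance between the two measured distributions is at most the trace distance between the states. Hence, for every $\omega$,
\[
 \TV(a_\omega,r_{p_\omega})\le\frac12\bigl\|\rho_\omega-\ket{\phi_\omega\circ\phi_\omega}\bra{\phi_\omega\circ\phi_\omega}\bigr\|_1 .
\]
Averaging over $\omega$ gives $\E_\omega\TV(a_\omega,r_{p_\omega})\le\nu$.

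Now \cref{cor:collision-formal}(i) applies in either Haar access model, since $\nu<1/12$. It shows that the score of $\mathcal A'$ is at least $3-\sqrt{12\nu}-O(N^{-1/2})$, which exceeds the ideal score $\beta_N\le2$ by a positive constant depending on $\nu$ for all large $N$.

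The query lower bound then follows from \cref{cor:haar-birthday}(i) with no copies for the canonical oracle, and from \cref{thm:haar-unitary} for unitary access, once $N$ is large in terms of $\nu$. Small $N$ is absorbed into the constant.

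The only point needing care is the unitary model. There the scored state is $U\ket{0^n}$ and the algorithm may use inverse queries. This is exactly the setting of \cref{thm:haar-unitary}, so nothing new is needed, and I expect no real obstacle.
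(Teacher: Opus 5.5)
Your proposal is correct and is the paper's own argument: measure the output in the computational basis, observe that the measured law of $\ket{\phi_\omega\circ\phi_\omega}$ is the collision distribution $r_{p_\omega}$, bound $\TV(a_\omega,r_{p_\omega})$ by the trace distance because measurement is a channel, and apply \cref{cor:collision-formal} in the corresponding access model. Your extra step of unpacking that corollary into \cref{cor:haar-birthday}(i) for the canonical oracle and \cref{thm:haar-unitary} for unitary access is exactly how the corollary's own proof obtains the query bound.
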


\begin{proof}
For either oracle model, measuring $\ket{\phi_\omega\circ\phi_\omega}$ in the computational basis gives $z$ with probability
\[
 \frac{|\alpha_z|^4}{\sum_yp_y^2}=r_{p_\omega}(z),
\]
where $r_{p_\omega}$ is the collision distribution. Measuring $\rho_\omega$ instead gives a distribution $a_\omega$ with
\[
 \TV(a_\omega,r_{p_\omega})\le\frac12\bigl\|\rho_\omega-\ket{\phi_\omega\circ\phi_\omega}\bra{\phi_\omega\circ\phi_\omega}\,\bigr\|_1 ,
\]
because a measurement does not increase the trace distance. Hence $\E_\omega\TV(a_\omega,r_{p_\omega})\le\nu$, and \cref{cor:collision-formal} gives the claim in the corresponding oracle model.
\end{proof}

\bibliographystyle{alpha}
\bibliography{references_v2}

\clearpage
\appendix
\section{Deferred proofs}\label{app:deferred-proofs}

\subsection[Proofs for the Haar lower bound]{Proofs for the Haar lower bound}\label{app:haar-deferred}

We give the proofs deferred from \cref{sec:haar}. We use its notation throughout. In the Dirichlet calculations, $a_i\ge1$, $|a|=\sum_i a_i$, and the derivatives act on polynomial expressions in $p_1,\dots,p_N$ before restriction to the simplex. The tangent derivatives $T_i$ are defined in \eqref{eq:haar-tangent}.

\subsubsection[The level of the supplied copies]{The level of the supplied copies}\label{app:haar-copies}

\begin{proof}[Proof of \cref{lem:haar-copies}]
\emph{Step 1: the level of one string.} Expand
\[
 \ket\phi^{\otimes s}=\sum_{x\in[N]^s}\alpha_{x_1}\cdots\alpha_{x_s}\ket{x_1,\dots,x_s} .
\]
The monomial of the string $x$ contains no conjugates, so its pattern is $(n_j)_j$, where $n_j$ is the number of occurrences of $j$ in $x$, and its polynomial factor is constant. Its level is
\[
 \sum_j\max(n_j-1,0)=s-\operatorname{dist}(x),
\]
where $\operatorname{dist}(x)$ is the number of distinct entries of $x$.

\emph{Step 2: the expected number of distinct samples.} Distinct strings are orthogonal in the register basis and the squared amplitude of $x$ is $p_{x_1}\cdots p_{x_s}$, so
\begin{align*}
 \lambda(\ket\phi^{\otimes s})^2&=\E_\phi\sum_xp_{x_1}\cdots p_{x_s}\bigl(s-\operatorname{dist}(x)\bigr)\\
 &=s-\E[\operatorname{dist}],
\end{align*}
where in the last expression $p$ is uniform on the simplex and, conditional on $p$, the entries of $x$ are independent samples from $p$.

By linearity,
\begin{align*}
 \E[\operatorname{dist}]&=\sum_i\Pr[i\text{ appears}]\\
 &=N\bigl(1-\E_\phi(1-p_i)^s\bigr).
\end{align*}
By \cref{lem:dirichlet}(iii), $1-p_i$ has the Beta distribution with parameters $(N-1,1)$, so
\begin{align*}
 \E(1-p_i)^s&=(N-1)\int_0^1t^{N-2+s}\,dt\\
 &=\frac{N-1}{N+s-1} .
\end{align*}
Hence
\[
 \E[\operatorname{dist}]=\frac{Ns}{N+s-1}\qquad\text{and}\qquad s-\E[\operatorname{dist}]=\frac{s(s-1)}{N+s-1} . \qedhere
\]
\end{proof}

\subsubsection[Integration by parts]{Integration by parts}\label{app:haar-ibp}

\begin{proof}[Proof of \cref{lem:haar-ibp}]
\emph{Step 1: reduction to monomials.} Both sides of \eqref{eq:haar-ibp} are bilinear in $(g,v)$, so it suffices to take $g=p^b$ and $v=p^c$ for vectors $b,c$ of nonnegative integers, where $p^b=\prod_jp_j^{b_j}$. Put
\[
 \nu=\E_a[p^{b+c}],\qquad \nu_i=\E_a[p^{b+c+e_i}] .
\]

\emph{Step 2: the two sides.} The partial derivatives of $p^b$ are $\partial_ip^b=b_ip^{b-e_i}$, read as $0$ when $b_i=0$. Since moreover $\sum_jp_j\partial_jp^b=|b|p^b$ by Euler's identity, and in the same way for $p^c$, we have
\[
 p_iT_ip^b=b_ip^b-|b|p^{b+e_i}\qquad\text{and}\qquad p_iT_ip^c=c_ip^c-|c|p^{c+e_i},
\]
so the left side of \eqref{eq:haar-ibp} equals $b_i\nu-|b|\nu_i$, and the right side equals $(|a|+|c|)\nu_i-(a_i+c_i)\nu$.

\emph{Step 3: comparison of the two sides.} After rearrangement, the two agree because
\[
 (|a|+|b|+|c|)\,\nu_i=(a_i+b_i+c_i)\,\nu,
\]
which is \cref{lem:dirichlet}(i). Indeed, the formulas for the two moments differ only by the factors
\[
 \frac{(a_i)_{b_i+c_i+1}}{(a_i)_{b_i+c_i}}=a_i+b_i+c_i\qquad\text{and}\qquad\frac{(|a|)_{|b|+|c|+1}}{(|a|)_{|b|+|c|}}=|a|+|b|+|c|
\]
in the numerator and in the denominator, respectively.
\end{proof}

\subsubsection[The expression for the operator]{The expression for the operator}\label{app:haar-La}

\begin{proof}[Verification of \eqref{eq:haar-La-tangent}]
Indeed, the terms involving $a$ agree, since $\sum_ia_iT_ig=\sum_i(a_i-|a|p_i)\partial_ig$. For the others, write $\mathcal E=\sum_jp_j\partial_j$, so that $T_i=\partial_i-\mathcal E$ by \eqref{eq:haar-tangent}. Then
\begin{align*}
 T_i^2g&=\partial_i^2g-\partial_i(\mathcal Eg)-\mathcal E(\partial_ig)+\mathcal E(\mathcal Eg),\\
 \partial_i(\mathcal Eg)&=\partial_ig+\mathcal E(\partial_ig),\\
 \mathcal E(\mathcal Eg)&=\mathcal Eg+\sum_{j,k}p_jp_k\partial_j\partial_kg,
\end{align*}
and therefore
\begin{align*}
 \sum_ip_iT_i^2g&=\sum_ip_i\partial_i^2g-2\sum_{i,j}p_ip_j\partial_i\partial_jg-\mathcal Eg\\
 &\quad+\Bigl(\sum_ip_i\Bigr)\Bigl(\mathcal Eg+\sum_{j,k}p_jp_k\partial_j\partial_kg\Bigr)\\
 &=\sum_{i,j}p_i(\delta_{ij}-p_j)\partial_i\partial_jg\qquad\text{on the simplex}.
\end{align*}
\end{proof}

\subsubsection[Symmetry and eigenvalues]{Symmetry and eigenvalues}\label{app:haar-energy}

\begin{proof}[Proof of \cref{lem:haar-energy}]
\emph{Part (i).} Apply \cref{lem:haar-ibp} with $\bar f$ in place of $g$ and $T_ig$ in place of $v$. Since $T_i$ has real coefficients, $T_i\bar f=\overline{T_if}$, and the identity reads
\[
 \E_a\bigl[p_i\,\overline{T_if}\,T_ig\bigr]=\E_a\bigl[\bar f\bigl((|a|p_i-a_i)T_ig-p_iT_i^2g\bigr)\bigr].
\]
Summing over $i$, using the identity $\sum_ip_iT_ig=0$ on the simplex, and comparing with \eqref{eq:haar-La-tangent} gives
\begin{align*}
 \E_a\Gamma(f,g)&=-\E_a\Bigl[\bar f\sum_i(a_iT_ig+p_iT_i^2g)\Bigr]\\
 &=-\langle f,L_ag\rangle_a .
\end{align*}
Since $\Gamma(f,g)=\overline{\Gamma(g,f)}$, the operator $L_a$ is symmetric, and
\[
 \langle g,L_ag\rangle_a=-\E_a\Gamma(g,g)\le0 .
\]

\emph{Part (ii).} Each term of $L_a$ maps polynomials of degree at most $m$ to such polynomials. Let $g\in \mathcal R^a_r$, choose a representative of degree at most $r$, and let $g_r$ be its homogeneous part of degree $r$. The terms of $L_a$ that do not lower the degree are $-\sum_{i,j}p_ip_j\partial_i\partial_j-|a|\sum_ip_i\partial_i$, and two applications of Euler's identity evaluate them on $g_r$ to $-r(r-1)g_r-|a|rg_r$. Hence $\Delta=L_ag+r(|a|+r-1)g$ has degree at most $r-1$.

For every $v$ of degree at most $r-1$, we have $\langle v,g\rangle_a=0$ and, by the symmetry in (i),
\[
 \langle v,L_ag\rangle_a=\langle L_av,g\rangle_a=0
\]
because $L_av$ has degree at most $r-1$. Hence $\langle v,\Delta\rangle_a=0$, and taking $v=\Delta$ gives $\Delta=0$.

\emph{Part (iii).} Decompose $h=\sum_{j\le r}h_j$ with $h_j\in \mathcal R^a_j$. By (i) and (ii),
\[
 \E_a\Gamma(h,h)=-\langle h,L_ah\rangle_a=\sum_jj(|a|+j-1)\|h_j\|_a^2,
\]
because the $h_j$ are orthogonal and $L_a$ maps each $\mathcal R^a_j$ to itself, and the coefficient $j(|a|+j-1)$ is at most $r(|a|+r-1)$.
\end{proof}

\subsubsection[The mass of a set of outcomes]{The mass of a set of outcomes}\label{app:haar-subset}

\begin{proof}[Proof of \cref{lem:haar-subset}]
\emph{Step 1: reductions.} Since $|h|^2=(\operatorname{Re}h)^2+(\operatorname{Im}h)^2$ and both parts have degree at most $r$, it suffices to treat real $h$. Write $y^2=\E_a[p(S)h^2]$ and $b=a(S)$. If $r>|a|$, then the claim holds because $y^2\le\|h\|_a^2$ and $(b+r)/|a|>1$. We may therefore assume $r\le |a|$, whence $r(|a|+r-1)\le2r|a|$.

\emph{Step 2: an identity for $y^2$.} We combine three facts:
\begin{enumerate}
\item the formula $L_ap(S)=b-|a|p(S)$;
\item the identity $\E_a\Gamma(f,g)=-\langle f,L_ag\rangle_a$ of \cref{lem:haar-energy}(i), with $f=h^2$ and $g=p(S)$;
\item the product rule $\Gamma(h^2,p(S))=2h\,\Gamma(h,p(S))$, which holds because the form $\Gamma$ is built from first derivatives.
\end{enumerate}
Together they give
\begin{align*}
 |a|y^2-b\|h\|_a^2&=-\E_a[h^2L_ap(S)]\\
 &=\E_a\Gamma(h^2,p(S))=2\,\E_a[h\,\Gamma(h,p(S))].
\end{align*}

\emph{Step 3: a bound on the right side of Step~2.} At every point of the simplex, $\Gamma$ is a positive semidefinite form, so
\begin{align*}
 |\Gamma(h,p(S))|&\le\sqrt{\Gamma(h,h)}\sqrt{\Gamma(p(S),p(S))},\\
 \Gamma(p(S),p(S))&=\sum_{i,j\in S}p_i(\delta_{ij}-p_j)\\
 &=p(S)\bigl(1-p(S)\bigr)\le p(S).
\end{align*}

Taking expectations, applying the Cauchy--Schwarz inequality, and using \cref{lem:haar-energy}(iii) together with the bound $r(|a|+r-1)\le2r|a|$ of Step~1, we obtain
\begin{align*}
 2\,\E_a[h\,\Gamma(h,p(S))]&\le2\,\E_a\bigl[\sqrt{p(S)}|h|\sqrt{\Gamma(h,h)}\bigr]\\
 &\le2y\sqrt{\E_a\Gamma(h,h)}\le2y\sqrt{2r|a|}\,\|h\|_a .
\end{align*}

Finally, by Steps~2 and~3 for the first inequality and by the arithmetic--geometric mean inequality for the second,
\begin{align*}
 |a|y^2-b\|h\|_a^2&\le2y\sqrt{2r|a|}\,\|h\|_a\\
 &\le\frac{|a|}2y^2+4r\|h\|_a^2 ,
\end{align*}
that is,
\[
 y^2\le\frac{(2b+8r)\|h\|_a^2}{|a|}\le\frac{8(b+r)\|h\|_a^2}{|a|} . \qedhere
\]
\end{proof}

\subsubsection[Raising the level by one]{Raising the level by one}\label{app:haar-up}

\begin{proof}[Proof of \cref{lem:haar-up}]
Write $G$ for the input of $L_+$ and $F=F_\bot\otimes\ket\bot+\sum_iF_i\otimes\ket i$ for the input of $L_-$. Norms without subscripts are norms of vector-valued polynomials; \eqref{eq:haar-norm-conversion} and \eqref{eq:haar-density-conversion} convert them to Dirichlet norms. For $L_+$ we consider pairs consisting of a block of $G$ and an outcome $i$. For $L_-$ we consider pairs consisting of an outcome $i$ and a block of $F_i$. We first identify the output blocks and then bound their contributions.

\emph{Step 1: the output blocks.} By the table, the part of each pair that raises the level lies in at most one output block.

For $L_+$, the output records $i$ in the appended register, and the output pattern is the input pattern plus $e_i$. The upward cases are the following two, which are rows one and three of the table with the signs of the entries reversed:
\begin{enumerate}
\item the case $\gamma_i\ge1$, in which the degree under the Dirichlet measure is unchanged;
\item the case $\gamma_i=-1$, in which the degree under the Dirichlet measure rises by one.
\end{enumerate}
The two are distinguished by the value of the output pattern at $i$. Hence the output block and the outcome determine the input block, so the contributions of distinct pairs for $L_+$ are pairwise orthogonal.

\emph{Step 2: the map $L_+$.} Let $u_\gamma h$, with $h\in \mathcal R^a_r$, $a=a(\gamma)$, and $|\gamma|_1+2r\le d$, be a block of $G$. Only the outcomes $i$ in the occupied set $S=\{i:\gamma_i\ne0\}$ receive a contribution one level up from it, and projecting onto that level does not increase norms, so its contribution is at most
\begin{align*}
 \sum_{i\in S}\|\alpha_iu_\gamma h\|^2&=M_\gamma\,\E_a[p(S)|h|^2]\\
 &\le8\,\frac{a(S)+r}{|a|}\,M_\gamma\|h\|_a^2\le16\,\frac dN\,\|u_\gamma h\|^2 ,
\end{align*}
by \eqref{eq:haar-norm-conversion} and \cref{lem:haar-subset}, where the last step uses
\begin{align*}
 a(S)&=|S|+|\gamma|_1\le2|\gamma|_1,\\
 \text{hence}\quad a(S)+r&\le2(|\gamma|_1+2r)\le2d,
\end{align*}
and $|a|\ge N$. Summing over the blocks of $G$, whose contributions are orthogonal, gives the bound $16d/N\le36(d+1)/N$ for $L_+$.

For $L_-$, fix an output block $(\gamma,r)$. The contributions to it from one level below come from pairs of two types, which can be read off the output pattern.
\begin{enumerate}
\item In \emph{type I}, $\gamma_i\le-2$ and the input block is $(\gamma+e_i,r)$, which is the first row of the table.
\item In \emph{type II}, $\gamma_i=0$ and the input block is $(\gamma+e_i,r-1)$, which is the third row.
\end{enumerate}
Each pair for $L_-$ belongs to at most one type and contributes to at most one output block.

For output block $(\gamma,r)=(0,1)$, the inputs $\alpha_iv_i\otimes\ket i$, with $v_i$ constant, contribute $(p_i-1/N)v_i$. We bound their sum in each output block, then add the squared norms of the orthogonal output blocks. Each pair contributes to at most one such block.

\emph{Step 3: repeated unpaired factors in $L_-$.} Fix an output block $(\gamma,r)$ with $a=a(\gamma)$, and let $S=\{i:\gamma_i\le-2\}$. For $i\in S$, the relevant block of $F_i$ is $u_{\gamma+e_i}h_i$, with $h_i\in\mathcal R^{a-e_i}_r$. Multiplication by $\overline{\alpha_i}$ gives $u_\gamma h_i$. Before projection onto $\mathcal R_r^a$, these contributions add to $u_\gamma\sum_{i\in S}h_i$. The projection does not increase norms.

By the Cauchy--Schwarz inequality, pointwise,
\[
 \Bigl|\sum_{i\in S}h_i\Bigr|^2\le p(S)\sum_{i\in S}\frac{|h_i|^2}{p_i} ,
\]
so by \eqref{eq:haar-density-conversion} in the direction of decreasing magnitude,
\begin{align*}
 \Bigl\|u_\gamma\sum_{i\in S}h_i\Bigr\|^2&\le M_\gamma\,\E_a\Bigl[p(S)\sum_{i\in S}\frac{|h_i|^2}{p_i}\Bigr]\\
 &=\sum_{i\in S}M_{\gamma+e_i}\,\E_{a-e_i}\bigl[p(S)|h_i|^2\bigr].
\end{align*}

Each input block lies in $\mathcal P_{\le d}$, and therefore $|\gamma|_1-1+2r\le d$. In the measure $a-e_i$, whose total is $|a|-1\ge N$, we have
\begin{align*}
 (a-e_i)(S)+r&\le a(S)+r\\
 &\le2(|\gamma|_1+2r)\le2(d+1).
\end{align*}
\Cref{lem:haar-subset} therefore gives
\[
 \E_{a-e_i}\bigl[p(S)|h_i|^2\bigr]\le16\,\frac{d+1}N\,\|h_i\|_{a-e_i}^2 ,
\]
and the type I contribution has squared norm at most $(16(d+1)/N)\sum_{i\in S}\|u_{\gamma+e_i}h_i\|^2$.

\emph{Step 4: paired conjugate factors in $L_-$.} For $r=0$ there is no contribution of this type, so assume $r\ge1$. For each $i$ with $\gamma_i=0$, the relevant block of $F_i$ is $u_{\gamma+e_i}h_i$, with $h_i\in\mathcal R^{a+e_i}_{r-1}$. Set $h_i=0$ for the other outcomes. Multiplication by $\overline{\alpha_i}$ gives $u_\gamma p_ih_i$. After summation and projection onto $\mathcal R_r^a$, the contribution is $u_\gamma\Pi^a_r\sum_ip_ih_i$.

This is the situation of \cref{lem:haar-coherent} with $r-1$ in place of $r$, which gives
\begin{align*}
 \Bigl\|\Pi^a_r\sum_ip_ih_i\Bigr\|_a^2&\le\frac r{|a|+2r-1}\sum_i\E_a[p_i|h_i|^2]\\
 &\le\frac{d+1}{2N}\sum_i\E_a[p_i|h_i|^2],
\end{align*}
using $|a|\ge N$ and $r\le(d+1)/2$, the latter from the degree constraint $|\gamma|_1+1+2(r-1)\le d$ of the input blocks.

By \eqref{eq:haar-density-conversion} in the direction of increasing magnitude, $M_\gamma\E_a[p_i|h_i|^2]=\|u_{\gamma+e_i}h_i\|^2$, so multiplying by $M_\gamma$ shows that the type II contribution has squared norm at most $((d+1)/(2N))\sum_i\|u_{\gamma+e_i}h_i\|^2$.

\emph{Step 5: adding the contributions.} Within one output block, add the contributions of the two types and apply $\|x+y\|^2\le2\|x\|^2+2\|y\|^2$. Its squared norm is at most $2(16+1/2)(d+1)/N\le36(d+1)/N$ times the sum of the squared norms of its input components. We then sum over the orthogonal output blocks. Each pair contributes to at most one block and one type, so each input component is counted at most once. Their total squared norm is at most $\sum_i\|F_i\|^2\le\|F\|^2$, proving the bound for $L_-$.
\end{proof}

\subsubsection[Lowering the level by one]{Lowering the level by one}\label{app:haar-down}

\begin{proof}[Proof of \cref{lem:haar-down}]
\emph{Step 1: reformulation.} Let
\[
 L_{\downarrow}=\sum_{k\ge0}P_kLP_{k+1}
\]
be the part of $L$ that lowers the level. Since its terms lie at distinct levels, the claim is
\[
 \|L_{\downarrow}F\|^2\le36\,\frac{d+2}N\,\|F\|^2 .
\]

\emph{Step 2: duality.} The state $L_{\downarrow}F$ lies in $\mathcal P_{\le d+1}$, because $L$ raises the degree by at most one, so
\[
 \|L_{\downarrow}F\|=\sup_{G\in\mathcal P_{\le d+1},\ \|G\|=1}|\langle G,L_{\downarrow}F\rangle| .
\]
Since the $P_k$ are self-adjoint,
\begin{align*}
 \langle G,L_{\downarrow}F\rangle&=\sum_k\langle P_{k+1}L^\dagger P_kG,F\rangle=\langle(L^\dagger)_{\uparrow}G,F\rangle,\\
 (L^\dagger)_{\uparrow}&=\sum_kP_{k+1}L^\dagger P_k .
\end{align*}

\emph{Step 3: the bound.} Since $L_+^\dagger=L_-$ and $L_-^\dagger=L_+$, we can apply \cref{lem:haar-up} to $L^\dagger$ at degree $d+1$. This gives $\|(L^\dagger)_{\uparrow}G\|^2\le36(d+2)/N$. The Cauchy--Schwarz inequality bounds the pairing:
\[
 |\langle(L^\dagger)_{\uparrow}G,F\rangle|\le\|(L^\dagger)_{\uparrow}G\|\,\|F\| . \qedhere
\]
\end{proof}

\subsubsection[The heaviest outcome]{The heaviest outcome}\label{app:haar-heaviest}

\begin{proof}[Proof of \cref{lem:heaviest}]
For $p$ uniform on the simplex and $0<t<1$, the joint tail satisfies
\[
 \Pr[p_{i_1}>t,\dots,p_{i_k}>t]=(1-kt)_+^{N-1},
\]
since the set in question is a copy of the simplex scaled by $1-kt$.
Inclusion--exclusion therefore gives
\[
 \Pr[\max_zp_z>t]=\sum_{k\ge1}(-1)^{k+1}\binom Nk(1-kt)_+^{N-1} .
\]
Integrating over $t\in(0,1)$ gives
\begin{align*}
 \E\max_zp_z&=\sum_{k=1}^N(-1)^{k+1}\binom Nk\frac1{kN}\\
 &=\frac1N\int_0^1\frac{1-(1-x)^N}x\,dx\\
 &=\frac1N\int_0^1\sum_{j=0}^{N-1}(1-x)^j\,dx=\frac{H_N}N ,
\end{align*}
where the second equality expands $(1-(1-x)^N)/x$ by the binomial theorem and integrates term by term.
\end{proof}

\subsection[Moments of the ideal distribution]{Moments of the ideal distribution}\label{app:squaring-moments}

\begin{proof}[Proof of \cref{lem:moments}]
Steps 1--3 prove part~(i), and Steps 4--7 prove part~(ii).

\emph{Step 1: the mean and the second moment in the Haar ensemble.} By \cref{lem:dirichlet}(i) with all parameters equal to $1$,
\[
 \E p_x^k=\frac{k!}{(N)_k},\qquad \E p_x^kp_y^l=\frac{k!\,l!}{(N)_{k+l}}\quad\text{for }x\ne y .
\]
Hence $\E M_j=N^j\,j!/(N)_j$, which is the stated value, and it is at most $j!$ because $(N)_j\ge N^j$.

For the second moment, the two expectations above give
\begin{align*}
 \E M_j^2&=N^{2j-2}\Bigl(N\,\E p_x^{2j}+N(N-1)\,\E p_x^jp_y^j\Bigr)\\
 &=N^{2j-1}\,\frac{(2j)!+(N-1)(j!)^2}{(N)_{2j}}\\
 &\le(j!)^2+\frac{(2j)!}N ,
\end{align*}
where the last step holds again because $(N)_{2j}\ge N^{2j}$.

\emph{Step 2: the variance of $M_j$ in the Haar ensemble.} We have
\[
 \E M_j\ge j!\,\Bigl(1-\frac{j(j-1)}{2N}\Bigr),
\qquad\text{because}\qquad
 \prod_{i<j}\Bigl(1+\frac iN\Bigr)^{-1}\ge1-\sum_{i<j}\frac iN ,
\]
so that $(\E M_j)^2\ge(j!)^2(1-j^2/N)$, which holds trivially when $j^2\ge N$. With Step 1, the bound on $\Var M_j=\E M_j^2-(\E M_j)^2$ follows.

\emph{Step 3: the value of $\E(M_2-2)^2$.} We have $\E(M_2-2)^2=\E M_2^2-4\E M_2+4$, and its value follows from
\[
 \E M_2^2=\frac{N^2(4N+20)}{(N+1)(N+2)(N+3)},
\]
which is the case $j=2$ of Step 1 before the last inequality. The bound $4/N$ follows from $N(4N^2-4N+24)\le4(N+1)(N+2)(N+3)$.

\emph{Step 4: the pairing bound in the Fourier ensemble.} Let
\[
 X_z=\sqrt N\widehat f(z)=N^{-1/2}\sum_x\chi_z(x)f(x),\qquad\text{so that}\qquad M_j=\frac1N\sum_zX_z^{2j},
\]
and write $r_x=\chi_z(x)f(x)$. The $r_x$ are independent uniform signs. In the expansion
\[
 X_z^{2j}=N^{-j}\sum_{x_1,\dots,x_{2j}}r_{x_1}\cdots r_{x_{2j}},
\]
the expectation of a term is $1$ if every address occurs an even number of times in the tuple $(x_1,\dots,x_{2j})$ and $0$ otherwise. Every tuple of the first kind is constant on the pairs of at least one partition of its $2j$ positions into $j$ pairs, obtained by pairing up the positions of each address. There are $(2j-1)!!$ partitions of the $2j$ positions into $j$ pairs, and each is compatible with exactly $N^j$ tuples, namely those constant on each of its pairs, so there are at most $(2j-1)!!\,N^j$ tuples of the first kind, and $\E X_z^{2j}\le(2j-1)!!=\kappa_j$. Hence $\E M_j\le\kappa_j$. By the Cauchy--Schwarz inequality,
\[
 \E[X_z^{2j}X_w^{2j}]\le(\E X_z^{4j}\,\E X_w^{4j})^{1/2}\le\kappa_{2j},
\]
hence $\E M_j^2\le\kappa_{2j}$.

For $N\ge j$, the tuples with exactly $j$ distinct addresses, each occurring twice, number $\kappa_jN(N-1)\cdots(N-j+1)$: their pairs of equal positions determine a unique pairing. Thus
\[
 \kappa_j\frac{N(N-1)\cdots(N-j+1)}{N^j}\le\E M_j\le\kappa_j,
\]
which proves $\E M_j\to\kappa_j$ as $N\to\infty$ for every fixed $j$.

\emph{Step 5: the exact fourth and sixth moments.} The exact values are obtained by counting the tuples of the first kind exactly. For $2j=4$, they are the $3N(N-1)$ tuples that consist of two distinct addresses, each twice, and the $N$ constant tuples, so
\[
 \E X_z^4=\frac{3N^2-2N}{N^2}=3-\frac2N .
\]
For $2j=6$, the tuples of the first kind are the $15N(N-1)(N-2)$ tuples with three distinct addresses, each twice; the $15N(N-1)$ tuples with one address four times and another twice; and the $N$ constant tuples. Hence
\[
 \E X_z^6=\frac{15N^2-30N+16}{N^2} .
\]

\emph{Step 6: the variance of $M_2$ in the Fourier ensemble.} Summing over $z$ first and using \eqref{eq:prelim-walsh}, we obtain
\begin{align*}
 M_2&=\frac1{N^3}\sum_z\sum_{x,y,u,v}\chi_z(x\oplus y\oplus u\oplus v)f(x)f(y)f(u)f(v)\\
 &=\frac1{N^2}\sum_{x,y,u}f(x)f(y)f(u)f(x\oplus y\oplus u) .
\end{align*}
The triples $(x,y,u)$ in the last sum are of two kinds.
\begin{enumerate}
\item If two of $x,y,u$ coincide, the summand equals $1$, and there are $3N^2-2N$ such triples.
\item Otherwise $x,y,u$ and $v=x\oplus y\oplus u$ are four distinct points, and the summand is a product of four independent signs with mean zero. There are $N(N-1)(N-2)$ such triples.
\end{enumerate}
The product of two summands of the second kind has mean $1$ if their four-element sets coincide and mean $0$ otherwise, since otherwise some sign occurs an odd number of times. Each four-element set $\{a,b,c,d\}$ with $a\oplus b\oplus c\oplus d=0$ arises from exactly $4!=24$ ordered triples $(x,y,u)$, one for each ordering of its elements. Hence each of the $N(N-1)(N-2)$ triples of the second kind has a product of mean $1$ with exactly $24$ triples, itself included, and
\[
 \Var M_2=N^{-4}\cdot24\cdot N(N-1)(N-2)\le\frac{24}N .
\]
Since $\E M_2=3-2/N$, this gives $\E(M_2-3)^2=\Var M_2+(2/N)^2\le28/N$.

\emph{Step 7: the variance of $M_3$ in the Fourier ensemble.} The variance of $M_3$ is bounded in the same way. Summing over $z$ first gives
\[
 M_3=N^{-3}\sum f(x_1)\cdots f(x_6),
\]
the sum running over the $6$-tuples with $x_1\oplus\cdots\oplus x_6=0$. For such a tuple, let $\mathcal Y$ be the set of addresses that occur an odd number of times in it. The product then equals $\prod_{y\in\mathcal Y}f(y)$, and the set $\mathcal Y$ satisfies $\bigoplus_{y\in\mathcal Y}y=0$. Moreover, $|\mathcal Y|\in\{0,4,6\}$, because $|\mathcal Y|$ has the parity of the total multiplicity $6$, and the sum of two distinct addresses is not $0$. We use three facts:
\begin{enumerate}
\item the tuples with $\mathcal Y=\varnothing$ contribute $\E M_3$;
\item a product with $\mathcal Y\ne\varnothing$ has mean zero;
\item two products with distinct sets $\mathcal Y$ are uncorrelated.
\end{enumerate}
Hence
\[
 \Var M_3=N^{-6}\sum_{\mathcal Y\ne\varnothing}c_{\mathcal Y}^2,
\]
where $c_{\mathcal Y}$ is the number of tuples with odd set $\mathcal Y$. We count these tuples for $|\mathcal Y|=6$ and $|\mathcal Y|=4$.
\begin{enumerate}
\item A set of size $6$ arises from its $720$ orderings, so $c_{\mathcal Y}=720$, and there are at most $N^5/720$ such sets, because an ordering of such a set is determined by its first five entries.
\item A set of size $4$ arises from the $480$ tuples in which one of its elements occurs three times and from the $360(N-4)$ tuples that contain one further address twice, so $c_{\mathcal Y}\le400N$, and there are at most $N^3/24$ such sets, because an ordering is determined by its first three entries.
\end{enumerate}
Hence
\[
 \Var M_3\le N^{-6}\Bigl(720N^5+\frac{N^3}{24}(400N)^2\Bigr)\le\frac{10^4}N . \qedhere
\]
\end{proof}

\subsection[Heavy outputs]{Heavy outputs}\label{app:heavy-outputs}

We use the Haar $L^2$ inner product, the patterns, and the projections $P_k$ of \cref{sec:haar-level}. All auxiliary initial registers are independent of the oracle, and any mixed state or measurement can be purified as in \cref{lem:normal-form}. For a measurable $w:[0,1]\to[0,1]$, define the multiplication operator
\[
 (X_wF)(\phi)=\sum_z w(p_z)\Pi_zF(\phi).
\]
It is a positive contraction on the space of all square-integrable vector-valued functions. Its expectation in a final state is the expected payoff $w(p_Z)$.

\begin{lemma}[Bounded output rewards]\label{lem:heavy-bounded}
Let $N\ge2$ and define
\begin{equation}\label{eq:heavy-kappa}
 \kappa_w=\max\left\{
 \begin{aligned}
  &\max_{N\le A\le2N-1}\E\,w\bigl(\operatorname{Beta}(1,A-1)\bigr),\\
  &\max_{N+1\le A\le2N}\E\,w\bigl(\operatorname{Beta}(2,A-2)\bigr)
 \end{aligned}
 \right\},
\end{equation}
where $A$ ranges over integers. This is the largest value of $\langle G,X_wG\rangle$ over unit vectors $G$ of level zero. Every unit vector-valued polynomial $F$ satisfies
\begin{equation}\label{eq:heavy-extraction}
 \langle F,X_wF\rangle\le\kappa_w+\lambda(F).
\end{equation}
For $w(x)=\mathbf1_{\{x>\theta\}}$, $0<\theta<1$,
\begin{equation}\label{eq:heavy-calibration}
 \kappa_w=(1-\theta)^{N-1}\bigl(1+(N-1)\theta\bigr).
\end{equation}
\end{lemma}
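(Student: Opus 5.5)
The proof runs parallel to \cref{thm:haar-extraction}, with the reward $Np_z$ replaced by the bounded function $w(p_z)$. First I identify the level-zero value. As in \cref{prop:haar-level-zero}, a level-zero vector decomposes into orthogonal terms $u_\gamma\Pi_zv$ with $\gamma$ square-free and $v$ constant, indexed by triples $(z,\gamma,v)$. The operator $X_w$ multiplies by a function of $p$ and preserves pattern and branch, so its quadratic form is a sum over terms. For one term, \eqref{eq:haar-norm-conversion} gives
\[
 \langle u_\gamma\Pi_zv,X_wu_\gamma\Pi_zv\rangle=\E_{a(\gamma)}\bigl[w(p_z)\bigr]\,\|u_\gamma\Pi_zv\|^2 .
\]
Under $\mathrm{Dir}(a(\gamma))$, \cref{lem:dirichlet}(iii) says $p_z$ is $\operatorname{Beta}(a_z,A-a_z)$ with $A=|a(\gamma)|=N+|\gamma|_1$ and $a_z=1+|\gamma_z|\in\{1,2\}$. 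A square-free pattern has $0\le|\gamma|_1\le N$, so $A$ ranges over $[N,2N-1]$ when $\gamma_z=0$ and over $[N+1,2N]$ when $|\gamma_z|=1$. Hence each term has ratio at most $\kappa_w$. The value is attained by a single term, for example $u_\gamma\ket z$ with a suitable pattern, so $\kappa_w$ is exactly the maximum over level-zero unit vectors.

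For \eqref{eq:heavy-extraction}, write $F=F_0+F_+$ with $F_0=P_0F$. Put $G=F_0/\|F_0\|$ when $F_0\ne0$. The operator $X_w$ is a positive contraction, so $A\mapsto\langle A,X_wA\rangle$ is a seminorm squared with $\langle A,X_wA\rangle\le\|A\|^2$. I plan to compare $F$ with $G$ rather than with $F_0$. The difference is
\[
 F-G=F_+-(1-\|F_0\|)G,\qquad \|F-G\|^2=\|F_+\|^2+(1-\|F_0\|)^2\le2(1-\|F_0\|),
\]
using $\|F_0\|^2+\|F_+\|^2=1$. For the Hermitian contraction $0\le X_w\le I$ and unit vectors $F,G$,
\[
 \bigl|\langle F,X_wF\rangle-\langle G,X_wG\rangle\bigr|\le\frac12\bigl\|\ket F\bra F-\ket G\bra G\bigr\|_1=\sqrt{1-|\langle F,G\rangle|^2}.
\]
Since $\langle F,G\rangle=\|F_0\|$, the right side equals $\sqrt{1-\|F_0\|^2}=\|F_+\|$. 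By \cref{lem:haar-cross}, $\|F_+\|\le\lambda(F)$. Combined with $\langle G,X_wG\rangle\le\kappa_w$, this gives the bound. If $F_0=0$, the bound is immediate: $\langle F,X_wF\rangle\le1\le\lambda(F)$, because every block of $F$ then has level at least one, so $\lambda(F)^2\ge\|F\|^2=1$. This comparison is the main conceptual step. It is linear in $\lambda$ precisely because the reward is bounded and no quadratic term appears.

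For the calibration \eqref{eq:heavy-calibration}, compute the two tails:
\[
 \Pr[\operatorname{Beta}(1,A-1)>\theta]=(1-\theta)^{A-1},\qquad \Pr[\operatorname{Beta}(2,A-2)>\theta]=(1-\theta)^{A-1}\bigl(1+(A-1)\theta\bigr).
\]
The first is decreasing in $A$. The second is also decreasing in $A$, since the ratio of consecutive values is $(1-\theta)(1+A\theta)/(1+(A-1)\theta)\le1$. This ratio bound reduces to $\theta(1-(A-1)\theta)\le\theta$, which holds. The first family is maximized at $A=N$, giving $(1-\theta)^{N-1}$. The second is maximized at $A=N+1$, giving $(1-\theta)^N(1+N\theta)$. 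I expect the only delicate point to be this final comparison. Expanding, the second value is at most $(1-\theta)^{N-1}(1+(N-1)\theta)$, because $(1-\theta)(1+N\theta)=1+(N-1)\theta-N\theta^2$. The claimed value $(1-\theta)^{N-1}(1+(N-1)\theta)$ exceeds both. It is attained by the $\operatorname{Beta}(2,N-1)$ case, which corresponds to $A=N+1$ with $|\gamma|_1=1$, the honest sampler's pattern; there $1+(A-1)\theta$ gives exactly the stated form. I would verify this identification carefully, since it determines which maximizer is the true one.
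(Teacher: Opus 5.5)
Your level-zero identification and your proof of \eqref{eq:heavy-extraction} follow the paper. You use the same reduction to terms $u_\gamma\Pi_zv$ with $p_z\sim\operatorname{Beta}(1+|\gamma_z|,A-1-|\gamma_z|)$, and the same comparison of $F$ with $G=F_0/\|F_0\|$ via $\sqrt{1-|\langle F,G\rangle|^2}=\|F_+\|\le\lambda(F)$; your bound on $\|F-G\|^2$ is never used and can be dropped.

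The calibration \eqref{eq:heavy-calibration}, however, does not go through as written, because your tail for $\operatorname{Beta}(2,A-2)$ is off by one. The density of $\operatorname{Beta}(2,b)$ is $b(b+1)x(1-x)^{b-1}$, and integrating from $\theta$ to $1$ gives $\Pr[\operatorname{Beta}(2,b)>\theta]=(1-\theta)^b(1+b\theta)$. With $b=A-2$ this is $(1-\theta)^{A-2}\bigl(1+(A-2)\theta\bigr)$, whereas your $(1-\theta)^{A-1}\bigl(1+(A-1)\theta\bigr)$ is the tail of $\operatorname{Beta}(2,A-1)$.

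With your formula the second family peaks at $(1-\theta)^N(1+N\theta)$. Your own comparison then shows that the claimed value exceeds this by $N\theta^2(1-\theta)^{N-1}>0$, and exceeds $(1-\theta)^{N-1}$ by $(N-1)\theta(1-\theta)^{N-1}>0$. So your computation would give a $\kappa_w$ strictly below \eqref{eq:heavy-calibration}: it contradicts the identity instead of proving it. Your closing assertion, that the $\operatorname{Beta}(2,N-1)$ term at $A=N+1$ attains the stated form, is true. It is, however, inconsistent with your formula, which at $A=N+1$ carries the factor $1+N\theta$.

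With the correct tail the argument closes exactly as in the paper. The function $S(A)=(1-\theta)^{A-2}\bigl(1+(A-2)\theta\bigr)$ satisfies $S(A+1)/S(A)=1-(A-1)\theta^2/\bigl(1+(A-2)\theta\bigr)<1$. So the second maximum is attained at $A=N+1$, where it equals $(1-\theta)^{N-1}\bigl(1+(N-1)\theta\bigr)$. This dominates the first family's maximum $(1-\theta)^{N-1}$, which is attained at $A=N$. The attaining block is the honest sampler's pattern $\gamma=e_z$, so the equality in \eqref{eq:heavy-calibration} is exact.
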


\begin{proof}
\emph{Step 1: level zero.} A level-zero block has a square-free pattern $\gamma$ and a constant polynomial factor. By \eqref{eq:haar-norm-conversion}, its normalized inner product uses $\operatorname{Dir}(a(\gamma))$, where $a_i=1+|\gamma_i|$ and $A=N+|\gamma|_1$. On output branch $z$, $p_z$ has law $\operatorname{Beta}(1,A-1)$ if $\gamma_z=0$, and $\operatorname{Beta}(2,A-2)$ if $|\gamma_z|=1$. In the first case $0\le|\gamma|_1\le N-1$; in the second $1\le|\gamma|_1\le N$. Multiplication by $w(p_z)$ preserves the pattern and the output branch, so cross terms between distinct patterns or branches vanish. Thus $\langle G,X_wG\rangle\le\kappa_w\|G\|^2$ for every level-zero $G$. Every integer $A$ in either range is realized by a square-free pattern and a constant vector supported on the specified output branch. After normalization, this vector attains the corresponding expectation. Hence the finite maximum in \eqref{eq:heavy-kappa} is exact.

\emph{Step 2: the other levels.} Write $F_0=P_0F$ and $F_+=F-F_0$. If $F_0\ne0$, put $G=F_0/\|F_0\|$. Orthogonality and $\|F\|=1$ give
\[
 \frac12\bigl\|\,|F\rangle\langle F|-|G\rangle\langle G|\,\bigr\|_1
 =\sqrt{1-|\langle F,G\rangle|^2}
 =\|F_+\|\le\lambda(F).
\]
These are rank-one operators on the Haar $L^2$ Hilbert space. In particular, $G$ need not be normalized for each $\phi$ or be an algorithm state. Since $0\le X_w\le I$, its expectations differ by at most this trace distance. Together with Step~1, this proves \eqref{eq:heavy-extraction}, including all cross terms. If $F_0=0$, then $\lambda(F)\ge1$ and the bound follows from $\langle F,X_wF\rangle\le1$.

\emph{Step 3: indicator rewards.} Write $v=1-\theta$. The $\operatorname{Beta}(1,A-1)$ tail is $v^{A-1}$, whose maximum is $v^{N-1}$. For $b\ge1$, integrating the density $b(b+1)x(1-x)^{b-1}$ gives
\begin{equation}\label{eq:heavy-beta-tail}
 \Pr[\operatorname{Beta}(2,b)>\theta]
 =v^b(1+b\theta).
\end{equation}
For $S(A)=v^{A-2}(1+(A-2)\theta)$,
\[
 \frac{S(A+1)}{S(A)}
 =1-\frac{(A-1)\theta^2}{1+(A-2)\theta}<1.
\]
The second maximum in \eqref{eq:heavy-kappa} is therefore attained at $A=N+1$ and dominates the first. This proves \eqref{eq:heavy-calibration}.
\end{proof}

\begin{proof}[Proof of \cref{thm:heavy-outputs}]
\emph{Step 1: the calibration and finite bounds.} Under Haar measure, $p_z\sim\operatorname{Beta}(1,N-1)$ and $\Pr[p_z>\theta]=(1-\theta)^{N-1}$. Thus $\theta_N$ in \eqref{eq:heavy-threshold} is its median, and an output independent of $\phi$ has heavy probability $1/2$. Honest sampling weights this marginal by $Np_z$, giving $\operatorname{Beta}(2,N-1)$. Equation~\eqref{eq:heavy-beta-tail} and $(1-\theta_N)^{N-1}=1/2$ give \eqref{eq:heavy-honest}. Writing $a=\ln2$,
\[
 (N-1)\theta_N=(N-1)(1-e^{-a/(N-1)})\le a,
 \qquad (N-1)\theta_N\longrightarrow a,
\]
which proves the stated inequality and limit.

For canonical access with $s$ supplied copies, \cref{thm:haar-lower} gives $\lambda(\Psi)\le\Lambda_{s,q}$. Apply \cref{lem:heavy-bounded} to the unit vector $\Psi$ with $w(x)=\mathbf1_{\{x>\theta_N\}}$ to obtain \eqref{eq:heavy-canonical}. For unitary access, \cref{thm:kretschmer-simulation} gives a single oracle-independent simulator using $2q$ canonical queries. For each fixed $\phi$, its output law and the output law of the unitary algorithm, averaged over $U_\phi$, differ in total variation by at most $(5q+2)/\sqrt N$. A reward in $[0,1]$ changes by at most that distance. Averaging over $\phi$ and applying the canonical bound with $s=0$ proves \eqref{eq:heavy-unitary}. The same argument holds for every deterministic threshold $0<\theta<1$, with baseline \eqref{eq:heavy-calibration}.

\emph{Step 2: the lower bound at the fixed threshold.} If $B^{\mathrm{hog}}_{\theta_N}(\mathcal A)\ge h_{\mathrm{hon}}+\varepsilon$ for $\varepsilon>0$, canonical access requires $\Lambda_{s,q}\ge\varepsilon$. The budget obeys
\begin{align}
 \Lambda_{s,q}
 &\le\frac{s+c_0q\sqrt{s+2q+1}}{\sqrt N}\notag\\
 &\le\frac{(1+\sqrt2c_0)\bigl(s+q\sqrt{s+q+1}\bigr)}{\sqrt N}.
 \label{eq:heavy-tradeoff}
\end{align}
Consequently $s+q\sqrt{s+q+1}\ge\varepsilon\sqrt N/(1+\sqrt2c_0)$. With $s=0$, $q=0$ cannot achieve the target; for $q\ge1$,
\[
 \Lambda_{0,q}\le\frac{\sqrt2c_0q^{3/2}}{\sqrt N},
 \qquad
 q\ge\left(\frac{\varepsilon}{\sqrt2c_0}\right)^{2/3}N^{1/3}.
\]
For unitary access without supplied copies, we may assume $q\le N^{1/3}$, since otherwise the desired lower bound already holds. The simulation error is then at most $(5N^{1/3}+2)/\sqrt N=o(1)$. For all sufficiently large $N$ it is at most $\varepsilon/2$, so \eqref{eq:heavy-unitary} gives $\Lambda_{0,2q}\ge\varepsilon/2$. Since $\Lambda_{0,2q}\le4c_0q^{3/2}/\sqrt N$, this also forces $\Omega_\varepsilon(N^{1/3})$ queries.

\emph{Step 3: the squaring algorithm.} Let $a_\omega$ be the output distribution of $\mathsf{Sq}_m$, where $\omega$ is either form of Haar oracle, and let $r_p(z)=p_z^2/\sum_yp_y^2$. For $N\ge64$ and $m=\lceil N^{1/3}\rceil$, \cref{thm:output-law} gives at most $3N^{1/3}$ queries and $\E_\omega\TV(a_\omega,r_p)\le40N^{-1/3}$. For any threshold $\tau_\phi$, including one depending on the state, this implies
\begin{equation}\label{eq:heavy-tv}
 \left|B^{\mathrm{hog}}_{\tau_\phi}(\mathsf{Sq}_m)
 -\E_\phi\frac{\sum_zp_z^2\mathbf1_{\{p_z>\tau_\phi\}}}{\sum_zp_z^2}\right|
 \le40N^{-1/3}.
\end{equation}
No independence between the threshold and the state is needed here: for each fixed oracle, the heavy outcomes form a subset of the output space.

Use the exponential representation $p_z=G_z/S$, where the $G_z$ are independent exponential random variables of mean one and $S=\sum_zG_z$. If $S\tau_\phi\to t>0$ in probability, the ratio in \eqref{eq:heavy-tv} is
\[
 \frac{N^{-1}\sum_zG_z^2\mathbf1_{\{G_z>S\tau_\phi\}}}
      {N^{-1}\sum_zG_z^2}.
\]
The denominator tends to $2$ by the law of large numbers. For each fixed $u>0$, the numerator with threshold $u$ tends to $\int_u^\infty x^2e^{-x}\,dx$. On the event $|S\tau_\phi-t|<\delta$, the numerator at the random threshold lies between those at $t+\delta$ and $t-\delta$, for $0<\delta<t$. Letting $\delta\downarrow0$ proves convergence in probability of the ratio to
\begin{equation}\label{eq:heavy-collision-tail}
 \frac12\int_t^\infty x^2e^{-x}\,dx
 =e^{-t}\left(1+t+\frac{t^2}{2}\right).
\end{equation}
The ratio lies in $[0,1]$, so its expectation converges as well. Since $S/N\to1$ and $N\theta_N\to\ln2$, taking $\tau_\phi=\theta_N$ proves the fixed-threshold limit $h_\star$.

For the empirical median, write $G_{(j)}$ for the increasing order statistics. At every fixed $u$, the fraction of samples at most $u$ tends to $1-e^{-u}$. For $u<\ln2$ this limit is below $1/2$, and for $u>\ln2$ it is above $1/2$. It follows that both $G_{(N/2)}$ and $G_{(N/2+1)}$ tend to $\ln2$ in probability. Scaling preserves their order, and hence
\[
 S\widehat\theta_N=\frac{G_{(N/2)}+G_{(N/2+1)}}2
 \longrightarrow\ln2,\qquad
 N\widehat\theta_N\longrightarrow\ln2.
\]
Equation~\eqref{eq:heavy-collision-tail} therefore gives the same limit $h_\star$ for the empirical median. Applying the same sandwich argument with $G_z$ in place of $G_z^2$, and denominator $S/N\to1$, shows that honest sampling has limiting empirical-median heavy probability $\int_{\ln2}^\infty xe^{-x}\,dx=h_{\mathrm{hon}}$.

\emph{Step 4: the lower bound for the empirical median.} Fix a target $h>h_{\mathrm{hon}}$. Choose $0<\delta<\ln2$ small enough that
\[
 e^{-(\ln2-\delta)}(1+\ln2-\delta)<h,
\]
and set $\tau_N=(\ln2-\delta)/N$. For every algorithm,
\begin{equation}\label{eq:heavy-median-comparison}
 B^{\mathrm{hog}}_{\widehat\theta_N}(\mathcal A)
 \le B^{\mathrm{hog}}_{\tau_N}(\mathcal A)
       +\Pr[\widehat\theta_N<\tau_N].
\end{equation}
The last probability is $o(1)$ by Step~3 and is independent of the algorithm. The deterministic-threshold version of Step~1 bounds the first term with baseline
\[
 (1-\tau_N)^{N-1}(1+(N-1)\tau_N)
 \longrightarrow e^{-(\ln2-\delta)}(1+\ln2-\delta).
\]
This leaves a fixed positive gap below $h$. The no-copy query estimates of Step~2 therefore give $\Omega_h(N^{1/3})$ in both access models, uniformly over algorithms. Step~3 supplies the matching upper bound whenever $h<h_\star$. These arguments concern fixed targets and ensemble-averaged heavy probabilities; they assert no rate for a target approaching $h_{\mathrm{hon}}$ with $N$.
\end{proof}

\begin{proposition}[Heavy outputs from copies without queries]\label{prop:heavy-copies}
Let an algorithm receive $\ket\phi^{\otimes s}$, tensored with an oracle-independent joint state of its other registers, and make no oracle queries. It may apply arbitrary joint operations on its copies and device registers. Put
\begin{equation}\label{eq:heavy-copy-repeat}
 R_s=
 \begin{cases}
  1-\displaystyle\prod_{j=0}^{s-1}\frac{N-j}{N+j},&0\le s\le N,\\
  1,&s>N,
 \end{cases}
 \qquad
 R_s\le\min\left\{1,\frac{s(s-1)}{N+s-1}\right\},
\end{equation}
where the empty product is one. For every $w$ as in \cref{lem:heavy-bounded},
\begin{equation}\label{eq:heavy-copy-reward}
 \E\,w(p_Z)\le\kappa_w+(1-\kappa_w)R_s.
\end{equation}
In particular,
\[
 B^{\mathrm{hog}}_{\theta_N}(\mathcal A)
 \le\beta_N^{\mathrm{hog}}+(1-\beta_N^{\mathrm{hog}})R_s.
\]
For $s=0$, the heavy probability is exactly $1/2$. A heavy probability of at least $\beta_N^{\mathrm{hog}}+\varepsilon$, for $\varepsilon>0$, requires
\begin{equation}\label{eq:heavy-copy-lower}
 s(s-1)\ge\frac{\varepsilon(N+s-1)}{1-\beta_N^{\mathrm{hog}}},
\end{equation}
and in particular $s=\Omega(\sqrt{\varepsilon N})$.
\end{proposition}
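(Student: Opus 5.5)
The plan is to avoid the progress measure entirely and work with the exact structure of the copy state. After purification (\cref{lem:normal-form} with $q=0$), the final state is $(V\otimes I)(\ket\phi^{\otimes s}\otimes\psi_0)$ for a fixed unitary $V$. I expand $\ket\phi^{\otimes s}=\sum_{x\in[N]^s}\alpha_{x_1}\cdots\alpha_{x_s}\ket x$ and split it into two orthogonal pieces. The first, $\Phi_{\mathrm{dist}}$, collects the strings with distinct entries; the second, $\Phi_{\mathrm{rep}}$, collects the rest. Both norms are taken in the Haar $L^2$ space.

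By the proof of \cref{lem:haar-copies}, the weight of $\Phi_{\mathrm{rep}}$ is exactly the probability that $s$ samples from a uniform point of the simplex are not all distinct. Using \cref{lem:dirichlet}(i), the probability that they are all distinct is $N!/(N-s)!\cdot 1/(N)_s=\prod_{j<s}(N-j)/(N+j)$, so $\|\Phi_{\mathrm{rep}}\|^2=R_s$. The stated upper bound on $R_s$ then follows from $R_s\le s-\E[\operatorname{dist}]=s(s-1)/(N+s-1)$, which is Markov's inequality on the number of repeats.

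The key observation is that every monomial in $\Phi_{\mathrm{dist}}$ has a square-free pattern and a constant polynomial factor. Hence $\Phi_{\mathrm{dist}}$ lies entirely in level zero, and applying $V$, which is $\phi$-independent, keeps it there. By Step~1 of \cref{lem:heavy-bounded},
\[
\langle V\Phi_{\mathrm{dist}},X_wV\Phi_{\mathrm{dist}}\rangle\le\kappa_w(1-R_s).
\]
For the repeated part, I use only $\langle G,X_wG\rangle\le\|G\|^2=R_s$, which holds because $X_w$ is a positive contraction.

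The main obstacle is the cross term $2\operatorname{Re}\langle V\Phi_{\mathrm{dist}},X_wV\Phi_{\mathrm{rep}}\rangle$. Cauchy--Schwarz would only give an $O(\sqrt{R_s})$ bound, which loses the linear dependence on $R_s$. To recover it, I would argue that the cross term vanishes after averaging over the global phase, or more precisely over the phases individually. Each monomial in $\Phi_{\mathrm{dist}}$ has a pattern in $\{0,1\}^N$ of total size exactly $s$. Each monomial in $\Phi_{\mathrm{rep}}$ has pattern $(n_j)$ with some $n_j\ge2$. These patterns are always distinct, and $X_w$ preserves patterns, so the two pieces are orthogonal sectors under $X_w$ and the cross term is zero. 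This is the step I would check most carefully. It relies on the fact that the copies contain no conjugates, so the pattern of each string is its occurrence vector and is never cancelled. Granting it,
\[
\E\,w(p_Z)\le\kappa_w(1-R_s)+R_s=\kappa_w+(1-\kappa_w)R_s.
\]

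The remaining claims are direct consequences. Taking $w=\mathbf 1_{\{x>\theta_N\}}$, \eqref{eq:heavy-calibration} gives $\kappa_w=\beta_N^{\mathrm{hog}}$, because $(1-\theta_N)^{N-1}=1/2$. For $s=0$ the output is independent of $\phi$, so its heavy probability is the median tail, which equals $1/2$. Finally, reaching $\beta^{\mathrm{hog}}_N+\varepsilon$ forces $R_s\ge\varepsilon/(1-\beta_N^{\mathrm{hog}})$, and combining this with $R_s\le s(s-1)/(N+s-1)$ yields \eqref{eq:heavy-copy-lower} and hence $s=\Omega(\sqrt{\varepsilon N})$.
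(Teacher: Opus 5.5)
Your proposal is correct and follows the paper's proof essentially step for step. It uses the same split of the copy state into distinct-entry tuples (level zero, square-free patterns) and repeated tuples, the same observation that fixed operations and $X_w$ preserve patterns so the cross term vanishes, and the same count giving $\|F_+\|^2=R_s$. Your Markov bound on the number of repeats is exactly the paper's inequality $\|F_+\|^2\le\lambda(F)^2=s(s-1)/(N+s-1)$ from \cref{lem:haar-copies}.

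The cross-term step you flagged is complete as you stated it, provided you average over the individual phases. Global phase averaging alone would not separate the two parts, since every term has exactly $s$ unconjugated factors.
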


\begin{proof}
Purify the initial state and defer measurements. With no queries, every operation is $\phi$-independent. Before these operations the state is
\[
 \Phi_s(\phi)=\sum_{\mathbf x\in[N]^s}
       \alpha_{x_1}\cdots\alpha_{x_s}\ket{\mathbf x}\otimes\psi_0.
\]
Every monomial has a nonnegative pattern $\gamma$ with $|\gamma|_1=s$ and a constant polynomial factor. Its level is zero precisely when all entries of the tuple are distinct. Fixed operations preserve the pattern and the constant polynomial factor. Consequently the final state's parts $F_0=P_0F$ and $F_+=F-F_0$ have disjoint patterns. Because $X_w$ preserves patterns,
\[
 \langle F_0,X_wF_+\rangle=0.
\]
This is an orthogonality statement in the Haar $L^2$ space and requires no measurement of, or access to, the pattern.

The copy-register basis vectors are orthogonal. For a distinct tuple, \cref{lem:dirichlet}(i) gives $\E_\phi p_{x_1}\cdots p_{x_s}=1/(N)_s$, where $(N)_s=N(N+1)\cdots(N+s-1)$. There are $N(N-1)\cdots(N-s+1)$ such tuples when $s\le N$, and none when $s>N$. Thus $\|F_+\|^2=R_s$ in \eqref{eq:heavy-copy-repeat}; a fixed isometry preserves this norm. Equivalently, $R_s$ is the Haar-averaged probability of a repeat among $s$ independent samples conditional on $p$. Every positive level is at least one, so \cref{lem:haar-copies} gives
\[
 R_s=\|F_+\|^2\le\lambda(F)^2
      =\frac{s(s-1)}{N+s-1}.
\]
Together with $R_s\le1$, this proves the displayed bound.

The level-zero calibration in \cref{lem:heavy-bounded}, the vanishing cross term, and $0\le X_w\le I$ now give
\[
 \langle F,X_wF\rangle
 \le\kappa_w\|F_0\|^2+\|F_+\|^2
 =\kappa_w+(1-\kappa_w)R_s.
\]
The indicator calibration gives the HOG bound, and rearranging it gives \eqref{eq:heavy-copy-lower}. A positive advantage forces $s\ge2$, so $N+s-1\ge N$ and $1-\beta_N^{\mathrm{hog}}\le1$ imply $s\ge\sqrt{\varepsilon N}$. For $s=0$, the final output is independent of $\phi$, giving heavy probability $1/2$ by the marginal median calculation above.
\end{proof}

\end{document}